\def\bx{\boldsymbol{X}}
\def\supm{^{\scriptscriptstyle (m)}}
\def\by{\boldsymbol{Y}}
\def\by{\boldsymbol{y}}
\def\bX{\boldsymbol{X}}
\def\bx{\boldsymbol{x}}
\def\bone{\mathbf{1}}
\newtheorem{lemma}{Lemma}
\newtheorem{theorem}{Theorem}
\newtheorem{example}{Example}
\newcommand{\bei}{\begin{itemize}}
\newcommand{\eei}{\end{itemize}}
\newcommand{\ben}{\begin{enumerate}}
\newcommand{\een}{\end{enumerate}}
\def\0{\boldsymbol{0}}
\def\x{\boldsymbol{x}}
\def\Z{\boldsymbol{Z}}
\def\a{\boldsymbol{a}}
\def\D{\boldsymbol{D}}
\def\N{\mathcal{N}}
\def\noj{\text{--}j}
\def\independenT#1#2{\mathrel{\rlap{$#1#2$}\mkern2mu{#1#2}}}
\newcommand{\indp}{\protect\mathpalette{\protect\independenT}{\perp}}
\newcommand{\RR}{\mathbb{R}}
\newbox\TempBox \newbox\TempBoxA
\def\trans{^{\scriptscriptstyle \sf T}}
\def\bbeta{\boldsymbol{\beta}}
\def\betahat{\hat{\beta}}
\def\bepsilon{\boldsymbol{\epsilon}}
\def\bZ{\boldsymbol{Z}}
\def\bz{\boldsymbol{z}}
\def\bA{\boldsymbol{A}}
\def\ba{\boldsymbol{a}}
\def\D{\mathcal{D}_y}
\def\bu{\boldsymbol{u}}
\def\dx{\boldsymbol{d}_x}
\def\dy{\boldsymbol{d}_y}
\def\dyone{\boldsymbol{d}_{y,1}}
\def\dynone{\boldsymbol{d}_{y,\text{--}1}}
\def\suplasso{^{\sf \scriptscriptstyle lasso}}
\def\suploco{^{\sf \scriptscriptstyle loco}}
\newcommand{\One}[1]{\mathbf{1}_{\left\{#1\right\}}}
\newcommand{\rev}[1]{{#1}}
\newcommand{\dl}[1]{{\color{yellow}}}
\newcommand{\dll}[1]{{\color{yellow}}}
\begin{document}

\title{Fast and Powerful Conditional Randomization Testing\\ via Distillation}

\author[1]{Molei Liu\thanks{This paper is a post-hoc merge of two parallel works by Liu \& Janson (2020) and Katsevich \& Ramdas (2020).}}
\author[2]{Eugene Katsevich}
\author[3]{Lucas Janson}
\author[4]{Aaditya Ramdas}
\date{}
\affil[1]{Department of Biostatistics, Harvard Chan School of Public Health}
\affil[2]{Statistics Department, Wharton School of the University of Pennsylvania}
\affil[3]{Department of Statistics, Harvard University}
\affil[4]{Departments of Statistics \& Machine Learning, Carnegie Mellon University}

\maketitle

\begin{abstract}
We consider the problem of conditional independence testing: given a response $Y$ and covariates $(X,Z)$, we test the null hypothesis that $Y \indp X \mid Z$. The conditional randomization test (CRT) was recently proposed as a way to use distributional information about $X\mid Z$ to exactly (non-asymptotically) control Type-I error using {any} test statistic in {any} dimensionality without assuming anything about $Y\mid (X,Z)$. This flexibility in principle allows one to derive powerful test statistics from complex prediction algorithms while maintaining statistical validity. Yet the direct use of such advanced test statistics in the CRT is prohibitively computationally expensive, especially with multiple testing, due to the CRT's requirement to recompute the test statistic many times on resampled data. We propose the {distilled CRT}, a novel approach to using state-of-the-art machine learning algorithms in the CRT while drastically reducing the number of times those algorithms need to be run, thereby taking advantage of their power and the CRT's statistical guarantees without suffering the usual computational expense. In addition to distillation, we propose a number of other tricks like screening and recycling computations to further speed up the CRT without sacrificing its high power and exact validity. Indeed, we show in simulations that all our proposals combined lead to a test that
has similar power to the \rev{most powerful existing CRT implementations} but requires orders of magnitude less computation, 
making it a practical tool even for large data sets.  We demonstrate these benefits on a breast cancer dataset by identifying biomarkers related to cancer stage.
\end{abstract}

\noindent{\bf Keywords}: Conditional randomization test (CRT), model-X, conditional independence testing, high-dimensional inference, machine learning.

\section{Introduction}
In our increasingly data-driven world, it has become the norm in applications from genetics and health care to policy evaluation and e-commerce to seek to understand the relationship between a response variable of interest and a high-dimensional set of potential explanatory variables or covariates. While accurately estimating this entire relationship generally would require a nearly-infinite sample size, a less-intractable but still extremely useful question is to ask, for any given covariate, whether it actually contributes to the response variable's high-dimensional conditional distribution. We address this problem by encoding a covariate's relevance as a {conditional independence test}, which requires no modeling assumptions to define. Denoting the response random variable by $Y$, a given covariate of interest by $X$, and a multidimensional set of further covariates by $Z=(Z_1,\dots,Z_p)$, the null hypothesis we seek to test is
\begin{equation*}
H_0:Y\indp X \mid Z
\end{equation*}
against the alternative $H_1:Y\not\indp X\mid Z$. Testing this hypothesis for just a single covariate is sometimes all that is needed, such as in an observational study investigating whether a particular treatment ($X$) causes a change in a response ($Y$) after controlling for a set of measured confounding variables ($Z$). But in other applications no one covariate holds {a priori} precedence over another, and a researcher seeks any and all covariates that contribute to $Y$'s conditional distribution. This variable selection objective can also be achieved by testing $H_0$ for each covariate in turn (with $Z$ containing the other covariates) and plugging the resulting $p$-values into one of the many procedures from the extensive literature on multiple testing. In addition to the considerable statistical challenge of providing a valid and powerful test of $H_0$, it is of paramount importance to also ensure that test is computationally efficient, especially, as is often the case in modern applications, when either or both the sample size and dimension are large, and even more so when a variable selection objective requires the test to be run many times. Thus the goal of this paper is to present a test for conditional independence that is provably valid, empirically powerful, and computationally efficient, when the distribution of $X|Z$ is known or can be well approximated.

\subsection{Background}\label{sec:background}
Our work builds on the {conditional randomization test} (CRT) introduced in \citet{candes2018panning}. The CRT is a  general framework for conditional independence testing that can use any test statistic one chooses and exactly (non-asymptotically) control the Type-I error regardless of the data dimensionality. The CRT's guarantees assume nothing whatsoever about $Y\mid (X,Z)$, but instead assume $X\mid Z$ is known. This so-called ``model-X" framework --- in contrast to the canonical approach of assuming a strong model for $Y\mid (X,Z)$ --- is perhaps easiest to justify when a wealth of {unlabeled data} (pairs $(X_i,Z_i)$ without corresponding $Y_i$) is available, but has also been found to be quite robust even when $X\mid Z$ is estimated using only the labeled data.

In order to define the CRT, we first need notation for our data. For $i\in\{1,\dots,n\}$, let $(Y_i,X_i,Z_i)\in\mathbb{R}^{p+2}$ be i.i.d. copies of $(Y,X,Z)$, and denote the column vector of $Y_i$'s by $\by\in\mathbb{R}^n$, the column vector of $X_i$'s by $\bx\in\mathbb{R}^n$, and the matrix whose rows are the $Z_i$'s by $\bZ\in\mathbb{R}^{n\times p}$. The CRT is given by Algorithm~\ref{alg:crt}, and its Type-I error guarantee follows below.

\begin{algorithm}[htbp]
\caption{\label{alg:crt} The conditional randomization test (CRT).}
{\bf Input:} {The distribution of $\bx\mid\bZ$,} data $(\by,\bx,\bZ)$, test statistic function $T$, and number of randomizations $M$.\\
\vspace{0.2cm}
{\bf For} $m=1,2,...,M$: Sample $\bx\supm$ from the distribution of $\bx\mid\bZ$, conditionally independently of $\bx$ and $\by$.\label{algline:cr}\\
\vspace{0.2cm}
{\bf Output:} CRT $p$-value $\frac{1}{M+1}\left(1+\sum_{m=1}^M \One{T(\by,\bx\supm,\bZ)\geq T(\by,\bx,\bZ)}\right)$.
\end{algorithm}

\begin{theorem}[\cite{candes2018panning}]
\label{thm:crt}
The CRT $p$-value $p(\bX, \by)$ satisfies \[\mathbb{P}_{H_0}(p(\bX, \by)\le \alpha)\le\alpha \text{ for all } \alpha\in [0,1].\]
\end{theorem}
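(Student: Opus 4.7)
The plan is to establish conditional exchangeability of the observed test statistic $T(\by,\bx,\bZ)$ with the resampled statistics $T(\by,\bx\supm,\bZ)$ and then invoke the standard fact that the rank of a value in an exchangeable sample yields a super-uniform $p$-value. This is the canonical ``randomization test'' argument; the only care needed is in precisely identifying what is random and what is being conditioned on.

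First I would verify that, under $H_0$, the conditional distribution of $\bx$ given $(\by,\bZ)$ coincides with the conditional distribution of $\bx$ given $\bZ$. Because the triples $(Y_i,X_i,Z_i)$ are i.i.d., conditioning on $\bZ$ leaves the pairs $(Y_i,X_i)$ independent across $i$, and for each $i$ the null $Y_i\indp X_i\mid Z_i$ propagates to the vector level as $\bx\indp\by\mid\bZ$. Hence $\bx\mid(\by,\bZ)\stackrel{d}{=}\bx\mid\bZ$.

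Next I would condition on $(\by,\bZ)$ and examine the joint law of $(\bx,\bx\supone,\dots,\bx\supM)$. By construction in Algorithm~\ref{alg:crt}, each $\bx\supm$ is drawn from the distribution of $\bx\mid\bZ$, independently across $m$ and jointly independent of $\bx$ (and of $\by$ given $\bZ$). Combining this with the previous step, the variables $\bx,\bx\supone,\dots,\bx\supM$ are conditionally i.i.d.\ given $(\by,\bZ)$, and in particular exchangeable. Since $T(\by,\cdot,\bZ)$ is a deterministic function once $(\by,\bZ)$ is frozen, applying it coordinatewise preserves exchangeability, so the vector $(T(\by,\bx,\bZ),T(\by,\bx\supone,\bZ),\dots,T(\by,\bx\supM,\bZ))$ is itself exchangeable conditionally on $(\by,\bZ)$.

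Finally, the CRT $p$-value is exactly the normalized rank from the top of $T(\by,\bx,\bZ)$ within these $M+1$ exchangeable statistics, with the ``$+1$'' in the numerator and denominator accounting for the observed statistic itself. A standard lemma for exchangeable sequences (valid even with ties, because of the ``$\ge$'' in the indicator and the inclusion of the observed statistic in the rank) then gives $\pr\!\left(p(\bX,\by)\le\alpha\mid\by,\bZ\right)\le\alpha$ for every $\alpha\in[0,1]$, and taking expectation over $(\by,\bZ)$ yields the unconditional bound. The main obstacle, such as it is, lies in the first step: one must be careful to deduce the vector-level conditional independence $\bx\indp\by\mid\bZ$ from the coordinatewise null, rather than assuming it directly; everything else is bookkeeping.
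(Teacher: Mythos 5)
Your proof is correct and follows exactly the exchangeability argument that the paper relies on (and attributes to Cand\`es et al.): under $H_0$ one has $\bx \indp \by \mid \bZ$, so $\bx,\bx\supone,\dots,\bx\supM$ are conditionally i.i.d.\ given $(\by,\bZ)$, the statistics inherit exchangeability, and the normalized rank is super-uniform even with ties. The care you take in lifting the coordinatewise null to the vector-level conditional independence is exactly the right (and only) delicate step.
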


For many common models of $X\mid Z$, the conditionally-independent sampling of $\bx\supm$ is straightforward. And even in more complex models it is still often easy to sample $\bx\supm$ conditionally-{exchangeably} with $\bx$ and conditionally-independently of $\by$ (for instance by conditioning on an inferred latent variable), which is sufficient for Theorem~\ref{thm:crt} to hold. Because Theorem~\ref{thm:crt} only relies on the exchangeability of the vectors $\bx,\bx^{(1)},\dots,\bx^{(M)}$ under $H_0$, it is entirely agnostic to the choice of test statistic $T$. This enables some very powerful choices, such as $T$'s derived from modern machine learning algorithms, from Bayesian inference (though neither the prior nor model for $Y\mid (X,Z)$ need be well-specified), or from highly domain-specific knowledge or intuition. Unfortunately the most powerful statistics are often particularly expensive to compute, and as can be seen from Algorithm~\ref{alg:crt}, $T$ must be applied $M+1$ times in order to compute a single $p$-value. When testing all the covariates at once, this computational problem is compounded as not only does {each} test require $M+1$ applications of $T$, but $M$ must be roughly of order $p$ to ensure the $p$-values are sufficiently high-resolution to make any discoveries with multiple testing procedures such as Benjamini--Hochberg \citep{benjamini1995controlling}. 

\subsection{Our contribution}\label{sec:intro:cont}
We resolve this computational challenge in Section~\ref{sec:method:dCRT} by introducing a technique we call {distillation} that can still leverage {any} high-dimensional modeling or supervised learning algorithm, but presents dramatic computational savings by only requiring the expensive high-dimensional computation to be performed once, instead of $M+1$ times. We call our proposed method the distilled CRT, or dCRT, and show how to further improve its computation in multiple testing settings in Section~\ref{sec:speedup}. 

We demonstrate in simulations in Section~\ref{sec:sim} that there is \rev{little difference in power between} 
the dCRT 
\rev{and} its more expensive CRT counterpart \rev{(in this paper we will refer to the CRT implementation as originally proposed without distillation or HRT speedup as the original CRT, or oCRT)}, \rev{and what small differences exist can be explained by factors that are separate from distillation}.
\rev{Mean}while, our proposals save orders of magnitude in computation \rev{over the oCRT} even for medium-scale problems (the savings only increase for larger data). We also show in simulations that the dCRT is comparably powerful to other state-of-the-art conditional independence tests, and is also robust to misspecification in the distribution of $X$. 

In Section~\ref{sec:realdata}, we apply the dCRT to a breast cancer dataset and discover more clinically-informative somatic mutations than competing methods, and we cite independent scientific work corroborating each of the discoveries we make. Finally, we close with a discussion in Section~\ref{sec:discussion}.

The dCRT inherits several attractive properties of the CRT: {it can be derandomized to an arbitrary extent through computation (increasing $M$) and} yields finite-sample valid $p$-values for all variables that can be used for downstream multiple testing analyses with a variety of error metrics, including not only the false discovery rate but also the family-wise error rate and others. 
\dl{This makes the dCRT an appealing alternative to model-X knockoffs \citep{candes2018panning}.}

\subsection{Related work}\label{sec:intro:com}
Our work builds upon the CRT \rev{framework} of \cite{candes2018panning}, with the goal of making it computationally tractable without sacrificing power.
Our work is perhaps most similar in its goal to the HRT of \cite{tansey2018holdout}, which uses data splitting to enable the use of complex modeling in the CRT with far less computation by doing all the complex modeling on the first part of the data and testing on the second part. 
\rev{A domain-specific version of the HRT is applied by \cite{bates2020causal}} to genetic trio studies by (using causal terminology) learning a model on observational data and using it within the CRT on randomized experimental data; 
the power of a similar ``hybrid'' CRT approach is studied in \cite{katsevich2020theoretical}. We show in Section~\ref{sec:sim} that data splitting comes with a substantial power loss compared to the dCRT and \rev{oCRT}.
\cite{tansey2018holdout} addresses this with cross-fitting, but in doing so loses the guarantee on Type-I error control of the CRT (and dCRT). Other works have extended the CRT \citep{berrett2018conditional,bellot2019conditional} in ways that do not address its computational intractability. For the variable selection problem, model-X knockoffs \citep{candes2018panning} can simultaneously test conditional independence for each covariate, yielding a false discovery rate-controlling rejection set. {Model-X knockoffs is inherently a \textit{multiple} testing method, with power to detect groups of non-null variables without quantifying their individual significances. On the other hand, the dCRT is a \textit{single} testing method which can be paired with multiple testing procedures if desired. We elaborate further on the comparison between dCRT and model-X knockoffs in the discussion.}

We note a pair of methods, double machine learning \citep{chernozhukov2016double} and the generalized covariance measure \citep{shah2018hardness}, that both test conditional independence under assumptions that nearly (but not quite, due to moment conditions on $Y$) subsume ours, and whose test statistic resembles and can even be identical to certain special cases of the dCRT. However, their statistics only resemble a special case of the dCRT---the dCRT framework includes many other statistics which deviate substantially from double machine learning/generalized covariance measure and can be more powerful in certain settings. Furthermore, the cutoffs for their test statistics are both based on asymptotic normality, while the dCRT is non-asymptotically exact regardless of the distribution of its test statistic (see Appendix~\ref{sim:dml}). 

\subsection{Notation}
Let $I = (i_i,i_2,\dots,i_k) \subseteq\{1,\dots,n\}$ and $J=(j_1,j_2,\ldots,j_{\ell})\subseteq\{1,2,\ldots,p\}$ be subsets of samples and variables, respectively, and consider a matrix $\bA=(\ba_1,\ba_2,\ldots,\ba_p)\in\mathbb{R}^{n\times p}$ with $\ba_j=(A_{1j},A_{2j},\ldots,A_{nj})\trans$. We denote by $A_{I,J}$ the sub-matrix of $A$ with rows in $I$ and columns in $J$. We use the subscripts $j$, $\text{--}J'$, and $\bullet$ as shorthand for $J = \{j\}, \{1, \dots, p\} \setminus J'$, and $\{1, \dots, p\}$, respectively, and the same for the first index. For example, $A_{\bullet, \noj}$ represents the matrix $A$ with the $j$th column removed. For any two vectors $\ba_j$ and $\ba_{\ell}$, let $\ba_j\odot\ba_{\ell}=(A_{1j}A_{1\ell},A_{2j}A_{2\ell},\ldots,A_{nj}A_{n\ell})\trans$ denote their elementwise product, and for $L=\{\ell_1,\ell_2,\ldots,\ell_k\}$ let $\ba_j\odot\bA_L=(\ba_j\odot\ba_{\ell_1},\ldots,\ba_j\odot\ba_{\ell_k})$; these will be used when fitting \dl{first-order} interaction effects. 
\section{The distilled conditional randomization test}\label{sec:method:dCRT}  

\subsection{Main idea}\label{sec:method:mot}  
It is natural to derive CRT test statistics from machine learning methods with high predictive and estimation accuracy. Indeed the original paper proposing the CRT \citep{candes2018panning} used
as test statistic {$T_{\mathrm{\rev{o}CRT}}(\by,\bx,\bZ) := |\hat{\beta}\suplasso_x|$}, the absolute value of the fitted coefficient on $\bx$ from the lasso {\citep{tibshirani1996regression}} of $\by$ on $(\bx,\bZ)$ with penalty parameter chosen by cross-validation. Although powerful and computationally much faster than many other machine learning algorithms, it is still expensive to repeatedly run the lasso on large data sets hundreds or more times just to compute a single CRT $p$-value, and many times more than that in multiple-testing scenarios when a CRT $p$-value for each covariate is needed.

Consider now the following alternative test statistic which captures the essence of our proposal. First fit a cross-validated lasso of $\by$ on only $\bZ$ to obtain the $p$-dimensional coefficient vector {$\hat{\beta}_z\suploco$}. Then fit a least-squares regression of the residual {$(\by-\bZ\hat{\beta}_z\suploco)$} on $\bx$ to obtain the scalar coefficient {$\hat{\beta}\suploco_x$} and take its absolute value {$T_{\mathrm{dCRT}}(\by,\bx,\bZ) := |\hat{\beta}\suploco_x|$} as the test statistic. {Here, the superscription ``${\sf loco}$" represents ``leave-one-covariate-out" regression as $\x$ is left out when regressing $\by$ solely on $\Z$. We introduce this notation to distinguish the leave-one-covariate-out construction from the \rev{o}CRT lasso statistics when needed, although in the remaining paper, we will just use $(\widehat\beta_x,\widehat\beta_z)$ to represent $(\widehat\beta_x\suploco,\widehat\beta_z\suploco)$ when there is no need to distinguish them from $(\widehat\beta_x\suplasso,\widehat\beta_z\suplasso)$.} It may seem as though little has changed from the preceding paragraph---we would expect $T_{\mathrm{\rev{o}CRT}}$ and $T_{\mathrm{dCRT}}$ to have similar statistical properties and require nearly the same computation. Although the statistical properties of $T_{\mathrm{\rev{o}CRT}}$ and $T_{\mathrm{dCRT}}$ are indeed very similar and they do require nearly the same time to compute {once}, they require dramatically different computation within the CRT. The key difference is that the expensive $(p+1)$-dimensional lasso fit in $T_{\mathrm{\rev{o}CRT}}$ must be recomputed for {each} resample of $\bx$, while the expensive $p$-dimensional lasso fit in $T_{\mathrm{dCRT}}$ must only be computed once, since that lasso does not depend on $\bx$ and hence is identical for all its resamples. In the CRT, neither $\by$ nor $\bZ$ change during the resampling procedure, and we take advantage of this by applying our expensive computation to only $\by$ and $\bZ$ so it only has to be done once. All that is required for each resample's computation of $T_{\mathrm{dCRT}}$ is a {univariate} regression, whose computational expense is much lower than a $p$-dimensional lasso.

We can generalize this idea far beyond the lasso or linear regressions. The core proposal is to {distill} all the high-dimensional information in $\bZ$ about $\by$ into a low-dimensional representation, without looking at $\bx$. Then the test statistic estimates a relationship between $\bx$ and the {leftover} information in $\by$ by only looking at $\bx$, $\by$, and the distilled (low-dimensional) function of $\bZ$. Thus all the computation on high-dimensional data, namely the distillation, only needs to be performed once, while the computation that is repeatedly applied to the resampled data is low-dimensional and hence relatively fast. \dl{It will often be advantageous to also distill the high-dimensional information in $\bZ$ about $\bx$ and include this in the test statistic as well, but we will see this can be done without looking at $\bx$ and hence does not require any repeated computation on the resampled $\bx\supm$.}

\subsection{Formal presentation of dCRT}\label{sec:method:dCRT:frame} 
We now formalize the idea from the previous subsection in Algorithm~\ref{alg:frame}\dl{, the distilled conditional randomization test (dCRT)}.

\begin{algorithm}[htbp]
\caption{The distilled conditional randomization test (dCRT).}
\label{alg:frame}
{\bf Input:} {The distribution of $\bx\mid\bZ$,} data $(\by,\bx,\bZ)$, $\by$-distillation-fitting function $\D$, $\bx$-distillation function $d_x$, test statistic function $T$, and number of randomizations $M$.\\
\vspace{0.2cm}
Distill $\bZ$'s information about $\by$ into $\dy = \D(\by,\bZ)$ and about $\bx$ into $\dx = d_x(\bZ)$.\label{algline:ydist}\\
\vspace{0.2cm}
{\bf For} $m=1,2,...,M$: Sample $\bx\supm$ from the distribution of $\bx\mid\bZ$, conditionally independently of $\bx$ and $\by$.\label{algline:dcrt}\\
\vspace{0.2cm}
{\bf Output:} dCRT $p$-value $\frac{1}{M+1}\left(1+\sum_{m=1}^M \One{T(\by,\bx\supm,\dy,\dx)\geq T(\by,\bx,\dy,\dx)}\right)$.
\end{algorithm}

The key difference from the more general CRT in Algorithm~\ref{alg:crt} is that the test statistic function $T$ in Algorithm~\ref{alg:frame} only sees information about the high-dimensional $\bZ$ through its $\by$- and $\bx$-distillations $\dy$ and $\dx$, which are both computed just once in the first line of the algorithm. $\D$ and $d_x$ should be chosen such that the distillation step produces $\dy$ and $\dx$ with dimension much less than $p$, so that $T$'s inputs are low-dimensional. Then since $T$ is the only repeatedly-applied function and its computation does not suffer from the high-dimensionality of the original data, the dCRT's computation will be dominated by the single application of $\D$. For instance, in the dCRT example in Section~\ref{sec:method:mot}, $\dx$ is not used and $\D$ fits a\dl{cross-validated} lasso of $\by$ on $\bZ$ and returns $\dy = \bZ\hat{\beta}_z$, while $T(\by,\bx,\dy) = |(\by-\dy)^\top\bx|/\|\bx\|^2$ 
requires negligible computation by comparison. 

\dl{Note that $\D$ and $d_x$, despite both producing distillations, operate quite differently. In particular, although $d_x$ distills $\bZ$'s information about $\bx$, it does not take $\bx$ as an argument. This is because the distillation function $d_x$ can be chosen purely based on $X\mid Z$ which is assumed known, and thus can bypass looking at $\bx$ and distill $\bZ$ directly; for example, we will soon discuss dCRTs with $d_x(\bz) = \mathbb{E}[\bx\, |\, \bZ=\bz]$. In contrast, $\D$ needs to internally {fit} a distillation function which we could think of as ``$d_y$" (this is the expensive step) and then apply it to $\bZ$ to compute $\dy$. \dl{This distinction between $\D$ and $d_x$ is important since if $d_x$ performed a complicated fitting step that depended on $\bx$, then that complicated computation would have to be repeated for each $\bx\supm$ in order to maintain the exchangeability of $\bx,\bx^{(1)},\dots,\bx^{(M)}$ under $H_0$ used in the proof of Theorem~\ref{thm:crt}.} Importantly, Theorem~\ref{thm:crt}'s validity guarantee indeed applies to the dCRT because it is a computationally efficient instantiation of (and thus a special case of) the CRT.}

We emphasize that $\D$ can really be {any} regression algorithm and Theorem~\ref{thm:crt} still holds{, since for any choice of $\D$ the dCRT is still a special case of the CRT}. Thus it can take advantage of the predictive power of state-of-the-art machine learning algorithms, precise knowledge in the form of a Bayesian prior, or even imprecise domain expertise or intuition applied by trying many different regressions of $\by$ on $\bZ$ and choosing whichever ``feels" best (as long as $\bx$ is not factored into that decision). In the sequel we provide some suggestions and default choices. 

\dl{Algorithm~\ref{alg:frame} provides a framework for fast and powerful CRTs, but leaves much unspecified. In the next two sections, we provide more detail on some ways that the dCRT can be implemented in different scenarios, and discuss their associated advantages and disadvantages.}

\subsection{The d$_0$CRT: fast, powerful, and intuitive}\label{sec:power}
The most computationally-efficient and intuitive class of dCRT procedures has both $\by$- and $\bx$-distillations reduce $\bZ$ to an output with a single column. We label this subclass of dCRT procedures as d$_0$CRT because it represents the choice to maximally-distill each row of $\bZ$ down to a single scalar. Assuming $T$'s computation generally increases with the dimension of its inputs, the d$_0$CRT also represents a particularly computationally-efficient class of dCRTs.

A natural approach to constructing a d$_0$CRT, especially when $Y$ is continuous, is to have distillation take the form of conditional mean functions. That is, let $d_x(\bZ) = \mathbb{E}[\bx\,|\,\bZ]$ and have $\D$ fit an estimate of the analogous regression function for $\by$, i.e., $\D(\by,\bZ)\approx \mathbb{E}[\by\,|\,\bZ]$. Then $T$ can be chosen as an empirical measure of dependence between the residuals $\by-\dy$ and $\bx-\dx$, such as the square of the fitted coefficient when regressing the former on the latter. This approach is also easy to understand and implement since it just requires choosing $\D$ and $T$, with $\D$ just performing a (possibly nonparametric) regression while $T$ can be thought of as computing a test statistic for testing the independence between two scalar random variables from a paired sample of size $n$: $(\by-\dy,\bx-\dx)$. As both regression and bivariate independence testing are highly-studied topics, users can easily draw from their statistical training, domain expertise, and a rich literature in order to design an appropriate d$_0$CRT for their particular problem. The following is a generic example we found to be computationally efficient and powerful in our simulations.

\begin{example}[lasso-based d$_0$CRT]\label{ex:1}
\dl{Let} $\dy=\bZ\hat{\beta}_z$ \dl{be} is the fitted predictions from a cross-validated lasso of $\by$ on $\bZ$, \dl{let} $\dx = \mathbb{E}\left[\bx\,|\,\bZ\right]$, and \dl{let} {$T(\by,\bx,\dy,\dx) = |\hat{\beta}_x|:= \frac{|(\by-\dy)^\top(\bx-\dx)|}{\|\bx-\dx\|^2}$}.
\end{example}

{More generally, the d$_0$CRT's distillation need not be couched in terms of finding conditional means.} \dl{In spite of the intuitive appeal of couching distillation in terms of finding conditional means, in some problem instances an alternative d$_0$CRT may be more appropriate.} For instance, an appealing analogue of Example~\ref{ex:1} for binary $Y$ might fit $\hat{\beta}_z$ by a cross-validated $L_1$-penalized logistic regression of $\by$ on $\bZ$ and otherwise leave $\D$ and $\dx$ unchanged (note $\bZ\hat{\beta}_z$ no longer approximates $\mathbb{E}[\by\, |\, \bZ]$), and take $T(\by,\bx,\dy,\dx)$ to be {absolute value of the} fitted coefficient from a logistic regression of $\by$ on $\bx-\dx$ with offset $\dy$. \dl{The substantial flexibility of the d$_0$CRT allows it to detect many kinds of nonlinear relationships between $Y$ and $X$, but the stringent distillation inherently limits its ability to detect most types of {interactions} between $X$ and $Z$. This shortcoming can be important and in the next subsection we discuss how interactions can be incorporated by moving beyond the d$_0$CRT.}

\subsection{The d$_\mathrm{I}$CRT: accounting for interactions}\label{sec:dIcrt}
Of the three functions applied in Algorithm~\ref{alg:frame}, only $T$ takes both $\by$ and $\bx$ as arguments and hence the choice of $T$ is how a user can encode the kinds of non-null relationships between $Y$ and $X$ that are deemed plausible. But because $T$ only sees $\bZ$ through $\dy$ and $\dx$, any plausible models for $Y$ must be expressed using only $\bx$, $\dy$, and $\dx$. This means that the d$_0$CRT has almost no capacity to model even first-order interactions between $X$ and $Z$. 
For instance, suppose $p=3$ and $Z_j\stackrel{i.i.d.}{\sim}\N(0,1)$, $X \sim Z_1+\N(0,1)$, and $Y \sim Z_2+XZ_3 + \N(0,1)$. Then the best possible distillations of $\bx$ and $\by$ are $\dx=\bZ_1$ and $\dy=\bZ_2+\bZ_1\odot\bZ_3$, making it impossible for $T$ to encode the true conditional mean of $\by$, namely, $\bZ_2+\bx\odot\bZ_3$, from just $\bx$, $\dx$, and $\dy$.

To address this limitation of the d$_0$CRT, one can simply increase the dimension of $\dy$ and $\dx$ to explicitly include possible columns of $\bZ$ with which $\bx$ might be expected to interact. But of course increasing the dimension of $\dy$ and $\dx$ tends to come at a computational cost, since their low-dimensionality is exactly what makes the dCRT fast in the first place. Thus one needs some sort of prior, domain knowledge, or heuristic for choosing based on either the pair $(\by,\bZ)$ or $(\bx,\bZ)$ (but not based on $(\by,\bx,\bZ)$ together) a small subset of columns of $\bZ$ that $\bx$ might plausibly interact with. One option is to split the data into two independent parts and use one part in an unconstrained way to select columns of $\bZ$ that are likely to interact with $\bx$, and then to leverage these selections in a dCRT run only on the other part. We propose here an alternative that avoids sample splitting, based on the common statistical practice of only allowing for interactions between variables with strong main effects. This practice of enforcing hierarchy in interactions has a long history in applied and theoretical statistics under many different names \citep{nelder1977reformulation,cox1984interaction,peixoto1987hierarchical,hamada1992analysis,chipman1996bayesian,bien2013lasso}. 

Our proposed method for incorporating interactions, which we call the d$_\mathrm{I}$CRT, is to have $\D$ still distill $\bZ$ into one column to best-capture the relationship between $\by$ and $\bZ$, but then to additionally return (as further columns of $\dy$) a limited subset of columns of $\bZ$ whose contributions to that fitted relationship are strongest. Then $T$ can be chosen as a test statistic that allows $\bx$ to interact with those columns of $\bZ$ contained in $\dy$, while still prioritizing the main effect of $\bx$. As a generic example we found to be powerful to detect hierarchical interactions without losing much power in the absence of interactions, consider the following.

\begin{example}[lasso-based d$_\mathrm{I}$CRT]\label{ex:2}
\dl{Let} $\dy=(\bZ\hat{\beta}_z,\bZ_{\bullet,\mathrm{top}(k)}):=(\dyone,\dynone)$ \dl{be} is the fitted predictions from a cross-validated lasso of $\by$ on $\bZ$ concatenated with the columns of $\bZ$ corresponding to the $k$ largest entries of $|\hat{\beta}_z|$, \dl{let} $\dx = \mathbb{E}\left[\bx\,|\,\bZ\right]$, and \dl{let} $T(\by,\bx,\dy,\dx) = \hat{\beta}_{x,1}^2+\frac{1}{k}\sum_{j=2}^{k+1}\hat{\beta}_{x,j}^2$, where $\hat{\beta}_x\in\mathbb{R}^{k+1}$ are the fitted coefficients from a least-squares fit of $(\by-\dyone)$ on $(\bx-\dx)$ and $(\bx-\dx)\odot\dynone$.
\end{example}

The normalization by $1/k$ of $\sum_{j=2}^{k+1}\hat{\beta}_{x,j}^2$ encodes our hierarchical prioritization of the main effect $\hat{\beta}_{x,1}$ over the interaction effects. For small $k$ we still expect the computation to be dominated by $\D$, but it also represents a statistical trade-off in how widely to search for interactions; we found the performance to be quite stable to $k$ in our simulations, but set as a default $k=\lceil 2\log(p)\rceil$. Note that $k$ could also be chosen after looking at $(\by,\bZ)$, and more generally, one can construct many different types of d$_\mathrm{I}$CRT. For instance, one can adapt Example~\ref{ex:2} to binary $Y$ in an analogous way as was done for Example~\ref{ex:1} by replacing linear regressions with logistic regressions and using $\dyone$ as an offset in $T$. Or one could have $\D$ and/or $T$ use the predictions and default variable importance measures from a random forest. We explore some of these options in simulations in Section~\ref{sec:sim}.

\subsection{Running the dCRT without resampling} \label{sec:main:mcf}
Distillation \rev{provides massive computational savings within}
the CRT by only requiring a single evaluation of the by-far-most-expensive function $\D$. But it still requires $M+1$ evaluations of $T$, which can sometimes still contribute nontrivially to the computation time, and requires the user to choose the tuning parameter $M$ which trades off computation and statistical power. It turns out that in certain cases the simplicity of $T$ in the dCRT can be leveraged to remove the resampling of $\bx\supm$ entirely and compute an exact $p$-value directly from the single function evaluation $T(\by,\bx,\dy,\dx)$.

For intuition, suppose $X\mid Z\sim\N(Z^\top\gamma,\sigma^2)$, and consider the d$_0$CRT with $T$ as in Example~\ref{ex:1}, \[{T(\by,\bx,\dy,\dx) = \frac{|(\by-\dy)^\top(\bx-\dx)|}{\|\bx-\dx\|^2}.}\] 
Then since the (d)CRT conditions on $\by$ and $\bZ$ (and hence also $\dy$ and $\dx=\bZ\beta$), 
\begin{equation}
(\by-\dy)^\top(\bx-\dx)\sim\N\left(0,\sigma^2\|\by-\dy\|^2\right).
\label{centered-null}
\end{equation}
The denominator of $T$ makes things a bit more complicated, but the nature of the statistic does not change much if we replace the denominator by its expectation or, equivalently (since multiplying $T$ by a fixed constant has no effect on its resulting $p$-value), simply replace it by  {$T'(\by,\bx,\dy,\dx) = |(\by-\dy)^\top(\bx-\dx)|$}. We then get immediately that the {exact} $p$-value --- i.e., the $p$-value that would result from taking the limit as $M\rightarrow\infty$ --- can be computed as {$2\left(1-\Phi\left(\frac{T'(\by,\bx,\dy,\dx)}{\sigma\|\by-\dy\|}\right)\right)$} without ever resampling $\bx\supm$ or recomputing $T'$, where $\Phi$ is the standard normal cumulative distribution function.

The same principle can be applied to non-Gaussian $X$: since the distribution of $(\bx-\dx)\mid\bZ$ is known and the rows are independent, $(\bx-\dx)$ can be element-wise transformed via scalar monotone functions to be i.i.d. $\N(0,1)$ given $\bZ$. For conditionally-continuously-distributed $(\x-\dx)$, this can be done via the probability inverse transform, while for distributions with atoms the atoms need to be carefully randomized (though just once); see Appendix~\ref{sec:app:mcf} for details.

As long as $(\bx-\dx)$ is independent Gaussian or transformed to be, the same principle can also be applied to some more complex $T$ functions. For instance, in Example~\ref{ex:2}, we can again replace the random ``denominator" (in this case the matrix inverse in the least-squares formula for $\hat{\beta}_x$) with its conditional expectation given $\bZ$, and end up with a quadratic form in Gaussian random variables. Efficient algorithms for computing the quantiles of a quadratic form in Gaussian random variables exist \citep{duchesne2010computing} and can be applied to again compute the exact dCRT $p$-value without any resampling; see Appendix~\ref{sec:app:mcf} for details.

\section{Variable selection and multiple testing via the dCRT}\label{sec:speedup}

Conditional independence testing is often done in the context of a variable selection problem. Given $p$ covariates $X_1, \dots, X_p$ and a response $Y$, the goal is to discover the covariates $X_j$ that are conditionally associated with the response, i.e., $Y \not \indp X_j \mid X_{\noj}$. For a given $j$, we arrive at the problem formulation from the previous two sections by setting $X = X_j$ and $Z = X_{\noj}$. This change of notation highlights the fact that the effects of all variables are of interest, rather than that of one special variable. Given a design matrix $\bX \in \RR^{n \times p}$ and a response vector $\by$, we propose to approach the variable selection problem by applying the dCRT to $(\by, \bx, \bZ) = (\by, \bX_{\bullet, j}, \bX_{\bullet,\noj})$ for each covariate $j$, followed by a multiple testing procedure on the resulting $p$-values. 
Two common error rates to control are the family-wise error rate and the false discovery rate. The former can be easily achieved based on the Bonferroni correction, which works under arbitrary $p$-value dependence. The latter is usually done via the Benjamini--Hochberg procedure. Even though the $p$-values are technically not positively dependent in the sense required for mathematical false discovery rate control \citep{BY01}, the Benjamini--Hochberg procedure is known to be very robust to dependent $p$-values in all but adversarially-constructed settings, as confirmed in our simulations.

Regardless of error rate, the straightforward application of the dCRT to the variable selection problem requires computing $\D$ a total of $p$ times, once for each variable. Note that these are entirely parallel computations, so for certain problem dimensionalities and parallel computing resources, this is entirely feasible. However, in large-scale variable selection applications such as genome-wide association studies, there may be too many covariates for the direct application of dCRT to each. 
In the following subsections we present two computational shortcuts that make variable selection via the dCRT feasible for large-scale applications. 

\subsection{Data-dependent screening of variables}\label{sec:screening}

A natural acceleration of the dCRT for variable selection is to first use the data to identify a preliminary subset $\mathcal S \subseteq \{1,\dots, p\}$ of promising covariates via a screening function $S: (\bX,\by) \mapsto \mathcal S$. We can then compute (d)CRT $p$-values $p_j(\bX, \by)$ {(via Algorithms~\ref{alg:crt} or \ref{alg:frame})} for only $j \in \mathcal S$ while setting the $p$-values for all the other covariates to 1, yielding the screened $p$-values 
\begin{equation}
p'_j(\bX, \by) = 
\begin{cases}
p_j(\bX, \by) &\text{if } j \in S(\bX,\by); \\
1 &\text{if } j \not \in S(\bX,\by).
\end{cases}
\label{screening}
\end{equation}
For instance, $\mathcal S$ could be the active set of a cross-validated lasso fit of $\by$ on all the covariates.

In general, a screening step like this applied before the (d)CRT breaks the exchangeability between the original and resampled test statistics which Theorem~\ref{thm:crt} relies on to guarantee $p$-value validity. \dll{The intuitive reason for the failure of exchangeability is that the screening, when it discards covariates, takes a (data-dependent, and thus random) subset of covariates and implicitly changes their (d)CRT test statistics to ensure a $p$-value of 1 is returned. Hence, the screening implies an $\bx$-dependent choice of $T$, whose distribution under the null will then be different when its argument is $\bx$ versus $\bx\supm$.} Despite this failure of exchangeability, the screening can only inflate a $p$-value and thus does not affect its validity. \dll{Indeed, since $p_j(\bX, \by) \leq p_j'(\bX, \by)$, the validity of the CRT $p$-value $p_j(\bX, \by)$ implies that
\[ 
\mathbb{P}(p'_j(\bX, \by) \leq u) \le \mathbb{P}(p_j(\bX, \by)\le u)\le u,
\]
making $p'_j$ also a valid $p$-value. The above discussion is summarized as the following theorem.}

\begin{theorem}\label{thm:valid}
Let $j$ be a null variable. For any screening rule $S$, the screened $p$-value $p_j'(\bX, \by)$ obtained from equation~\eqref{screening} is stochastically larger than uniform.\dll{, i.e.,
\[ \mathbb{P}(p_j'(\bX, \by)\le u)\le u \text{ for all } u\in [0,1].\]}
\end{theorem}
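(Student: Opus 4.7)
The plan is to prove this by the elementary observation that the screening step can only inflate the $p$-value, and therefore the resulting screened $p$-value is dominated (pointwise, in every realization) by the unscreened dCRT $p$-value, whose validity we already have from Theorem~\ref{thm:crt}.

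First I would fix a null index $j$ and unpack the definition in equation~\eqref{screening}. On the event $\{j \in S(\bX,\by)\}$ we have $p_j'(\bX,\by) = p_j(\bX,\by)$, while on the complementary event $\{j \notin S(\bX,\by)\}$ we have $p_j'(\bX,\by) = 1 \geq p_j(\bX,\by)$, since any $p$-value is at most $1$. Combining these two cases gives the key pointwise inequality
\[
p_j(\bX,\by) \;\leq\; p_j'(\bX,\by) \qquad \text{almost surely},
\]
which is worth stating explicitly before invoking Theorem~\ref{thm:crt}, since it is the only place the definition of $S$ enters --- notably, nothing at all is assumed about how $S$ depends on the data.

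From here the argument is immediate. Under $H_{0,j}$, Theorem~\ref{thm:crt} (applied to $(\bx,\bZ) = (\bX_{\bullet,j},\bX_{\bullet,\noj})$) gives $\mathbb{P}(p_j(\bX,\by) \leq u) \leq u$ for every $u \in [0,1]$. Monotonicity of probability combined with the pointwise inequality above then yields
\[
\mathbb{P}(p_j'(\bX,\by) \leq u) \;\leq\; \mathbb{P}(p_j(\bX,\by) \leq u) \;\leq\; u,
\]
which is exactly the stochastic-dominance statement of the theorem.

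There is essentially no main obstacle here: once one notices that $p_j' \geq p_j$ deterministically, the result is a one-line consequence of Theorem~\ref{thm:crt}. The only subtle point worth flagging --- and it is conceptual rather than technical --- is that the screening rule $S$ may depend on $\bx$ in arbitrary, even adversarial, ways and therefore breaks the exchangeability of $\bx, \bx^{(1)}, \dots, \bx^{(M)}$ that underlies Theorem~\ref{thm:crt}'s proof; however, this exchangeability was only used to validate $p_j$ itself, and the argument here simply leverages that validity together with a deterministic pointwise bound, so no re-examination of the CRT's internals is needed.
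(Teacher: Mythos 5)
Your proof is correct and follows exactly the same route as the paper's: the pointwise bound $p_j(\bX,\by) \leq p_j'(\bX,\by)$ from the definition in equation~\eqref{screening}, combined with the validity of the unscreened CRT $p$-value from Theorem~\ref{thm:crt}, immediately gives $\mathbb{P}(p_j'(\bX,\by) \leq u) \leq \mathbb{P}(p_j(\bX,\by) \leq u) \leq u$. Your remark that the screening may break exchangeability but that this does not matter because only the validity of $p_j$ itself is invoked mirrors the paper's own discussion preceding the theorem.
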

{
\begin{proof}
By equation~\eqref{screening}, for any $u\in [0,1]$, 
$\mathbb{P}(p'_j(\bX, \by) \leq u) \le \mathbb{P}(p_j(\bX, \by)\le u)\le u$.
\end{proof}
}
Thus, with the small computational overhead of a single well-chosen screening function, we can expect to dramatically cut the computation time of using the (d)CRT for variable selection. \dll{Note that the screening procedure increases the $p$-values relative to their unscreened counterparts, but it nevertheless has no impact on power as long as it does not screen away any non-null $p$-values {that would have been rejected by a multiple testing procedure}, which is far less-stringent and more achievable than requiring the screening not to screen away {any} non-null $p$-values. In other words, imagine applying a Bonferroni correction (or the Benjamini--Hochberg procedure) to the screened $p$-values: there would only be a loss in power if the screening procedure failed to select a variable whose CRT $p$-value was less than $\alpha/p$---however, any reasonable screening procedure will not fail to pick up such a strong signal.} Indeed we found in our simulations that simple screenings substantially decreased computation time without affecting the power.

\subsection{Recycling computation for L$_1$-regularized M-estimators} \label{sec:recycling}

In some cases, we may want to compute $p$-values for all variables under consideration, even if only a small fraction of these are statistically significant. For instance, these may be needed for downstream analysis tasks like calibration assessment or meta-analysis. In such cases, we must look beyond the screening approach. In this section, we present a way of recycling computation for $L_1$-regularized $M$-estimators including the lasso (recall Examples~\ref{ex:1} and~\ref{ex:2}). This reduces the number of $\mathcal D_y$ computations from $p$ to $|\mathcal A|$, where $\mathcal A$ is the active set of the lasso on $(\bX, \by)$.

Let $\mathcal D_y$ be the cross-validated lasso with strictly convex and differentiable loss function $\mathcal \ell$. Variable selection via the dCRT based on this distillation function requires computing
\begin{equation}
\widehat \beta(\bX_{\bullet,\noj}, \by; \lambda) := \underset{\beta \in \RR^{p-1}}{\arg \min}\ \sum_{i = 1}^n \ell(Y_i, X_{i,\noj}\beta) + \lambda \|\beta\|_1
\label{lasso}
\end{equation}
for each $j = 1, \dots, p$, along a grid of regularization parameters.
There is redundancy among these $p$ lasso problems; they all differ from the the full lasso problem on $(\bX, \by)$ by just one variable. We may therefore expect that we can save computation by somehow recycling computation across these lasso problems. The next lemma suggests a means to this end:

\begin{lemma} \label{lem:lime-speedup}
Suppose the columns of $\bX$ are in general position and that the loss $\mathcal \ell$ is differentiable and strictly convex. Then, for any $\lambda > 0$,
\begin{equation}
\widehat \beta_{j}(\bX, \by; \lambda) = 0 \quad \Longrightarrow \quad \widehat \beta(\bX_{\bullet,\noj}, \by; \lambda) = \widehat \beta_{\noj}(\bX, \by; \lambda).
\label{lime-speedup}
\end{equation}
\end{lemma}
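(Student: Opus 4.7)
The plan is a KKT-conditions argument combined with uniqueness of the lasso minimizer. Strict convexity of $\ell$ ensures that the fitted vector $\bX \widehat\beta$ is unique at any minimizer, and the general position assumption on the columns of $\bX$ is exactly what additionally rules out the $\ell_1$-penalty-induced non-uniqueness of $\widehat\beta$ itself; together (as in standard lasso uniqueness results, e.g.\ Tibshirani 2013), they imply that $\widehat\beta(\bX, \by; \lambda)$ is the unique minimizer of \eqref{lasso}. Since general position passes to any subset of columns, the same argument makes $\widehat\beta(\bX_{\bullet,\noj}, \by; \lambda)$ unique as well, so it will suffice to verify that $\widehat\beta_{\noj}(\bX, \by; \lambda)$ satisfies the first-order optimality conditions of the reduced problem.

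Next, the plan is to write out the subdifferential optimality conditions for $\widehat\beta := \widehat\beta(\bX, \by; \lambda)$: there exists $s \in [-1,1]^p$ with $s_k = \mathrm{sign}(\widehat\beta_k)$ whenever $\widehat\beta_k \neq 0$ such that, for every $k \in \{1, \dots, p\}$,
\[
\sum_{i=1}^n X_{ik}\, \partial_2 \ell(Y_i, X_i \widehat\beta) + \lambda s_k = 0,
\]
where $\partial_2 \ell$ denotes the derivative of $\ell$ in its second argument. Specializing to the hypothesis $\widehat\beta_j(\bX, \by; \lambda) = 0$, we obtain $X_i \widehat\beta = X_{i,\noj}\,\widehat\beta_{\noj}$ for every $i$, so retaining only the coordinates $k \neq j$ in the display above yields exactly the subgradient optimality system for the lasso on $(\bX_{\bullet,\noj}, \by)$ evaluated at the candidate point $\widehat\beta_{\noj}$ with subgradient vector $s_{\noj}$. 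By uniqueness of the reduced problem's minimizer, this candidate must in fact equal $\widehat\beta(\bX_{\bullet,\noj}, \by; \lambda)$, which is precisely \eqref{lime-speedup}.

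The only substantive step is the uniqueness appeal; everything else is bookkeeping. General position is inherited automatically by the submatrix $\bX_{\bullet,\noj}$, so no additional work is needed for the reduced problem. What the hypotheses genuinely buy is the rigid implication ``general position $+$ strict convexity $\Rightarrow$ unique minimizer'', without which a point satisfying the reduced KKT conditions need not be \emph{the} solution (lasso minimizers are generically non-unique when, for instance, $p > n$ or columns are exactly collinear). Once this is granted, the conclusion follows in one line by substituting $\widehat\beta_j = 0$ into the KKT conditions of the full problem.
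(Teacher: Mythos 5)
Your proof is correct and follows essentially the same route as the paper's: write the subgradient (KKT) conditions for the full problem, observe that $\widehat\beta_j = 0$ makes the fitted values $\bX_{i,\bullet}\widehat\beta$ coincide with $\bX_{i,\noj}\widehat\beta_{\noj}$, conclude that $\widehat\beta_{\noj}$ satisfies the optimality conditions of the reduced problem, and invoke uniqueness of the lasso solution under general position and strict convexity (Tibshirani, 2013). Your additional remark that general position is inherited by the column submatrix is a correct and slightly more careful articulation of a step the paper leaves implicit.
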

In words, Lemma~\ref{lem:lime-speedup} states that removing an inactive variable from the lasso does not change the fitted coefficient vector. This has important computational implications (potentially even outside the scope of this paper)---it suggests that we can avoid refitting the lasso~\eqref{lasso} for most variables $j$, instead recycling the lasso fit on the full design matrix. Of course, the parameter $\lambda$ is usually tuned via cross-validation, which introduces extra complications. However, we claim that if $\lambda$ is chosen in an appropriate data-dependent way, then an analogous result will still hold.

To make this precise, consider a grid of regularization parameters
\begin{equation}
\lambda(1) > \lambda(2) > \cdots > \lambda(G) > 0
\label{grid}
\end{equation}
and a corresponding set of cross-validation errors $\mathcal E_1, \dots, \mathcal E_G$. Define a rule $\widehat g$ to select the penalty parameter $\lambda$ based on cross-validation errors $\mathcal E_1, \dots, \mathcal E_G$ to be \textit{sequential} if these values  are traversed in this order, and at some stopping time $\widetilde g$, the algorithm terminates and chooses $\lambda({\widehat g})$ for some $\widehat g \leq \widetilde g$. For example, for any integer $\Delta \geq 1$, the following rule is sequential: 
\begin{equation*}
\widehat g \equiv \min\{g: \mathcal E_g \leq \min(\mathcal E_{g+1}, \dots, \mathcal E_{g + \Delta})\},
\end{equation*}
which is the first time along the regularization path that the cross-validation error is smaller than the following $\Delta$ steps (the first `local minimum' on the cross-validation path, and the `sparsest' of all such local minima). In this case the stopping time is $\widetilde g = \widehat g + \Delta$. The lasso with any sequential rule $\widehat g$ has the property~\eqref{lime-speedup}. 
\begin{theorem}\label{prop:lasso-shortcut}
Fix a grid of regularization parameters~\eqref{grid}. Consider applying $L_1$-regularized regression with loss $\ell$ on the whole data $(\bX, \by)$, with $\lambda$ selected by $K$-fold cross-validation and a sequential stopping rule $\widehat g$. Let $\widehat g(\bX, \by)$ and $\widetilde g(\bX, \by)$ be the resulting grid point and stopping time, respectively. Letting $\{1, \dots, n\} = D_1 \cup \cdots \cup D_K$ denote the split of the data into non-overlapping folds, define the active set
\begin{equation}
\begin{split}
\mathcal A = \{j \in \{1, \dots, p\}:\ &\widehat \beta_j(\bX, \by; \lambda(\widehat g(\bX, \by))) \neq 0 \text { or } \\
&\widehat \beta_j(\bX_{\text{--}D_k,\bullet}, \by_{\text{--}D_k}; \lambda(g)) \neq 0 \text{ for some } k, g \leq \widetilde g(\bX, \by)\}.
\end{split}
\label{active-set}
\end{equation}
If the loss $\ell$ is differentiable and strictly convex, and the columns of $\bX$ and $\bX_{\text{--}D_k,\bullet}$ are in general position for each $k$, then excluding non-active variables $j$ does not alter the fitted coefficients: for each $j \not \in \mathcal A$,
\begin{equation}
\widehat g(\bX_{\bullet,\noj}, \by) = \widehat g(\bX, \by) \ \text{and} \ \widehat \beta(\bX_{\bullet,\noj}, \by; \lambda(\widehat g(\bX_{\bullet,\noj}, \by))) = \widehat \beta_{\noj}(\bX, \by; \lambda(\widehat g(\bX, \by))).
\label{conclusion}
\end{equation}
\end{theorem}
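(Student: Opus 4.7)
The plan is to leverage Lemma~\ref{lem:lime-speedup} repeatedly to propagate the fact that $j \notin \mathcal{A}$ through every lasso computation that the cross-validated procedure touches. Concretely, I will (i) show that each individual fold-fit used in computing a cross-validation error at grid point $g \leq \widetilde g(\bX, \by)$ is unchanged (up to dropping the zero $j$-th coordinate) when column $j$ is removed; (ii) deduce that the CV errors themselves are unchanged through time $\widetilde g$, so the sequential rule produces identical outputs $\widehat g$ and $\widetilde g$; and (iii) apply Lemma~\ref{lem:lime-speedup} one last time at the selected $\lambda(\widehat g(\bX, \by))$ to obtain the coefficient equality in~\eqref{conclusion}.

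For step (i), fix any fold $k$ and grid point $g \leq \widetilde g(\bX, \by)$. By definition of $\mathcal{A}$ in~\eqref{active-set}, $j \notin \mathcal{A}$ implies $\widehat \beta_j(\bX_{\text{--}D_k,\bullet}, \by_{\text{--}D_k}; \lambda(g)) = 0$. Since the columns of $\bX_{\text{--}D_k,\bullet}$ are in general position by assumption and $\ell$ is differentiable and strictly convex, Lemma~\ref{lem:lime-speedup} applies to the fold data and yields
\begin{equation*}
\widehat\beta(\bX_{\text{--}D_k, \noj}, \by_{\text{--}D_k}; \lambda(g)) = \widehat\beta_{\noj}(\bX_{\text{--}D_k, \bullet}, \by_{\text{--}D_k}; \lambda(g)).
\end{equation*}
Because the $j$-th coordinate of the full-design fit vanishes, predictions on held-out observations $i \in D_k$ are identical under either design: $X_{i,\bullet}\widehat\beta(\lambda(g)) = X_{i,\noj}\widehat\beta_{\noj}(\lambda(g))$. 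Denoting by $\mathcal E'_g$ the cross-validation error produced when the procedure is run on $(\bX_{\bullet,\noj}, \by)$, this gives $\mathcal E'_g = \mathcal E_g$ for every $g \leq \widetilde g(\bX, \by)$.

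For step (ii), the defining feature of a sequential rule is that its stopping time $\widetilde g$ and its selection $\widehat g$ depend only on the CV errors actually inspected before stopping. Since the CV errors on inputs $(\bX, \by)$ and $(\bX_{\bullet,\noj}, \by)$ agree through index $\widetilde g(\bX, \by)$, the rule produces $\widetilde g(\bX_{\bullet,\noj}, \by) = \widetilde g(\bX, \by)$ and $\widehat g(\bX_{\bullet,\noj}, \by) = \widehat g(\bX, \by)$. Step (iii) is then immediate: $j \notin \mathcal{A}$ also gives $\widehat \beta_j(\bX, \by; \lambda(\widehat g(\bX, \by))) = 0$, so Lemma~\ref{lem:lime-speedup} applied on the full data at that $\lambda$ produces the second equality in~\eqref{conclusion}.

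The main obstacle I anticipate is conceptual rather than computational: making precise that a sequential rule's decision depends only on CV errors inspected up to the stopping time, so that any potential divergence of $\mathcal E'_g$ and $\mathcal E_g$ for $g > \widetilde g(\bX, \by)$ (which we cannot control, since non-active variables past the stopping time may behave arbitrarily) cannot affect the selection. This is exactly where the sequential assumption earns its keep---a generic CV rule that scans the entire path would not enjoy this guarantee, and the conclusion would fail in general.
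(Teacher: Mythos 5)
Your proposal is correct and follows essentially the same route as the paper's proof: apply Lemma~\ref{lem:lime-speedup} to each fold-fit to show the cross-validation errors agree through the stopping time $\widetilde g(\bX,\by)$, invoke sequentiality to conclude $\widehat g$ and $\widetilde g$ are unchanged, and finish with one more application of Lemma~\ref{lem:lime-speedup} on the full data at the selected $\lambda$. Your explicit remark that held-out predictions coincide because the $j$-th coefficient vanishes is a small but welcome clarification of a step the paper leaves implicit.
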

Theorem~\ref{prop:lasso-shortcut} states that for each variable $j$ \textit{not} in the active set, we need not re-run the lasso holding out variable $j$; we can instead fit the full lasso once and then read off the coefficient vector. This computational shortcut, summarized in Algorithm~\ref{alg:loco-crt-lasso}, reduces the number of lasso applications required by the dCRT from $p$ to $|\mathcal A|$. Depending on the sparsity of the problem, this reduction can save several orders of magnitude of computation. It is known that at most, the lasso solution has $\min(p,n)$ nonzero entries \citep{tibshirani2013lasso}, though often it is much sparser.

\section{Statistical performance of the dCRT}\label{sec:sim}

{
\subsection{Implications of distillation for power} \label{sec:dcrt-vs-crt}

Our motivation for proposing the dCRT is computational; using distilled test statistics accelerates the CRT by orders of magnitude compared to the originally-proposed lasso coefficient test statistic. In this section, we discuss the statistical implications of this computational acceleration. While distillation is a flexible framework that can encompass a variety of test statistics, for concreteness in this section we narrow our focus to the d$_0$CRT. Our goal is to carefully compare the d$_0$CRT to its ``undistilled" counterpart, the \rev{o}CRT based on the absolute lasso coefficient.
Our main conclusion is that, perhaps surprisingly, distillation does not have much effect on the the power of the CRT. We present the main reasoning behind this conclusion here and defer the details to Appendix~\ref{sec:dCRT-versus-CRT}.

To emphasize the exclusion of $\bm x$ from the lasso regression, let $\widehat \beta\suploco_z$  be the fitted coefficients in the lasso regression of $\bm y$ on $\bm Z$ and let $\widehat \beta\suploco_x = (\bm x - \bm d_x)^{\top} (\bm y - \bm Z \widehat\beta\suploco_z)/\|\bm x - \bm d_x\|^2$ as in the definition of the d$_0$CRT. By contrast, let $(\widehat \beta_x\suplasso, \widehat \beta_z\suplasso)$ denote the fitted coefficients in the lasso regression of $\bm y$ on $\bm x$ and $\bm Z$, so the \rev{o}CRT is based on the test statistic $|\widehat \beta_x\suplasso|$. Let us also assume in this section, as we did in Section~\ref{sec:main:mcf}, that $X\mid Z \sim N(Z^{\top} \gamma, s^2)$. Finally, suppose (for intuition) that $Y\mid X,Z$ follows a Gaussian linear model with coefficients $\beta_x$ and $\beta_z$.

The obvious difference between \rev{o}CRT and dCRT is that the latter is based on a lasso regression excluding the variable of interest while the former is based on a lasso regression on all variables. Thus, $\widehat \beta_z\suploco \neq \widehat\beta\suplasso_z$ and so of course $\widehat \beta_x\suploco \neq \widehat \beta_x\suplasso$. However, there are two additional differences that must be accounted for in order to  understand the relationship between the two methods. First of all, $\widehat \beta_x\suplasso$ has a nonzero probability of being equal to zero, while $\widehat \beta_x\suploco$ is almost surely nonzero. Secondly, $\widehat \beta\suplasso_x$ does not necessarily have a null distribution centered on zero, whereas $\widehat \beta_x\suploco$ does (recall equation~\eqref{centered-null}). 

In Appendix~\ref{sec:dCRT-versus-CRT}, we examine the impacts of these two properties of the \rev{o}CRT. We find that the sparsity that $\widehat \beta_x\suplasso$ inherits from the lasso can only hurt the power of the \rev{o}CRT and propose a simple alternative based on removing the soft threshold operator. Furthermore, we show that, depending on the distribution of $X\mid Z$ and on the locations and signs of the nonzero elements of $(\beta_x, \beta_z)$, the null distribution of $\widehat \beta_x\suplasso$ can either be centered at the origin, or left or right of the origin. By using the absolute value of the potentially off-center test statistic $\widehat \beta_x\suplasso$, the \rev{o}CRT gains or loses power to the extent that the null distribution is shifted to the right or left, respectively. This motivates us to propose a centered and non-soft-thresholded version of the \rev{o}CRT test statistic.

Using numerical simulations, we found essentially no difference between the performance of the dCRT and the centered, non-soft-thresholded version of the \rev{o}CRT. In other words, after accounting for the aforementioned two differences, the distillation step has little or no impact on the power of the CRT. This conclusion may seem surprising, since on first glance leaving $\bm x$ out appears to cause some of the signal (namely the contribution of $\bm x$ to $\bm y$ that can instead be explained by $\bm Z$) to be regressed out. One may expect this effect to decrease the power of the dCRT. However, this is not the case because it is precisely the component of $\bm x$ that \textit{cannot} be explained by $\bm Z$ that carries signal. Therefore, dropping $\bm x$ and regressing $\bm Z$ out of $\bm y$ first does not have much effect on the power of the dCRT. This intuition would be precise if $\widehat \beta_z\suploco$ were obtained from (unpenalized) linear regression of $y$ on $\bm Z$. Indeed, it is a well-known property of linear regression that the coefficient of $\bm x$ can be obtained by first regressing $\bm Z$ out of $\bm x$ and $\bm y$, and then regressing the residual of $\bm y$ onto that of $\bm x$.
}

\subsection{\rev{Numerical comparisons of power, speed, robustness, and stability}} \label{sec:more-simulations}

\rev{Going beyond the numerical simulations in the previous section,} we designed an extensive simulation suite to systematically assess the power \rev{and other operating characteristics} of dCRT and compare it to that of several alternative methods. Preferring to compare the dCRT to existing methods, we chose to benchmark it against the originally proposed \rev{o}CRT instead of the modified version considered above. We must keep in mind, however, that this choice also complicates the comparison for the aforementioned reasons. Furthermore, we suspect that the centering and soft-thresholding issues may impact the performance of knockoffs as well. Unlike for the CRT, however, it is harder to pull these aspects apart for knockoffs. The soft thresholding affects both the one-bit $p$-values and the ordering of the variables, so removing it may not result in a uniform improvement like it did for \rev{o}CRT. Regarding centering, it is not obvious how to recenter the knockoffs null distribution because knockoffs does not really use a null distribution. We leave the study of these phenomena for knockoffs to future work, and in the meantime compare dCRT to published implementations of the latter.

In the interest of space we defer the details of our simulations to the appendix and present here a detailed summary of the takeaways of those simulations, directly linking each takeaway to the figure and section of the appendix with the corresponding simulation(s) supporting it. The main focus of our simulations is examining the performance of the dCRT through the d$_0$CRT and d$_\mathrm{I}$CRT given by Examples~\ref{ex:1} and~\ref{ex:2}, respectively. Except where explicitly stated otherwise, we apply them in a resampling-free manner per Section~\ref{sec:main:mcf} and, when simulating a variable selection task, with screening (using the cross-validated lasso for selection) per Section~\ref{sec:screening}. For variable selection simulations, we take each of the $p$-value methods (\rev{o}CRT, dCRT, HRT) and apply the Benjamini--Hochberg procedure when targeting false discovery rate control and the Bonferroni correction when targeting family-wise error rate control. Source code for running the dCRT and reproducing our results can be found along with example scripts for illustration at \url{https://github.com/moleibobliu/Distillation-CRT}.

{
We compared the dCRT to the \rev{o}CRT in a broader set of simulations than those referenced in Section~\ref{sec:dcrt-vs-crt}, including linear and logistic regression models and d$_I$CRT as well as d$_0$CRT. We chose the smaller problem size of $n=p=300$ to accommodate the computational burden of the \rev{o}CRT. We found that distillation dramatically reduces CRT computation; both the d$_0$CRT and d$_\mathrm{I}$CRT conferred a computational savings of approximately 500 times over the \rev{o}CRT (Table~\ref{tab:com}). The relative powers of dCRT and \rev{o}CRT (Figures~\ref{fig:smc}, \ref{fig:smc:fwer}) were consistent with what we found in Section~\ref{sec:dcrt-vs-crt}, with dCRT sometimes more powerful and sometimes less powerful than the \rev{o}CRT. The \rev{o}CRT was more powerful when signals were equally spaced while the dCRT was more powerful when signals were adjacent to each other. We suspect these differences to be caused mainly by the discussed soft thresholding and centering issues. See Appendix~\ref{sec:sim:smc} for details.}


The dCRT is more powerful than the HRT: In both the aforementioned $n=p=300$ simulations and a larger simulation with $n=p=800$, the dCRT computation times were mostly within an order of magnitude of the HRT (Tables~\ref{tab:com} and~\ref{tab:com:large}). But across settings that included a range of $n$ up to 1400, a range of $p$ up to 3200, a range of signal magnitudes, a range of sparsities, a range of covariance structures for $X$, and a range of models for $Y\mid X$, both dCRT methods had consistently and substantially higher power than the HRT (up to about 50 percentage points higher); see Figures~\ref{fig:smc},~\ref{fig:smc:fwer},~\ref{fig:diffnp},~\ref{fig:diffnp:equal},~\ref{fig:diffnp:fwer},~\ref{fig:diffnp:equal:fwer},~\ref{fig:diff:design}. See Appendices~\ref{sec:sim:smc} and~\ref{sec:sim:hrt} for details.

{When controlling false discovery rate, the relative performance of dCRT and knockoffs varies across simulation settings, similar to the relative performance of the dCRT and \rev{o}CRT. The dCRT methods tend to have higher power than knockoffs when signal variables are adjacent and lower power than knockoffs when the signal variables are equally spaced. The power comparison between dCRT and knockoffs is a subtle one, and we leave its further investigation for future work.} In very sparse settings, dCRT still has power, while knockoffs does not (due to its reliance on the Selective SeqStep+ procedure \citep{barber2015controlling}). In such regimes, the family-wise error rate may be more appropriate, and the dCRT can be used to control this error rate as well. See Figures~\ref{fig:smc},~\ref{fig:smc:fwer},~\ref{fig:diffnp},~\ref{fig:diffnp:equal},~\ref{fig:diffnp:fwer},~\ref{fig:diffnp:equal:fwer},~\ref{fig:diff:design}. The dCRT is more computationally expensive than knockoffs, but usually within an order of magnitude (Tables~\ref{tab:com} and~\ref{tab:com:large}). {Finally, the dCRT has substantially less algorithmic variability than knockoffs, as measured by the expected Jaccard similarity between two rejection sets obtained by re-running the methods with different seeds (Figure~\ref{fig:variability})}. See Appendices~\ref{sec:sim:smc}, ~\ref{sec:sim:hrt}, and \ref{sec:knockoffs-variability} for details.


The d$_\mathrm{I}$CRT is stable to the choice of $k$ and has slightly less power than the d$_0$CRT in additive models but can have substantially higher power in the presence of interactions: In a simulation with an additive model, the power of the d$_\mathrm{I}$CRT was identical as $k$ ranged from 2--22 (the default value of $k=2\log(p)$ would have been 13), while in a model with five true interactions, the power only varied from about 50\% to about 40\% over the same range of $k$ (Figure~\ref{fig:choose:k}). Throughout all our simulations in additive models we found the d$_0$CRT to be slightly but consistently more powerful than the d$_\mathrm{I}$CRT (e.g., Figures~\ref{fig:smc},~\ref{fig:smc:fwer},~\ref{fig:diffnp},~\ref{fig:diffnp:equal},~\ref{fig:diffnp:fwer},~\ref{fig:diffnp:equal:fwer},~\ref{fig:diff:design},~\ref{fig:robust:sec},~\ref{fig:robust:X:power}), but in the presence of interactions obeying the hierarchy principle discussed in Section~\ref{sec:dIcrt}, we found that the d$_\mathrm{I}$CRT could be quite a bit more powerful (up to about 25 percentage points) than the d$_0$CRT (Figure~\ref{fig:dglm}).
See Appendices~\ref{sec:sim:dk} and~\ref{app:dIcrt:k} for details.

The dCRT can leverage nonparametric machine learning algorithms for substantial power gains in highly-nonlinear models: In a simulation in which $X$'s relationship with $Y$ was highly-nonlinear and interacted with five $Z_j$'s, our default (lasso-based) d$_\mathrm{I}$CRT had somewhat higher power than d$_0$CRT (as much as about 20 percentage points), but a different, random-forest-based d$_\mathrm{I}$CRT had far higher power than the lasso-based d$_\mathrm{I}$CRT (as much as about 50 percentage points) (Figure~\ref{fig:rf}). See Appendix~\ref{sec:rf} for details.

The dCRT is quite robust to misspecification of $X$'s distribution: When the distribution of $X\mid Z$ is Poisson even with a very small mean parameter (making it highly discrete and heavily skewed) but approximated by a Gaussian with matching mean and variance, both the d$_0$CRT and d$_\mathrm{I}$CRT maintain Type-I error control and high power (Figure~\ref{fig:robust:sec}). Furthermore, when the covariates are jointly Gaussian {and the $X\mid Z$ distributions are estimated in-sample using any of three standard methods detailed in Appendix~\ref{app:sim:robustestmom}, the Type-I error of both dCRT methods always remains close to the nominal level (Figure~\ref{fig:robust:X:fdr}).} See Appendices~\ref{app:sim:robustfirstsecond}~and~\ref{app:sim:robustestmom} for details.

The resampling-free versions of the dCRT are faster and just as powerful as the non-resampling-free dCRT except when $X\mid Z$ is highly discrete: The resampling-free modification sped up the d$_0$CRT by 2.5 times in an $n=p=800$ simulation and sped up the d$_\mathrm{I}$CRT by 11 times in an $n=p=800$ simulation, even after applying screening (Table~\ref{tab:com:resample}). When $X\mid Z$ is Gaussian, changing the form of the test statistics of the d$_0$CRT and d$_\mathrm{I}$CRT as proposed in paragraphs 2 and 4, respectively, of Section~\ref{sec:main:mcf} had a negligible effect on their power (Figure~\ref{fig:resample:free:gauss}). When $X\mid Z$ is non-Gaussian and must be transformed to Gaussian as described in paragraph 3 of Section~\ref{sec:main:mcf}, we found essentially no power loss for the resampling-free d$_0$CRT and d$_\mathrm{I}$CRT relative to their non-resampling-free counterparts when $X\mid Z$ was Gamma-distributed (with shape $=3$ and rate $=0.5$, so that skew $>1$ and excess kurtosis $=2$), while there was substantial power loss (up to about 40 percentage points) when $X\mid Z$ was binary and hence required substantial exogenous randomization to be transformed to Gaussian, though the resampling-free dCRTs were still more powerful (up to about 10 percent) than the HRT (Figure~\ref{fig:nmc}). See Appendices~\ref{sim:rfgauss}~and~\ref{sim:rfnongauss}.

Screening makes the dCRT faster without affecting its power: In a simulation with $n=p=800$, screening reduced the computation time by a factor of about 5 for both d$_0$CRT and d$_\mathrm{I}$CRT (Table~\ref{tab:com:screening}) without perceptibly hurting power (Figure~\ref{fig:power:screening}). See Appendix~\ref{sec:sim:screening} for details.

\section{Identifying biomarkers for breast cancer}\label{sec:realdata}

As a final demonstration of the effectiveness of the dCRT, we apply it to the data set from \cite{curtis2012genomic}, consisting of $n=1,396$ staged oestrogen-receptor-positive cases of breast cancer, each with expression level (mRNA) and copy number aberration (CNA) measured for $p=164$ genes, which were studied in \cite{pereira2016somatic}. Our goal is to find genes on which cancer stage depends, conditional on the remaining genes and all CNAs, while controlling either the false discovery rate or family-wise error rate at level 0.1. Discovering such biomarkers for cancer can reveal new pathways and mechanisms for cancer progression; see \cite{shen2019false} for a recent application of model-X knockoffs to the same end.

After log-transforming the gene expressions, we adjust to them using the CNA data with linear model as in \cite{solvang2011linear,lahti2012cancer,leday2013modeling} and modeled the processed gene expressions jointly as multivariate Gaussian similar to \cite{shen2019false}. We applied the d$_0$CRT, the d$_{\mathrm{I}}$CRT, the \rev{o}CRT, the HRT, and model-X knockoffs and compared the results. See Appendix~\ref{app:data} for details of the data pre-processing, covariate-modeling, and method implementations. {Each method is run 300 times. Table~\ref{tab:comp:data} contains average runtimes (in $\mathsf{R}$) for all methods, showing that the dCRTs are quite fast compared to the \rev{o}CRT. In particular, the \rev{o}CRT takes over 7 hours to run while the dCRTs take under a minute. 

Figure~\ref{fig:num:data} presents the distribution of the numbers of discoveries among the 300 repetitions for all the methods. Methods including dCRT, \rev{o}CRT and HRT have stable outputs about the number of detected genes. In terms of false discovery rate control, d$_0$CRT and d$_{\mathrm{I}}$CRT detect exactly 5 genes in more than 80\% repetitions and $\geq$5 genes at all times. \rev{o}CRT and HRT detect exactly 3 genes in more than 70\% repetitions and always have fewer discoveries than dCRT. While the knockoffs have 0 discoveries in about $45\%$ repetitions but $\geq$10 discoveries in the remaining times, which implies that knockoffs fail to produce stable output. Knockoffs' instablility and lack of power is due to the sparsity of discoverable genes. In terms of family-wise error rate control, d$_0$CRT and HRT have 3 discoveries in most runs, d$_{\mathrm{I}}$CRT has 4 discoveries and \rev{o}CRT has 2.
}

\begin{table}[htbp]
\centering
\begin{tabular}{c|c|c|c|c}
\multicolumn{5}{c}{{\bf Average computation times (minutes)}} \\
\hline
      \mbox{d$_0$CRT} &  d$_{\mathrm{I}}$CRT & \rev{o}CRT & knockoffs  & HRT  \\ \hline
     $0.8$  & $0.8$ & $443.3$ & $0.3$ & $3.1$\\ \hline
\end{tabular}
\caption{\label{tab:comp:data} Average computation times (300 repetitions in $\mathsf{R}$) in the breast cancer application. Note that our use of the resampling-free version of the dCRT makes it faster than the HRT in this case.}
\end{table}

\begin{figure}
    \centering
    \includegraphics[width = 0.45\textwidth]{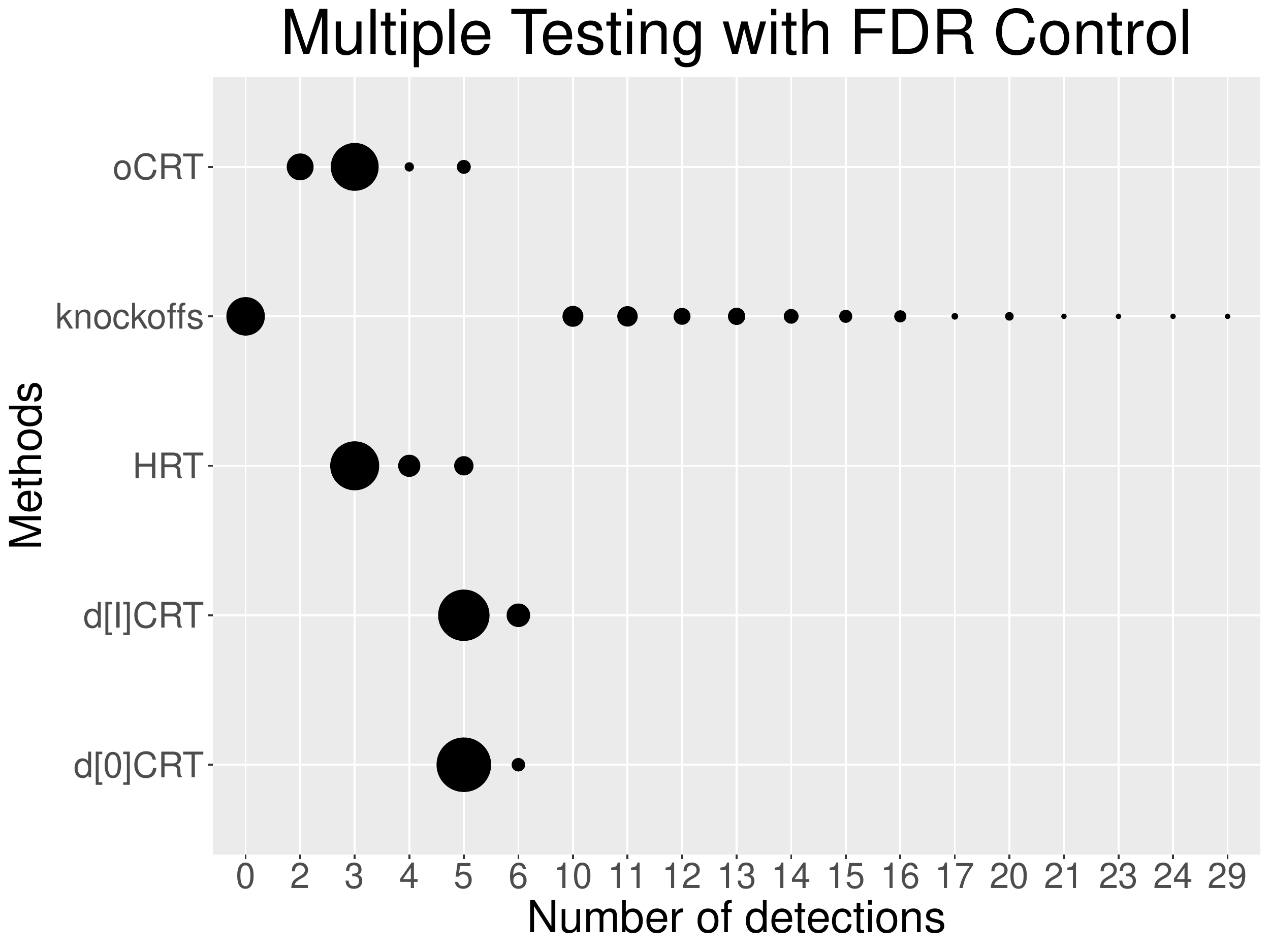}
    \includegraphics[width = 0.45\textwidth]{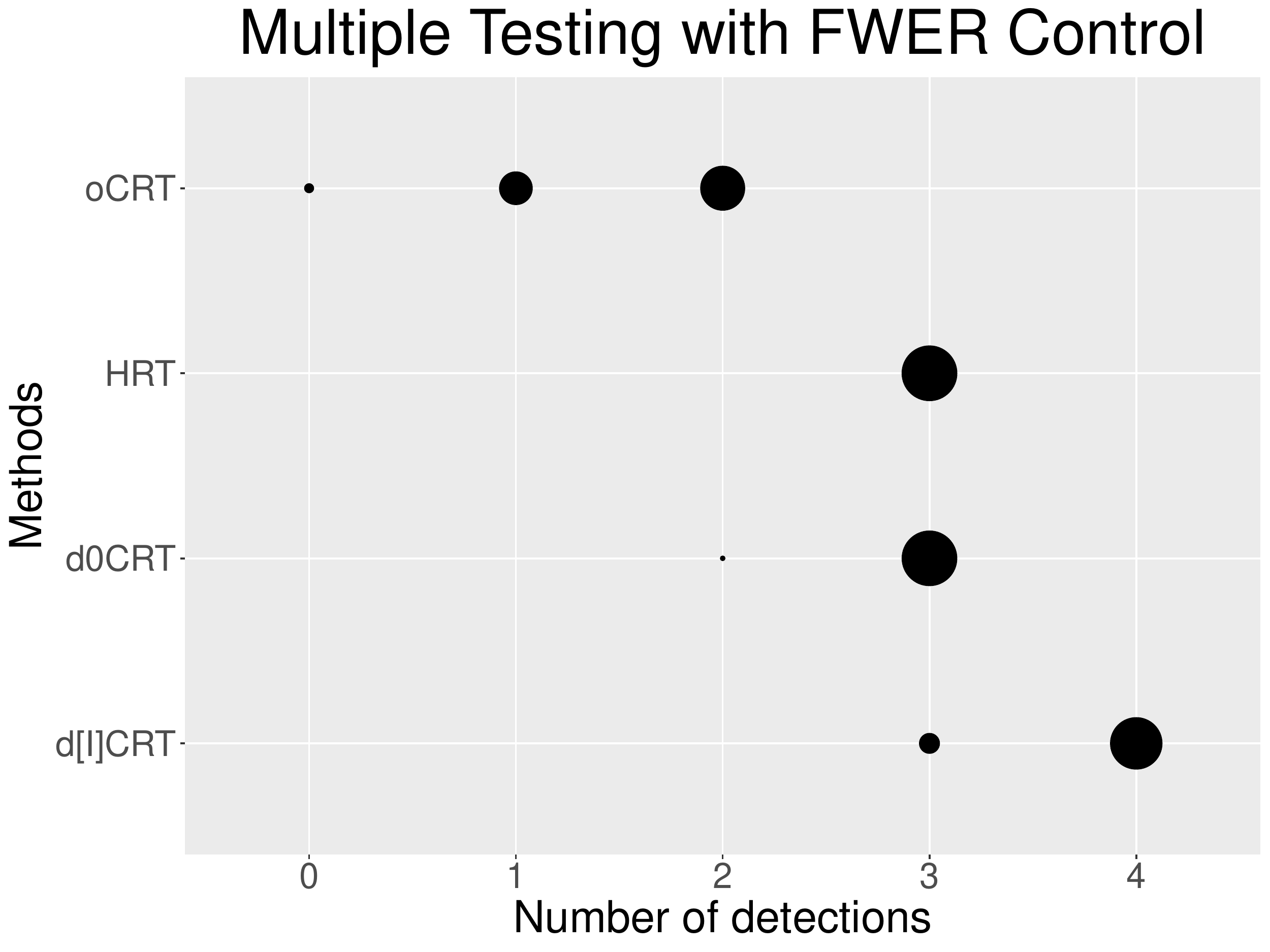}
    \caption{\label{fig:num:data} Summary of the numbers of discoveries over 300 repetitions, with false discovery rate and family-wise error rate control, in the breast cancer application. Area of each black point is proportional to the frequency that the corresponding method makes this number of discoveries in the 300 repetitions. The dCRT approaches are more powerful than the competitors \rev{o}CRT and HRT. The knockoffs have no discoveries in around 45\% of the experiments.}
\end{figure}

When used to control the false discovery rate, it turns out that all five genes discovered by the dCRT --- {\em FBXW7}, {\em MAP3K13}, {\em HRAS}, {\em GPS2}, and {\em RUNX1}; see  Appendix~\ref{tab:1} for their corresponding {average $p$-values and frequencies of being detected} --- have been linked in independent research to cancer, suggesting the dCRT makes promising discoveries. In particular, {\em FBXW7} encodes a member of the F-box protein family and its mutations are detected in ovarian and breast cancer cell lines \citep{liu2019fbxw7,kirzinger2019fbxw7}; {\em MAP3K13} belongs to the serine/threonine protein kinase family acting as a regulator for cancer \citep{han2016microrna}; {\em HRAS} belongs to the RAS oncogene family which is related to the transforming of genes of mammalian sarcoma retroviruses, and defects in this gene have been implicated in a variety of cancers \citep{geyer2018hras}; over-expression of {\em GPS2} in mammalian cells may suppress signals mediated by RAS/MAPK and interfere with JNK activity, all of which are cancer-related \citep{jarmalavicius2010gps2,huang2016gps2}; {\em RUNX1} has been found to activate certain signaling pathways that promote tumor metastasis \citep{li2019runx1}.

\section{Discussion} \label{sec:discussion}

The HRT provided the first indication that a variant of the CRT could be computationally tractable, albeit at the cost of statistical performance. In this paper, we demonstrate that leaving out \textit{variables} instead of \textit{samples} creates a procedure that is not quite as fast (though still a tiny fraction of the \rev{o}CRT's computational cost) but much more powerful.  This brings the dCRT into the realm of fast and powerful model-X methods, where knockoffs is currently the methodology of choice. Knockoffs and dCRT have complementary strengths, which we discuss briefly below.

Model-X knockoffs addresses the variable selection problem, targeting false discovery rate control. It is very computationally efficient, requiring just one high-dimensional model fit. Furthermore, our simulations confirm that knockoffs is quite powerful in several settings. These advantages have led to the successful application of knockoffs to genome-wide association studies  \citep{SetC17, SetS19}. By comparison, the dCRT still requires several high-dimensional model fits and is therefore more computationally costly. On the other hand, dCRT computation benefits from being embarrassingly parallelizable, so modern parallel computing resources can greatly reduce its runtime. As far as power goes, the relative performance of the two methods varies with simulation setting (see Section~\ref{sec:sim} and Appendix~\ref{sec:app:sim}); neither procedure uniformly dominates the other (when controlling the false discovery rate).

Aside from these considerations, the dCRT provides a few important advantages over knockoffs. The first is that, unlike knockoffs, the dCRT provides $p$-values (arbitrarily fine-grained and essentially exact)  for each conditional independence hypothesis. In addition to providing an interpretable measure of significance, this decoupling of statistical significance quantification from downstream analyses such as multiple testing brings great versatility. Indeed, dCRT $p$-values can be used for single hypothesis testing, multiple hypothesis testing with a variety of error rates, and any number of other tasks that take $p$-values as input. While the knockoffs framework has gradually been extended to handle analysis tasks beyond false discovery rate control --- e.g. $k$-family-wise error rate control by \cite{Janson2016}, and simultaneous false discovery probability control by \cite{Katsevich2020} --- such extensions require custom solutions and some are currently out of reach (such as single testing or family-wise error rate control).  Another advantage of the dCRT is that it has little or no variability across runs. On the other hand, knockoffs is a randomized procedure; this randomization can lead to variability in the performance of the procedure on a given data set; see Appendix~\ref{sec:knockoffs-variability} and Figure 4 of \cite{SetC17}.

The dCRT is therefore a useful addition to the model-X methodology toolbox. Much work still remains to refine this new tool for better power and even faster computation. Indeed, many degrees of freedom in the construction of the dCRT test statistic remain to be explored. For example, should the statistic be based on the fitted coefficient of a variable or on the loss function? What is the best way to test groups of variables? The recent theoretical exploration of the CRT~\citep{katsevich2020theoretical} may help guide the search for powerful test statistics. Another open question is whether there are efficient resampling-free dCRT variants for highly discrete covariates. Finally, the dependence structure of (d)CRT $p$-values is an important subject for further exploration. We may not always be able to plug-and-play (d)CRT $p$-values in multiple testing procedures, since their dependency structure is currently unknown. In a related development, \cite{bates2020causal} recently proposed a clever method of generating independent HRT $p$-values for groups of linearly-structured covariates. 

Despite these open questions, our initial demonstrations of the dCRT on simulated and real data are quite promising. We are therefore optimistic about the prospects of the dCRT for fruitful practical applications, and look forward to continued improvements in the computational and statistical efficiency of model-X methodology.

\subsection*{Acknowledgements}
We would like to thank Siyuan Ma, Wenshuo Wang, Dae Woong Ham, Lu Zhang, \rev{Shuangning Li and Emmanuel Cand\`es} for helpful discussions and feedback on the paper. \rev{We would also like to thank two referees and an associate editor for constructive feedback that helped improve our paper.} This work used the Extreme Science and Engineering Discovery Environment (XSEDE)~\citep{XSEDE}, supported by National Science Foundation grant ACI-1548562. Specifically, it used the Bridges system~\citep{Bridges}, which is supported by NSF award ACI-1445606, at the Pittsburgh Supercomputing Center (PSC).

\bibliographystyle{biometrika}
\bibliography{library}

\clearpage
\newpage
\setcounter{page}{1}

\appendix
\setcounter{lemma}{0}
\setcounter{theorem}{0}
\renewcommand{\thelemma}{A\arabic{lemma}}
\renewcommand{\thetheorem}{A\arabic{theorem}}

\setcounter{definition}{0}
\renewcommand{\thedefinition}{A\arabic{definition}}

\section*{Appendix}


\section{Resampling-free distilled CRT}\label{sec:app:mcf}

\subsection{Resampling-free lasso-based d$_\mathrm{I}$CRT}\label{sec:dI:pvl}

In this section, we describe the resampling-free version of the lasso-based d$_\mathrm{I}$CRT of Example~\ref{ex:2} for gaussian $X$, in analog to the resampling-free d$_0$CRT detailed in Section~\ref{sec:main:mcf}. We follow the notation of Example~\ref{ex:2} and for any $\a=(a_1,\ldots,a_n)\trans\in\mathbb{R}^n$, let ${\rm diag}(\a)$ denote the diagonal matrix with its $(i,i)$-th entry being $a_i$ for $i=1,2,\ldots,n$. Then in Example~\ref{ex:2},
\begin{align*}
\betahat_x=&\left[(\bone,{\bZ}_{\bullet,\mathrm{top}(k)})\trans{\rm diag}^2(\bepsilon_x)(\bone,{\bZ}_{\bullet,\mathrm{top}(k)})\right]^{-1}(\bone,{\bZ}_{\bullet,\mathrm{top}(k)})\trans{\rm diag}(\bepsilon_x)(\by-\dy)\\
=&\hat{\mathbb{H}}^{-1}(\bone,{\bZ}_{\bullet,\mathrm{top}(k)})\trans{\rm diag}(\by-\dy)\bepsilon_x,
\end{align*}
where $\hat{\mathbb{H}}=(\bone,{\bZ}_{\bullet,\mathrm{top}(k)})\trans{\rm diag}^2(\bepsilon_x)(\bone,{\bZ}_{\bullet,\mathrm{top}(k)})$ and $\bepsilon_x=\x-\dx$. And the test statistics
\[
T(\by,\bx,\dy,\dx) = \hat{\beta}_{x,1}^2+\frac{1}{k}\sum_{j=2}^{k+1}\hat{\beta}_{x,j}^2=\|\hat{\mathbb{H}}^{-1}\tilde{\bZ}_{\bullet,\mathrm{top}(k)}\bepsilon_x\|_2^2
\]
where $\tilde{\bZ}_{\bullet,\mathrm{top}(k)}=(\bone,k^{-1/2}{\bZ}_{\bullet,\mathrm{top}(k)})\trans{\rm diag}(\by-\dy)$. In analog to the resampling-free d$_0$CRT introduced in Section~\ref{sec:main:mcf}, we replace $\hat{\mathbb{H}}$ with its conditional expectation given $(\by,\bZ)$, i.e., $\mathbb{H}=\sigma^2_x(\bone,{\bZ}_{\bullet,\mathrm{top}(k)})\trans(\bone,{\bZ}_{\bullet,\mathrm{top}(k)})$ with $\sigma_x^2$ being the conditional variance of $X$ given $Z$. Then the test statistics of the resampling-free version of d$_{\mathrm{I}}$CRT can be constructed as $\|\mathbb{H}^{-1}\tilde{\bZ}_{\bullet,\mathrm{top}(k)}\bepsilon_x\|_2^2$. Conditional on $(\by,\bZ)$, it is a quadratic form of the gaussian vector $\bepsilon_x$ under the null. Accurate and efficient computational methods have been proposed to handle such problems (see, e.g., \citet{imhof1961computing,davies1980algorithm,liu2009new}). We use the method proposed by \cite{imhof1961computing} and realized by $\mathsf{R}$ package ${\sf CompQuadForm}$ \citep{duchesne2010computing} to compute the $p$-value of $\|\mathbb{H}^{-1}\tilde{\bZ}_{\bullet,\mathrm{top}(k)}\bepsilon_x\|_2^2$.

\subsection{Resampling-free dCRT with non-Gaussian $X$}\label{sec:gauss:trans}

Let $\Phi(\cdot)$ denote the cumulative distribution function of the standard normal distribution and denote by $\sigma^2_i={\rm Var}(X_i\,|\,Z_{i\cdot})$. In Algorithm~\ref{alg:mcf:crt}, we describe how to transform non-Gaussian $X$ to be Gaussian with the same conditional variance, so that the resampling-free dCRT (for certain statistics) can be applied.
\begin{algorithm}[htbp]
\caption{\label{alg:mcf:crt} Gaussian transformation.}
{\bf If} $X$ is continuous with conditional cumulative distribution function $F(x\,|\,Z),$, let $U_i = \sigma_i\Phi^{-1}(F(X_i\,|\,Z_i))$ for $i=1,2,\ldots,n$.\\
\vspace{0.2cm}
{\bf If} $X$ is discrete and supported on $\mathcal{X}=\{a_k:k\in K\}$ where $K\subseteq\mathbb{N}$ is some set of indices, $a_{k_1}<a_{k_2}$ for all $k_1<k_2$ and $\mathbb{P}(X_i=a_k\,|\,Z_{i})\neq 0$ for all $k\in K$: for $X_i=a_k$, draw $V_{i}$ uniformly from $\left[\mathbb{P}(X_i<a_k\,|\,Z_{i}),\mathbb{P}(X_{i}\leq a_{k}\,|\,Z_{i})\right]$ and $U_i = \sigma_i\Phi^{-1}(V_i)$.\\
\vspace{0.2cm}
{\bf Output:} $\bu=(U_1,U_2,\ldots,U_n)\trans$.
\end{algorithm}
Lemma~\ref{lem:gaustrans} establishes the properties that make $\bu$ a good Gaussian stand-in for $\x-\dx$, so that it can be used in a test statistic $T$ in the same way as $\x-\dx$ while being amenable to the resampling-free speedup. 

\begin{lemma}\label{lem:gaustrans}
The $U_i$ ouput by Algorithm~\ref{alg:mcf:crt} are (i) monotonically increasing in $X_i$, (ii) distributed as $\N(0,\sigma_i^2)$ given $Z_i$, and (iii) independent from $Z_i$.
\end{lemma}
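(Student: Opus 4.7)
The plan is to handle the continuous and discrete cases separately, since the algorithm treats them differently, and to derive (i)--(iii) in each case from a single underlying probability integral transform (PIT) argument.

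In the continuous case, the key is the classical PIT: since $F(\cdot\mid Z_i)$ is the conditional CDF of the continuous random variable $X_i$, we have $F(X_i\mid Z_i)\mid Z_i\sim\mathrm{Uniform}(0,1)$. Because this conditional distribution does not depend on $Z_i$, the variable $W_i:=\Phi^{-1}(F(X_i\mid Z_i))$ is $N(0,1)$ \emph{and} independent of $Z_i$. Multiplying by the deterministic (given $Z_i$) quantity $\sigma_i$ yields $U_i\mid Z_i\sim N(0,\sigma_i^2)$, establishing (ii), and gives (iii) in the sense that $U_i/\sigma_i$ is independent of $Z_i$. Monotonicity (i) is immediate because $F(\cdot\mid Z_i)$ is non-decreasing, $\Phi^{-1}$ is strictly increasing, and $\sigma_i\geq 0$.

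The discrete case requires a bit more care, and this is the main obstacle: we must verify that the randomization step produces a conditionally uniform $V_i$. The plan is to compute $\mathbb{P}(V_i\leq v\mid Z_i)$ directly. The key structural observation is that the intervals
\[
I_k(Z_i):=\bigl[\mathbb{P}(X_i<a_k\mid Z_i),\ \mathbb{P}(X_i\leq a_k\mid Z_i)\bigr], \quad k\in K,
\]
are (weakly) ordered and tile $[0,1]$ without overlap beyond measure-zero endpoints, with widths $\mathbb{P}(X_i=a_k\mid Z_i)$. Fix $v\in[0,1]$ and let $k^\star$ be the unique index with $v\in I_{k^\star}(Z_i)$. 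Conditioning on $X_i$,
\[
\mathbb{P}(V_i\leq v\mid Z_i)=\sum_{k<k^\star}\mathbb{P}(X_i=a_k\mid Z_i)+\mathbb{P}(V_i\leq v,\,X_i=a_{k^\star}\mid Z_i),
\]
and the second term, by the uniform draw of $V_i$ within $I_{k^\star}(Z_i)$, equals $v-\mathbb{P}(X_i<a_{k^\star}\mid Z_i)$. Summing telescopes to $v$, so $V_i\mid Z_i\sim\mathrm{Uniform}(0,1)$. The rest of the argument is then identical to the continuous case: $\Phi^{-1}(V_i)\mid Z_i\sim N(0,1)$ with a distribution not depending on $Z_i$, and scaling by $\sigma_i$ gives (ii) and (iii).

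Finally, for monotonicity (i) in the discrete case, one views the auxiliary randomness (the uniform variable used to select $V_i$ inside its interval) as a single exogenous $W_i\sim\mathrm{Uniform}(0,1)$ coupling: then $V_i$ can be written as $\mathbb{P}(X_i<a_k\mid Z_i)+W_i\cdot\mathbb{P}(X_i=a_k\mid Z_i)$ when $X_i=a_k$. Because the intervals $I_k(Z_i)$ are ordered in $k$, increasing $X_i$ can only move $V_i$ into a higher (or equal) interval, so $V_i$ is non-decreasing in $X_i$ at fixed $W_i$; applying the strictly increasing $\Phi^{-1}$ and the non-negative factor $\sigma_i$ preserves monotonicity, completing (i). The only non-routine step throughout is the partition-and-telescoping calculation establishing uniformity of $V_i$; once that is in hand, all three conclusions reduce to one-line consequences of the PIT.
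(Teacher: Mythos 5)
Your proof is correct and follows essentially the same route as the paper's: reduce everything to the conditional uniformity of $V_i$, apply $\Phi^{-1}$ and scale by $\sigma_i$ to get conditional normality, and deduce independence from the fact that the conditional law does not depend on $Z_i$. The only difference is that you explicitly verify the uniformity of $V_i$ in the discrete case via the partition-and-telescoping computation, a step the paper simply asserts, so your write-up is if anything slightly more complete.
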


\begin{proof}
For (i), when $X_{i}$ is continuous, we note that both $\Phi(x)$ and $F(x\,|\,Z_{i})$ are increasing and this implies $U_{i}$ is unique and monotonically increasing with $X_{i}$. When $X_{i}$ is discrete, noting that the range of $V_i$ does not intersect as $X_{i}$ takes different values and the range of $V_i$ is increasing with $X_{i}$, we again have that $U_{i}$ is monotonically increasing with $X_{i}$.

For (ii), when $X_{i}$ is continuous, let $V_i=F(X_{i}\,|\,Z_{i})$ and when $X_{i}$ is discrete, let $V_i$ be defined as in Algorithm~\ref{alg:mcf:crt}. Since $V_i$ is uniformly distributed on $[0,1]$ conditional on $Z_{i}$, we have that for any $u\in\mathbb{R}$,
\begin{equation*}
\begin{split}
\mathbb{P}(U_{i}\leq u\,|\,Z_{i})&=\mathbb{P}(\Phi(U_{i}/\sigma_{i})\leq \Phi(u/\sigma_{i})\,|\,Z_{i})\\
&=\mathbb{P}\left(V_i\leq \Phi(u/\sigma_{i})\,|\,Z_{i}\right)=\Phi(u/\sigma_{i}),
\end{split}    
\end{equation*}
which indicates that $\mathbb{P}(U_{i}/\sigma_{i}\leq v\,|\,Z_{i})=\Phi(v)$ and $U_{i}\sim\mathcal{N}(0,\sigma_{i}^2)$. Also, we have $\mathbb{P}(U_{i}/\sigma_{i}\leq v)=\Phi(v)=\mathbb{P}(U_{i}/\sigma_{i}\leq v\,|\,Z_{i})$ for all $v\in\mathbb{R}$, which indicates that $U_{i}/\sigma_{i}\indp Z_{i}$.
\end{proof}

\section{Recycling computation for lasso-based distillation}

Here, prove Lemma~\ref{lem:lime-speedup} and Theorem~\ref{prop:lasso-shortcut} from Section~\ref{sec:recycling}. These lead to Algorithm~\ref{alg:loco-crt-lasso} below.

\begin{algorithm}[htbp]
\caption{\label{alg:loco-crt-lasso} dCRT for $L_1$-regularized M-estimators.}	

{\bf Input:} $\bX, \by$, sequence $\lambda(1) > \cdots > \lambda(G) > 0$, loss $\ell$, sequential rule $\widehat g$.\\
\vspace{0.2cm}
Fit a cross-validated lasso on $(\bX, \by)$ to obtain $\widehat \beta(\bX, \by; \lambda(\widehat g(\bX, \by)))$, and record the active set $\mathcal A$ as defined in \eqref{active-set}.\\
\vspace{0.2cm}
{\bf For} $j \in \mathcal A$: Refit the lasso on $(\bX_{\bullet,\noj}, \by)$ to obtain $\widehat \beta_{\noj} \equiv \widehat \beta(\bX_{\bullet,\noj}, \by; \lambda(\widehat g(\bX_{\bullet,\noj}, \by)))$.\\
\vspace{0.2cm}
{\bf For} $j \not\in \mathcal A$: Set $\widehat \beta_{\noj} = \widehat \beta_{\noj}(\bX, \by; \lambda(\widehat g(\bX, \by)))$. \\
\vspace{0.2cm}
For each $j$, let $d_{y,j} = \bX_{\bullet,\noj}\widehat \beta_{\noj}$.\\
\vspace{0.2cm}
{\bf Output:} Distillations $d_{y,j}$ for each variable $j$.
\end{algorithm} 		

\begin{proof}[~of Lemma~\ref{lem:lime-speedup}]
	Since $\widehat \beta(\bX, \by; \lambda)$ is a minimizer of the convex objective 
	\begin{equation}\label{eq:beta-hat}
	\sum_{i = 1}^n \ell(\by_i, \bX_{i,\bullet}\beta) + \lambda \|\beta\|_1,
	\end{equation}
	0 must belong to its subgradient at this point. This means that there exists an $\widehat s \in \mathbb R^p$ such that
	\begin{equation}
	\label{eq:KKT-all}
	\sum_{i=1}^n {\bX_{i,j'}} \dot \ell(\by_i, \bX_{i,\bullet} \widehat \beta(\bX, \by; \lambda)) + \lambda \cdot \widehat s_{j'} = 0 \quad \text{for all } j' = 1, \dots, p,
	\end{equation}
	where $\widehat s_{j'} = \text{sign}(\widehat \beta_{j'}(\bX, \by; \lambda))$ if $\widehat \beta_{j'}(\bX, \by; \lambda) \neq 0$ and $\widehat s_{j'} \in [-1,1]$ otherwise. If $\widehat \beta_{j}(\bX, \by; \lambda) = 0$, then we have
	\begin{equation*}
	\bX_{i,\text{--}j}\widehat \beta_{\noj}(\bX, \by; \lambda) = \bX_{i,\bullet}\widehat \beta(\bX, \by; \lambda) \text{ for every } i = 1, \dots, n,
	\end{equation*}
	which together with equation~\eqref{eq:KKT-all} implies that
	\begin{equation*}
	\text{for all } j' \neq j, \quad \sum_{i=1}^n {\bX_{i,j'}} \dot \ell(\by_i, \bX_{i,\noj}\widehat \beta_{\noj}(\bX, \by; \lambda)) + \lambda \cdot s_{j'} = 0.
	\end{equation*}
	Therefore, $\widehat \beta_{\noj}(\bX, \by; \lambda)$ satisfies the first-order optimality condition in the convex problem~\eqref{lasso}, so it must be a minimizer. Given the assumed general position of the columns of $\bX$ and the assumptions on $\ell$, the minimizer of the problem~\eqref{lasso} is unique \citep{tibshirani2013lasso}. Therefore, $\widehat \beta(\bX_{\bullet,\noj}, \by; \lambda) = \widehat \beta_{\noj}(\bX, \by; \lambda)$, as desired.
\end{proof}

\vspace{0.1in}

\begin{proof}[~of Theorem~\ref{prop:lasso-shortcut}]
	Fix $j \not \in \mathcal A$. For this variable, $\widehat \beta_j(\bX_{\text{--}D_k,\bullet}, \by_{\text{--}D_k}; \lambda(g)) = 0$ for each fold $k$ and for all $g \leq \widetilde g(\bX, \by)$. By Lemma~\ref{lem:lime-speedup} applied to $(\bX_{\text{--}D_k,\bullet}, \by_{\text{--}D_k})$, it follows that 
	\begin{equation*}
	\widehat \beta(\bX_{\text{--}D_k,\noj}, \by_{\text{--}D_k}; \lambda(g)) = \widehat \beta_{\noj}(\bX_{\text{--}D_k,\bullet}, \by_{\text{--}D_k}; \lambda(g)) \quad \text{ for each fold } k \text{ and all } g \leq \widetilde g(\bX, \by).
	\end{equation*}
	Therefore, the lasso cross-validation errors for $(\bX, \by)$ and $(\bX_{\bullet,\noj}, \by)$ coincide for each $g \leq \widetilde g(\bX, \by)$:
	\begin{equation*}
	\mathcal E_{\noj,g} = \sum_{i = 1}^n \ell(\by_i, \bX_{i,\noj} \widehat \beta(\bX_{\text{--}D_k,\noj}, \by_{\text{--}D_k}; \lambda(g))) = \sum_{i = 1}^n \ell(\by_i, \bX_{i,\bullet} \widehat \beta(\bX_{\text{--}D_k,\bullet}, \by_{\text{--}D_k}; \lambda(g))) = \mathcal E_g.
	\end{equation*}
	Because the rule to choose $\lambda$ is sequential, we conclude that $\widetilde g(\bX_{\bullet,\noj},\by) = \widetilde g(\bX,\by)$ and also $\widehat g(\bX_{\bullet,\noj}, \by) = \widehat g(\bX, \by)$. The conclusion~\eqref{conclusion} now follows from another application of Lemma~\ref{lem:lime-speedup}, this time with the full data $(\bX, \by)$ and the regularization parameter $\lambda(\widehat g(\bX_{\bullet,\noj}, \by)) = \lambda(\widehat g(\bX, \by))$.
\end{proof}

{
\section{Comparing dCRT to \rev{o}CRT} \label{sec:dCRT-versus-CRT}

\subsection{Setup}

To draw a distinction with the \rev{o}CRT, let us denote by
\begin{equation*}
(\widehat \beta_x\suplasso, \widehat \beta_z\suplasso) = \underset{\beta_x, \beta_z}{\arg \min}\ \frac12\|y - x\beta_x - Z\beta_z\|^2 + \lambda |\beta_x| + \lambda \|\beta_z\|_1
\end{equation*}
the solution to the full lasso problem, while denoting by
\begin{equation*}
\widehat \beta\suploco_z = \underset{\beta_z}{\arg \min}\ \frac12\|y - Z\beta_z\|^2 + \lambda \|\beta_z\|_1; \quad \widehat \beta\suploco_x =  \frac{(x - d_x)^{\top} (y - Z \widehat\beta\suploco_z)}{\|x - d_x\|^2}
\end{equation*}
the solution to the distilled lasso problem. In this notation, we have
\begin{equation}
T_{\text{dCRT}} = |\widehat \beta\suploco_x| \quad \text{and} \quad T_{\text{\rev{o}CRT}} = |\widehat \beta_x\suplasso|.
\label{dCRT-versus-CRT-1}
\end{equation}

To contrast $\widehat \beta_x\suplasso$ to $\widehat \beta\suploco_x$, it is helpful to rewrite the former in a way that parallels the definition of the latter. To this end, note that the KKT conditions for the full lasso problem state that
\begin{equation}
x^{\top}(y - x \widehat \beta_x\suplasso - Z \widehat \beta_z\suplasso) = \lambda \cdot \text{sign}(\widehat \beta_x\suplasso),
\end{equation}
where $\text{sign}(\widehat \beta_x\suplasso)$ is defined as 
any number in $[-1,1]$ when $\widehat \beta_x\suplasso = 0$. Rearranging yields
\begin{equation}
\widehat \beta_x\suplasso = S_{\lambda/\|x\|^2}\left(\frac{x^{\top}(y - Z \widehat \beta_z\suplasso)}{\|x\|^2}\right),
\end{equation}
where $S_{\lambda}$ is the soft-threshold operator. Using this identity, let us rewrite definition~\eqref{dCRT-versus-CRT-1} as follows:
\begin{equation}
	T_{\text{dCRT}} = \left|\frac{(x - d_x)^{\top} (y - Z \widehat\beta\suploco_z)}{\|x - d_x\|^2}\right|; \quad T_{\text{\rev{o}CRT}} = \left|S_{\lambda/\|x\|^2}\left(\frac{x^{\top}(y - Z \widehat \beta_z\suplasso)}{\|x\|^2}\right)\right|.
	\label{dCRT-vs-CRT}
\end{equation}

Having written these test statistics side by side in this way, we observe three differences:
\begin{enumerate}
	\item $T_{\text{\rev{o}CRT}}$ includes a soft thresholding operation while $T_{\text{dCRT}}$ does not.
	\item $T_{\text{dCRT}}$ uses $x - d_x$ where $T_{\text{\rev{o}CRT}}$ uses $x$.
	\item $T_{\text{\rev{o}CRT}}$ is based on the full lasso coefficients $\widehat \beta_z\suplasso$, while $T_{\text{dCRT}}$ is based on the lasso coefficients $\widehat\beta\suploco_z$ resulting from holding out the variable $x$.
\end{enumerate}
We may think of the third difference as being the main one between \rev{o}CRT and dCRT, but the first and second differences must also be accounted for. In fact, we claim that the first and second differences actually account for the majority of the difference in power between \rev{o}CRT and dCRT. In the remainder of this section, we examine each of these differences and their implications one at a time. 

We illustrate these differences in the context of the following simple numerical simulation setup. Consider $n = p = 800$, with the rows of the design matrix distributed as $N(0,\Sigma)$ for $\Sigma_{ij} = \rho^{|i-j|}$ and the response generated from the Gaussian linear model with $s = 50$ nonzero coefficients. The nonzero coefficients have the same magnitude but alternating signs. We let $x$ represent one of the non-null variables and $Z$ represent all other variables. We consider $\rho = 0$ or $\rho = 0.5$, and we consider the non-null variables either being adjacent to each other or equally spaced. 

\subsection{The effect of soft thresholding on the \rev{o}CRT}

Note that the soft-thresholding operation in the \rev{o}CRT can cause the test statistic to be exactly equal to zero sometimes, leading to a $p$-value of one. We argue that this can only decrease the power of the \rev{o}CRT. 

Let $\lambda' = \lambda/\|x\|^2$ be the value of $\lambda$ selected by cross-validation and scaled by $\|x\|^2$ and let $W = \frac{x^{\top}(y - Z \widehat \beta_z\suplasso)}{\|x\|^2}$. Furthermore, let $\widetilde \lambda '$ and $\widetilde W$ be the corresponding quantities obtained from a resampled dataset $(\widetilde x, y, Z)$. Then, we find that
\begin{equation*}
\begin{split}
p_{\text{\rev{o}CRT}} &= \mathbb P\left[|S_{\widetilde \lambda'}(\widetilde W)| \geq |S_{\lambda'}(W)| \ \mid \ x, y, Z\right] \\
&\approx \mathbb P\left[|S_{\lambda'}(\widetilde W)| \geq |S_{\lambda'}(W)| \ \mid \ x, y, Z\right] \\
&\geq \mathbb P\left[|\widetilde W| \geq |W| \ \mid \ x, y, Z\right].
\end{split}
\end{equation*}
The second line rests on the approximation $\widetilde \lambda ' \approx \lambda'$; i.e. that resampling does not change the scaled penalty parameter by too much. This is a plausible approximation whose rigorous study we defer to future work. The inequality in the third line follows from the observation that for any $w_1, w_2 \in \mathbb R$ and $\lambda \geq 0$,
\begin{equation*}
	|w_1| \geq |w_2| \quad \Longrightarrow \quad |S_\lambda(w_1)| \geq |S_\lambda(w_2)|.
\end{equation*}

In other words, removing the soft threshold from the \rev{o}CRT test statistic can only decrease the \rev{o}CRT $p$-value. We illustrate this in Figure~\ref{fig:soft-threshold}, considering the situation when $\rho = 0.5$. We observed that the \rev{o}CRT version without soft threshold uniformly dominates the original across all our simulations.
\begin{figure}[h!]
	\centering
	\includegraphics[width = 0.4\textwidth]{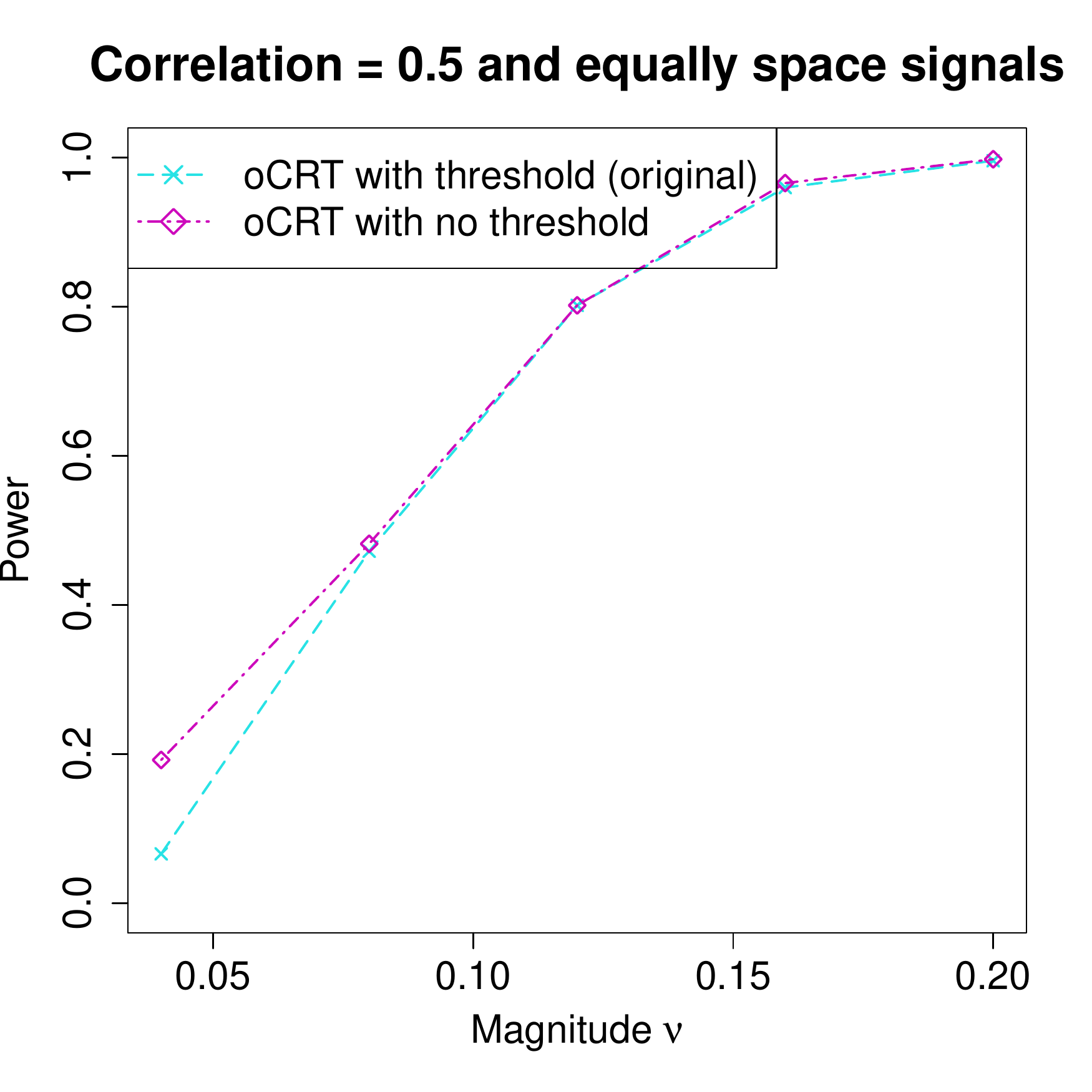}
	\includegraphics[width = 0.4\textwidth]{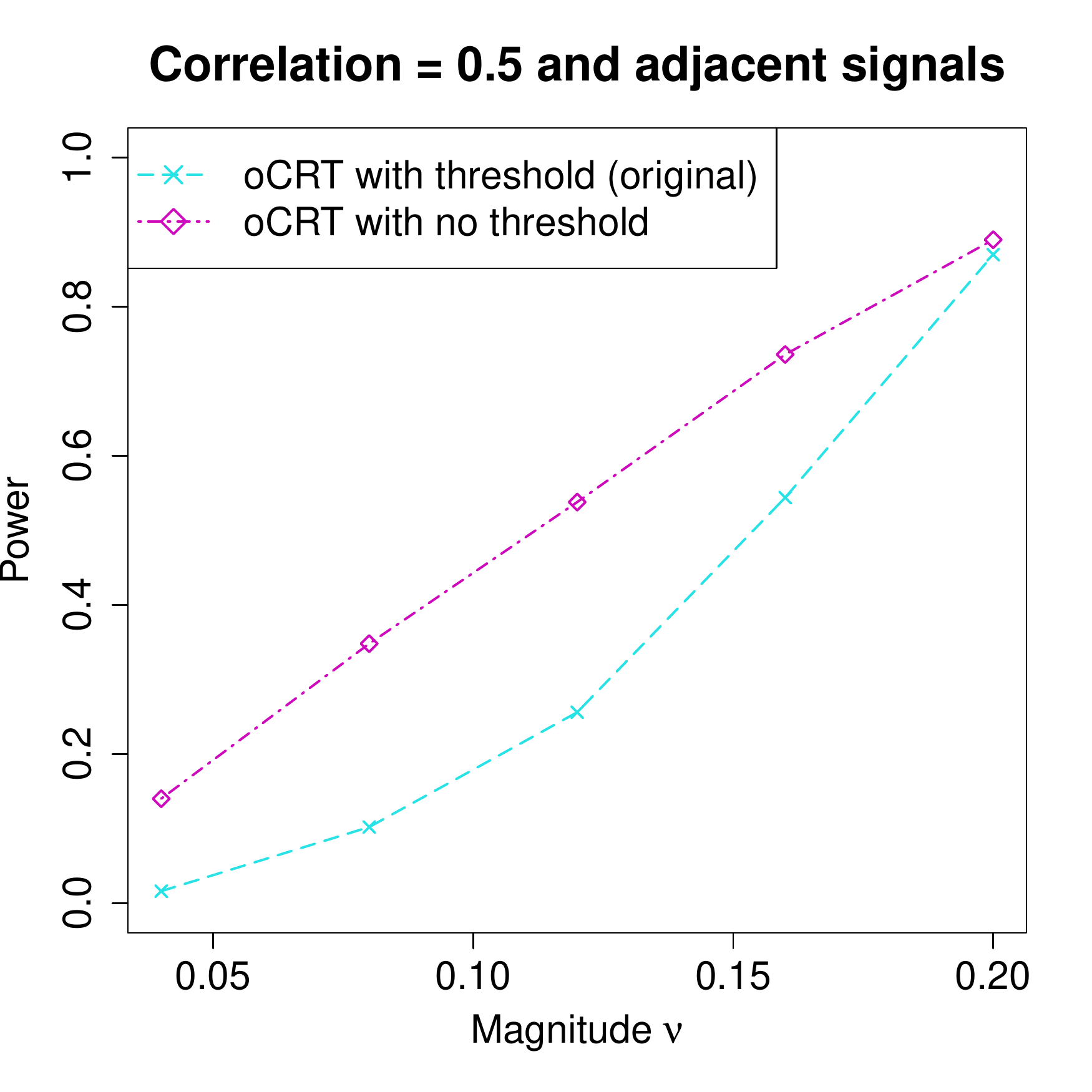}
	\caption{We compare the power of the \rev{o}CRT with no soft threshold with that of \rev{o}CRT with soft threshold (used in \cite{candes2018panning}) in the simulation with the auto-regressive correlation coefficient $\rho = 0.5$, equally spaced or adjacent signals, and the magnitude of the signals varying, as described in Section \ref{sec:dcrt-vs-crt}.}
	\label{fig:soft-threshold}
\end{figure}

\subsection{The effect of distilling $x$}

For the purposes of this subsection, we ignore the normalizations in the denominators of $T_{\text{\rev{o}CRT}}$ and $T_{\text{dCRT}}$ in equation~\eqref{dCRT-vs-CRT} and remove the soft threshold from $T_{\text{\rev{o}CRT}}$, leaving us with
\begin{equation}
	T_{\text{dCRT}} = |(x-d_x)^{\top}(y-Z\widehat\beta\suploco_z)| \quad \text{and} \quad T_{\text{\rev{o}CRT}} = |x^{\top}(y-Z\widehat \beta_z\suplasso)|.
\label{modified-test-stats}
\end{equation}
Now, recall from Section 2.5 that under the null we have
\begin{equation*}
	(x-d_x)^{\top}(y -d_y) \sim N(0, \sigma^2 \|y-d_y\|^2).
\end{equation*}
In particular, the null distribution of $(x-d_x)^{\top}(y -d_y)$ is centered on the origin. This means that taking absolute values results in a two-tailed test, as usual. However, we observe that the quantity $x^{\top}(y - Z\widehat \beta_z\suplasso)$ in the \rev{o}CRT is not necessarily centered on the origin under the null hypothesis. Therefore, taking an absolute value of this quantity to define the \rev{o}CRT may not result in a usual two-tailed test. This can lead to unintended consequences, as we demonstrate next in the context of our simulation example. 

In Figure~\ref{fig:centering}, we compare the dCRT and \rev{o}CRT obtained from definition~\eqref{modified-test-stats} in three scenarios: $\rho = 0$ and equally spaced signals, $\rho = 0.5$ and equally spaced signals, and $\rho = 0.5$ and adjacent signals. Across the top row are the powers of these two methods to reject the null hypothesis at level $0.05$, and we see that in the first scenario the powers are about equal, in the second scenario the power of the \rev{o}CRT is larger, and in the third scenario the power of the dCRT is larger. What does this have to do with centering? In the bottom row, we show the test statistics and resampling-based null distributions (prior to taking the absolute value) for a typical run. We observe that the \rev{o}CRT test statistic and null distribution look very similar to that of the dCRT, up to a horizontal shift (a shift right in the second scenario and a shift left in the third scenario). This horizontal shift would not make any difference for one-sided tests, but in this case we are conducting two-sided tests by applying the absolute value. Therefore, a right-shifted test statistic and null distribution has the effect of decreasing the $p$-value by reducing the contribution of the left tail. On the other hand, a  left-shifted test statistic and null distribution can drastically increase the $p$-value by exaggerating the contribution of the left tail. The differences in power observed in the top row of Figure~\ref{fig:centering} seem to reflect these differences in centering.
\begin{figure}[h!]
	\centering
	\includegraphics[width = 0.32\textwidth]{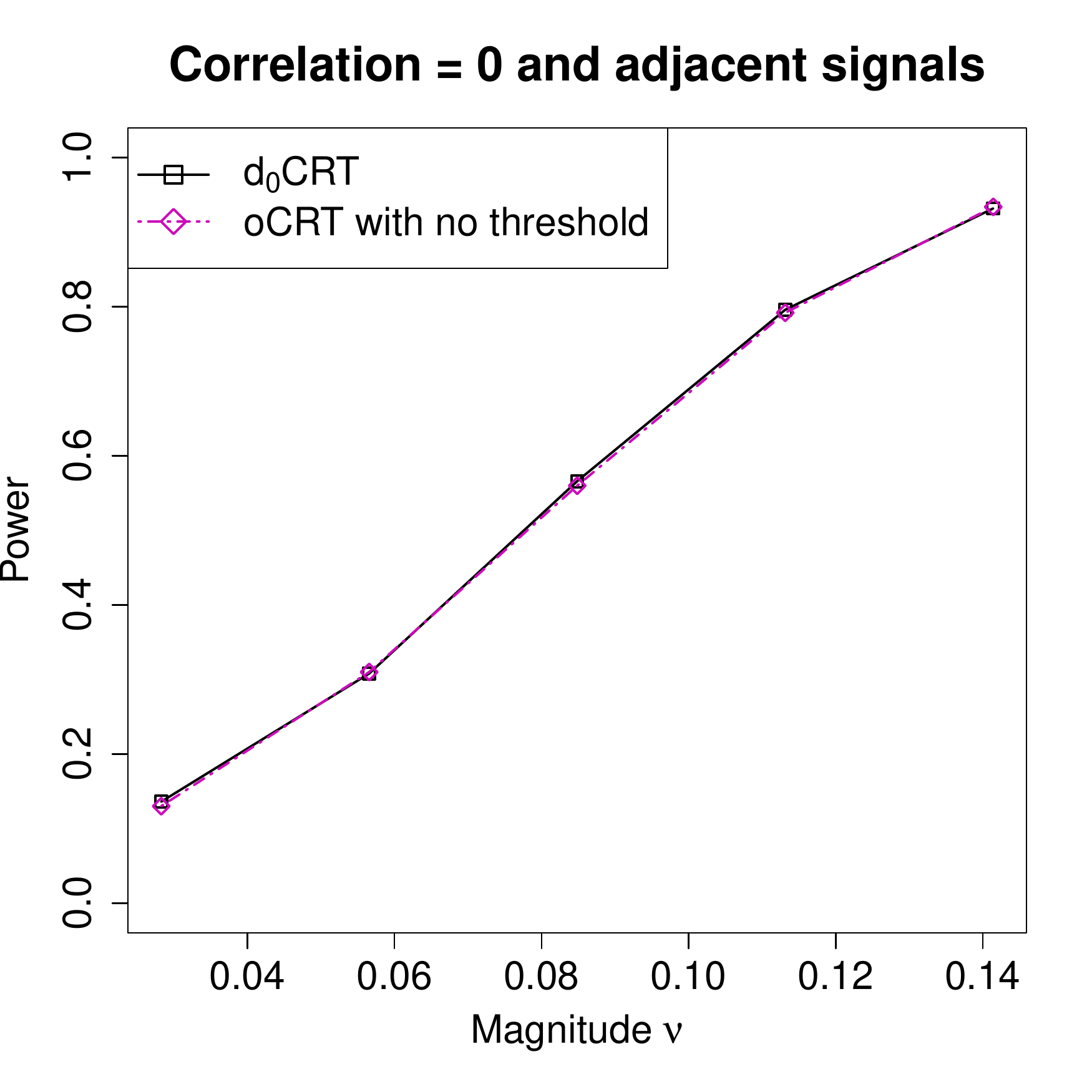}
	\includegraphics[width = 0.32\textwidth]{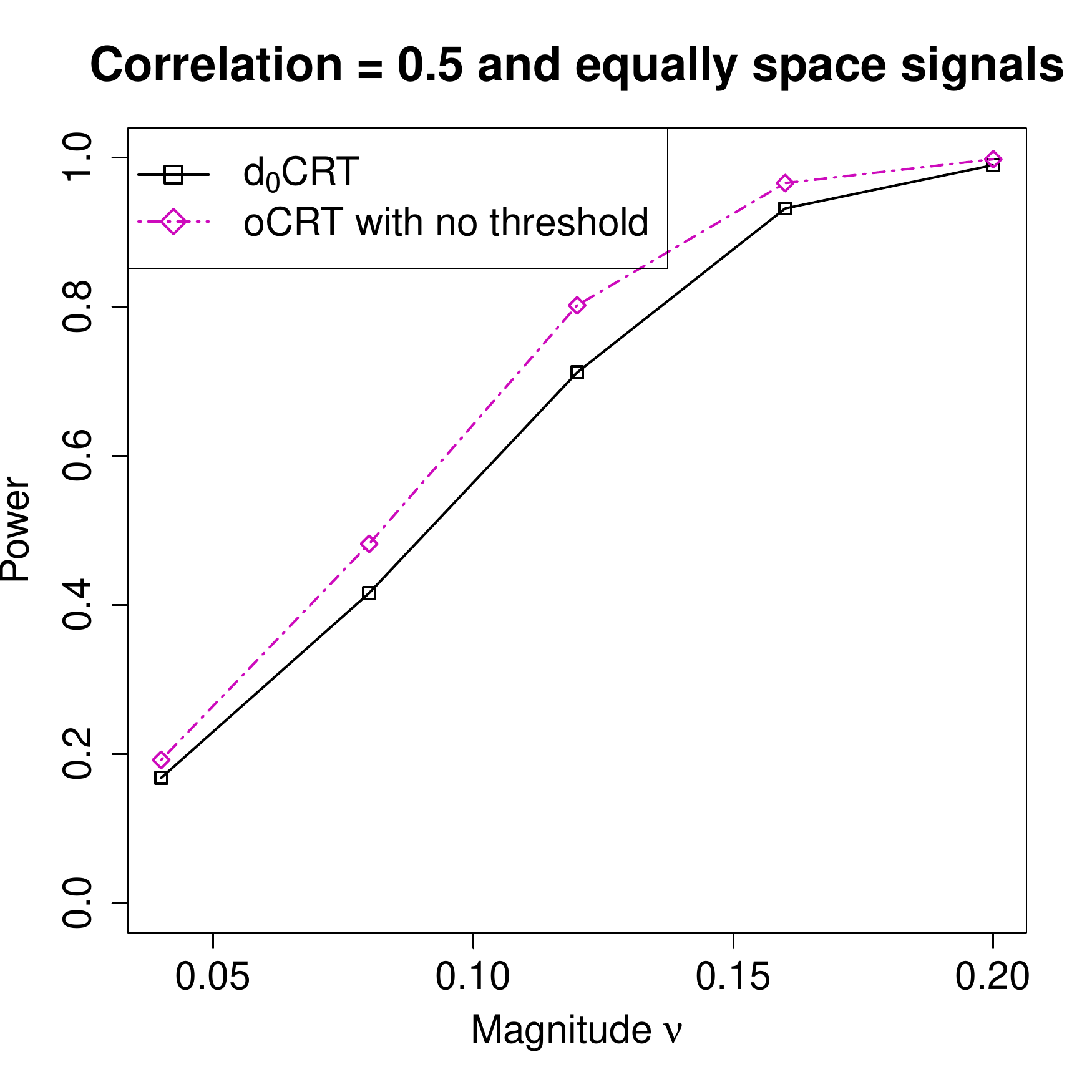}
	\includegraphics[width = 0.32\textwidth]{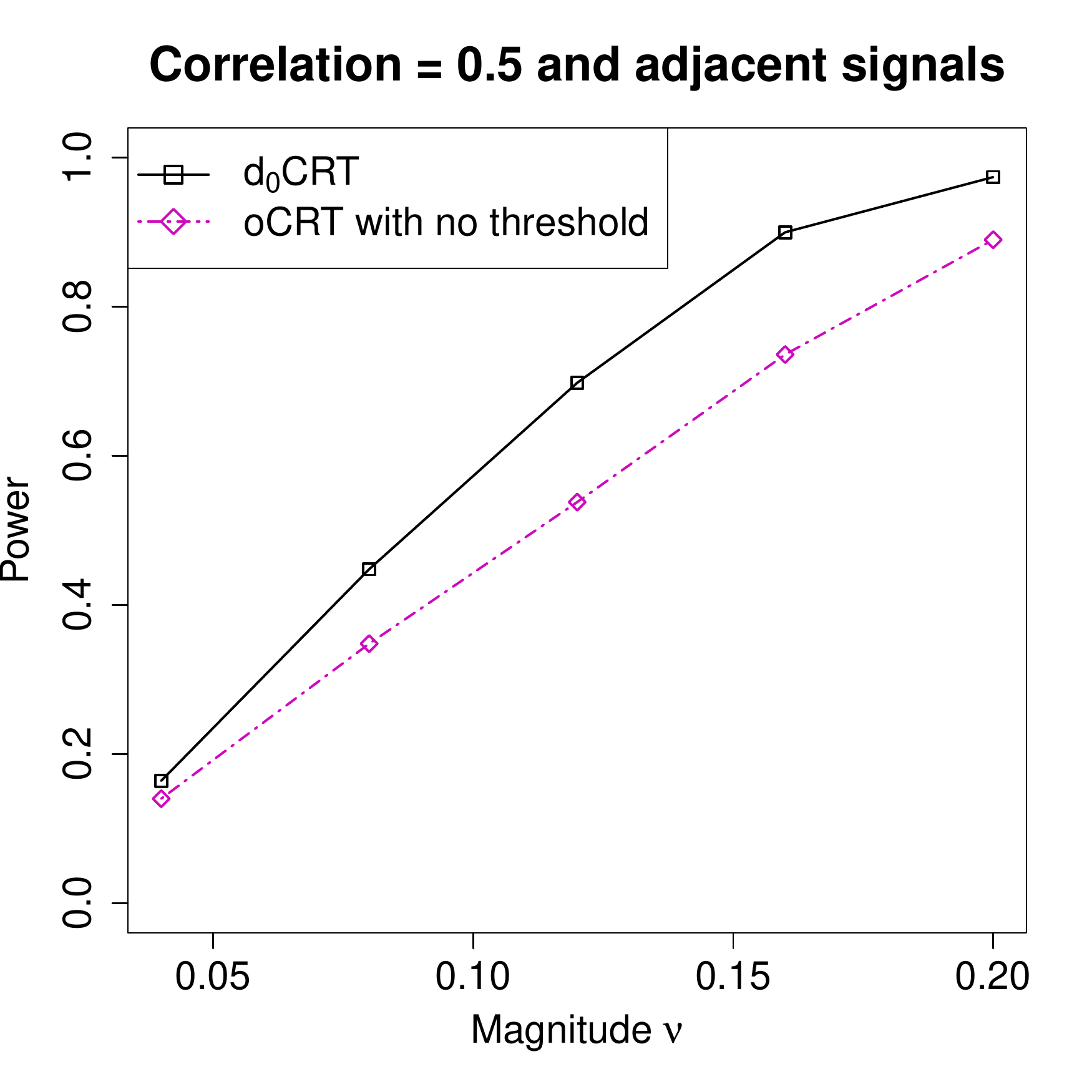}
	\includegraphics[width = 0.32\textwidth]{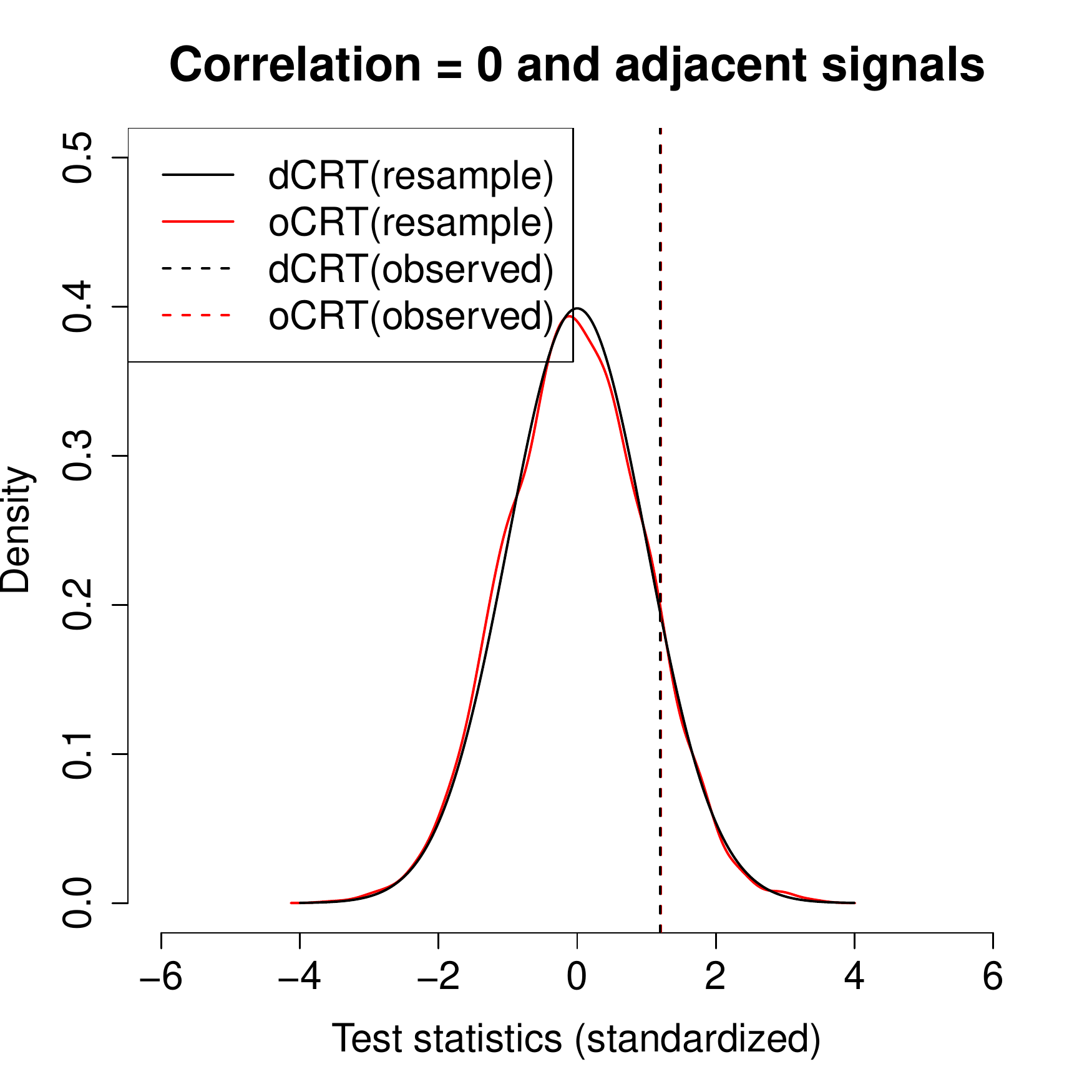}
	\includegraphics[width = 0.32\textwidth]{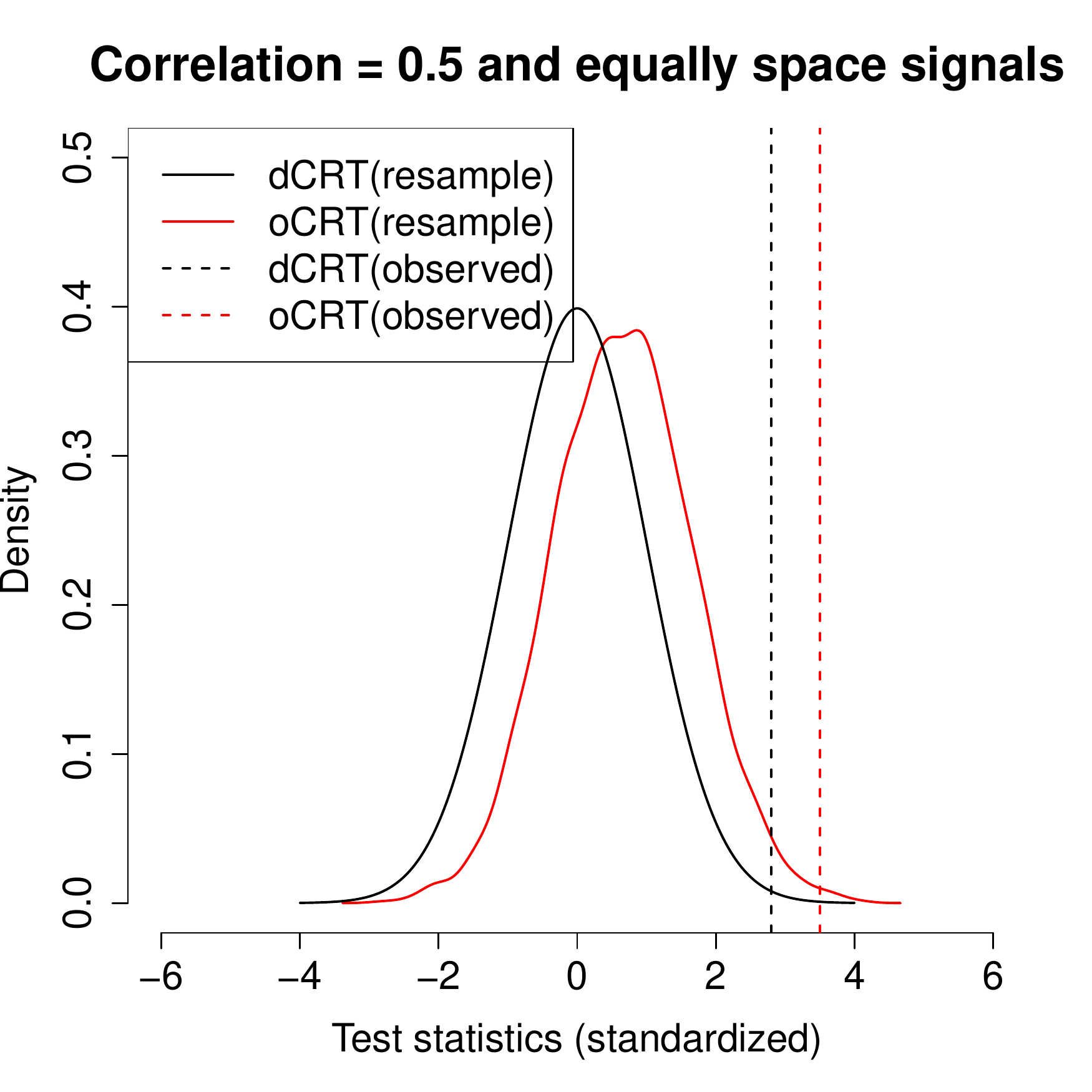}
	\includegraphics[width = 0.32\textwidth]{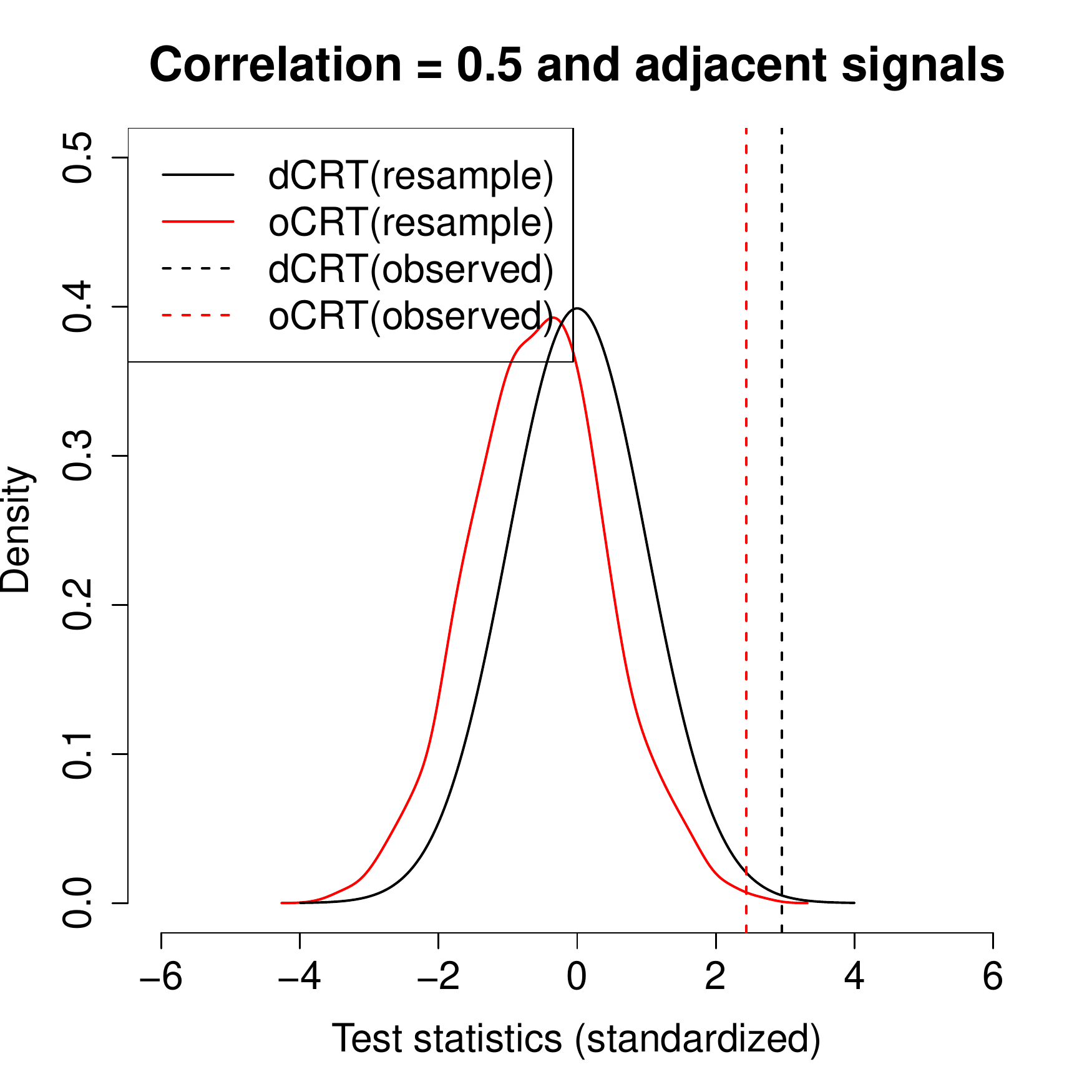}
	\caption{On the top panel, we compare the power of d$_0$CRT with that of \rev{o}CRT with no soft threshold, under the different settings described in Section \ref{sec:dcrt-vs-crt} including the correlation coefficient $\rho=0$, $\rho=0.5$ and equally spaced signals, and $\rho=0.5$ and adjacent signals, with the magnitude of the signals varying. On the bottom panel, we randomly pick one repetition under each of the three settings, smooth and plot the empirical densities of the resampled d$_0$CRT and \rev{o}CRT test statistics, with their observed test statistics indicated by the vertical lines.}
	\label{fig:centering}
\end{figure}

In sum, this centering issue for the \rev{o}CRT can either increase or decrease its power, but it is unclear on any given dataset which direction the effect will be. This lack of centering may therefore be undesirable, even if it does in some cases boost power. To remedy this, one can construct a bona fide two-tailed $p$-value:
\begin{equation}
	p_{\text{\rev{o}CRT}} = 2 \cdot \min\left(\mathbb P\left[\widetilde T_{\text{\rev{o}CRT}} \geq T_{\text{\rev{o}CRT}}\right], \mathbb P\left[-\widetilde T_{\text{\rev{o}CRT}} \geq -T_{\text{\rev{o}CRT}}\right]\right).
\end{equation}

\subsection{Comparing the dCRT to the \rev{o}CRT}

We therefore arrive at three versions of the \rev{o}CRT: original, original but without soft-threshold, and original but without soft-threshold and with centering. In Figure~\ref{fig:final-CRT-comparison}, we consider how these three \rev{o}CRT variants compare to the dCRT.

\begin{figure}[h!]
	\centering
	\includegraphics[width = 0.32\textwidth]{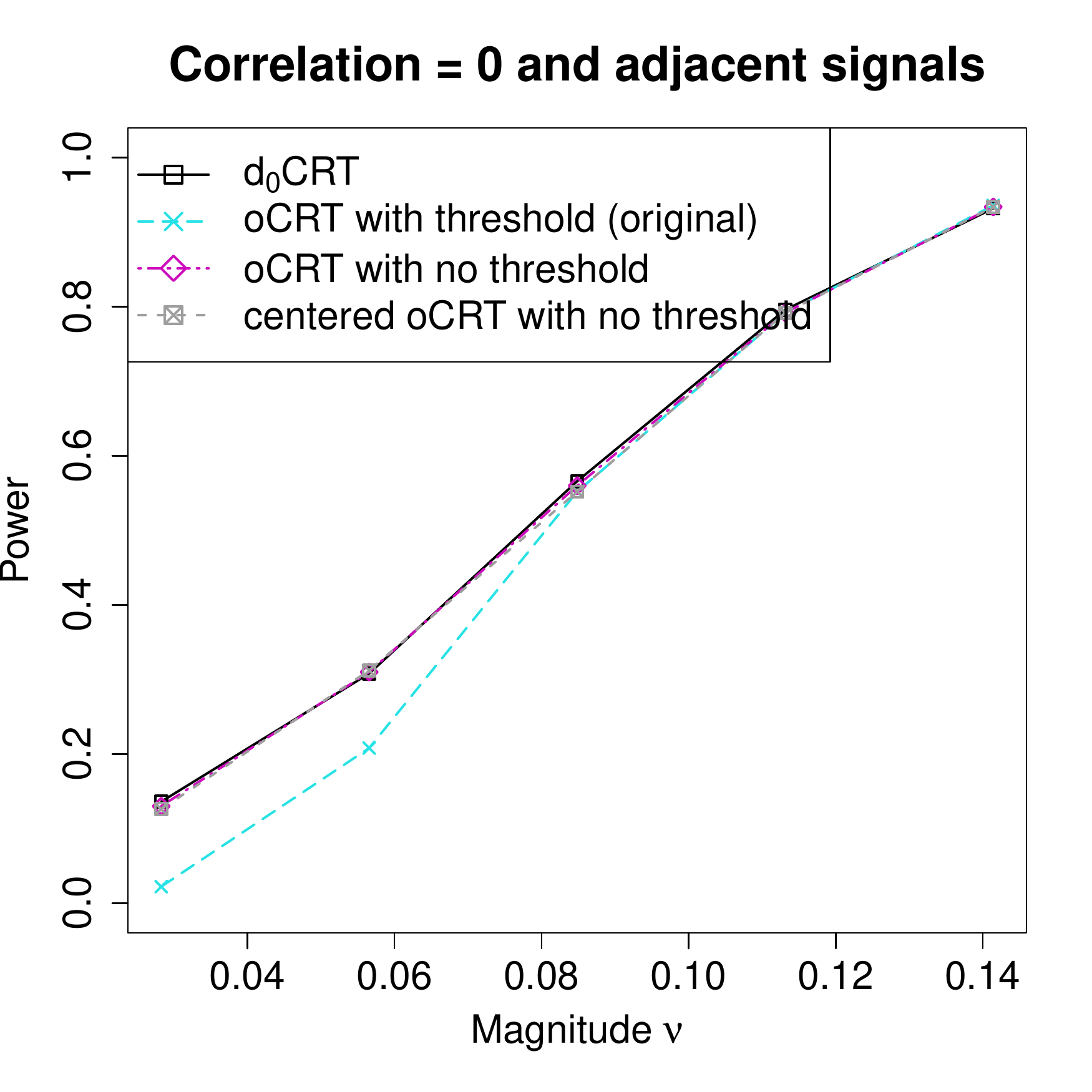}
	\includegraphics[width = 0.32\textwidth]{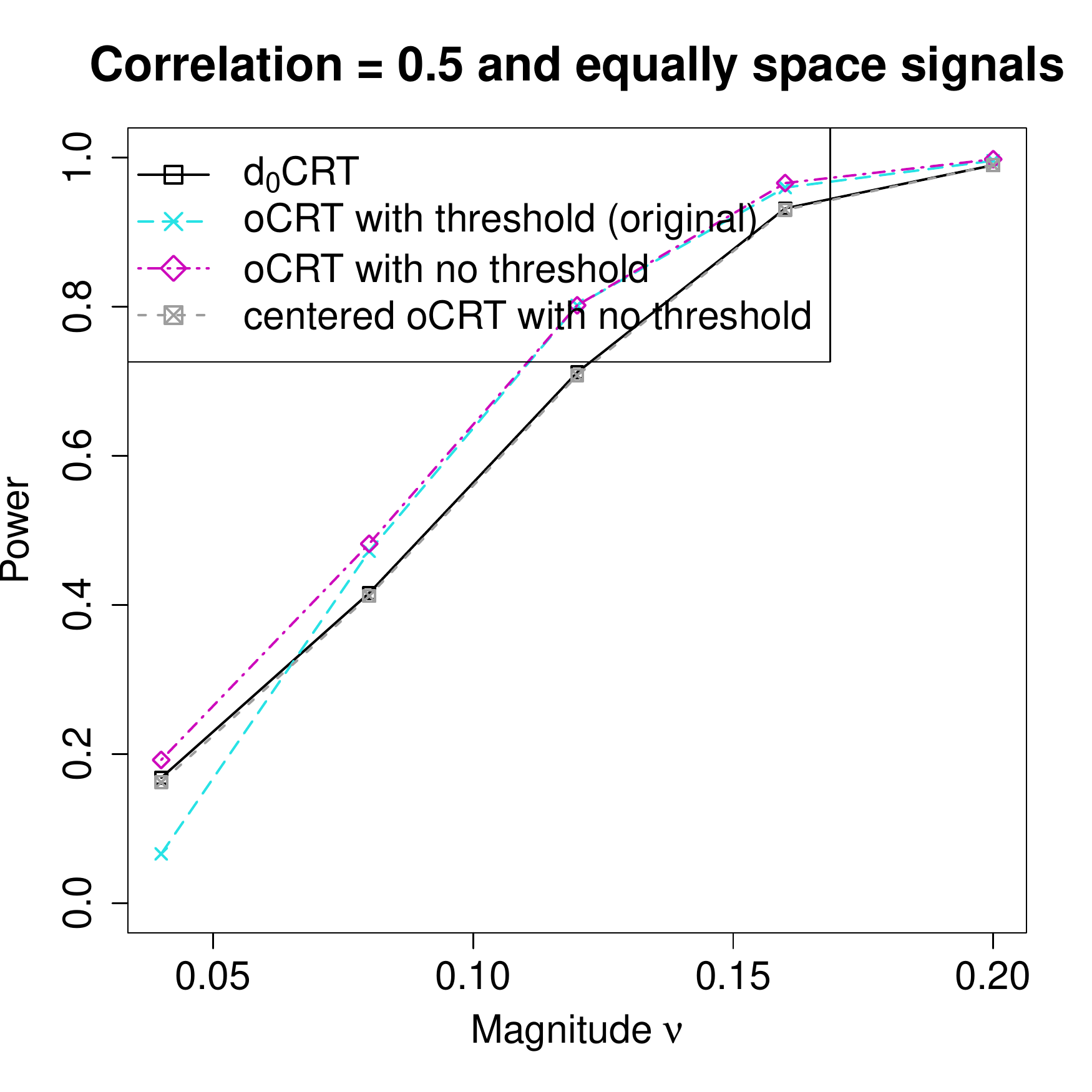}
	\includegraphics[width = 0.32\textwidth]{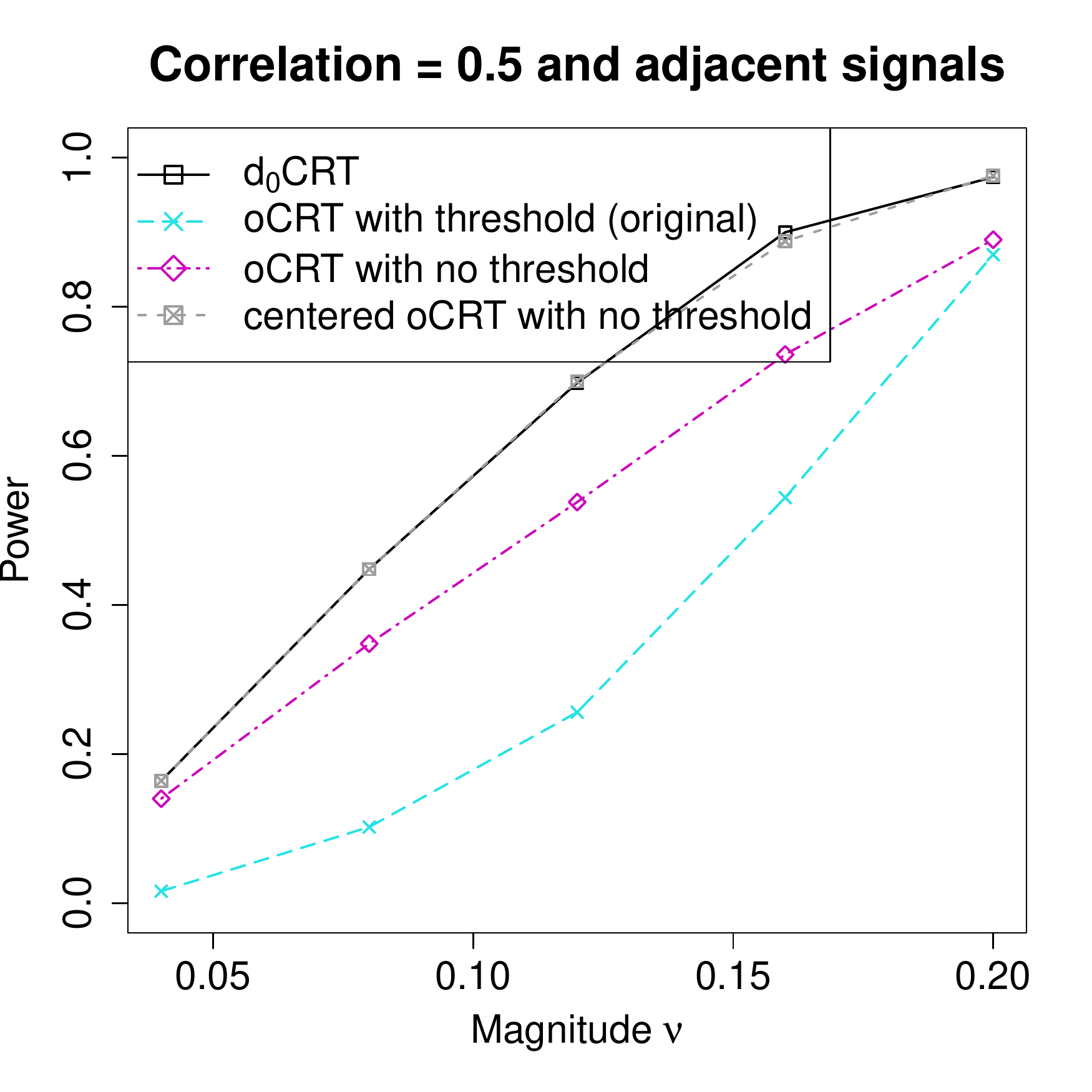}
	\caption{We compare the power of d$_0$CRT, the \rev{o}CRT with soft threshold, \rev{o}CRT without soft threshold, and \rev{o}CRT without soft threshold and with centering, under the different settings described in Section \ref{sec:dcrt-vs-crt} including the correlation coefficient $\rho=0$, $\rho=0.5$ and equally spaced signals, and $\rho=0.5$ and adjacent signals, with the magnitude of the signals varying.}
	\label{fig:final-CRT-comparison}
\end{figure}

The most important observation in this figure is that the dCRT performs almost identically to the centered and non-soft-thresholded \rev{o}CRT. Recalling the three discrepancies between the \rev{o}CRT and dCRT from the beginning of this section, only the third---the effect on the lasso coefficients of leaving one covariate out---remains. We conclude that the effect of leaving $x$ out of the lasso on the power of the CRT is minimal. In conclusion, the computational acceleration facilitated by distillation comes at little or no cost in power, after properly accounting for the effects of soft thresholding and centering.
}

\newpage
\section{Simulation results} \label{sec:app:sim}
In this section, we present the details of the simulations summarized in Section~\ref{sec:sim}. Source code for conducting dCRT and other benchmark methods in our simulation studies can be found at \url{https://github.com/moleibobliu/Distillation-CRT}.

\subsection{Method implementation details}\label{sec:sim:imp}
We describe here the implementation choices and tuning parameters used for the main methods employed in our simulations; these descriptions apply everywhere to the simulated methods unless specifically stated otherwise. For many of our methods we use the lasso, which is implemented in the $\mathsf{R}$ package {\sf glmnet} with family$=$``binomial" if $Y$ is binary and family=``gaussian" otherwise, and penalty parameter selected by 10-fold cross-validation. 

The d$_0$CRT and d$_{\mathrm{I}}$CRT are the resampling-free versions of Examples~\ref{ex:1} and~\ref{ex:2}, respectively. Note that the resampling-free dCRT may not be the most powerful choice for binary responses; we could have used a resampling-based univariate logistic regression statistic instead, but choose not to for computational purposes.

The dimension $k$ in Examples~\ref{ex:2} is set as $k=2\log (p)$. When we combine the dCRT methods with screening from Section~\ref{sec:screening}, the screening is done by running the 10-fold-cross-validated lasso and keeping only the covariates with nonzero fitted coefficients.

The other methods we include are HRT and knockoffs (the last only in simulations targeting false discovery rate control). We implement the HRT of Algorithm 1 in \cite{tansey2018holdout} with linear model fitted by the lasso, empirical risk function set to logistic loss for binary $Y$ and sum of squared error otherwise, and a data split of 50\%-50\%. Due to data-splitting, the fitted lasso of the HRT is independent of the data used for hypothesis testing. Thus, the Benjamini--Hochberg and Bonferroni correction procedures can be used on the $p$-values of the variables selected by the lasso, instead of on the full set of variables. This reduction of the multiplicity burden improves the power of the HRT, and we apply this screening step in all simulations and data analysis. In multiple testing simulations, Benjamini--Hochberg is applied to the $p$-values of all methods except knockoffs. As we set $p\leq 800$ and false discovery rate level $\alpha=0.1$ for multiple testing, we set of the number of resamples $M=50,000$ for the CRT approaches (\rev{o}CRT, HRT, non-resampling-free dCRTs). This choice was made to ensure these methods' powers are not affected by $M$, since $M/5=10,000>p/\alpha$, the smallest possible Benjamini--Hochberg cutoff in our simulations. The only exception is in Appendix~\ref{sec:sim:hrt} where $p$ reached as high as $3,200$, and there we choose $M=200,000$ to ensure $M/5>p/\alpha$. For single hypothesis testing simulations where the significance level was $0.05$, we set $M=2,000$ to ensure that we still have $M>1/0.05$. For knockoffs, we use the lasso coefficient difference statistics as defined in the (3.6) of \cite{candes2018panning}, the ``knockoffs+" threshold, and the SDP knockoff construction when $p<1000$ and the equicorrelated construction otherwise. We note that for both autocorrelated and equicorrelated variables, SDP and equicorrelated knockoffs are quite similar.

\subsection{Moderate size data simulation}\label{sec:sim:smc}
We first compare the dCRT with the \rev{o}CRT procedure in \cite{candes2018panning}. We generate Gaussian covariates as AR(1), with autocorrelation coefficient $0.5$. The true model for $Y$ is chosen as either a Gaussian linear model with unit residual variance or a logistic regression model, and in either case the coefficient vector was set to have $s$ nonzero entries of equal magnitude $\nu$ and random signs (each independently having equal probability of being positive or negative). Two types of structures of the coefficient support are considered separately: adjacent support with the first $s$ coefficients being non-zero, and equally-spaced support with the non-zero coefficients indexed by $\{1,[p/s]+1,2[p/s]+1,\ldots,(s-1)[p/s]+1\}$. We pursue two multiple testing goals of selecting non-null variables while controlling the false discovery rate or family-wise error rate, both at level $\alpha=0.1$.

In addition to the methods described in Appendix~\ref{sec:sim:imp}, we implement the \rev{o}CRT with three different test statistics: the fitted coefficients of a linear or logistic lasso regression, elastic net \citep{zou2005regularization} regression (penalty $\lambda(\|\bbeta\|_1+\|\bbeta\|_2^2/2)$), and adaptive lasso regression \citep{zou2006adaptive,huang2008adaptive}, each tuned with 10-fold cross-validation. Due to the high computational burden of these \rev{o}CRTs, we focus on moderate size data with $n=300$, $p=300$ and the sparsity level $s=30$ and vary $\nu$ to observe a range of powers.

The resulting average power in the linear and logistic settings is plotted against the signal magnitude $\nu$ in Figure~\ref{fig:smc} for false discovery rate control and Figure~\ref{fig:smc:fwer} for family-wise error rate control, and the false discovery rate plots are presented in Appendix~\ref{sec:sim:fdr}. All methods control the false discovery rate. For both false discovery rate and family-wise error rate control, the d$_0$CRT and d$_{\mathrm{I}}$CRT significantly outperform the HRT for both types of support, perform better than knockoffs and all the \rev{o}CRT methods for adjacent support but worse than them for equally spaced support. The d$_{\mathrm{I}}$CRT has slightly less power than the d$_0$CRT due to its allowance of interaction effects, since the true model is exactly additive. 
\begin{figure}[htpb]
\centering
  \includegraphics[width=0.4\textwidth]{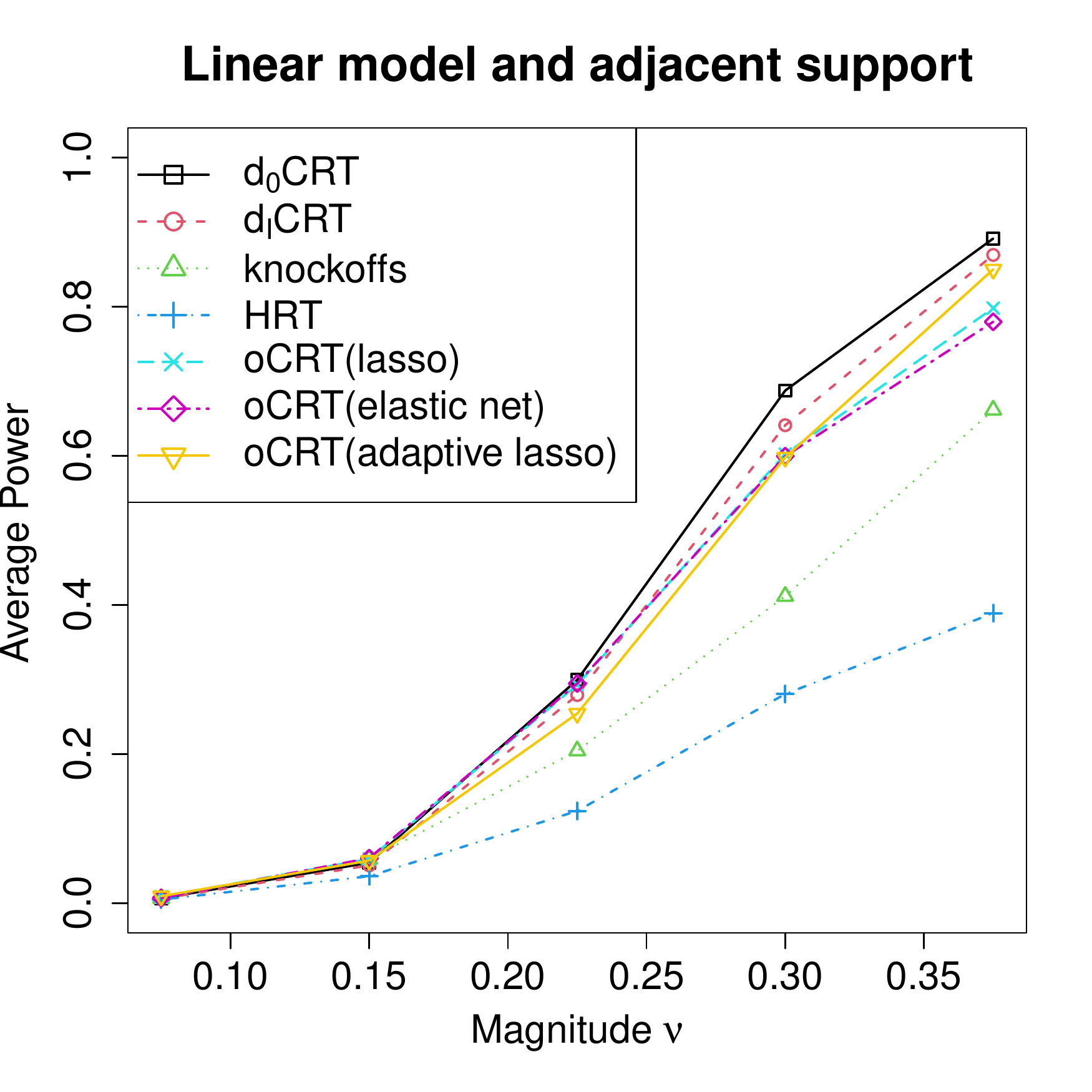}
  \includegraphics[width=0.4\textwidth]{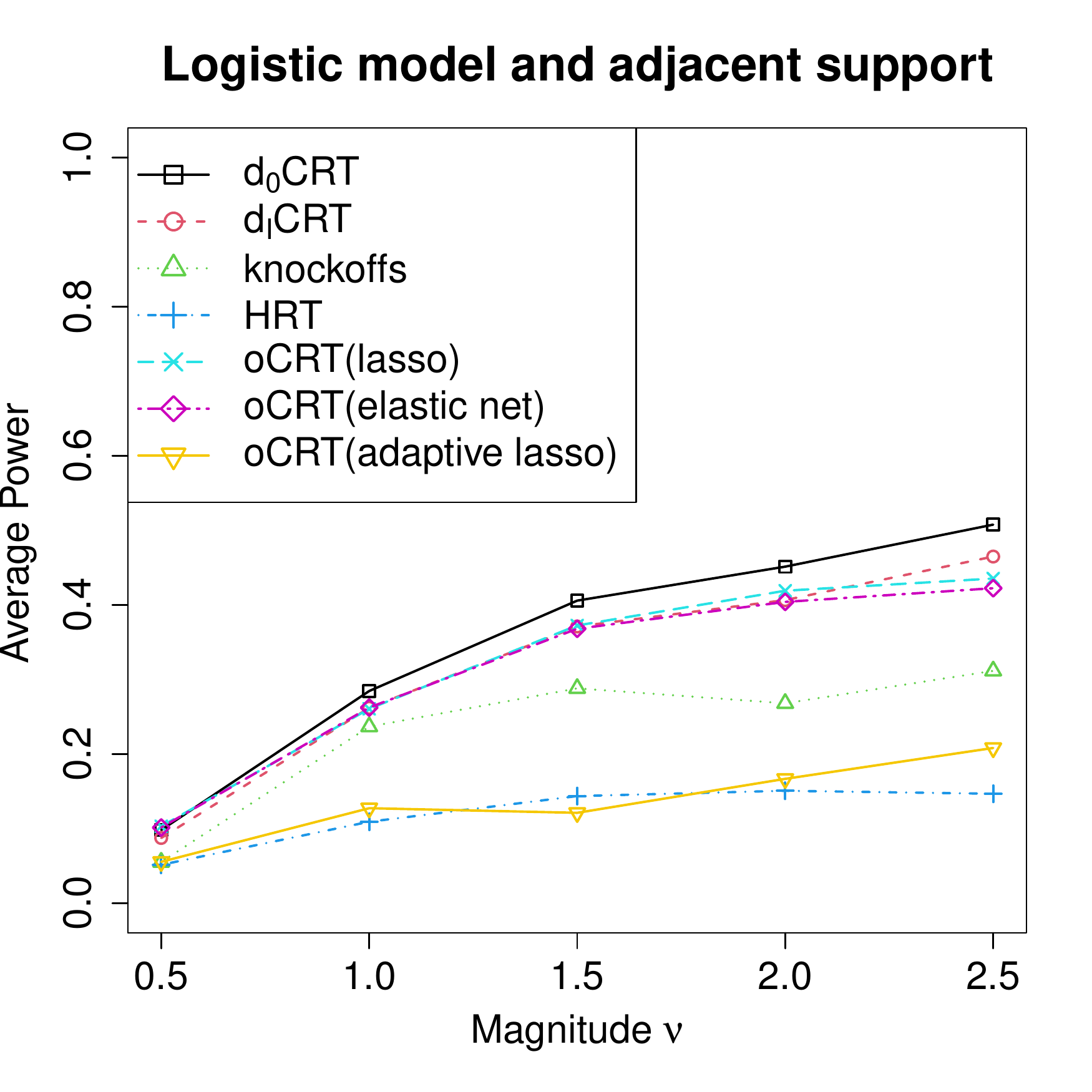}
  \includegraphics[width=0.4\textwidth]{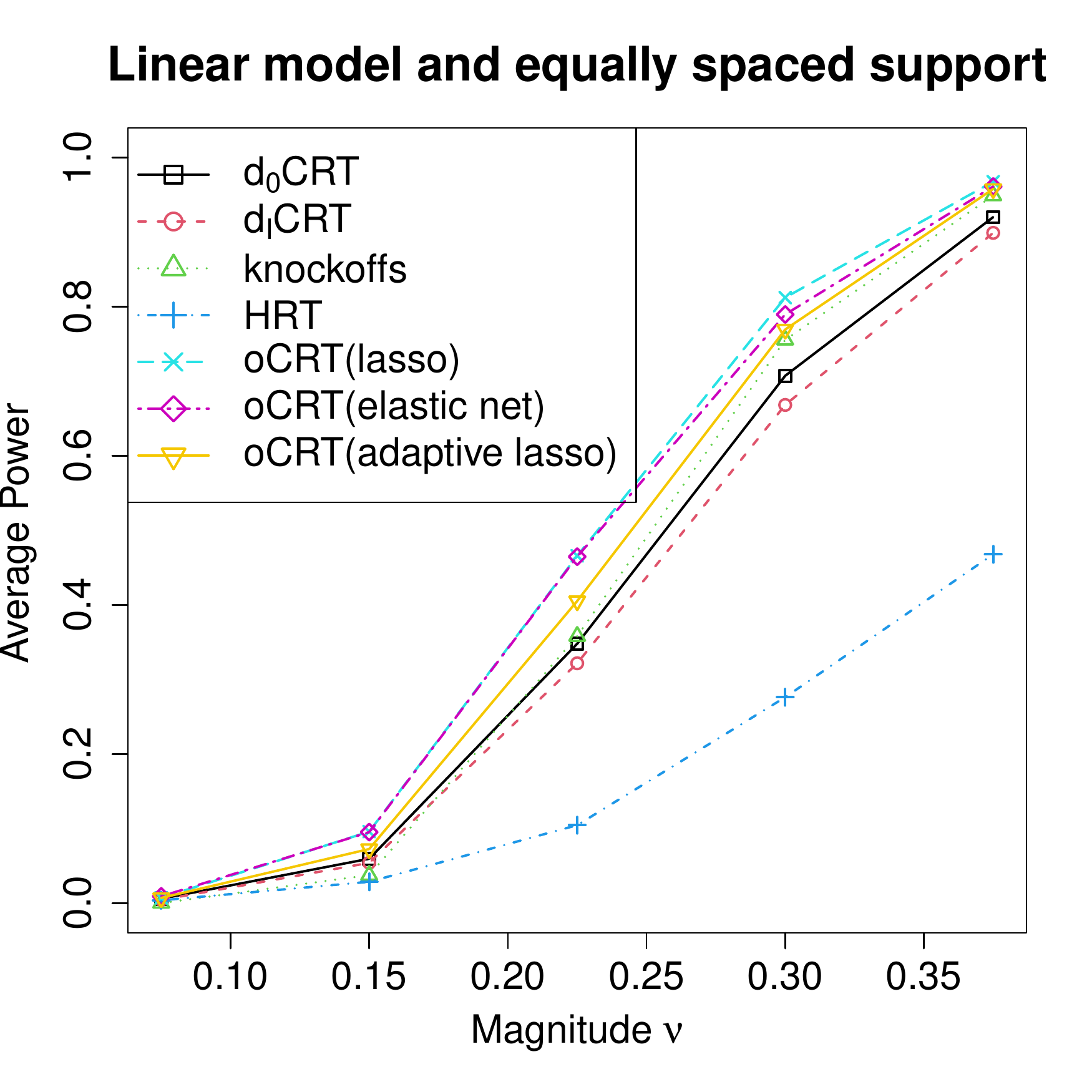}
  \includegraphics[width=0.4\textwidth]{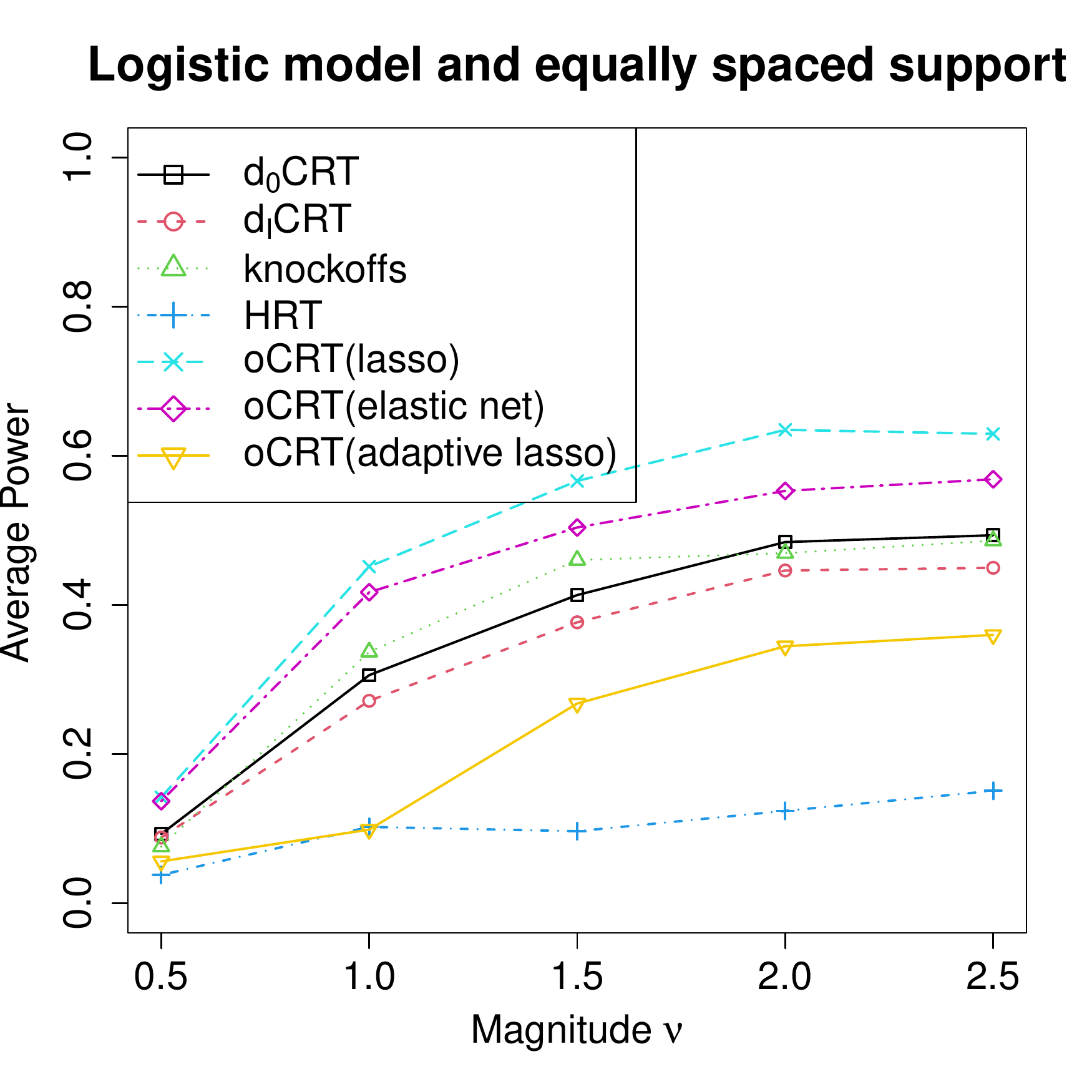}
\caption{\label{fig:smc} Average powers of false discovery rate control of the $n=p=300$ simulation of Appendix~\ref{sec:sim:smc}. All standard errors are below $0.01$ and are hence excluded. The dCRT approaches are the most powerful for adjacent support but slightly worse than \rev{o}CRT methods and knockoffs for equally-spaced support.}
\end{figure}

\begin{figure}[htpb]
\centering
  \includegraphics[width=0.4\textwidth]{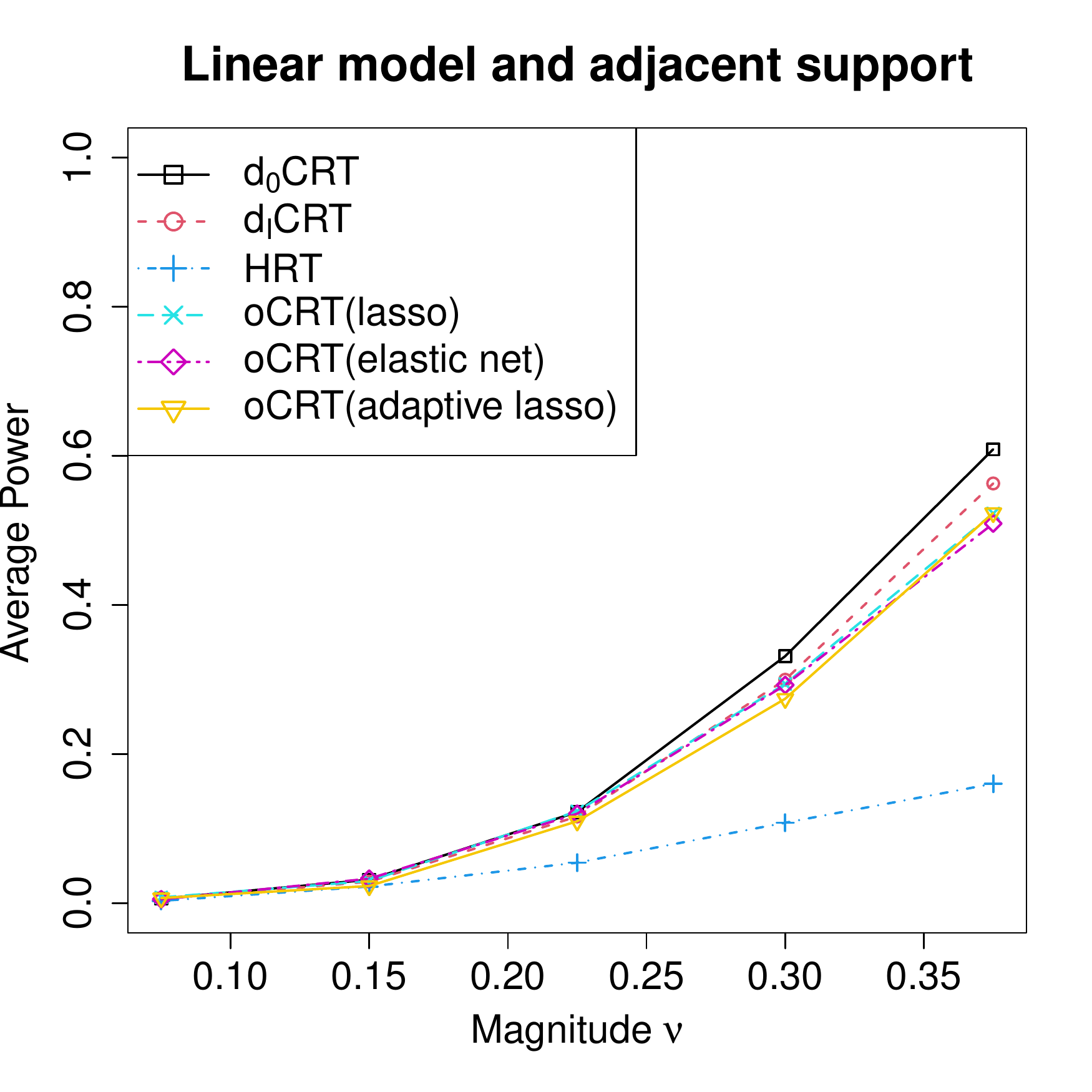}
  \includegraphics[width=0.4\textwidth]{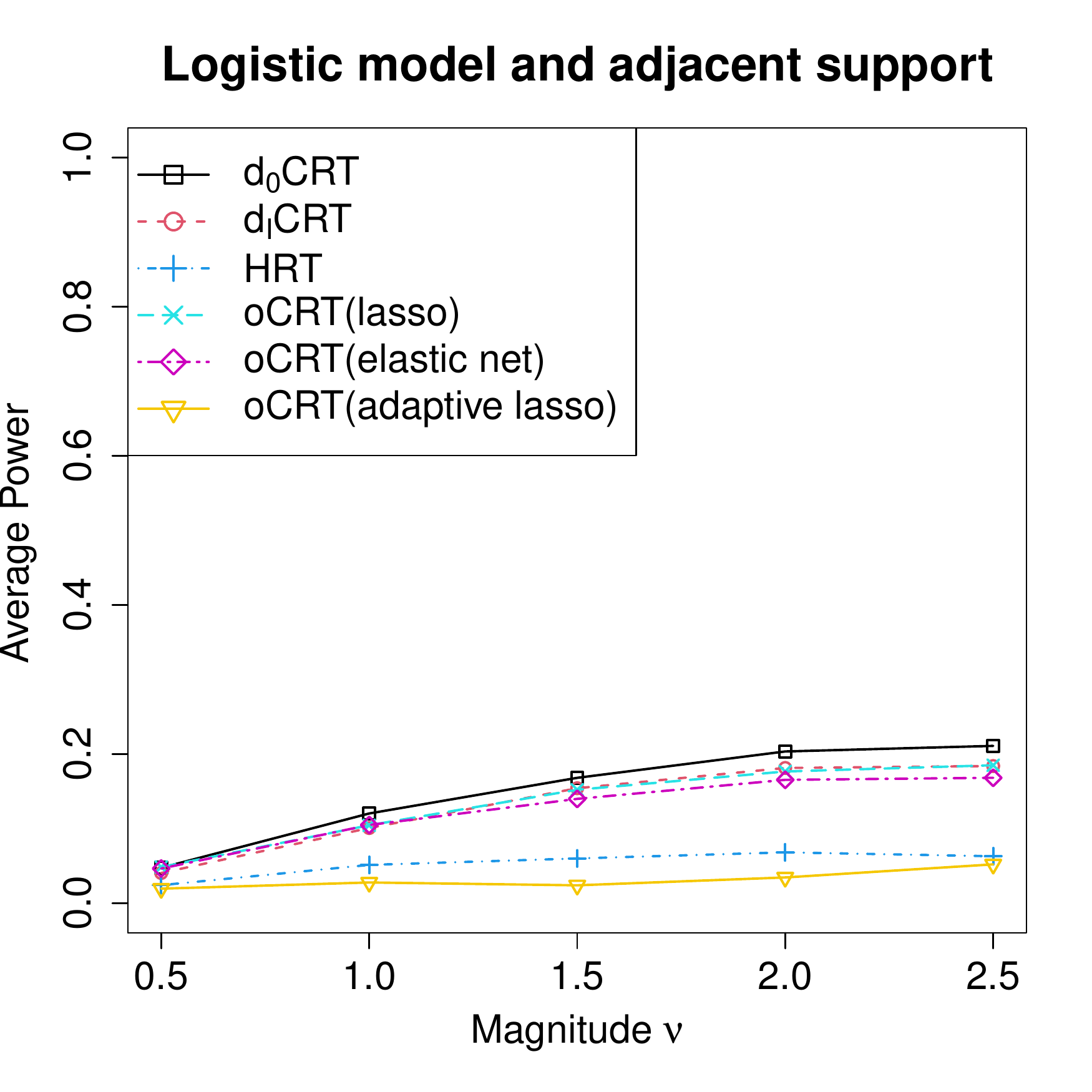}
  \includegraphics[width=0.4\textwidth]{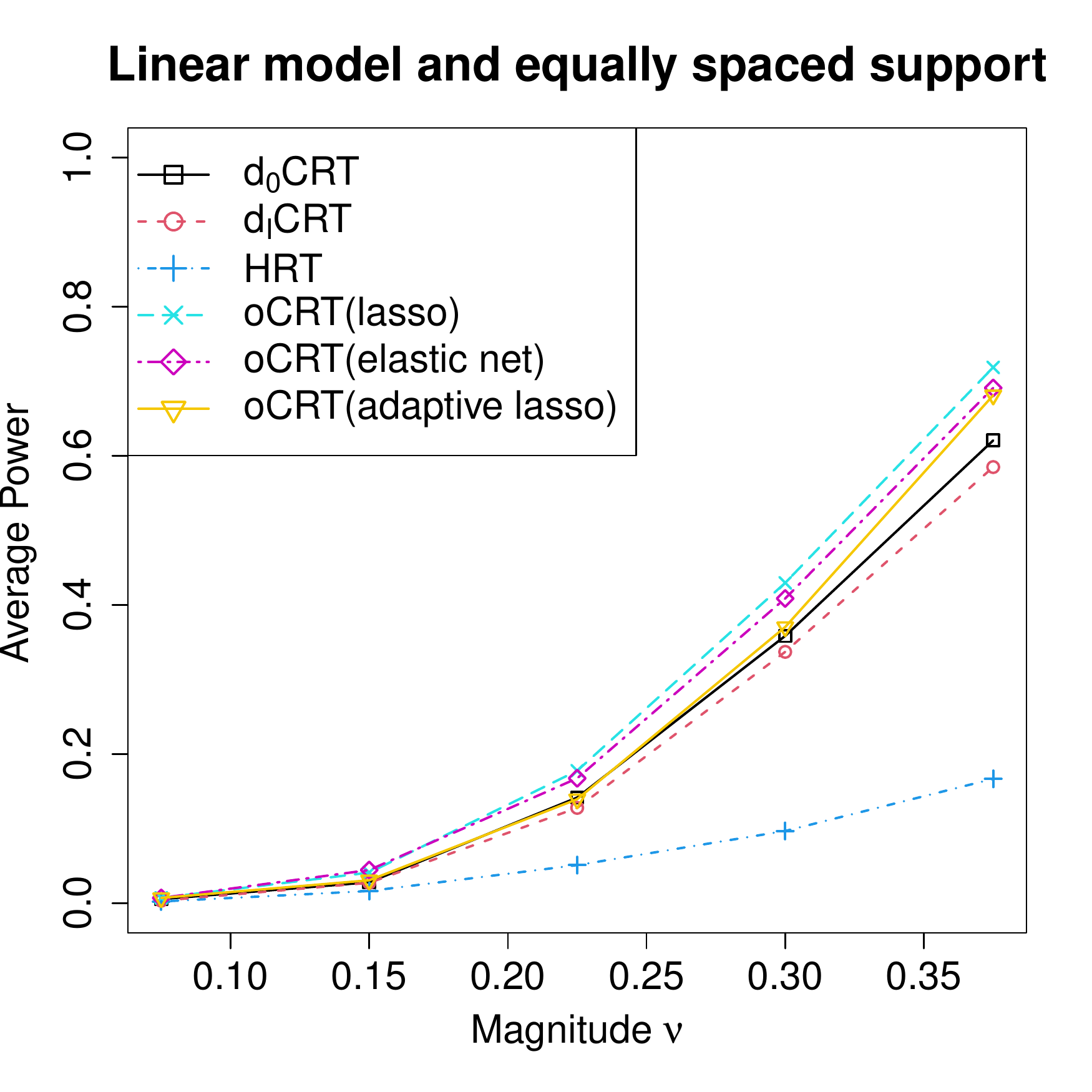}
  \includegraphics[width=0.4\textwidth]{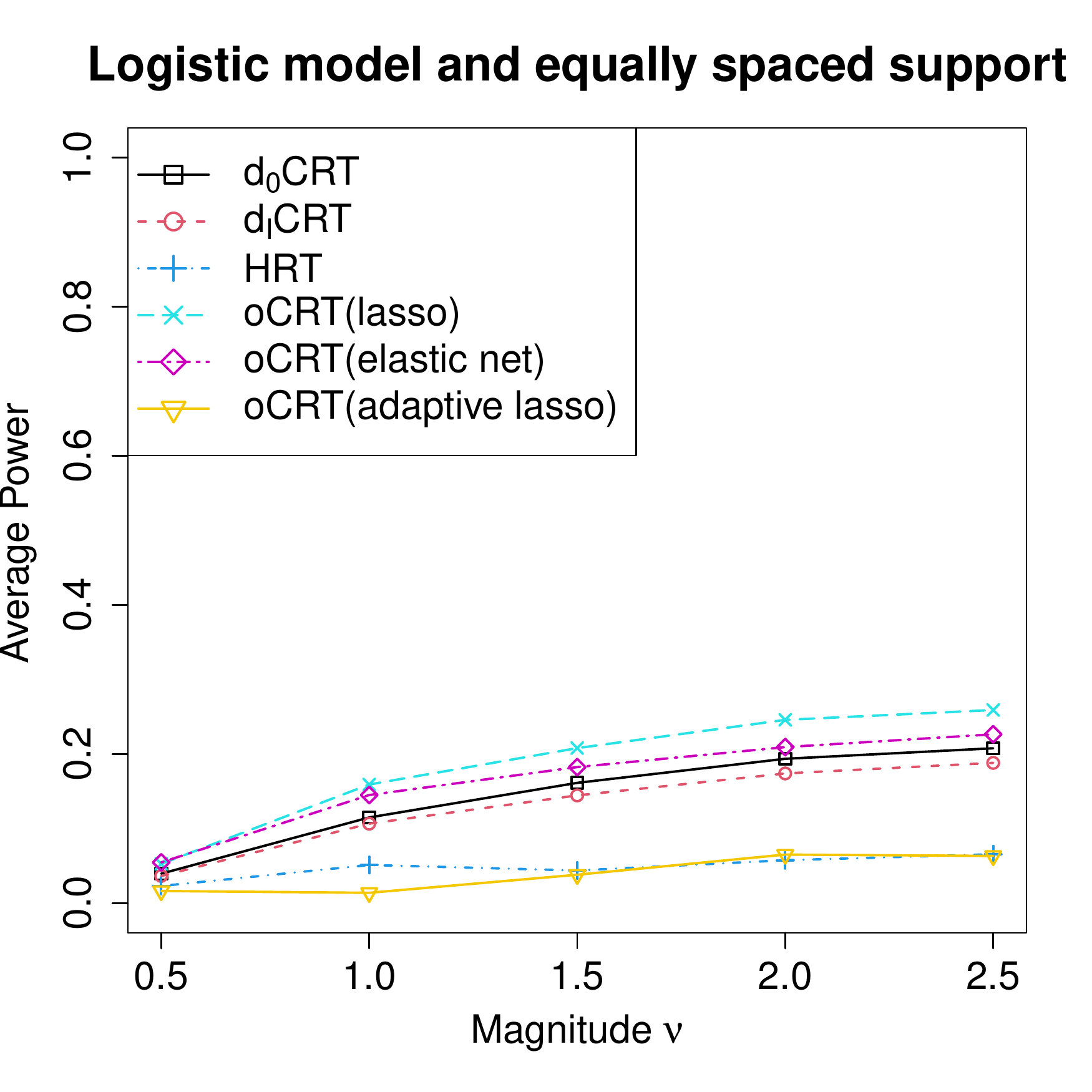}
\caption{\label{fig:smc:fwer} Average powers of family-wise error rate control of the $n=p=300$ simulation of Appendix~\ref{sec:sim:smc}. All standard errors are below $0.01$ and are hence excluded. The dCRT approaches are the most powerful for adjacent support but slightly worse than \rev{o}CRT methods for equally-spaced support.}
\end{figure}

To study and compare the computational efficiency of the methods, we present in Table~\ref{tab:com} the average computation time of the methods, with all algorithms implemented in \textsf{R}. Compared with the \rev{o}CRT procedures, the dCRTs drastically reduce the computation time and are thus much more user-friendly. Knockoffs and the HRT use less time than dCRT because they only fit a high dimensional regression once.
\begin{table}[htb!]
\centering
\begin{tabular}{l|c|c|c|c|c|c|c}
\multicolumn{8}{c}{{\bf Average computation times (minutes)}} \\
\hline
       & d$_0$CRT & d$_\mathrm{I}$CRT & knockoff    & HRT     & \rev{o}CRT (L) & \rev{o}CRT (EN) & \rev{o}CRT (AL) \\ \hline
Linear   & $0.6$  & $0.6$  & $0.2$ & $0.3$ & $355.9$     & $425.5$     & $378.9$       \\ \hline
Logistic & $1.7$  & $1.8$  & $0.1$ & $0.5$ & $309.0$    & $461.0$    & $391.8$      \\ \hline
\end{tabular}
\caption{\label{tab:com} Average computation times (in minutes) of the $n=p=300$ simulations of Appendix~\ref{sec:sim:smc}. dCRT is much more efficient than the \rev{o}CRT, and slightly more expensive than knockoffs and HRT. We used abbreviations for Lasso (L), Elastic Net (EN) and Adaptive Lasso (AL).}
\end{table}

\subsection{Large size data simulation}\label{sec:sim:hrt}
In this section, we conduct simulation studies of a scale beyond the \rev{o}CRT's computational feasibility, and hence focus on the remaining methods whose computation stays manageable. As a baseline, we set $n=p=800$, again use ${\rm AR}(1)$ covariates with autocorrelation $0.5$, generate $Y$ from Gaussian linear model with unit residual variance, and use a coefficient vector with $s=50$ nonzero entries of equal magnitude $\nu=0.175$ (chosen to make the power around 0.5) and random signs (each independently having equal probability of being positive or negative). Again, the adjacent and equally-spaced supports are studied separately and we pursue controlled variable selection with nominal false discovery rate or family-wise error rate $\alpha=0.1$.

Each of the four average power plots in Figure~\ref{fig:diffnp} varies one the parameters ($\nu$, $n$, $p$, or $s$) from this baseline simulation setup, with the ranges given by the x-axes. The two dCRTs have similar performance, both of them outperform the HRT in all cases for both false discovery rate and family-wise error rate control. They perform better than knockoffs for adjacent support but worse than knockoffs for equally-spaced support under most cases. When the sparsity level $s$ is below $10$, the power of knockoffs drops to $0$ because of the effect mentioned in Section~\ref{sec:sim}.

\begin{figure}[htpb!]
\centering
  \includegraphics[width=0.4\textwidth]{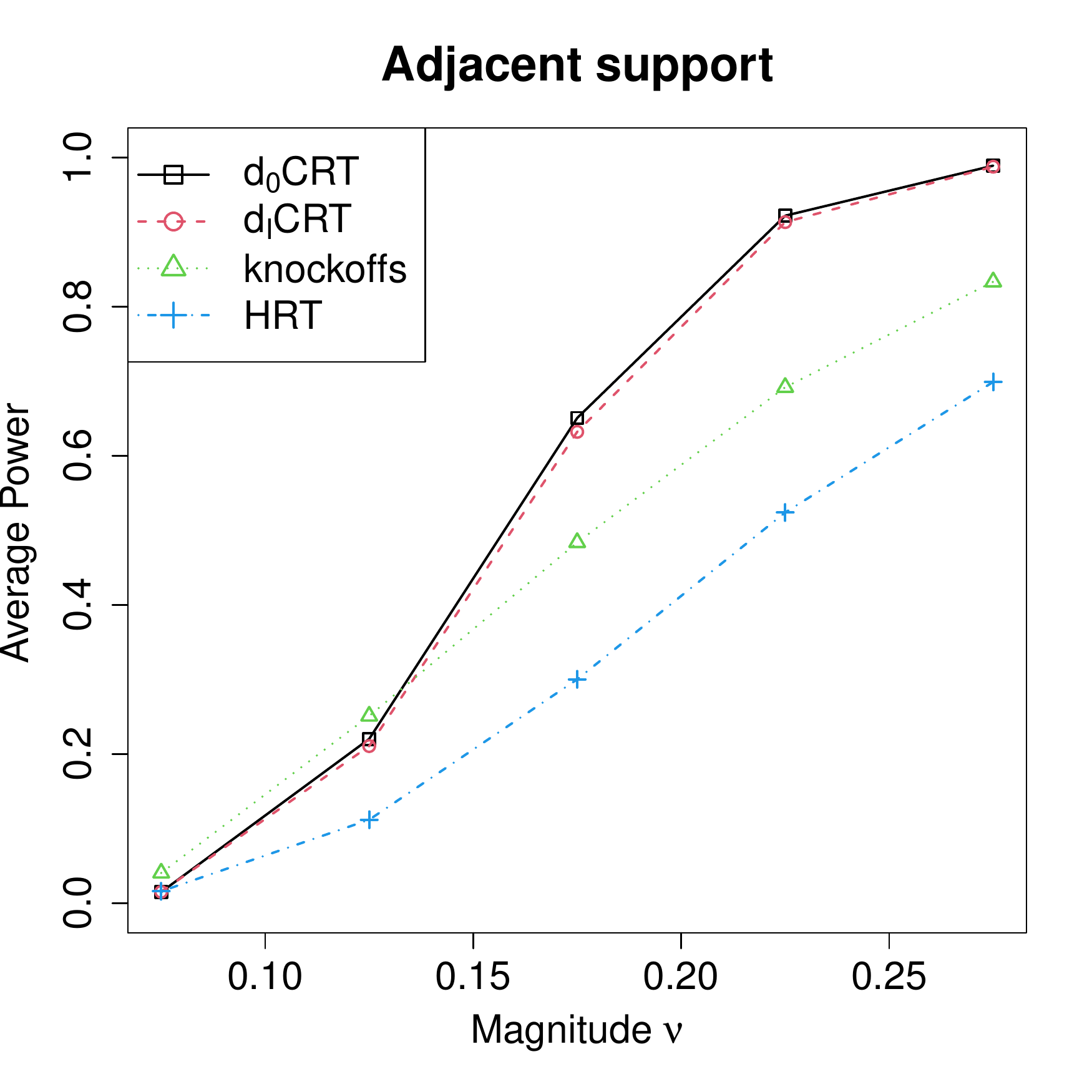}
  \includegraphics[width=0.4\textwidth]{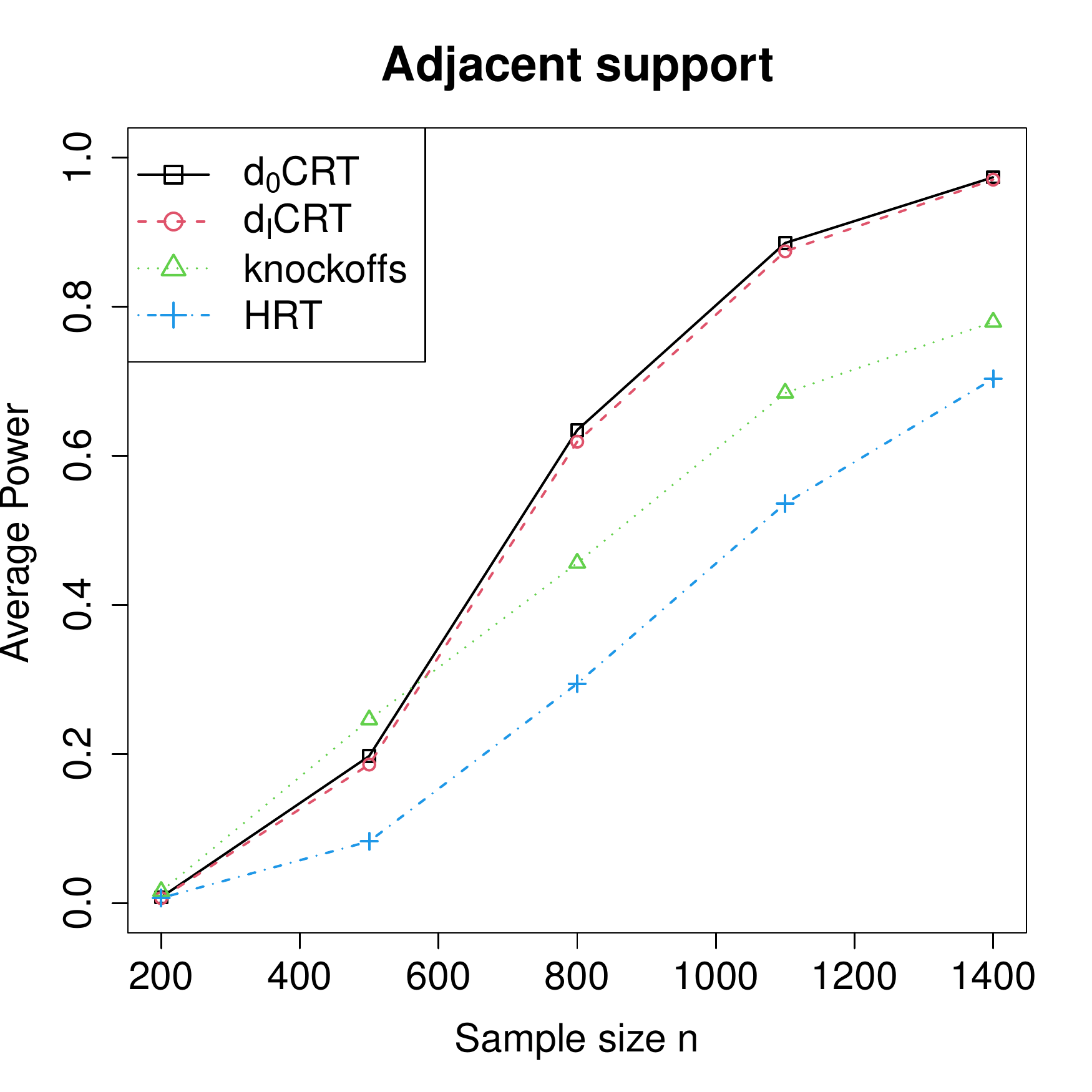}
  \includegraphics[width=0.4\textwidth]{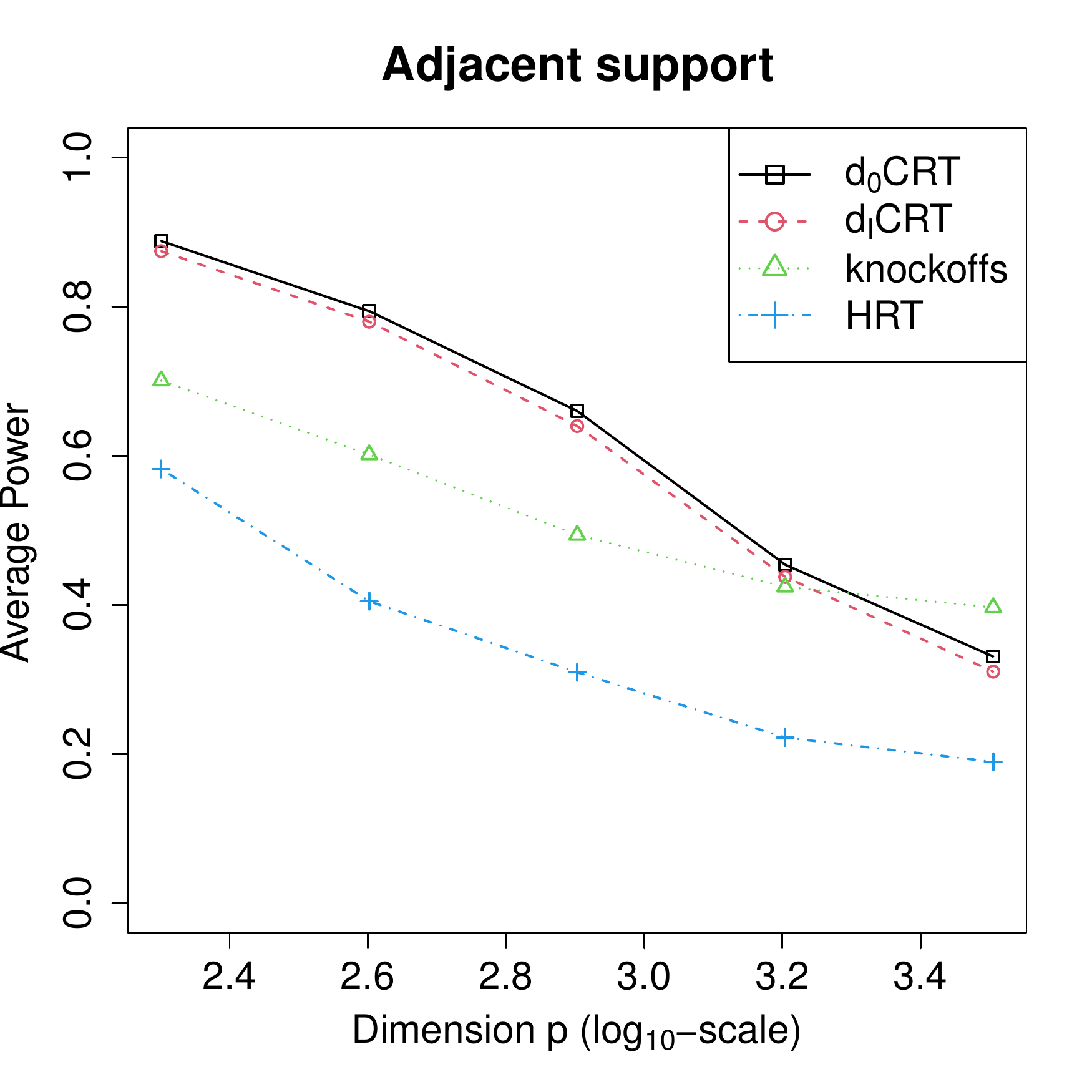}
  \includegraphics[width=0.4\textwidth]{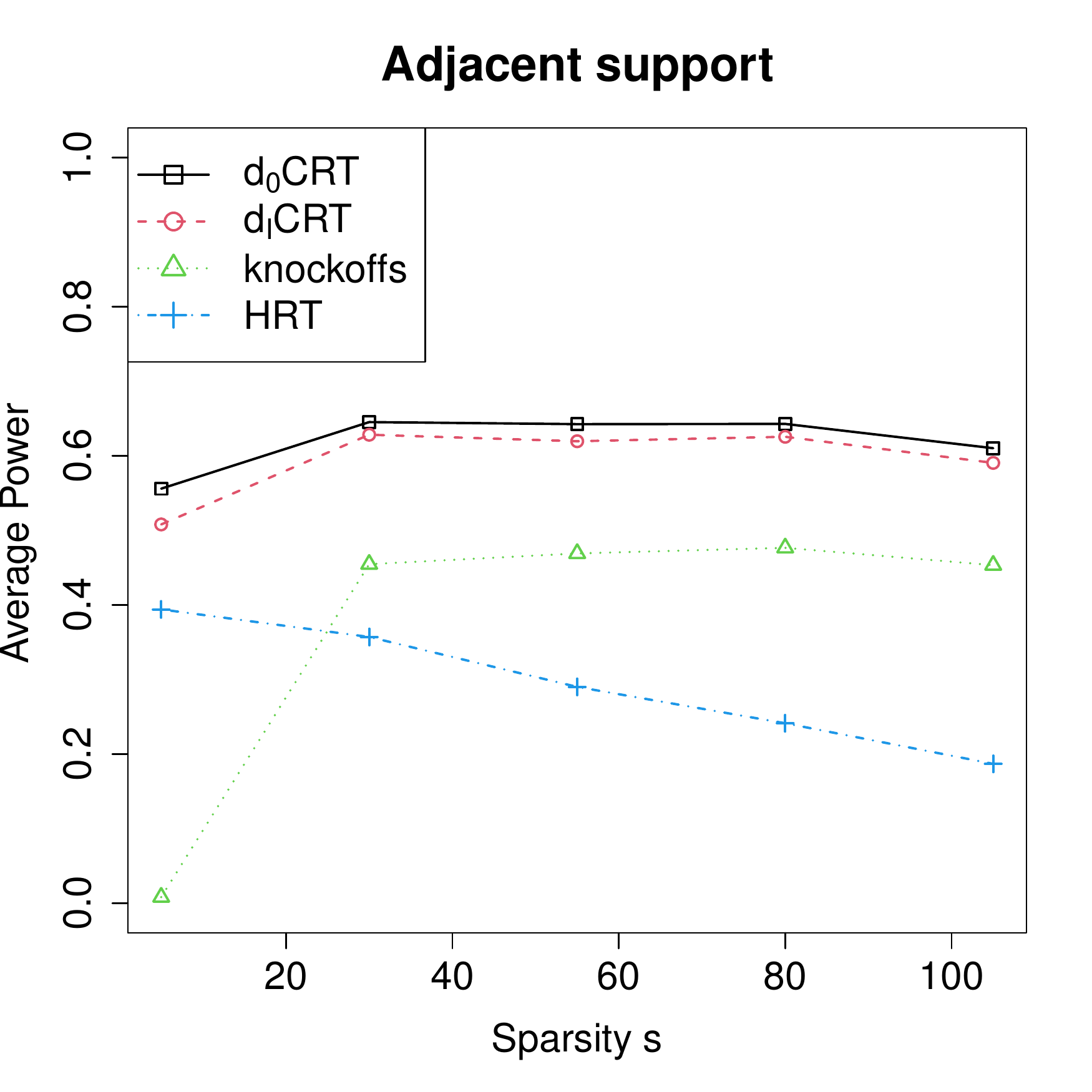}

\caption{\label{fig:diffnp} Average powers of false discovery rate control of the large scale simulations of Appendix~\ref{sec:sim:hrt} that vary the coefficient magnitude, sample size, dimension, and coefficient sparsity with adjacent support. All standard errors are below $0.01$. Our dCRT approaches are typically the most powerful.}
\end{figure}

\begin{figure}[htpb!]
\centering
  \includegraphics[width=0.4\textwidth]{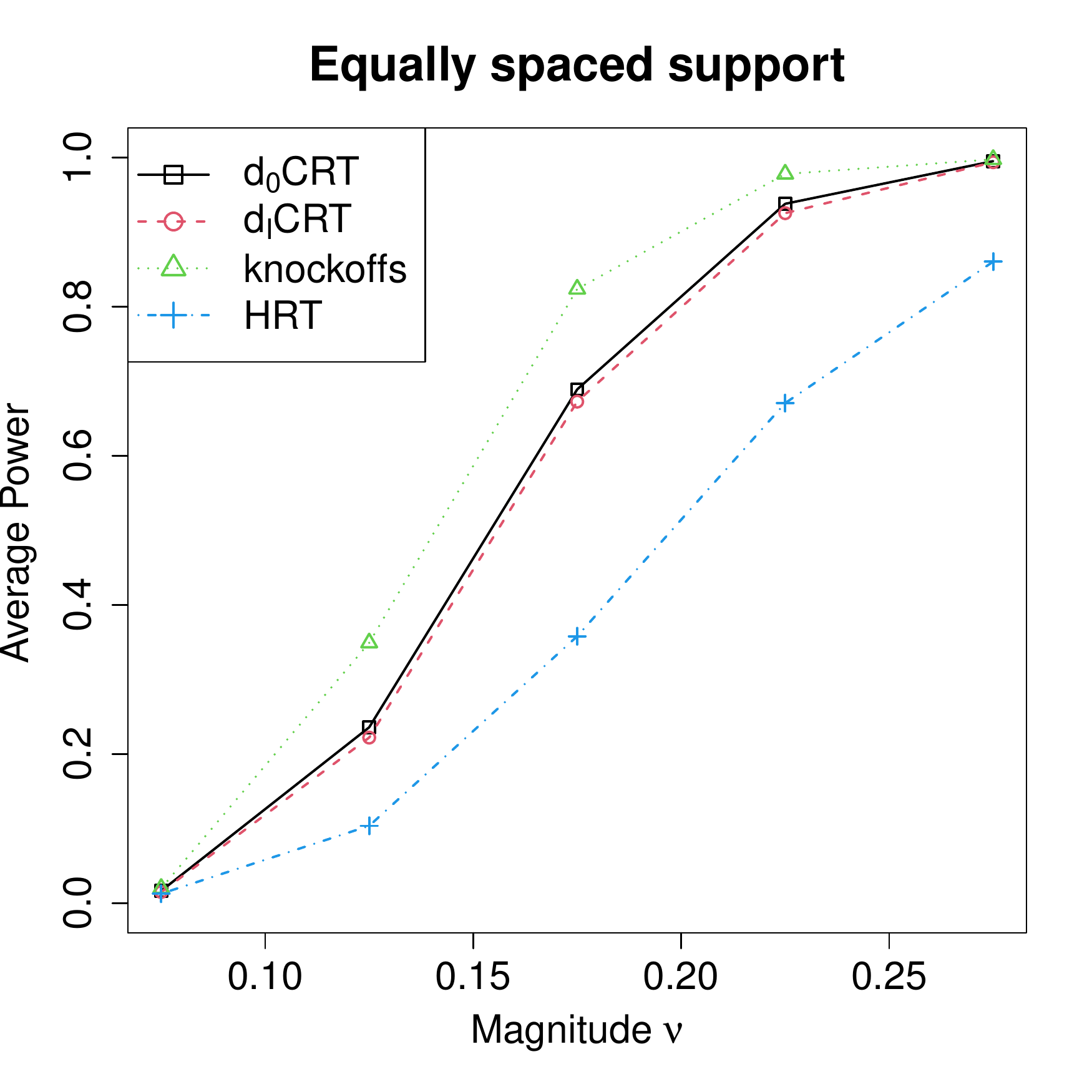}
  \includegraphics[width=0.4\textwidth]{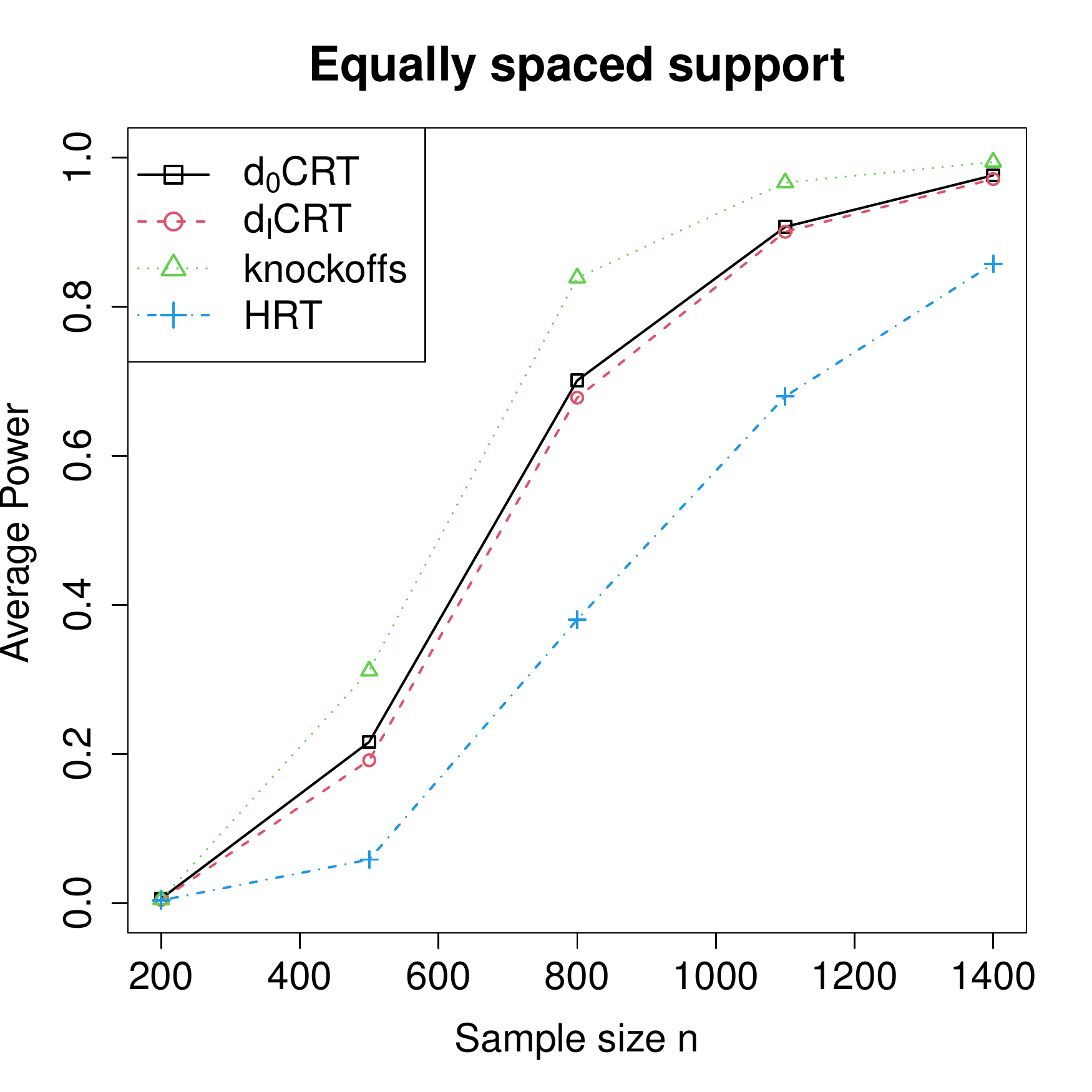}
  \includegraphics[width=0.4\textwidth]{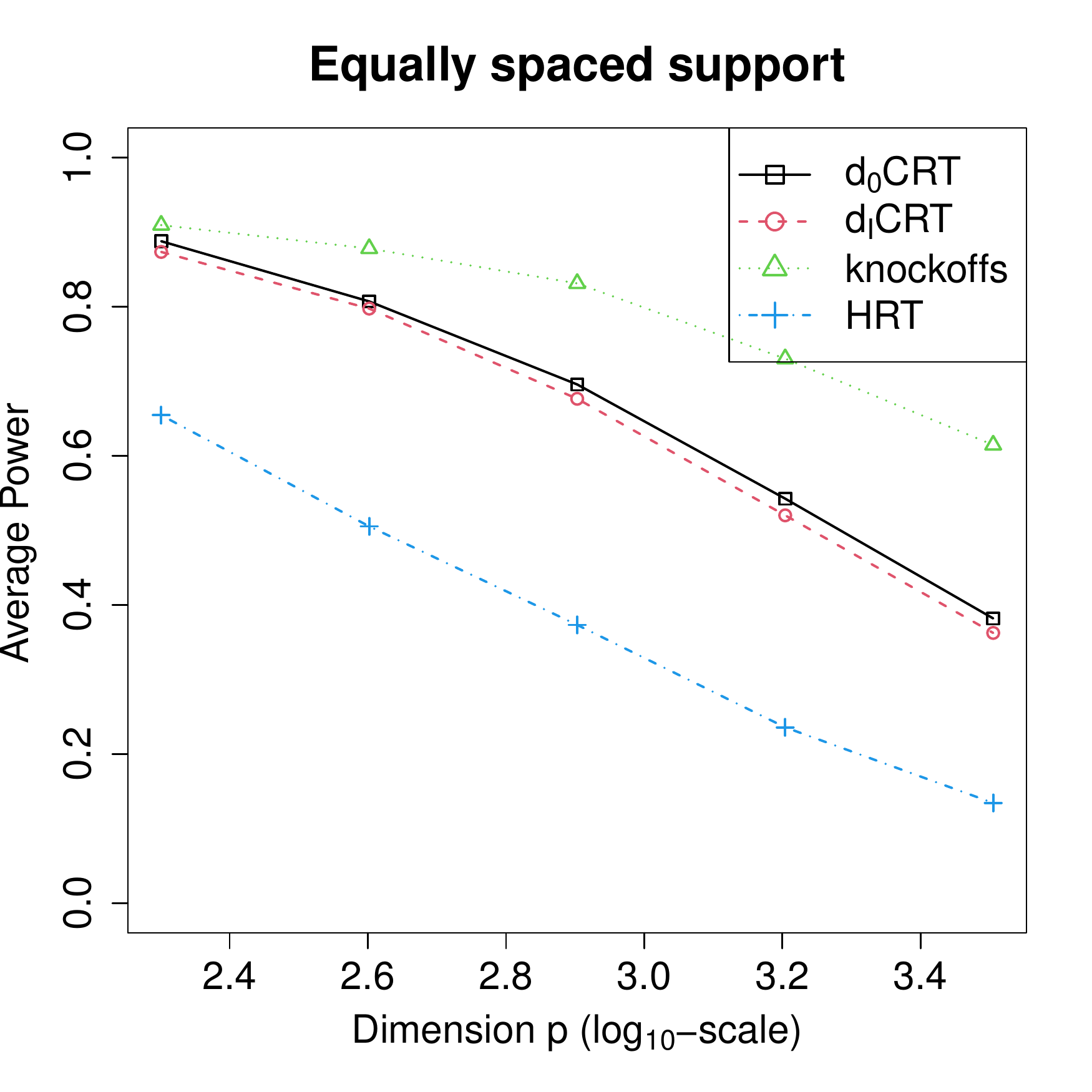}
  \includegraphics[width=0.4\textwidth]{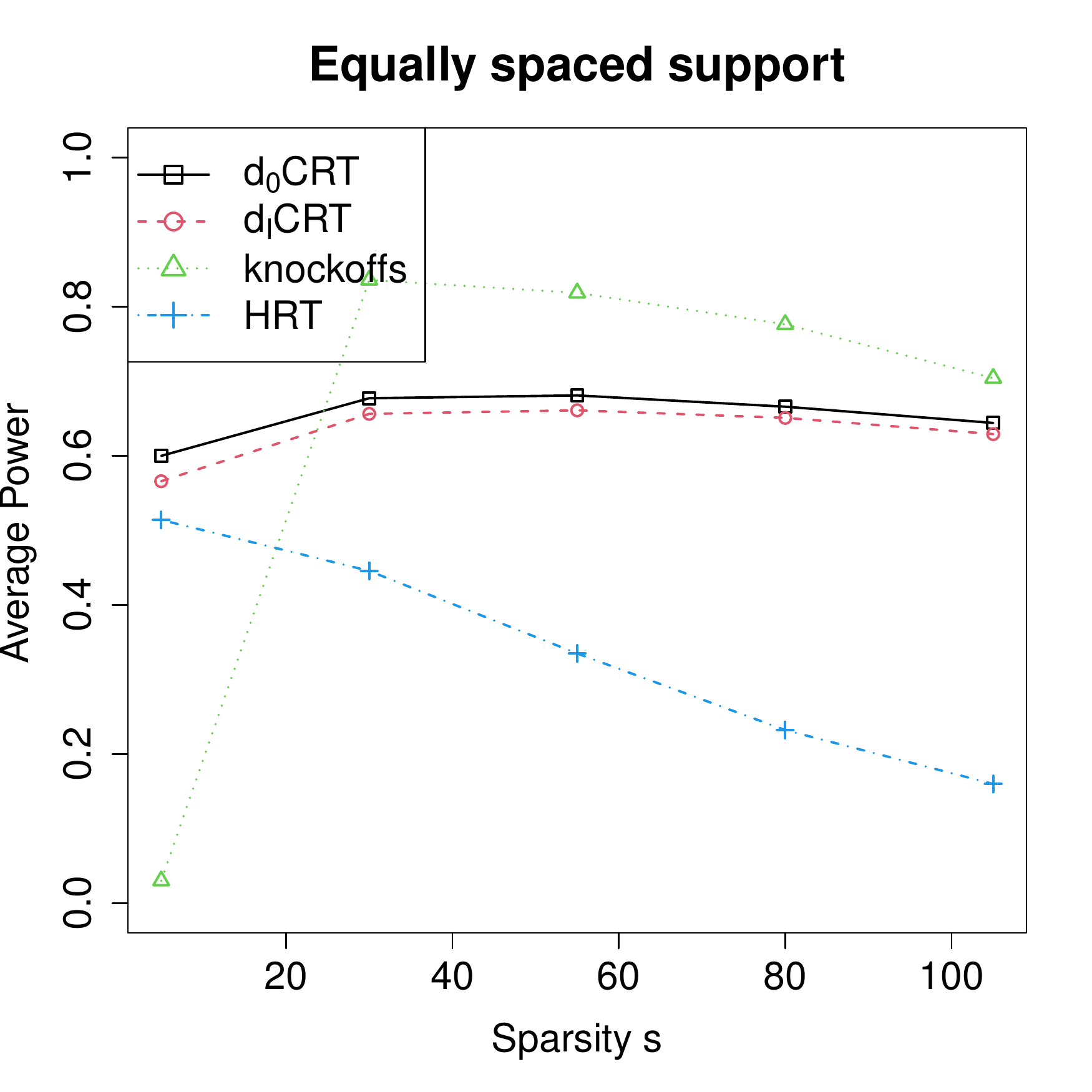}

\caption{\label{fig:diffnp:equal} Average powers of false discovery rate control of the large scale simulations of Appendix~\ref{sec:sim:hrt} that vary the coefficient magnitude, sample size, dimension, and coefficient sparsity with equally spaced support. All standard errors are below $0.01$. The dCRTs have lower power than knockoffs in most cases.}
\end{figure}

\begin{figure}[htpb!]
\centering
  \includegraphics[width=0.4\textwidth]{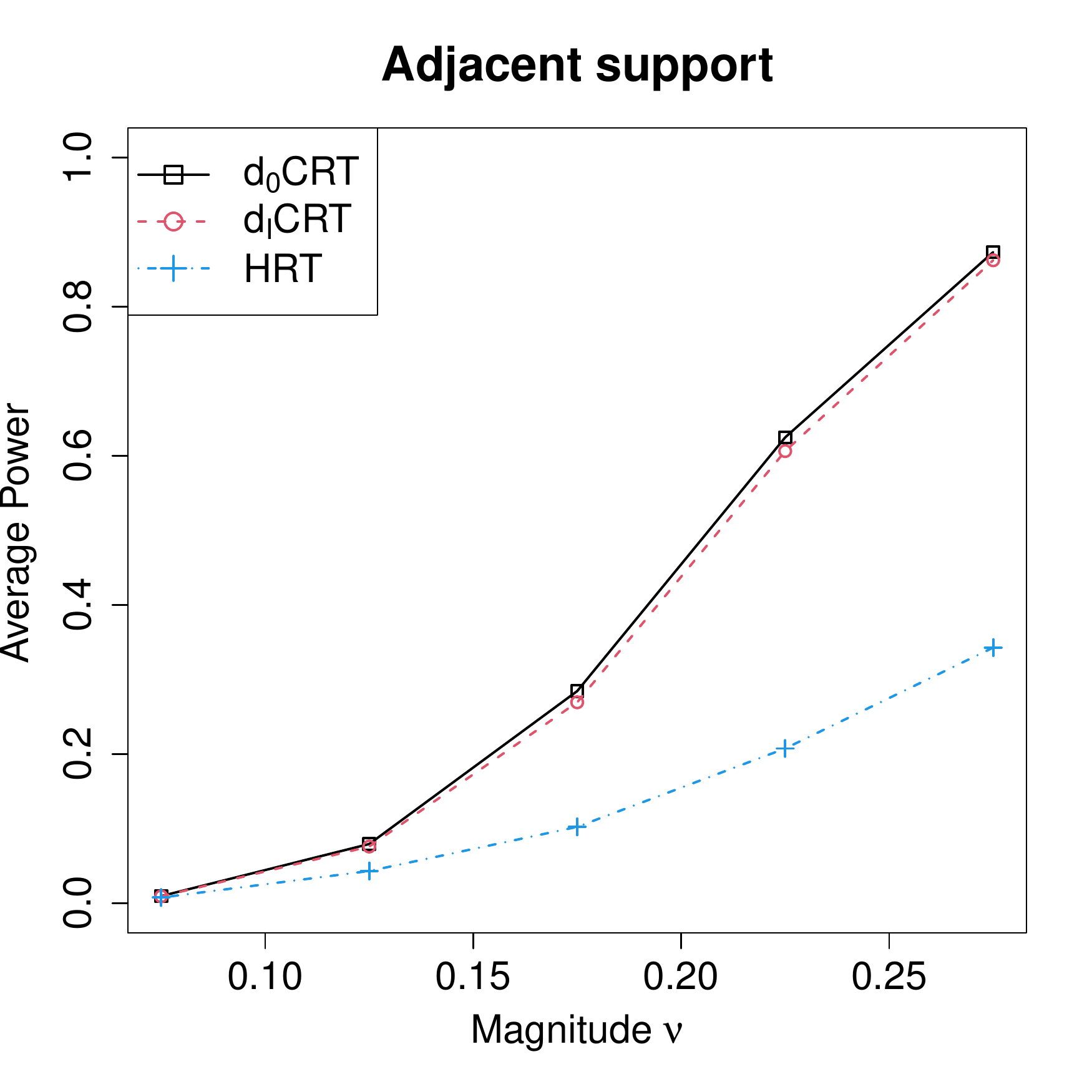}
  \includegraphics[width=0.4\textwidth]{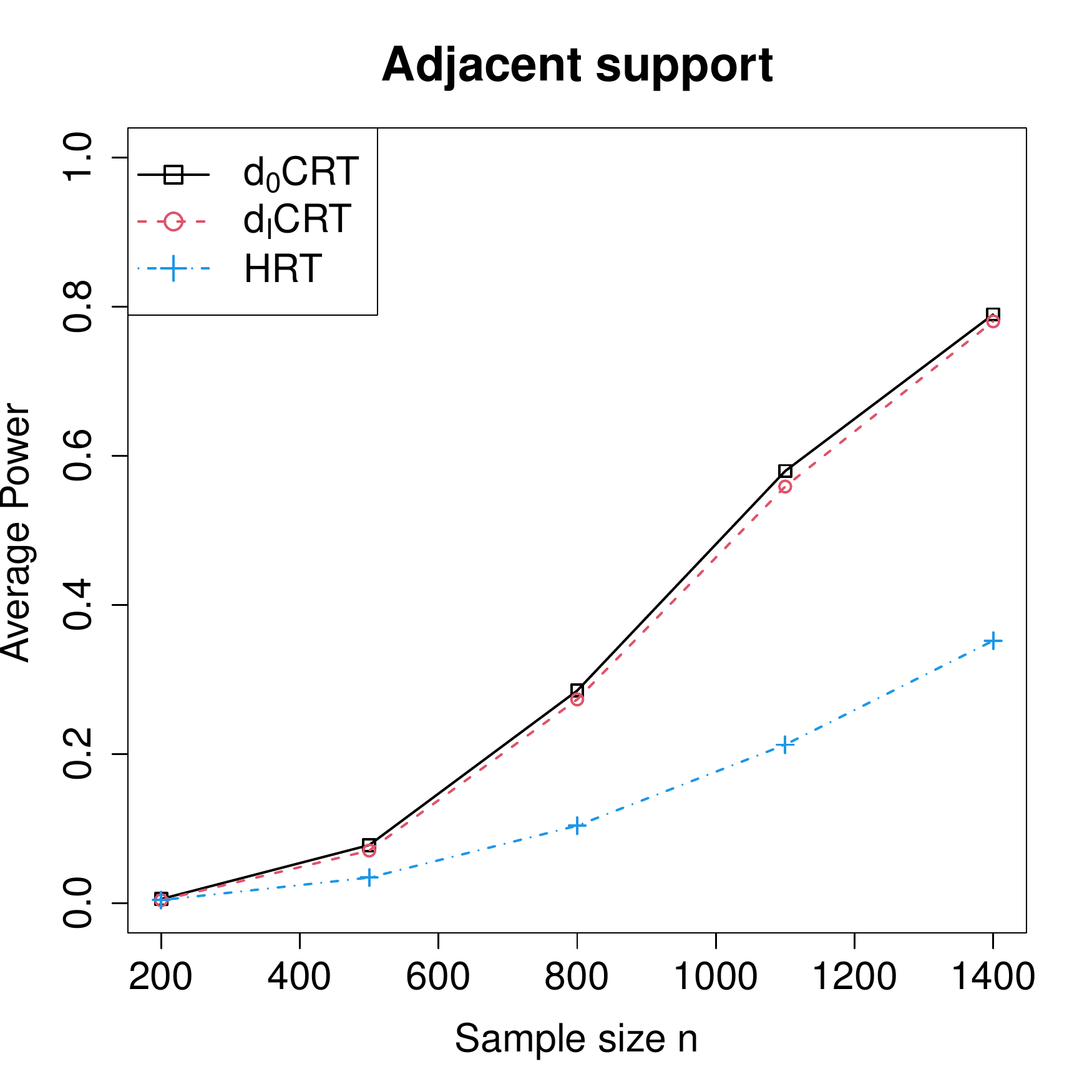}
  \includegraphics[width=0.4\textwidth]{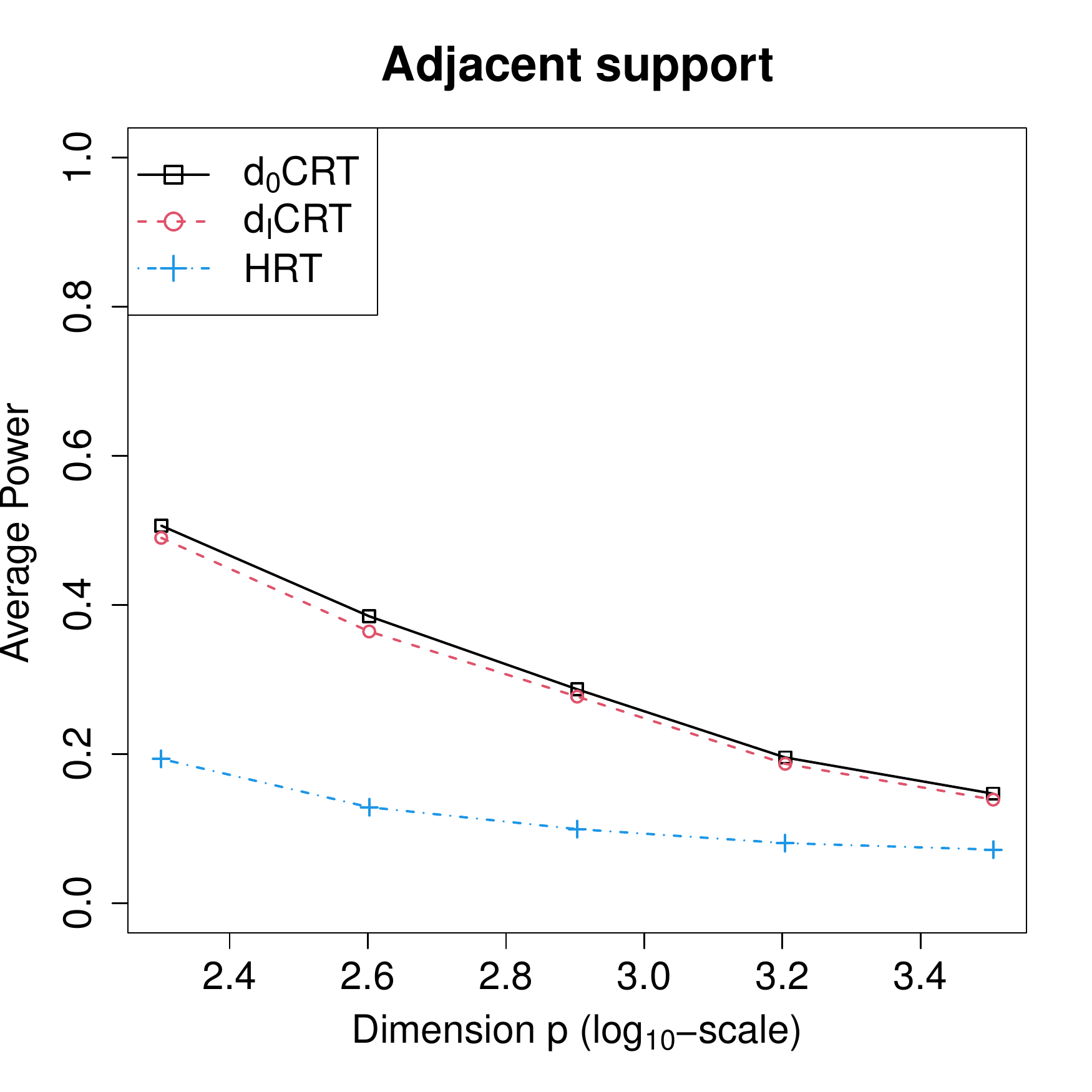}
  \includegraphics[width=0.4\textwidth]{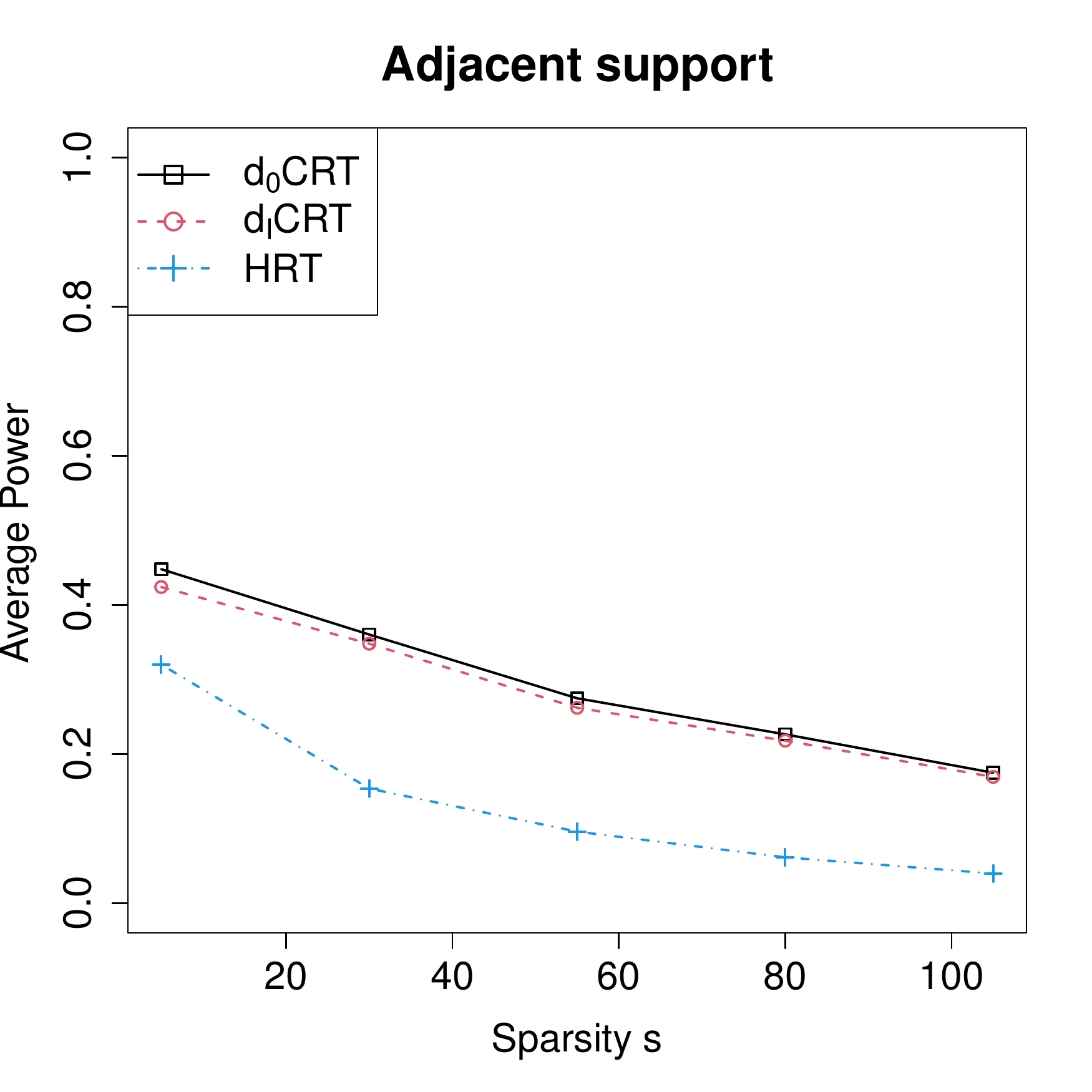}

\caption{\label{fig:diffnp:fwer} Average powers of family-wise error rate control of the large scale simulations of Appendix~\ref{sec:sim:hrt} that vary the coefficient magnitude, sample size, dimension, and coefficient sparsity with adjacent support. All standard errors are below $0.01$. The dCRTs are more powerful than the HRT.}
\end{figure}

\begin{figure}[htpb!]
\centering
  \includegraphics[width=0.4\textwidth]{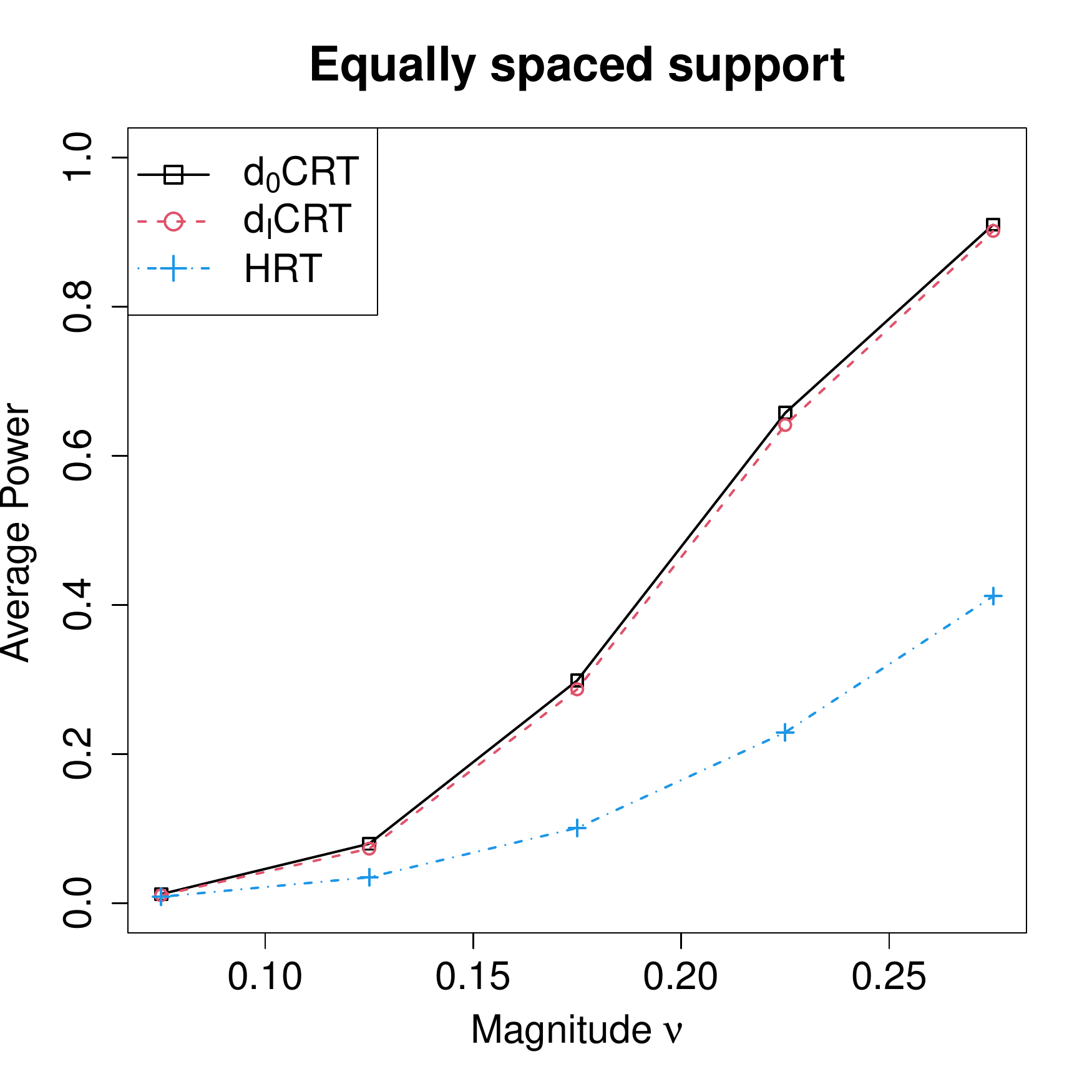}
  \includegraphics[width=0.4\textwidth]{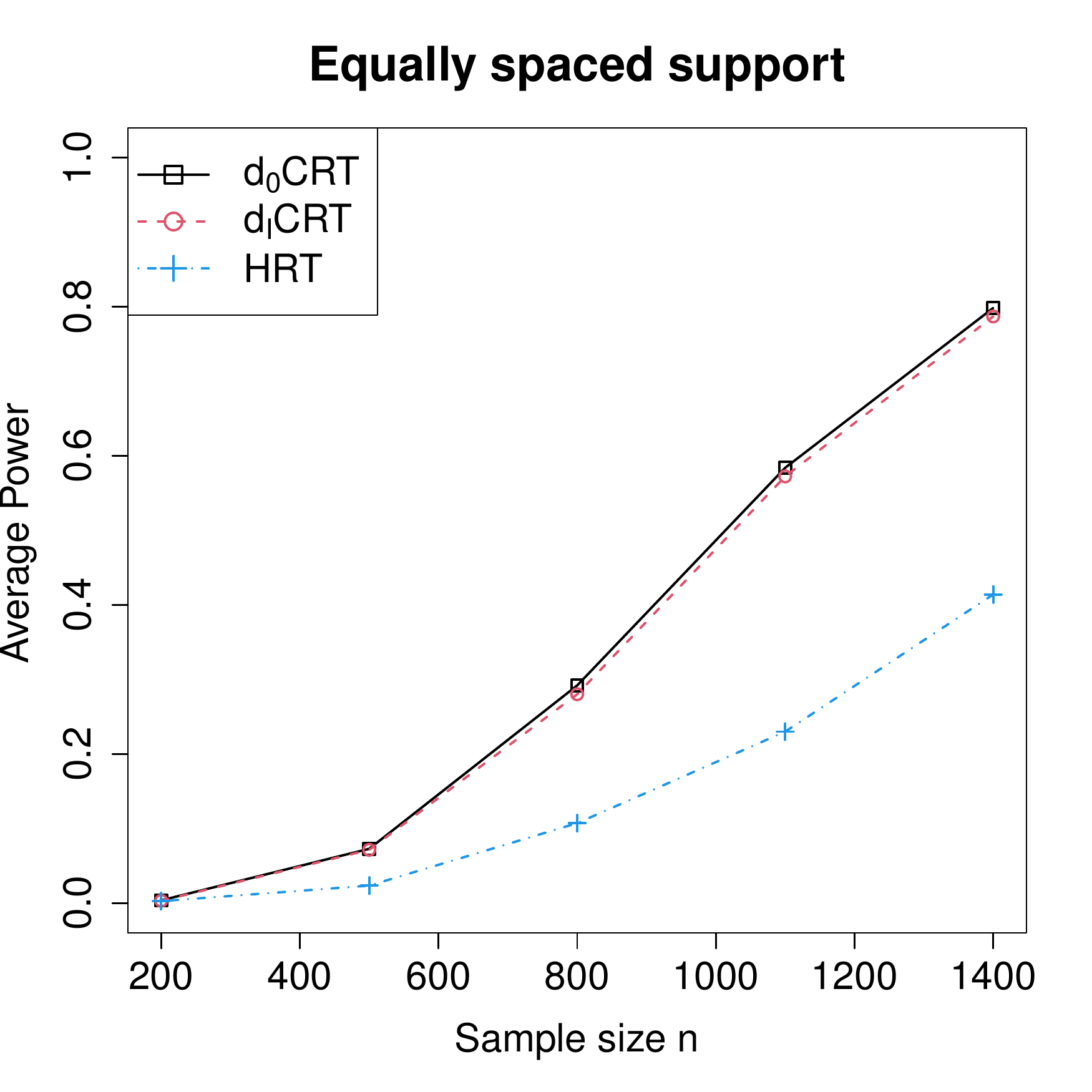}
  \includegraphics[width=0.4\textwidth]{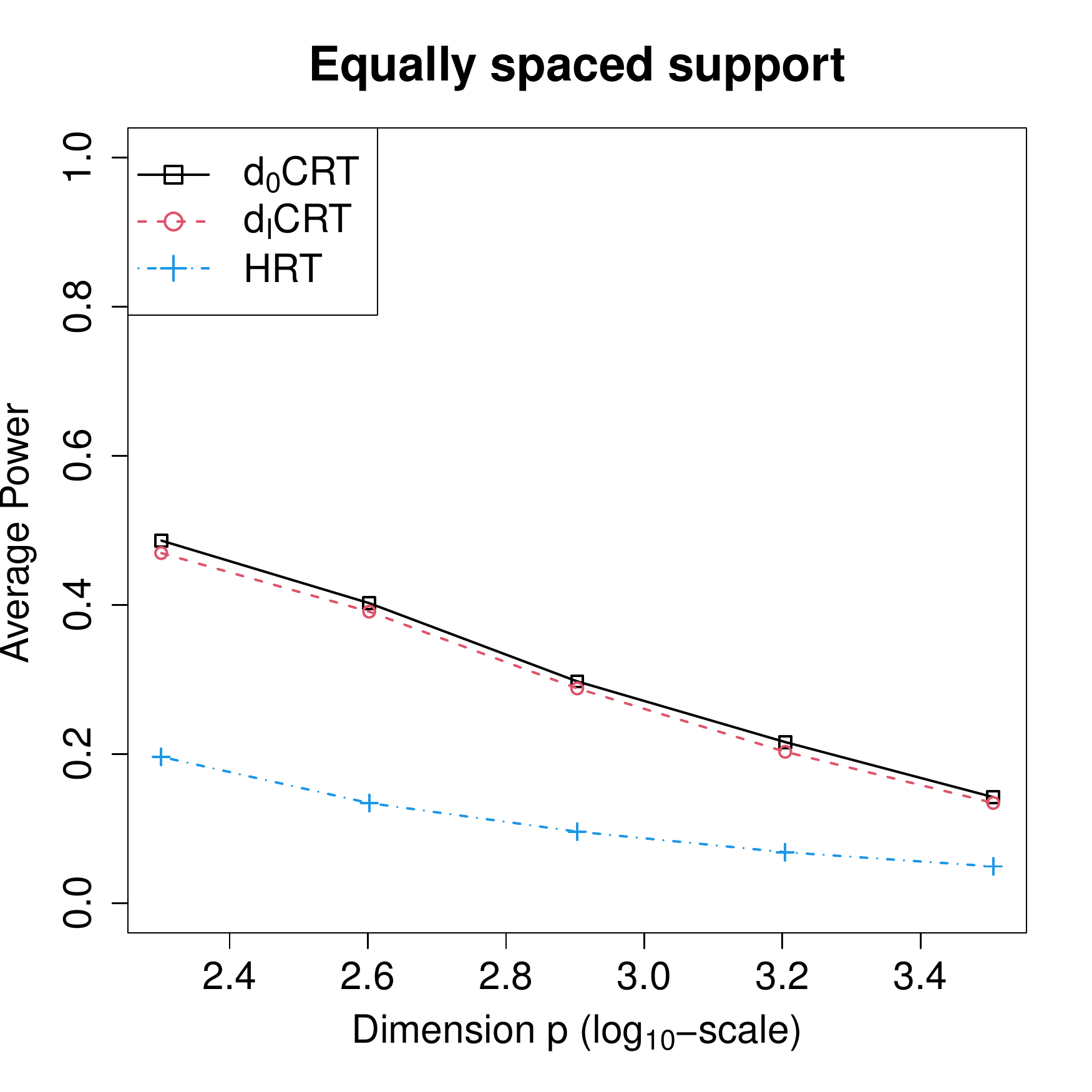}
  \includegraphics[width=0.4\textwidth]{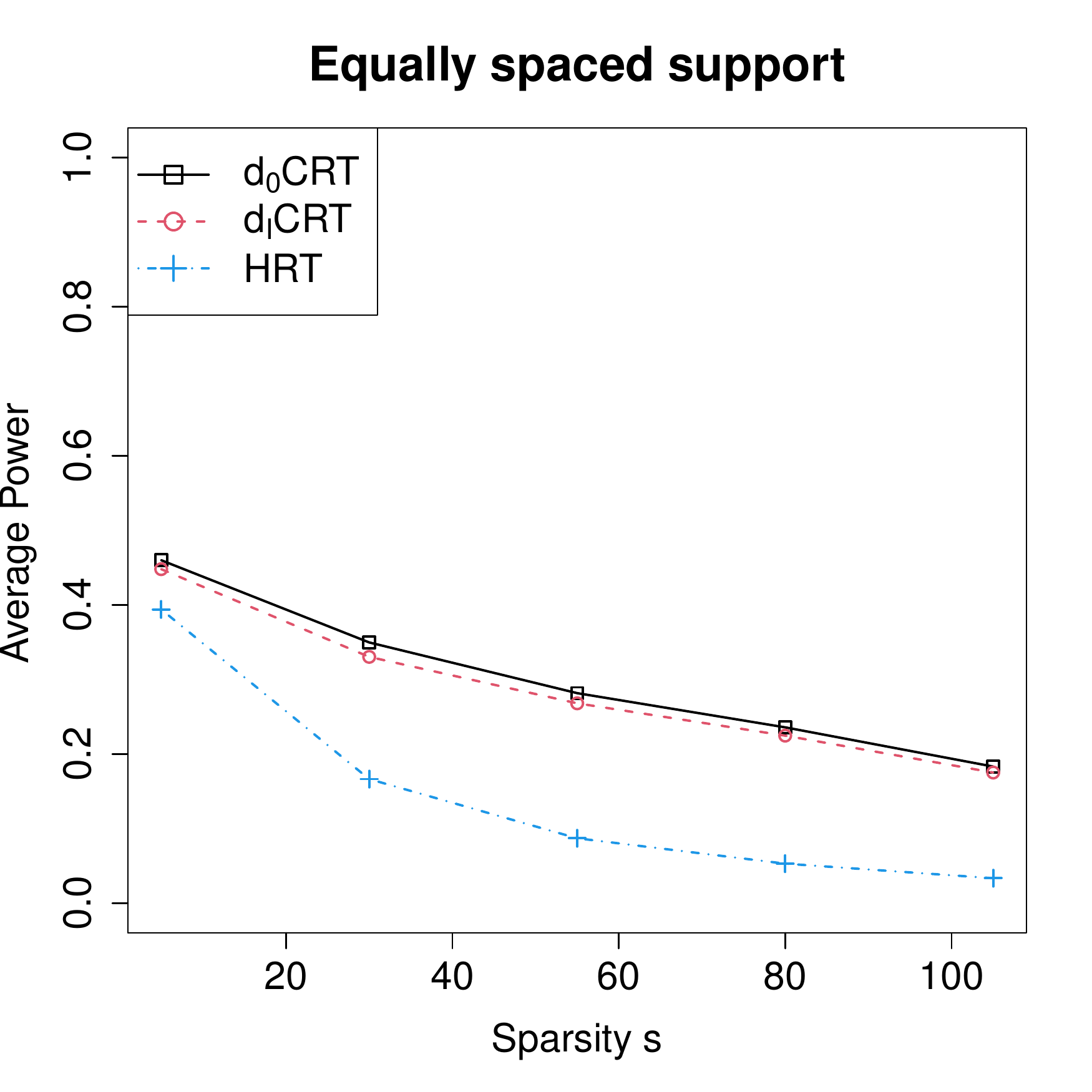}

\caption{\label{fig:diffnp:equal:fwer} Average powers of family-wise error rate control of the large scale simulations of Appendix~\ref{sec:sim:hrt} that vary the coefficient magnitude, sample size, dimension, and coefficient sparsity with equally spaced support. All standard errors are below $0.01$. The dCRTs are more powerful than the HRT.}
\end{figure}

We present the average computation times when $n=800$, $p=800$ and $s=50$ in Table~\ref{tab:com:large}. 
Knockoffs and HRT still run faster than the dCRT methods since they only fit high dimensional model once in the whole procedure. 

\begin{table}[htb!]
\centering
\begin{tabular}{c|c|c|c}
\multicolumn{4}{c}{{\bf Average computation times (minutes)}} \\
\hline
      \mbox{d$_0$CRT} &  d$_{\mathrm{I}}$CRT & knockoffs  & HRT  \\ \hline
     $9.8$  & $10.2$ & $1.8$ & $6.2$ \\ \hline
\end{tabular}
\caption{\label{tab:com:large} Average computation times (in minutes) of the $n=p=800$ simulations of Appendix~\ref{sec:sim:hrt}. As expected, knockoffs is the fastest, while dCRT is slightly slower than the HRT.}
\end{table}

We now study varying the covariate and response models from this same baseline simulation. 
First we generate Gaussian covariates with covariance structure set as ${\rm AR}(1)$ with autocorrelations 0.25, 0.5, and 0.75, and equicorrelated with correlations 0.15, 0.3, and 0.45. Second we generate $Y$ from three additional models given by:
\begin{enumerate}[(i)]
\item {\bf Poisson model:} $Y$ is generated from a Poisson generalized linear model with the same coefficient vector as the baseline model. 
\item {\bf Logistic model:} $Y$ is generated from a logistic regression with the same coefficient vector as the baseline model except $\nu=0.5$.

\item {\bf Polynomial model:} $Y$ is generated from a Gaussian model with conditional mean given by a polynomial that starts from the baseline model with $\nu=0.105$ and takes each covariate with a nonzero coefficient and adds a term equal to 0.3 times its cube.
\end{enumerate}
The signal magnitudes of each setting are chosen to make the powers of the main methods close to $0.5$, for convenience of comparison. Note that under (i) and (iii), we still fit linear lasso, though the model is wrong. The resulting average powers of both these simulations are plotted in Figure~\ref{fig:diff:design}. The d$_0$CRT and d$_{\mathrm{I}}$CRT have significantly higher power than HRT in all settings. The comparison to knockoffs is more complicated. Knockoffs generally performs worse under adjacent support and under spaced support with equicorrelated design, suggesting that knockoffs is sensitive to correlations among signal variables. On the other hand, dCRT performs worse under highly autocorrelated designs. These subtle differences are intriguing and we leave their further investigation to future work. Again, all the methods control false discovery rate properly with the desired level $0.1$ in the numerical studies corresponding to Figures~\ref{fig:diffnp} and~\ref{fig:diff:design}, and we present the false discovery rate plot in Appendix~\ref{sec:sim:fdr}.

\begin{figure}[htbp!]
\centering
\includegraphics[width=0.4\textwidth]{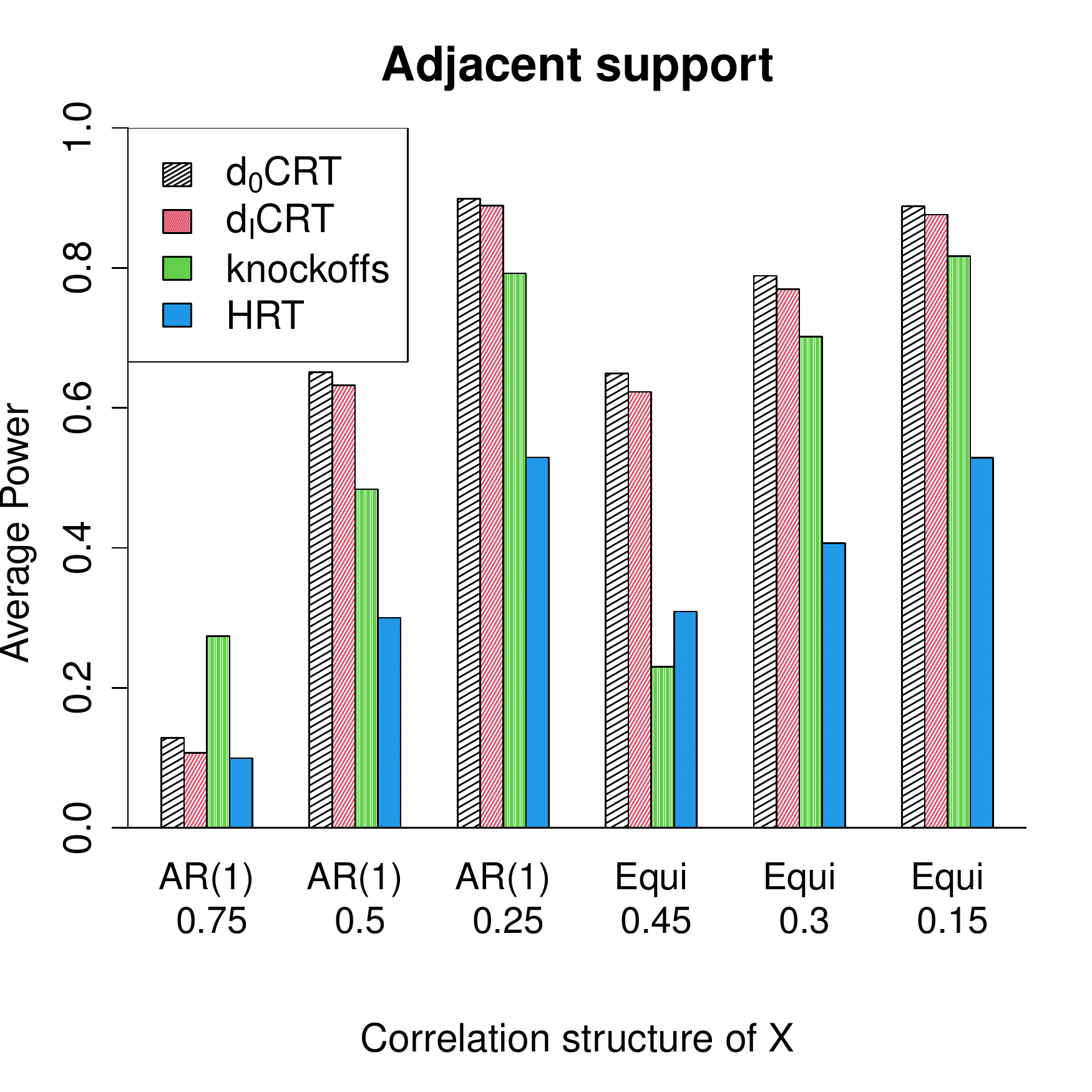}
 \includegraphics[width=0.4\textwidth]{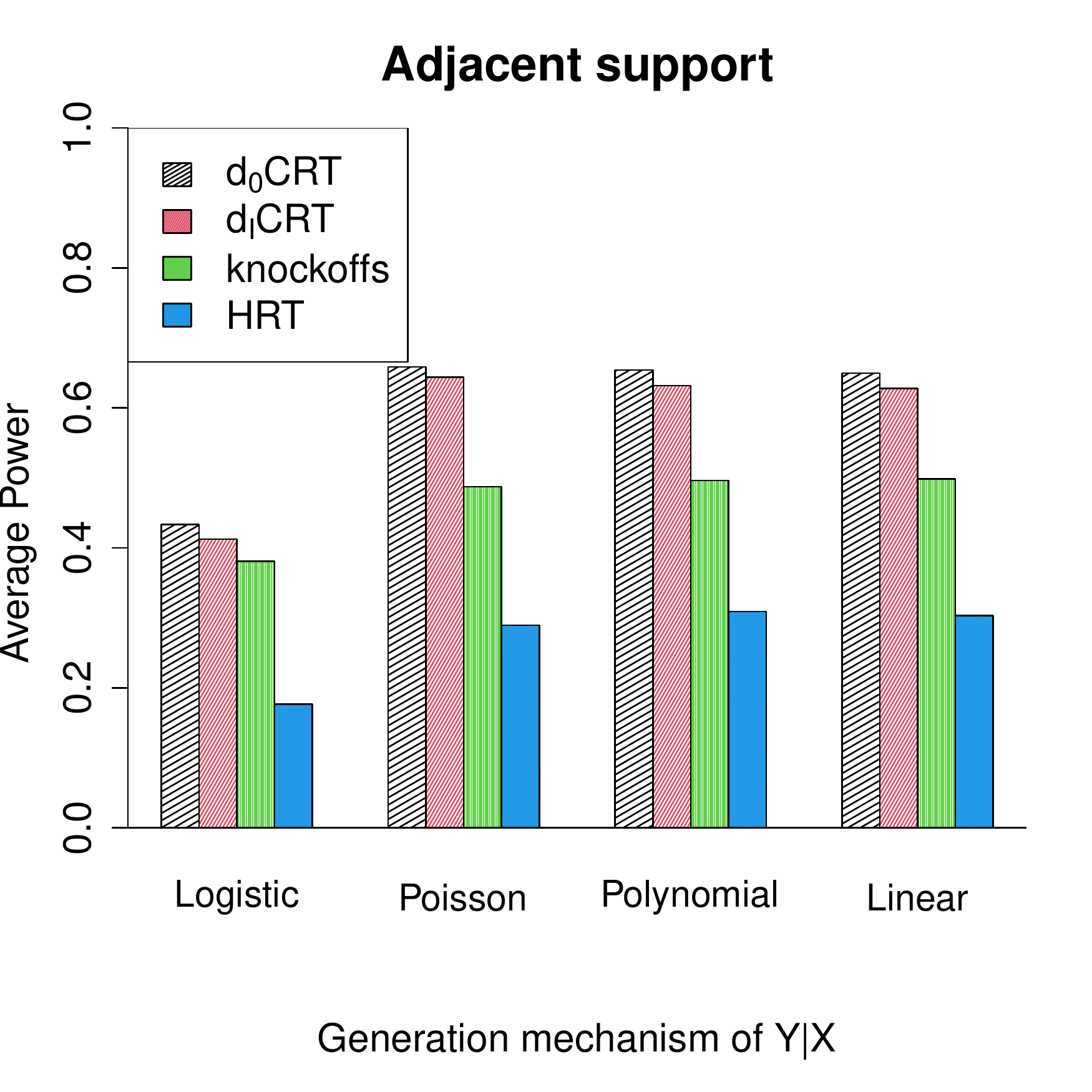}
\includegraphics[width=0.4\textwidth]{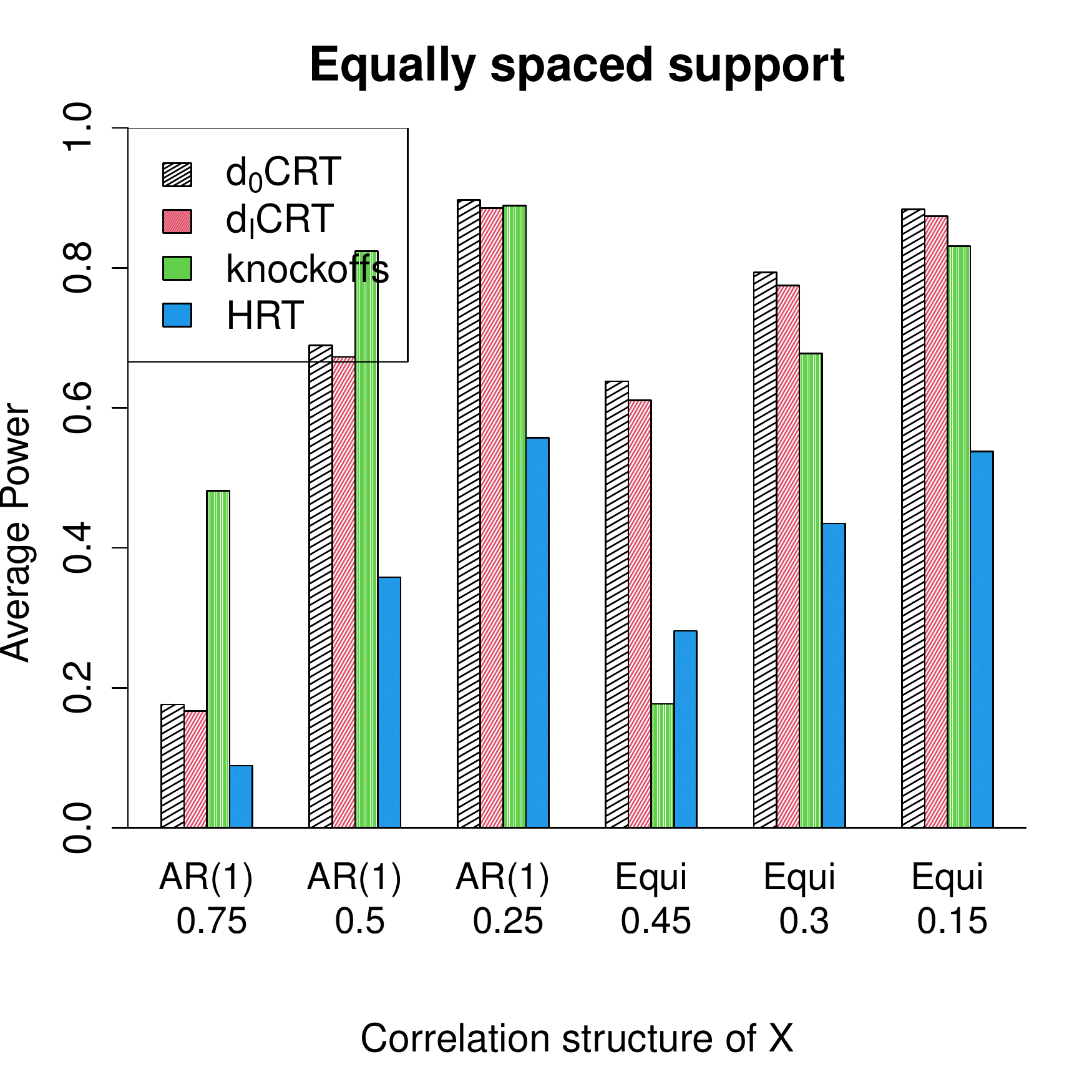}
  \includegraphics[width=0.4\textwidth]{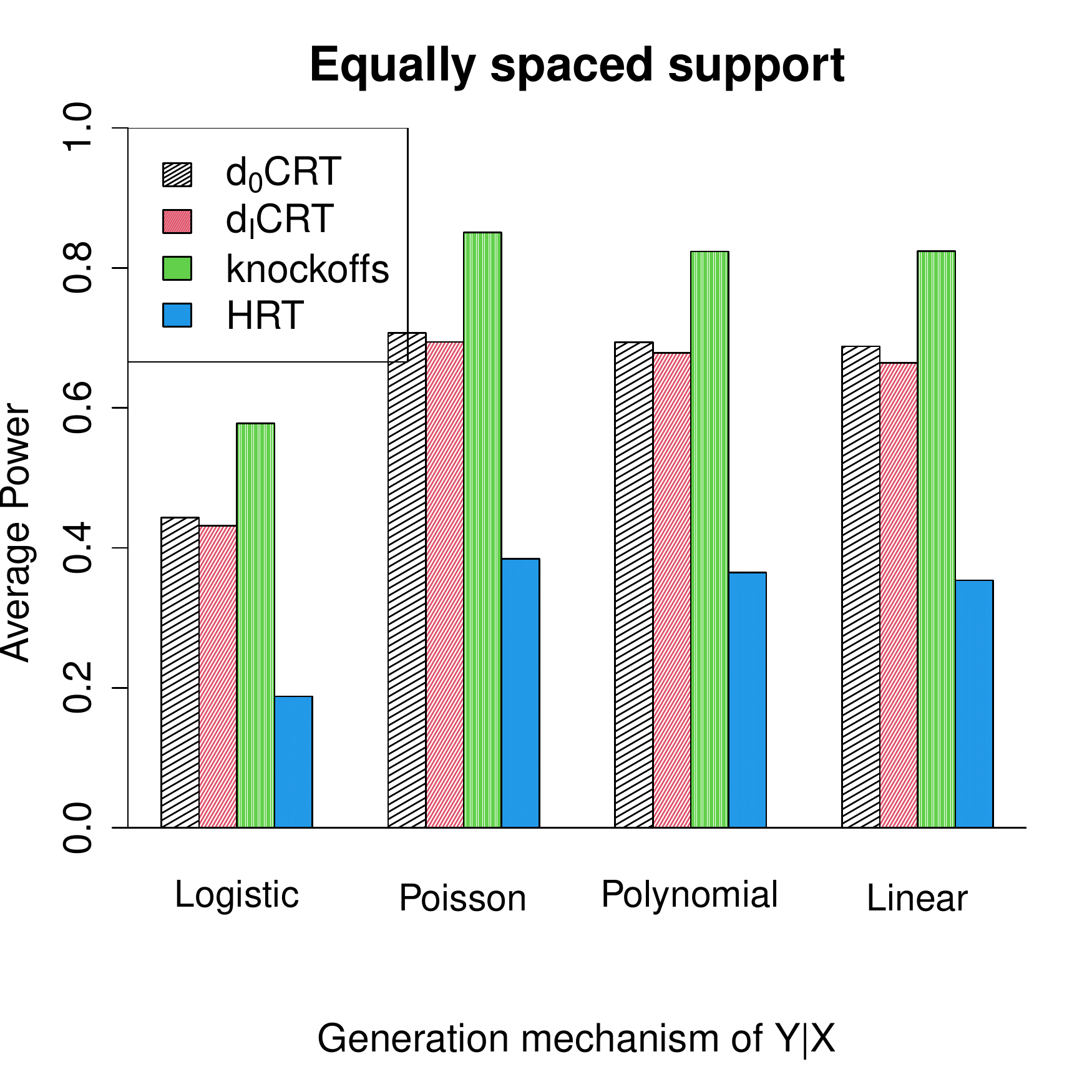}
\caption{\label{fig:diff:design} Average powers of the large scale simulations of Appendix~\ref{sec:sim:hrt} that vary the covariate and response models; all standard errors are below $0.01$. The dCRTs have higher power than knockoffs in most scenarios with adjacent support or equicorrelated design, lower power than knockoffs for equally-spaced support and autocorrelated design, and consistently higher power than HRT.}

\end{figure}

{
\subsection{Assessing algorithmic variability} \label{sec:knockoffs-variability}

In addition to assessing power, we also assessed \textit{algorithmic variability} of the methods compared. We define algorithmic variability as the degree to which the rejection set changes when rerunning a method on the same data with different random seeds. We quantified this notion using the following definition. Given data $\bX, \by$, suppose $\mathcal R^{(1)} = \mathcal R^{(1)}(\bm X, \bm y)$ and $\mathcal R^{(2)} = \mathcal R^{(2)}(\bm X, \bm y)$ are the resulting rejection sets based on two random seeds. Their similarity can be measured via the \textit{Jaccard Index}, defined
\begin{equation*}
J(\mathcal R^{(1)}, \mathcal R^{(2)}) \equiv \frac{|\mathcal R^{(1)} \cap \mathcal R^{(2)}|}{|\mathcal R^{(1)} \cup \mathcal R^{(2)}|}.
\end{equation*}
This quantity is between 0 and 1, with higher Jaccard Indices representing greater similarity. We define the \textit{stability} of the rejection set as the expectation of this quantity:
\begin{equation}
\text{stability}(\bX,\by) \equiv \mathbb E\left[\left.J\left(\mathcal R^{(1)}(\bm X, \bm y), \mathcal R^{(2)}(\bm X, \bm y)\right) \right| \bX, \by\right].
\label{stability}
\end{equation}

We assessed this measure of stability in the context of the large data size simulation from Section~\ref{sec:sim:hrt}. We reran each method for each realization of the data with 50 different seeds to obtain 50 different rejection sets $R^{(r)}(\bm X, \bm y)$ for $1 \leq r \leq 50$. Then, we approximated the stability via 
\begin{equation*}
\widehat{\text{stability}}(\bX,\by) \equiv \frac{1}{{50 \choose 2}}\sum_{1 \leq r_1 < r_2 \leq 50}J\left(\mathcal R^{(r_1)}(\bm X, \bm y), \mathcal R^{(r_2)}(\bm X, \bm y)\right).
\end{equation*}
Figure~\ref{fig:knockoff-variability} compares the average stability of each method for different values of the signal strength. We see that the dCRT yields the most stable rejection sets (with stability at least 0.9 for all parameter values tested), followed by the HRT, followed by knockoffs. The increase in algorithmic variability for HRT and knockoffs is greater for intermediate values of the signal strength, as one would expect. 


\begin{figure}[h!]
	\centering
	\includegraphics[width = 0.45\textwidth]{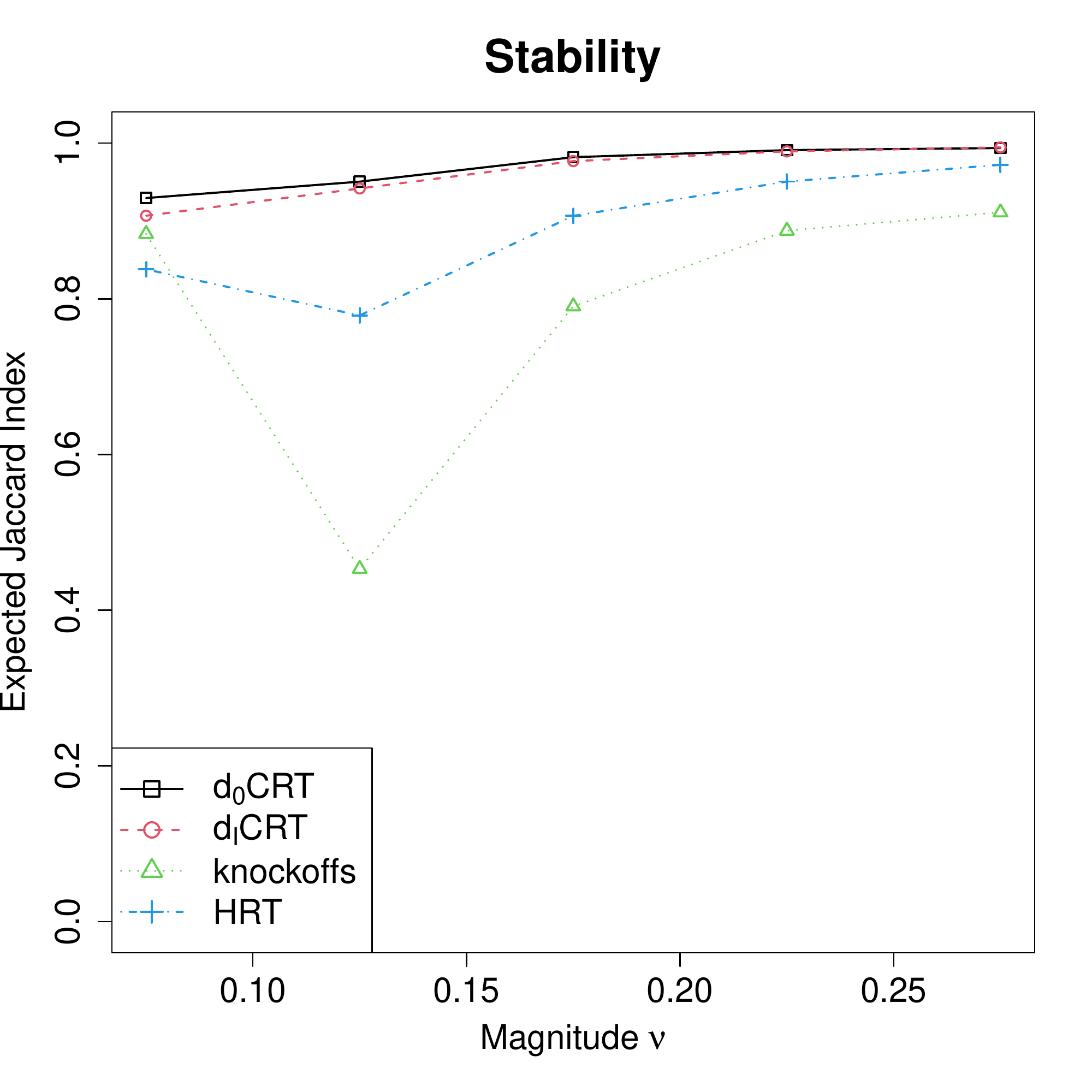}
	\caption{\label{fig:variability} Expected similarity---measured using the Jaccard Index---between the rejection sets resulting from two independent repetitions of each method (see definition~\eqref{stability}). We follow the large scale simulation setting in Appendix~\ref{sec:sim:hrt} with equally spaced signals and the coefficient magnitude varying.}
	\label{fig:knockoff-variability}
\end{figure}
}

\subsection{Comparing dCRT with double machine learning and generalized covariance measure}\label{sim:dml}
As mentioned in the introduction, the test statistic of the resampling-free d$_0$CRT is quite similar to that of double machine learning \citep{chernozhukov2016double} and generalized covariance measure \citep{shah2018hardness} when $X$ is Gaussian. However, we remind the reader that both double machine learning and generalized covariance measure rely on asymptotic normality to calibrate their tests, requiring large samples and well-behaved tails for validity. By comparison, CRT-based methods including the dCRT and HRT are valid in finite samples for any data distribution.

We demonstrate this difference by setting $n=30$, $p=100$ and drawing each covariate independently from a Laplace distribution with mean $0$ and variance $2/9$. We generate $Y$ from a linear model with the residual $\epsilon$ also drawn from the Laplace distribution of mean $0$ and variance $1/2$. Our target is again multiple testing with false discovery rate level $0.1$. We compare HRT, double machine learning, generalized covariance measure and dCRT. When implementing double machine learning and generalized covariance measure, we use our assumed exact model-X knowledge to construct the exact partial residual for each covariate, and use the lasso on $(\by,\bZ)$ to obtain the partial residuals for $Y$. For double machine learning, we use 8-fold cross-fitting. The resulting false discovery rates are presented in Figure~\ref{fig:dml}. Both double machine learning and generalized covariance measure have false discovery rate level substantially above the nominal $0.1$ under all magnitudes, while HRT and dCRT still control the false discovery rate below $0.1$ (as guaranteed by Theorem~\ref{thm:crt}). 

\begin{figure}[htpb]
\centering
\includegraphics[width=0.4\textwidth]{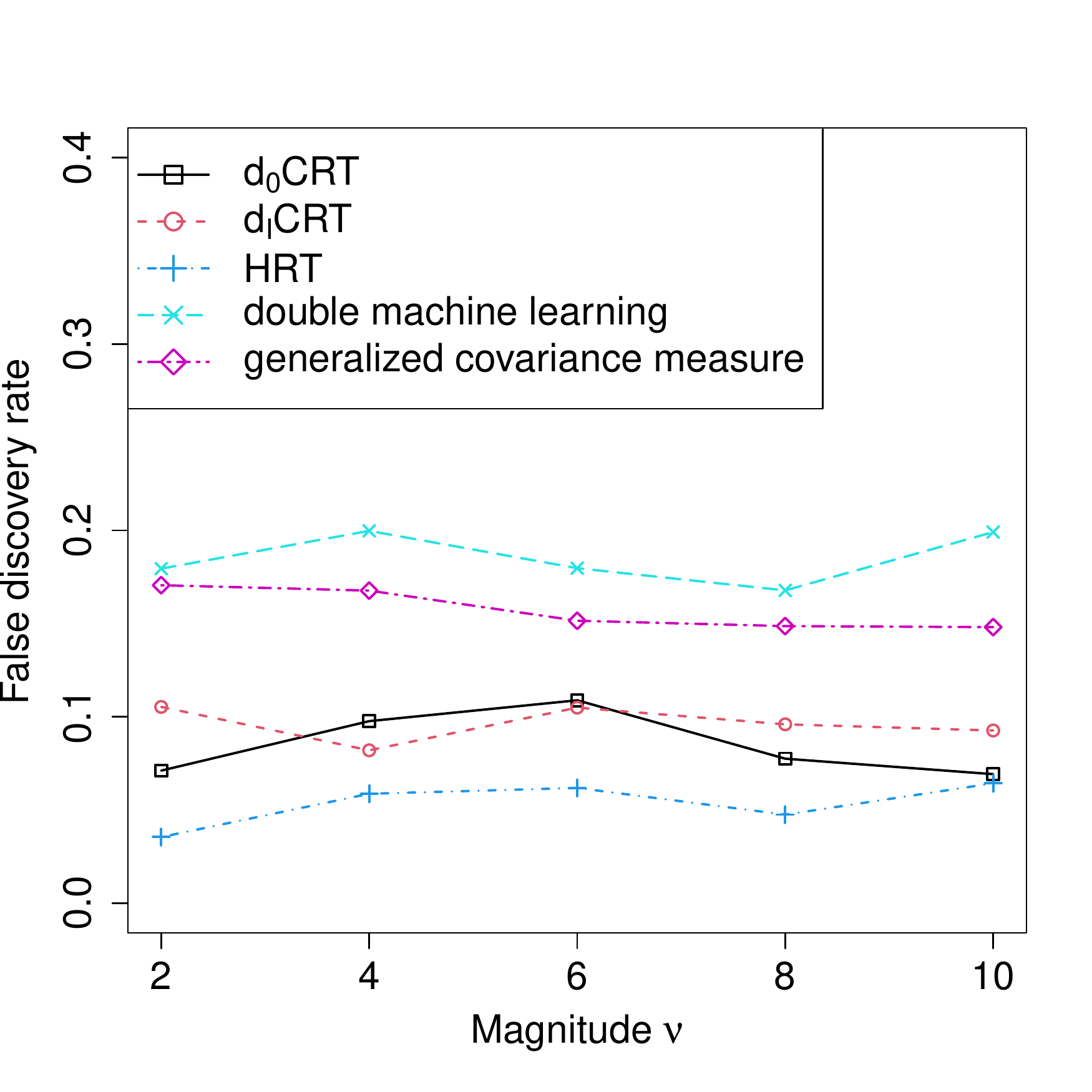}
\caption{\label{fig:dml} False discovery rates of the simulation in Appendix~\ref{sim:dml} comparing computationally efficient CRT methods to double machine learning and generalized covariance measure. All standard errors are below $0.01$. double machine learning and generalized covariance measure do not control the false discovery rate at the target level 0.1, but the other methods do.}
\end{figure}

\subsection{Power improvement of the d$_{\mathrm{I}}$CRT in the presence of interactions}\label{sec:sim:dk}

All previous simulations have shown similar, if slightly worse, performance for the d$_{\mathrm{I}}$CRT compared to the d$_0$CRT. This is because the models have all been additive (technically a logistic regression model is not additive, but the logistic-regression-derived statistics used by both dCRT methods fit to the logistic-transformed $Y$, which does follow an additive model). To demonstrate the benefits of the d$_{\mathrm{I}}$CRT to characterize more complex effects, we conduct here a non-additive simulation with first-order interactions that obey the hierarchy principle described in Section~\ref{sec:dIcrt}. We take $n=p=800$ and generate $(X,Z\trans)\trans$ from ${\rm AR}(1)$ with autocorrelation 0.5. Letting $\mu(X,Z)=\nu(X+\sum_{k=1}^5Z_{j_k}+1.5X\sum_{k=1}^5Z_{j_k})$ with $j_1,\ldots,j_5$ randomly picked from $\{1,2,\ldots,799\}$, we generate $Y$ either from a Gaussian model with conditional mean given by $\mu(X,Z)$ or from a Bernoulli model with $\log(\mathbb{P}(Y=1\,|\,X,Z)/\mathbb{P}(Y=0\,|\,X,Z))=\mu(X,Z)$. The target is to test the single hypothesis $Y\indp X\mid Z$ at level $0.05$ (hence knockoffs does not apply). Figure~\ref{fig:dglm} shows the powers of the d$_0$CRT, d$_{\mathrm{I}}$CRT, and HRT. As is expected, d$_{\mathrm{I}}$CRT has substantially higher power than d$_0$CRT.

\begin{figure}[htpb]
\centering
  \includegraphics[width=0.4\textwidth]{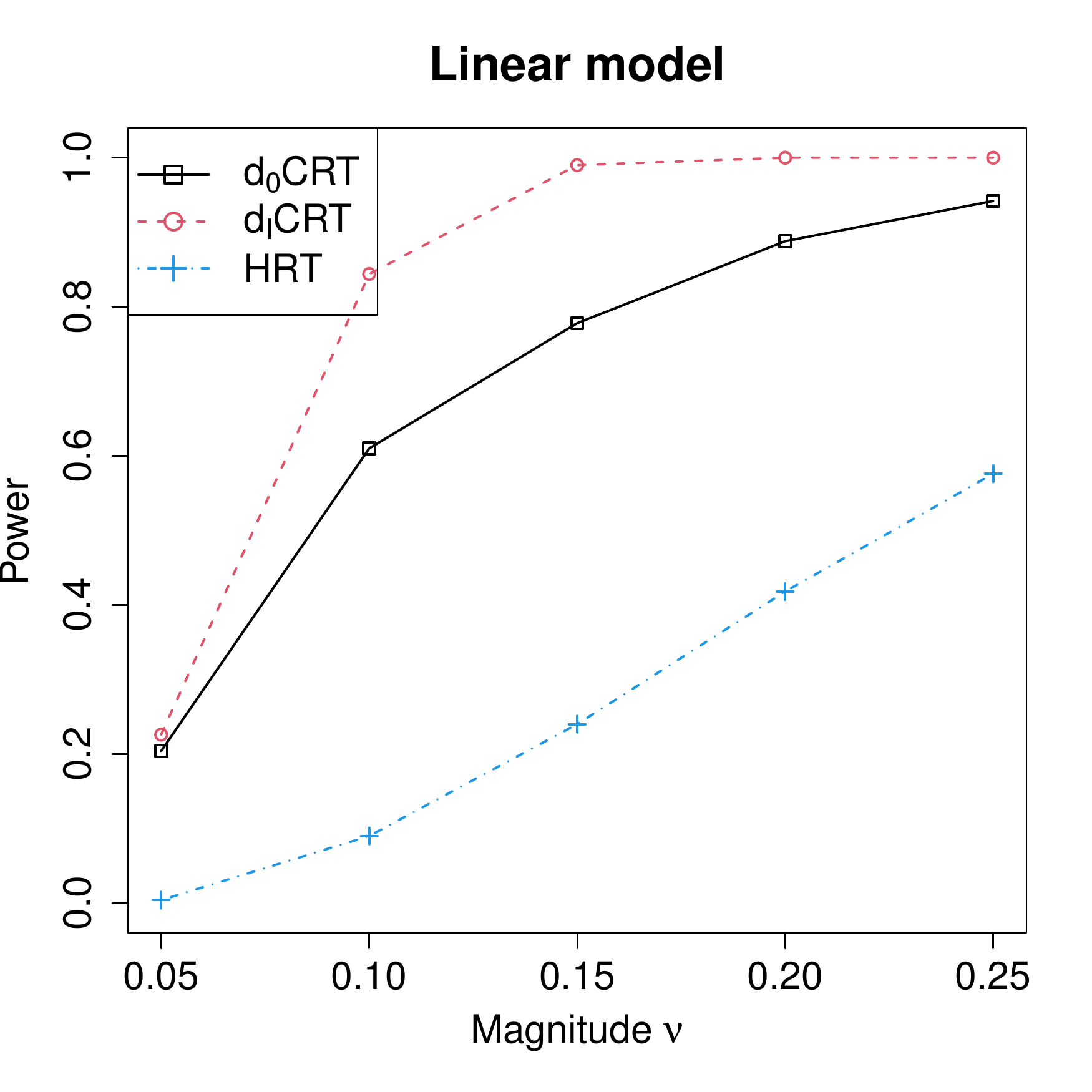}
  \includegraphics[width=0.4\textwidth]{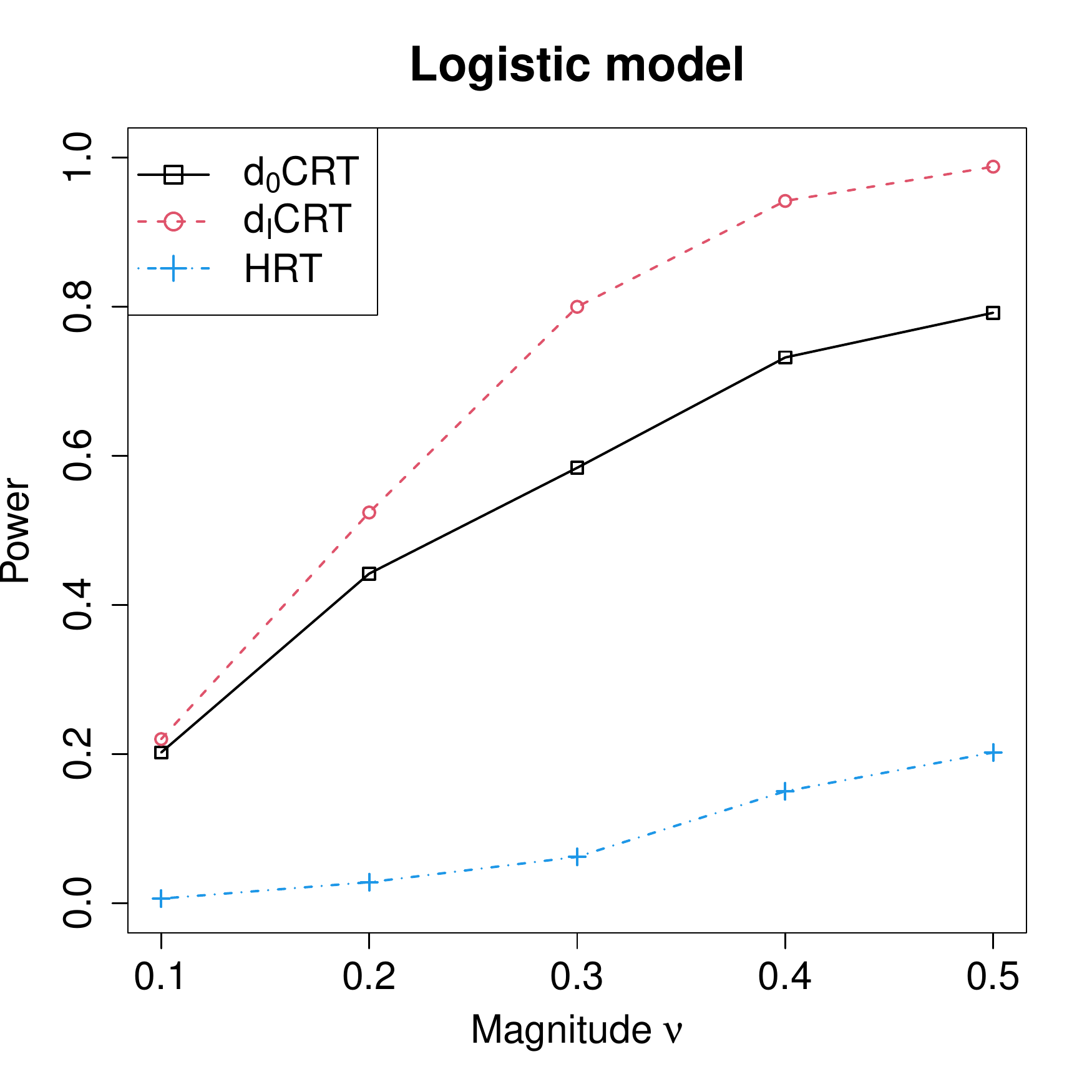}
\caption{\label{fig:dglm} Powers of the simulations in Appendix~\ref{sec:sim:dk} comparing methods in the presence of interactions; standard errors are below $0.03$. The dCRT is much more powerful than the HRT.}
\end{figure}

\subsection{Stability of the d$_{\mathrm{I}}$CRT to the choice of $k$}\label{app:dIcrt:k}
In this section, we study the sensitivity of d$_{\mathrm{I}}$CRT to the choice of $k$ defined in Example~\ref{ex:2}. We simulate the d$_\mathrm{I}$CRT in the baseline setting (linear model) of Section~\ref{sec:sim:hrt} and the linear interaction model setting of Section~\ref{sec:sim:dk} for varying choices of $k$ (in both settings, the default $k=2\log(p)\approx 13$). The results in Figure~\ref{fig:choose:k} show that the choice of $k$ has nearly no impact on the power of the d$_{\mathrm{I}}$CRT in the linear model setting.
In the interaction setting, the power of the d$_{\mathrm{I}}$CRT decreases with $k$ for $k>5$ since there are only $5$ true interactions in the model, but the trend is quite gradual and the d$_\mathrm{I}$CRT's power stays above that of d$_0$CRT through $k=22$.

\begin{figure}[htpb]
\centering
  \includegraphics[width=0.4\textwidth]{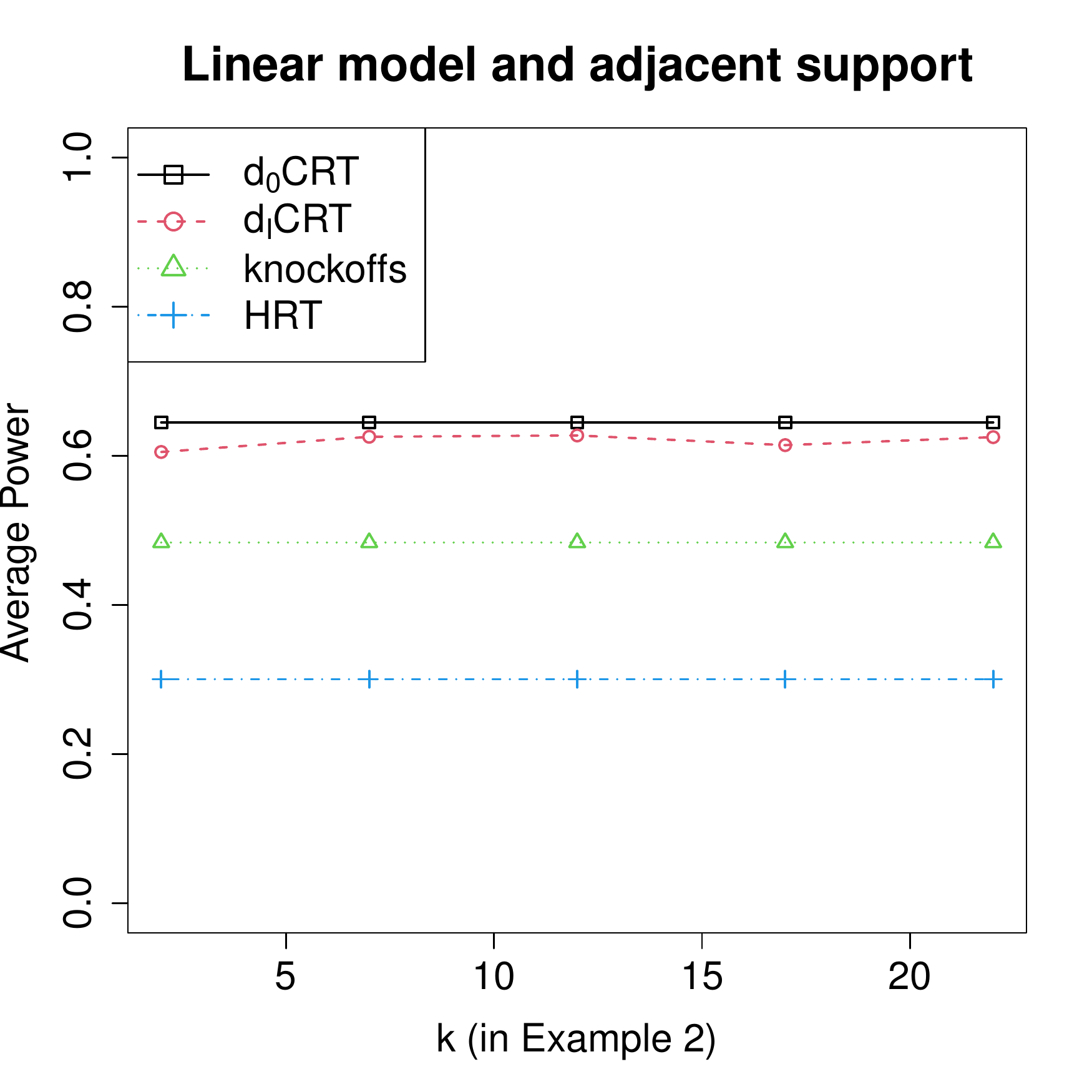}
  \includegraphics[width=0.4\textwidth]{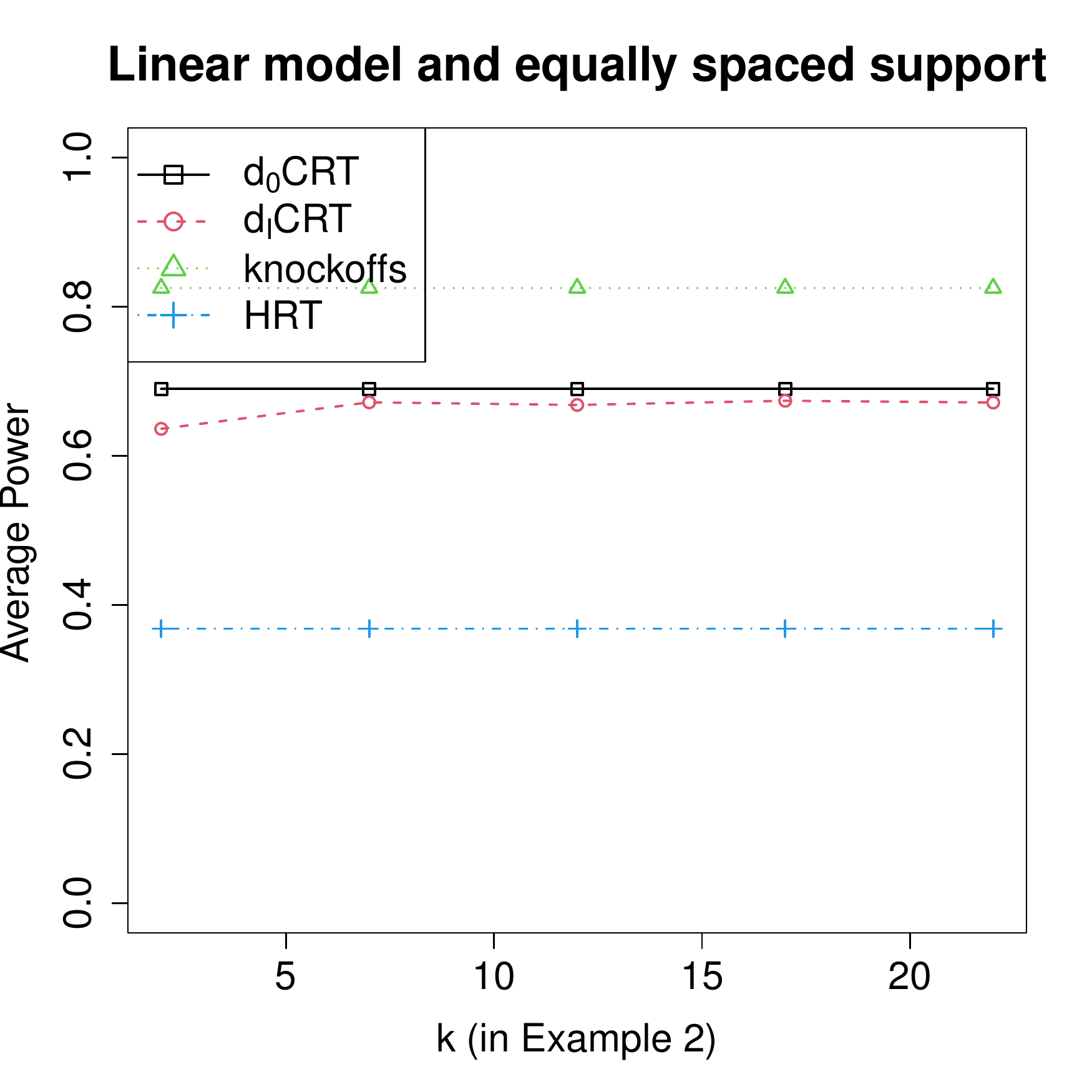}
  \includegraphics[width=0.4\textwidth]{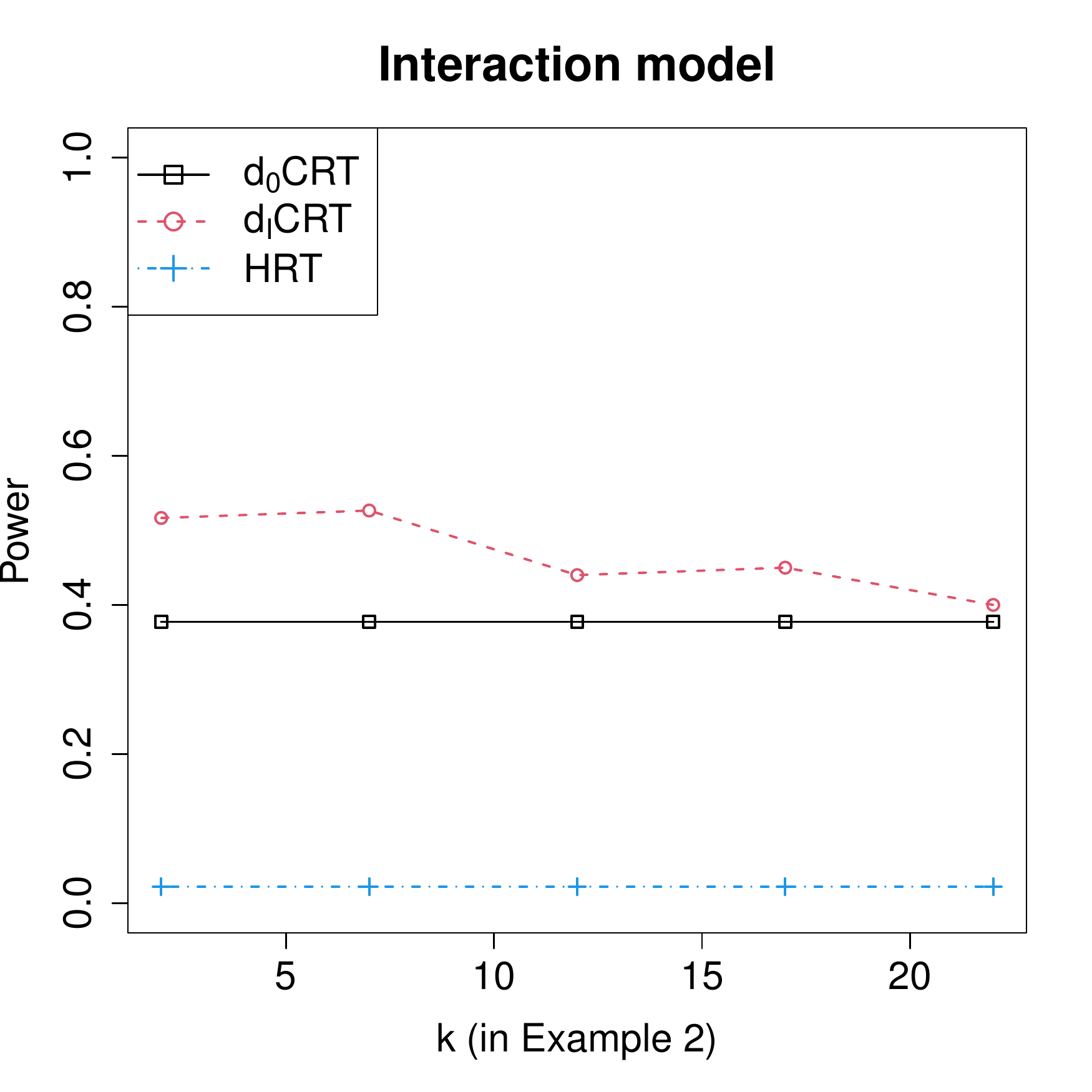}
\caption{\label{fig:choose:k} Powers of the simulation in Appendix~\ref{app:dIcrt:k} evaluating the stability of the d$_{\mathrm{I}}$CRT to the choice of $k$; all standard errors are below $0.03$. The d$_{\mathrm{I}}$CRT is fairly stable to the choice of $k$.}
\end{figure}

\subsection{A random-forest-based d$_\mathrm{I}$CRT}
\label{sec:rf}
Examples~\ref{ex:1} and~\ref{ex:2} are inherently rooted in generalized linear models, and we expect them to perform well in situations where a generalized linear model captures much of the interesting dependence between $Y$ and $X$. But there is nothing limiting the dCRT's application to such settings, and in this section we demonstrate the power of a random-forest-based d$_\mathrm{I}$CRT in a setting that is far from a generalized linear model. 
\begin{example}[Random-forest-based d$_\mathrm{I}$CRT]\label{ex:rf}
Let $\dyone$ be the fitted predictions from a random forest fitting $\by$ to $\bZ$, let $\dynone$ be the columns of $\bZ$ corresponding to the $k$ largest values of the default variable importance measure in the \textsf{R} package \textsf{randomForest}, and let $T(\by,\bx,\dy,\dx)$ fit a random forest of $\by$ on $\bx-\dx$ and $\dy$ and return the default variable importance measure for $\bx-\dx$.
\end{example}

We take $n=p=800$ and $(X,Z\trans)\trans$ as following an AR(1) model with autocorrelation 0.5. We choose a conditional model for $Y$ in which the magnitude of the effect of $X$ on $Y$ is heterogeneous and varies with $Z$: $\mu(X,Z)=\nu[0.5X^2+{\rm sin}(0.5\pi X)](0.3+\sum_{k=1}^5Z_{j_k})$, and $Y$ is standard normal noise added to $\mu(X,Z)$. We simulate tests of $Y\indp X \mid Z$ at significance 0.05 and plot the results in Figure~\ref{fig:rf}. The random-forest-based d$_\mathrm{I}$CRT is denoted by ``d$_\mathrm{I}$CRT (forest)" and uses 100 trees for distillation and 30 trees for computation of $T$. The additional function-approximation flexibility of random forests imparts a substantial gain in power compared to d$_0$CRT, d$_\mathrm{I}$CRT, HRT, which are all implemented based on generalized linear models.

\begin{figure}[htpb]
\centering
  \includegraphics[width=0.4\textwidth]{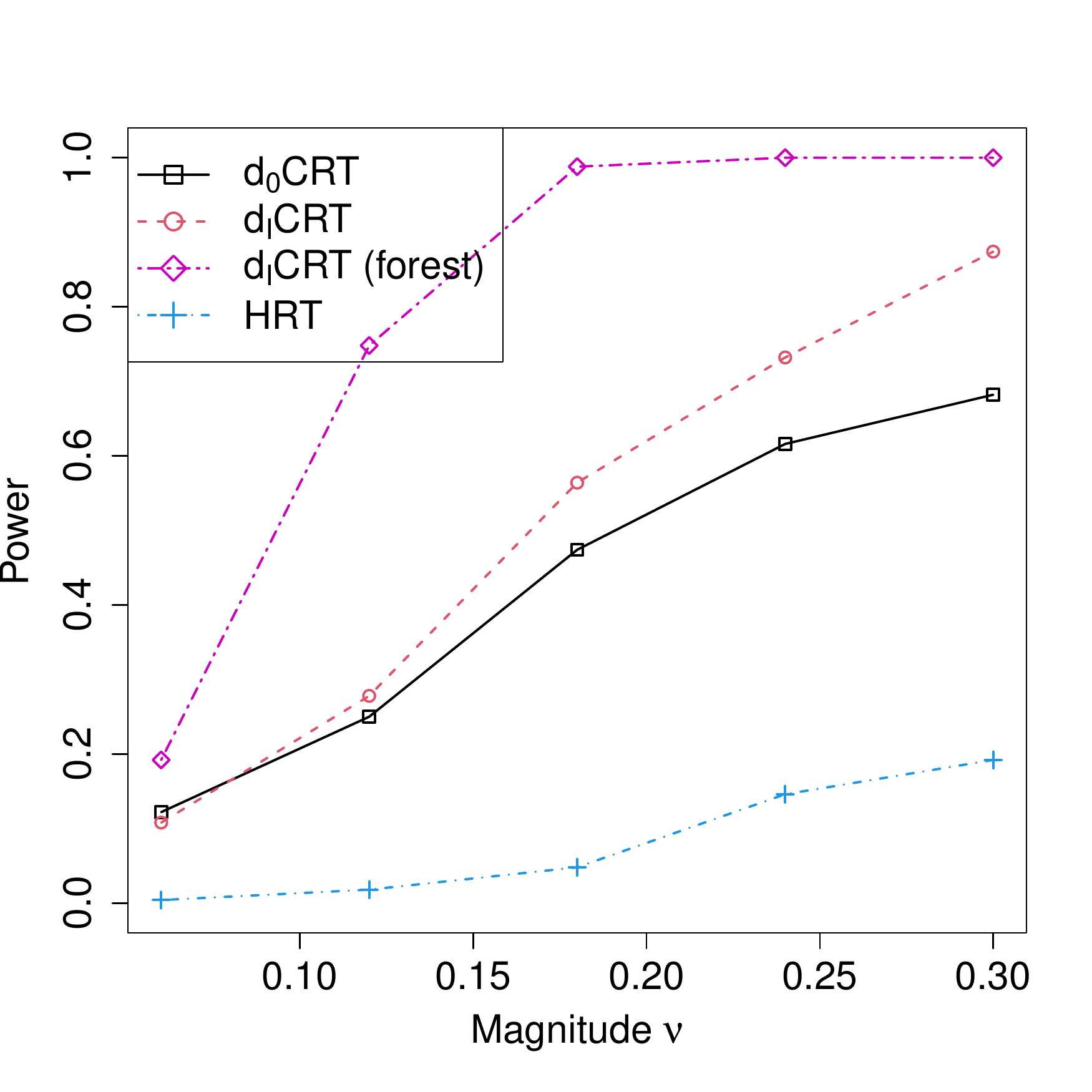}
\caption{\label{fig:rf} Powers of the simulations in Appendix~\ref{sec:rf} demonstrating a random-forest-based d$_\mathrm{I}$CRT on a complex ground truth model; all standard errors are below $0.03$. The flexibility of the random forest variant makes it much more powerful than the generalized linear model-based d$_0$CRT and d$_{\text{I}}$CRT.}

\end{figure}

\subsection{Robustness: Known first and second moments}\label{app:sim:robustfirstsecond}
We designed numerical experiments to study the robustness of the dCRTs, i.e., whether the methods still control Type-I error and have power when the $X\mid Z$ distribution is misspecified. We first consider the case when one has no knowledge of the conditional distribution of $X\mid Z$ except its first two moments, and simply treats $X\mid Z$ as conditionally Gaussian with matching moments. We let $n=p=800$ and generate $Z=(Z_1,Z_2,\ldots,Z_{799})\trans$ from a Gaussian ${\rm AR}(1)$ model with autocorrelation 0.5 and sample $X$ as conditionally Poisson:
\[
X=0.15\sum_{j=1}^{50}\varphi_j Z_j+\delta,\quad\mbox{where}\quad\delta=\frac{O-r}{\sqrt{r}}\quad\mbox{with}\quad O\mid Z\sim {\rm Poi}(r),
\]
where each $\varphi_j$ is independently and uniformly drawn from $\{-1,1\}$ and ${\rm Poi}(r)$ represents the Poisson distribution with mean $r$. When $r$ is small, $(O-r)/\sqrt{r}$ is quite skewed with its tail behaviour highly different from Gaussian while as $r$ becomes larger, $(O-r)/\sqrt{r}$ converges to a $\mathcal{N}(0,1)$. We run the dCRT as if $\delta\sim\N(0,1)$, and hence $r$ measures the degree of misspecification (lower $r$ corresponds to more misspecification). For $Y$, we use linear or logistic model linked with $\nu X+0.15\sum_{j=1}^{50}\psi_jZ_j$.

Our target is to test for $Y\indp X\mid Z$ with level $0.05$. To study the performance in Type-I error control we set $\nu=0$, while to study the power we let $\nu=0.1$ for linear model and $\nu=0.2$ for logistic model. We compare d$_0$CRT, d$_{\mathrm{I}}$CRT, and HRT, with the same specification as the previous section except that $X-\mathbb{E}[X|Z]$ is approximated as $\mathcal{N}(0,1)$ when modelling $X$. The resulting Type-I error and power versus $\log_2(r)$ are plotted in Figure~\ref{fig:robust:sec}. Even when $r$ is as small as $0.5$, the Type-I error of the dCRTs remain below their nominal level and their powers are relatively similar to the nearly-well-specified setting of $r=64$.

\begin{figure}[htpb]
\centering
  \includegraphics[width=0.4\textwidth]{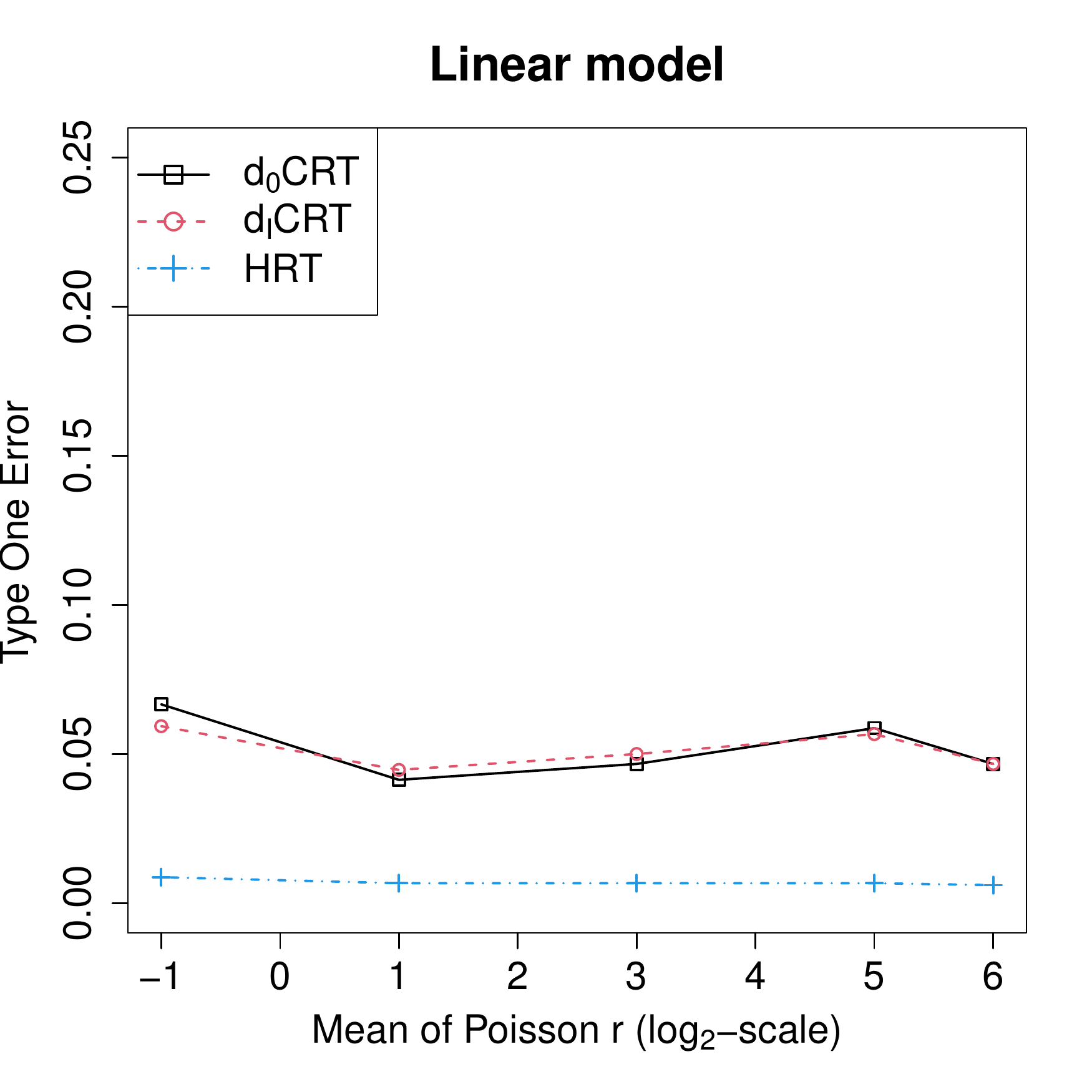}
  \includegraphics[width=0.4\textwidth]{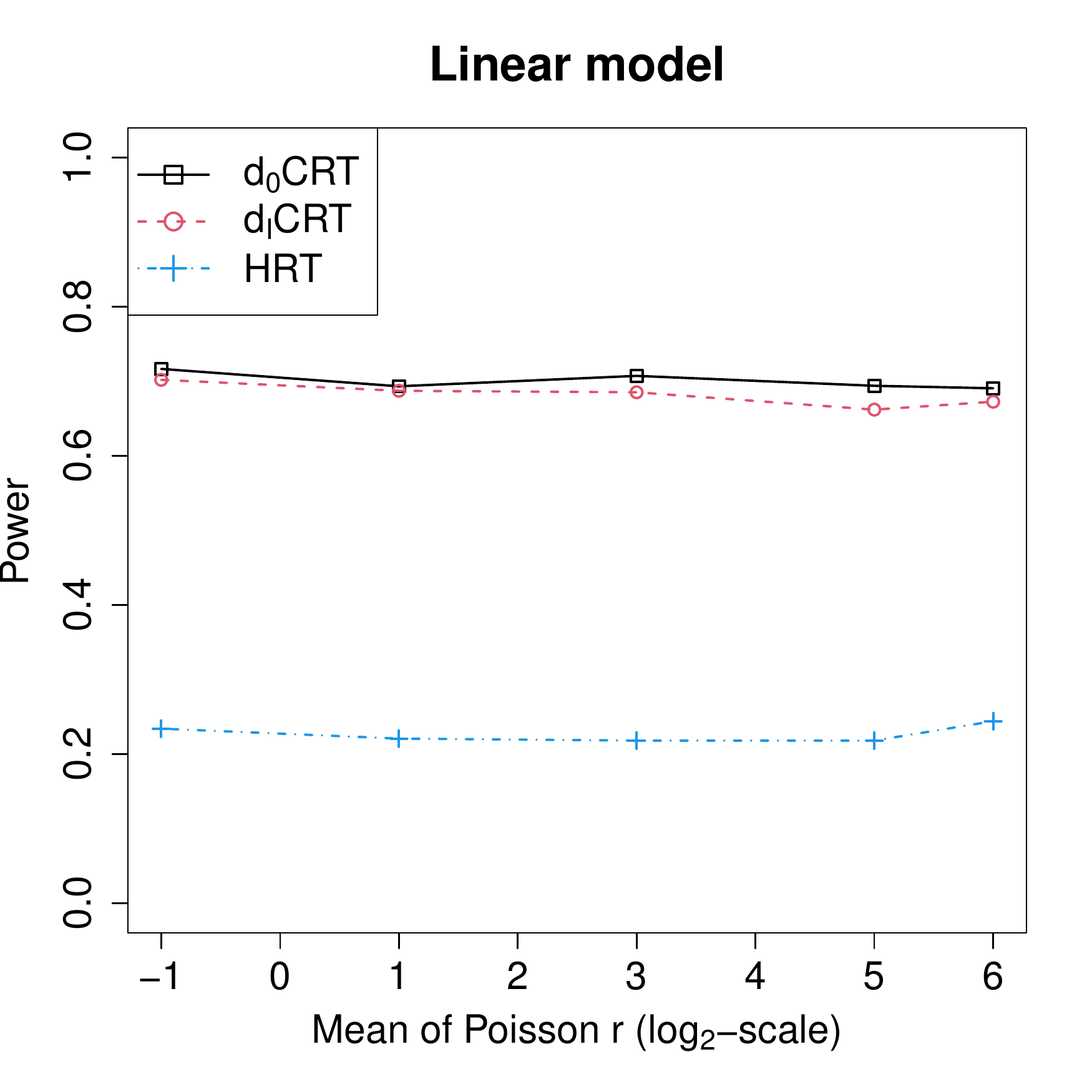}

\includegraphics[width=0.4\textwidth]{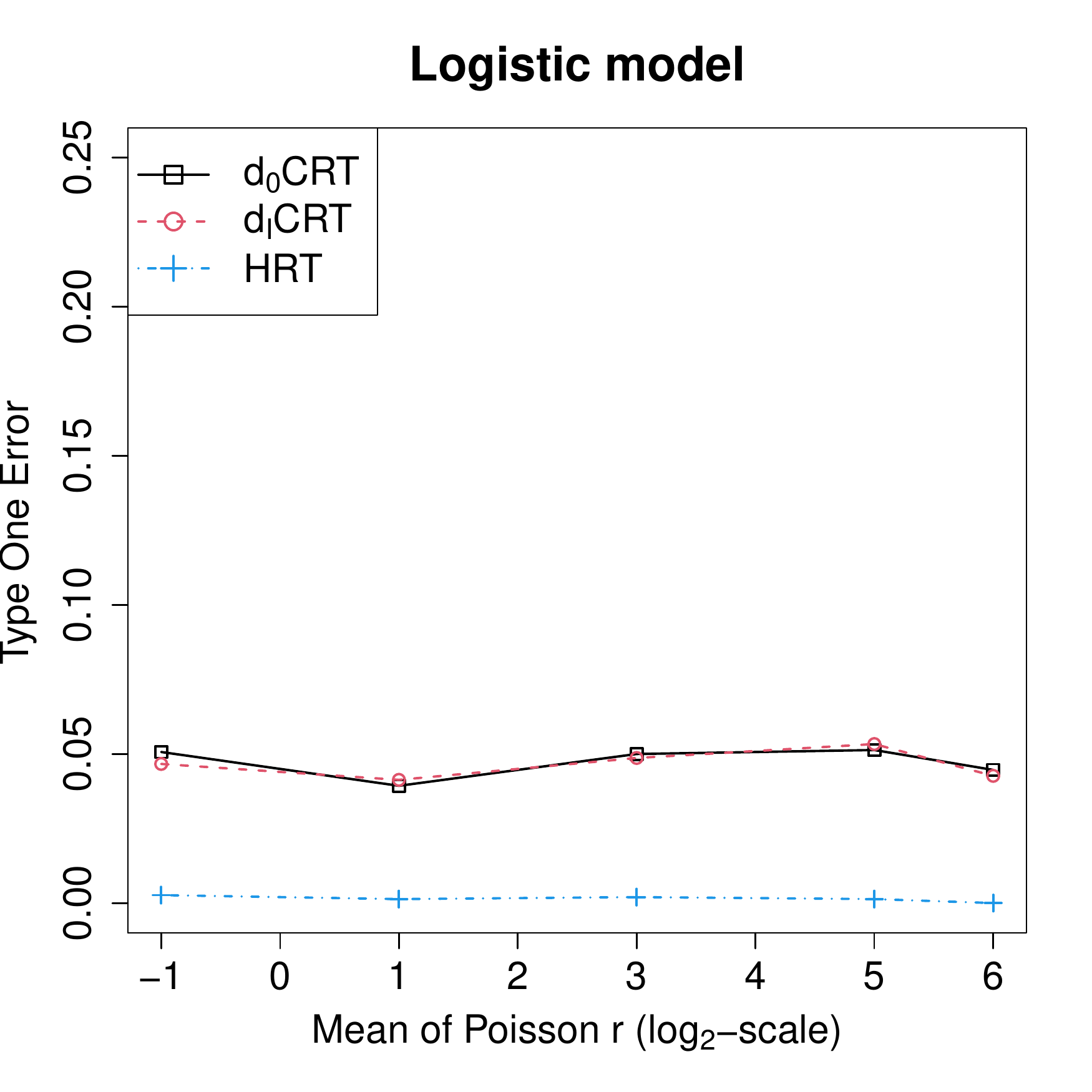}
\includegraphics[width=0.4\textwidth]{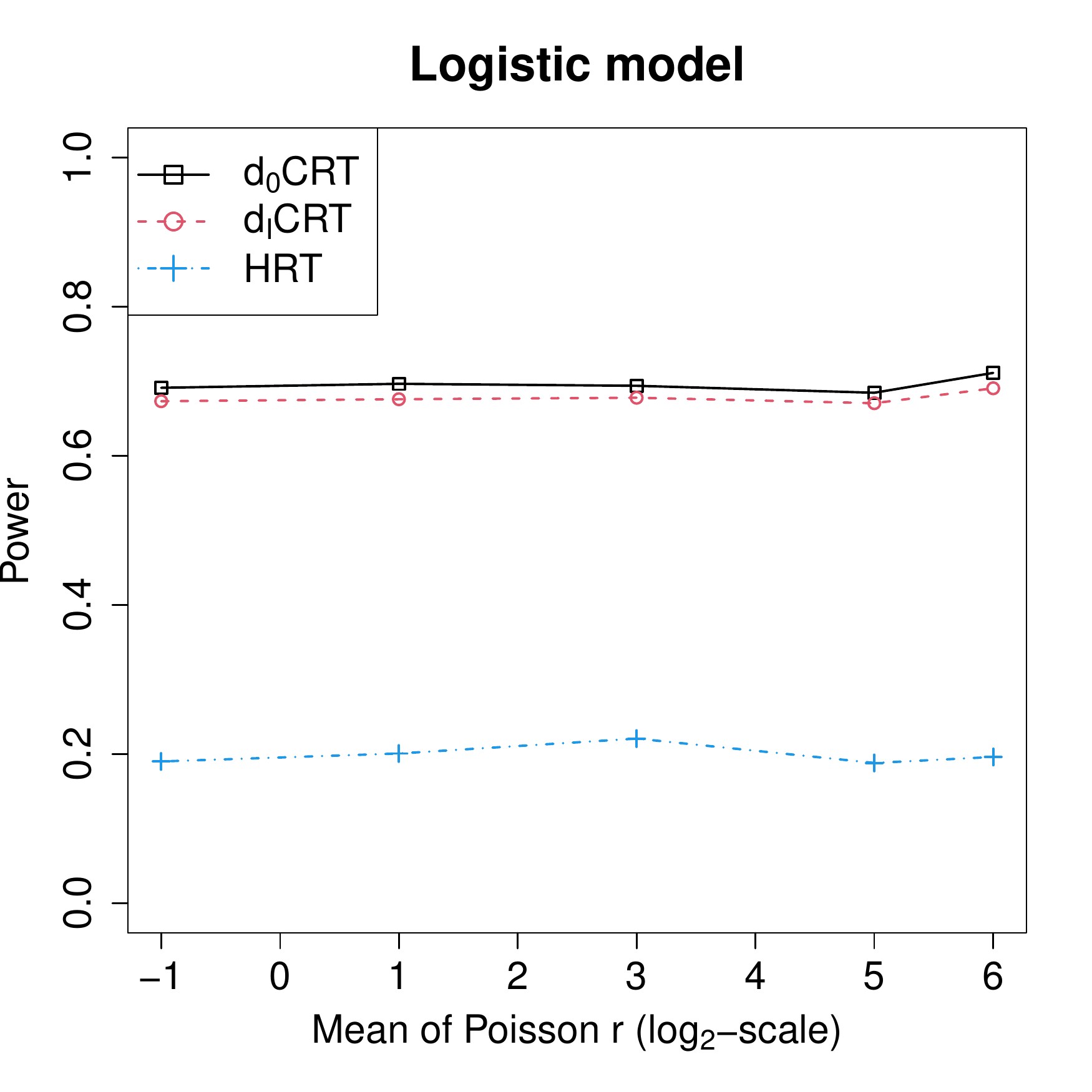}
\caption{\label{fig:robust:sec} Type-I error rates and powers of the simulations in Appendix~\ref{app:sim:robustfirstsecond} measuring robustness to misspecification in terms of the parameter $r$. All standard errors are below $0.03$. All methods control Type-I error, and the dCRTs are more powerful than HRT.}
\end{figure}

\subsection{In-sample-estimated moments}\label{app:sim:robustestmom}
Next, we study the case when one knows a model family for $X\mid Z$ but needs to estimate its parameters in-sample. Again, we set $n=800$, $p=800$, $s=50$, generate the covariates from a Gaussian ${\rm AR}(1)$ distribution with autocorrelation 0.5 and generate $Y$ from a linear model with magnitude $\nu=0.175$ or logistic model with magnitude $\nu=0.5$, which again makes the power roughly $0.5$. {Again, we study adjacent and equally spaced supported signals separately.} Then, as part of our dCRT procedures, we use the $n=800$ samples to estimate the {conditional distribution parameters} of the covariates. {For this purpose, we consider three commonly used approaches as the options:
\begin{enumerate}[(i)]
\item {\bf The Ledoit--Wolf estimator:} We follow \cite{ledoit2004well} to obtain an estimate of the covariance matrix of the covariates which is the optimal (in terms of mean square error) linear shrinkage of the sample empirical covariance to the identity matrix. Letting the resulting estimator be $\widehat\Sigma$, we actually use a rescaled version given by $D\widehat\Sigma D$, where $D={\rm diag}\{d_1,d_2,\cdots,d_p\}$ is a $p\times p$ diagonal matrix and $d_j$ is the ratio of the estimated conditional variance by inverting $\widehat\Sigma$ and that estimated using the mean squared residuals. Here, the multiplier $D$ serves to de-bias the estimated conditional variances of the covariates, and this was important since this determined the conditional variance of the resampled covariates.

\item {\bf Graphical lasso:} We implement the graphical lasso \citep{friedman2008sparse} tuned by cross-validation to estimate the precision matrix and invert it to estimate the covariance matrix of the covariates. Again, we rescale that estimate in the same way as in (i) using $D$.

\item {\bf Nodewise lasso:}  For each covariate $X$, we fit the lasso tuned by cross-validation to model its conditional mean given $Z$, and use the resulting mean squared regression residuals to estimate the conditional variance of $X\mid Z$. Knockoffs is not included in this case since this nodewide lasso, applied to each covariate in turn, does not in general provide a coherent covariance matrix, and a covariance matrix is required to generate Gaussian knockoffs.

\end{enumerate}
}


The goal of the simulation is controlled variable selection with false discovery rate level $0.1$. {The resulting false discovery rate and average power under different models and the above estimation strategies of $X\mid Z$ are presented in Figure~\ref{fig:robust:X:fdr} and Figure~\ref{fig:robust:X:power}, respectively. When the conditional distribution parameters are estimated by graphical lasso or nodewise lasso, false discovery rates of all the methods are well controlled by the nominal level under all the model and signal space settings. Also, their powers are close to the ideal model-X case presented in the right panel of Figure~\ref{fig:diff:design}. With Ledoit--Wolf estimation, the false discovery rates of the two dCRT methods increase only very slightly above nominal. 
}

\begin{figure}[htpb!]
\centering
  \includegraphics[width=0.4\textwidth]{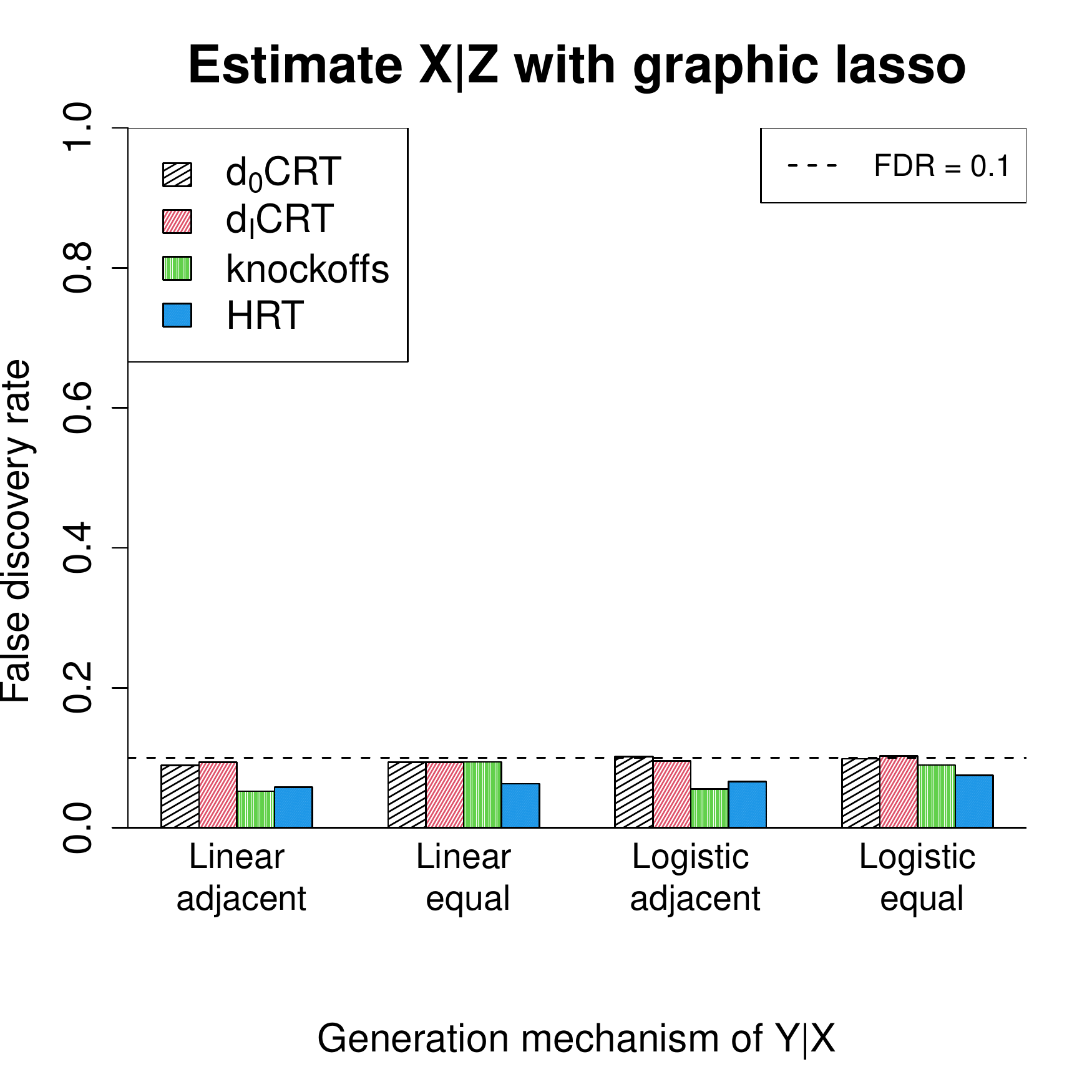}
  \includegraphics[width=0.4\textwidth]{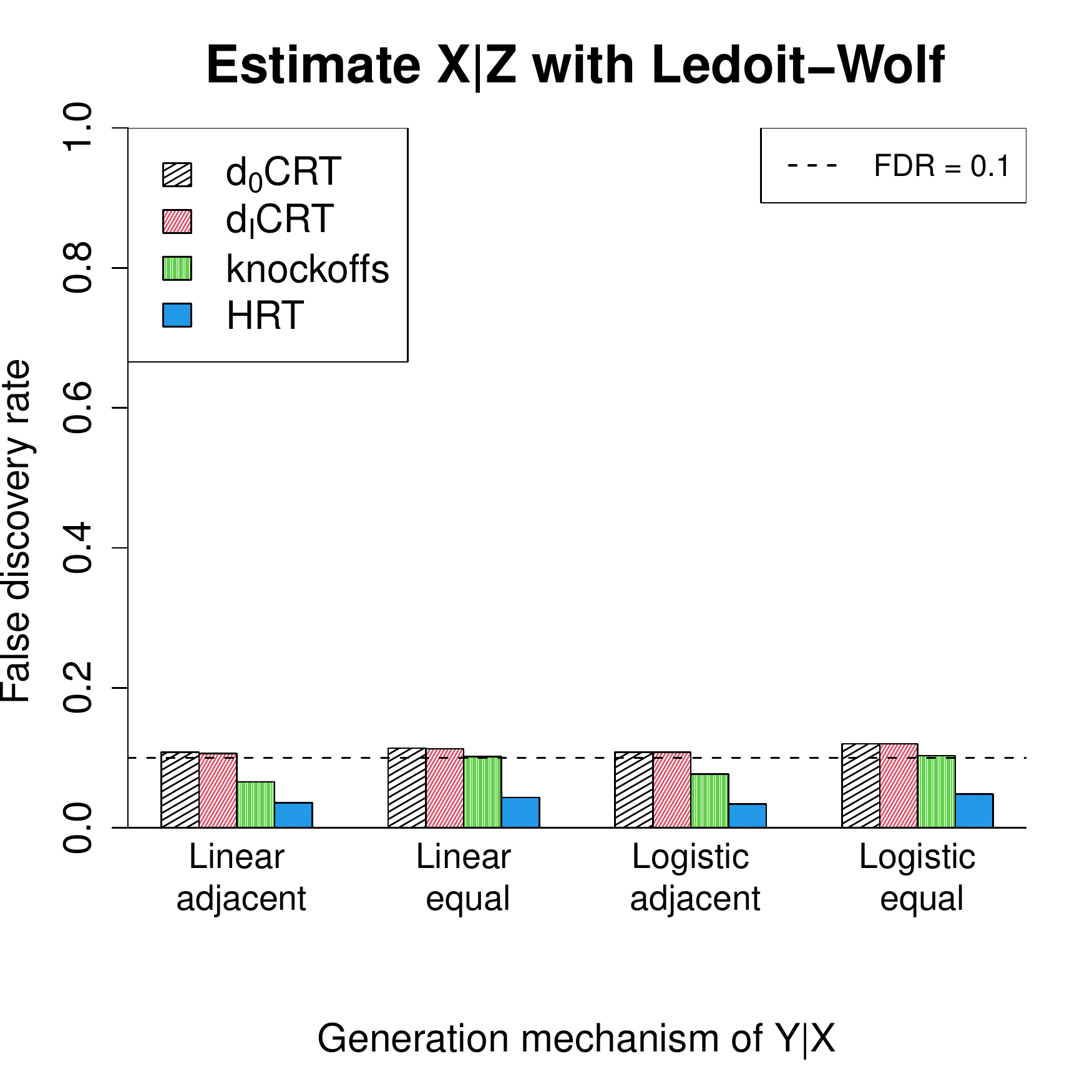}
\includegraphics[width=0.4\textwidth]{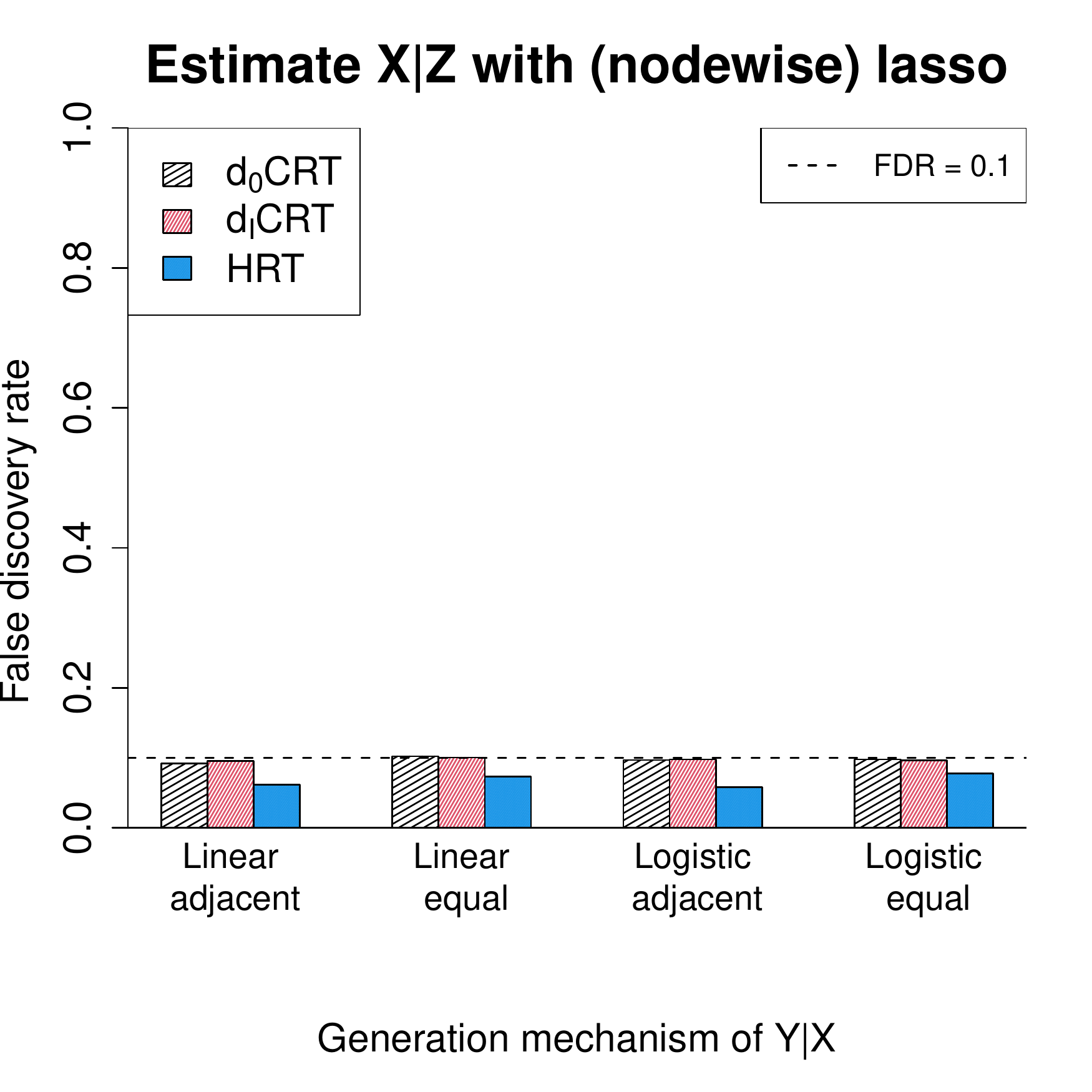}
\caption{{\label{fig:robust:X:fdr} False discovery rates of the simulations in Appendix~\ref{app:sim:robustestmom} measuring robustness to in-sample estimation of the covariate covariance matrix obtained via three common approaches, with the model for $Y$ varying. All standard errors are below $0.01$. The false discovery rates of the dCRT remain close to the nominal level 0.1 in all settings.}}
\end{figure}

\begin{figure}[htpb!]
\centering
  \includegraphics[width=0.4\textwidth]{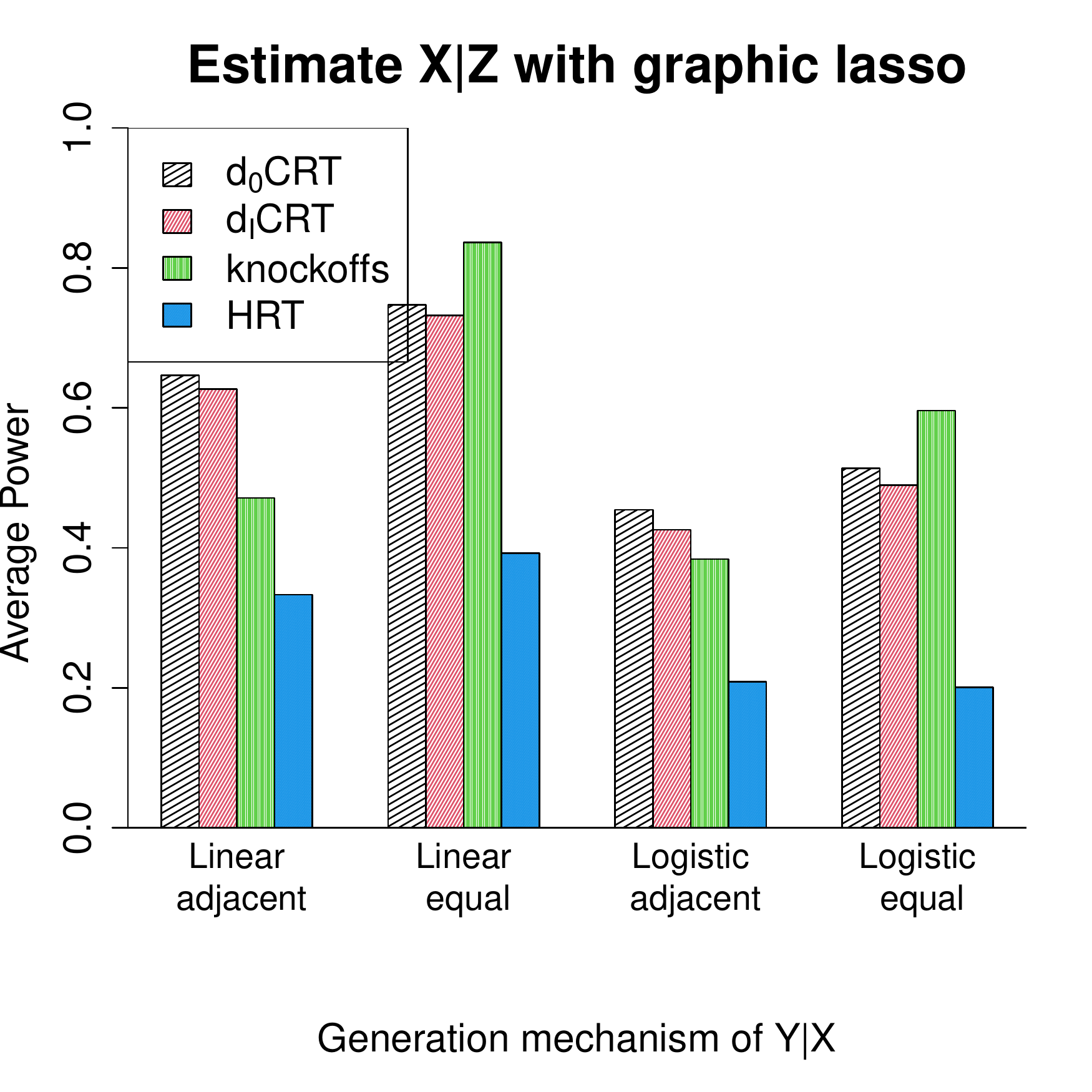}
  \includegraphics[width=0.4\textwidth]{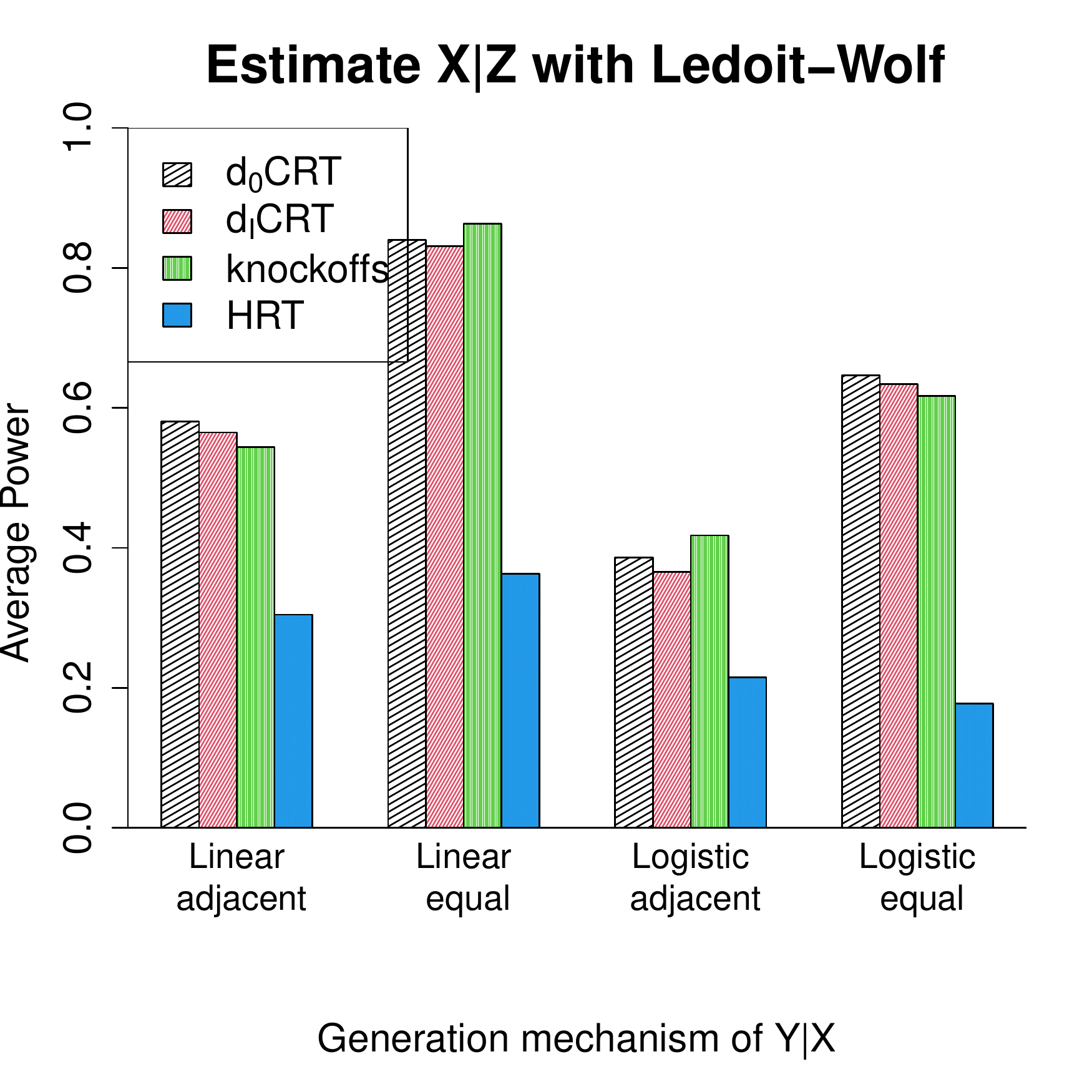}
\includegraphics[width=0.4\textwidth]{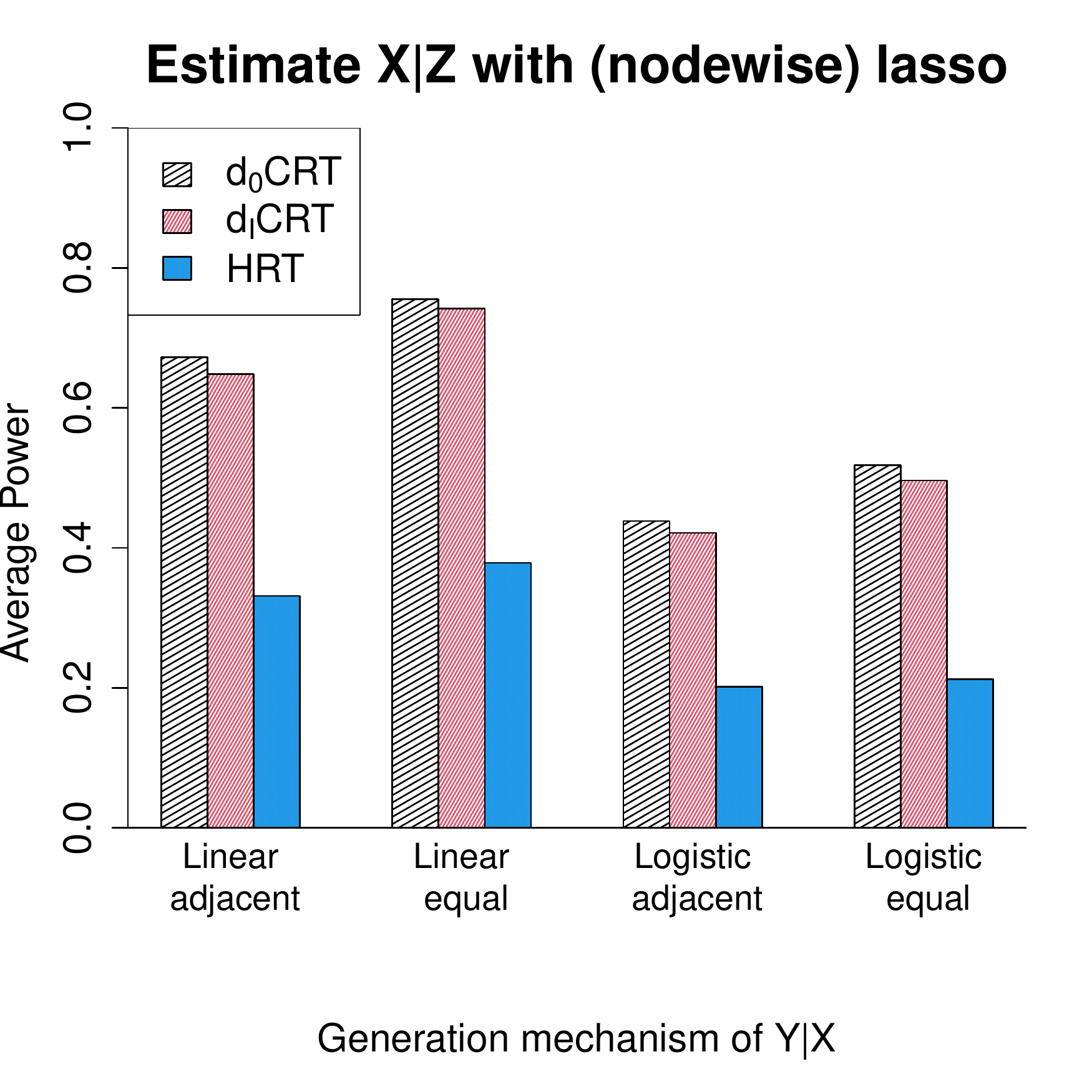}
\caption{{\label{fig:robust:X:power} Average power of the simulations in Appendix~\ref{app:sim:robustestmom} measuring robustness to in-sample estimation of the covariate covariance matrix obtained via three common approaches, with the model for $Y$ varying. All standard errors are below $0.01$.}}
\end{figure}


\subsection{Measuring the effect of the resampling-free modification: Gaussian covariates}\label{sim:rfgauss}
Section~\ref{sec:main:mcf} proposes a resampling-free version of d$_0$CRT and d$_{\mathrm{I}}$CRT requiring a small modification to their test statistics; we show here this modification does not affect their powers. Under the baseline setting of Section~\ref{sec:sim:hrt} and the setting with Gaussian covariates and interactions in Section~\ref{sec:sim:dk}, we compare the resampling-free dCRTs with their non-resampling-free versions in terms of average power. Figure~\ref{fig:resample:free:gauss} shows the resampling-free modification makes essentially no difference to their powers.

\begin{figure}[htpb!]
\centering
 \includegraphics[width=0.4\textwidth]{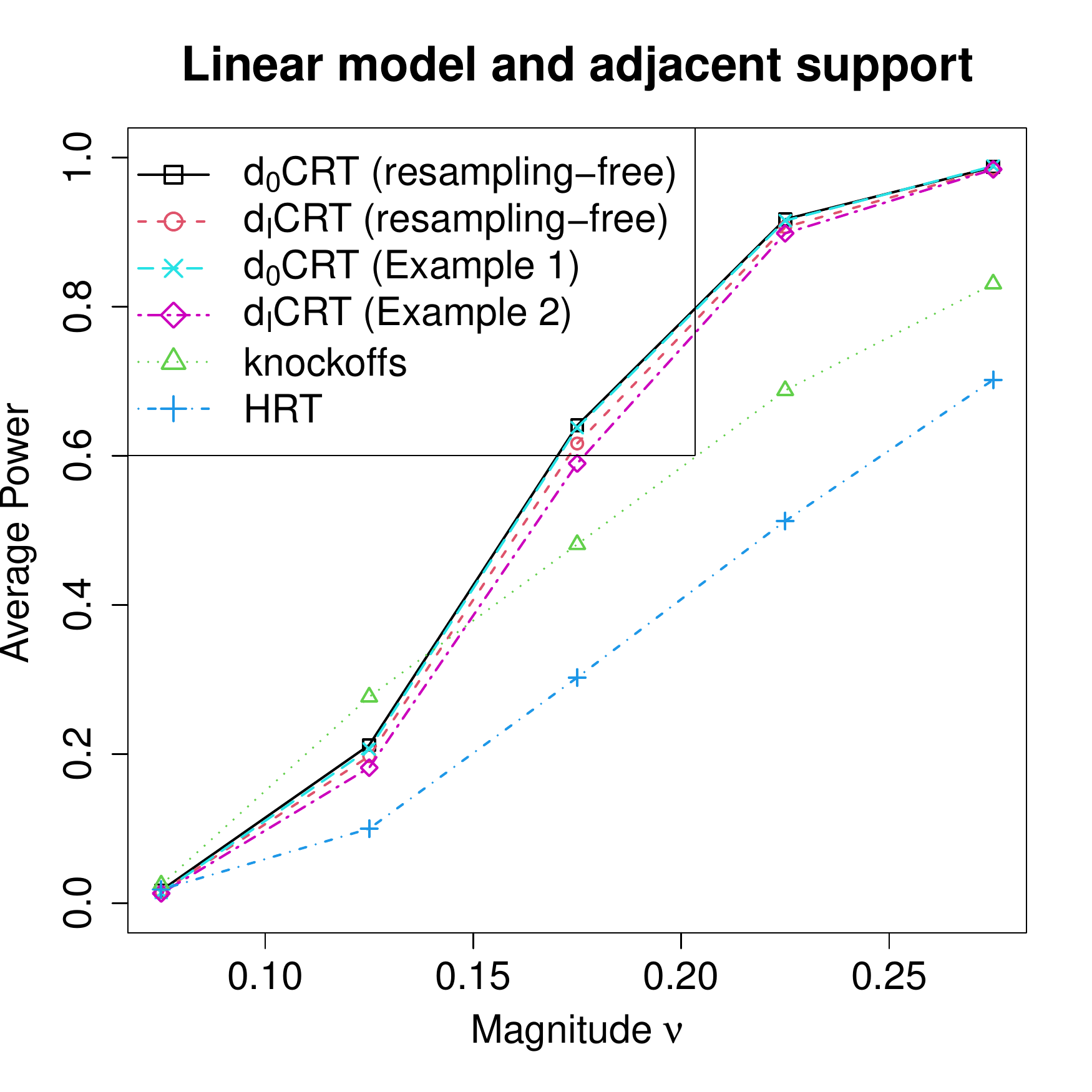}
\includegraphics[width=0.4\textwidth]{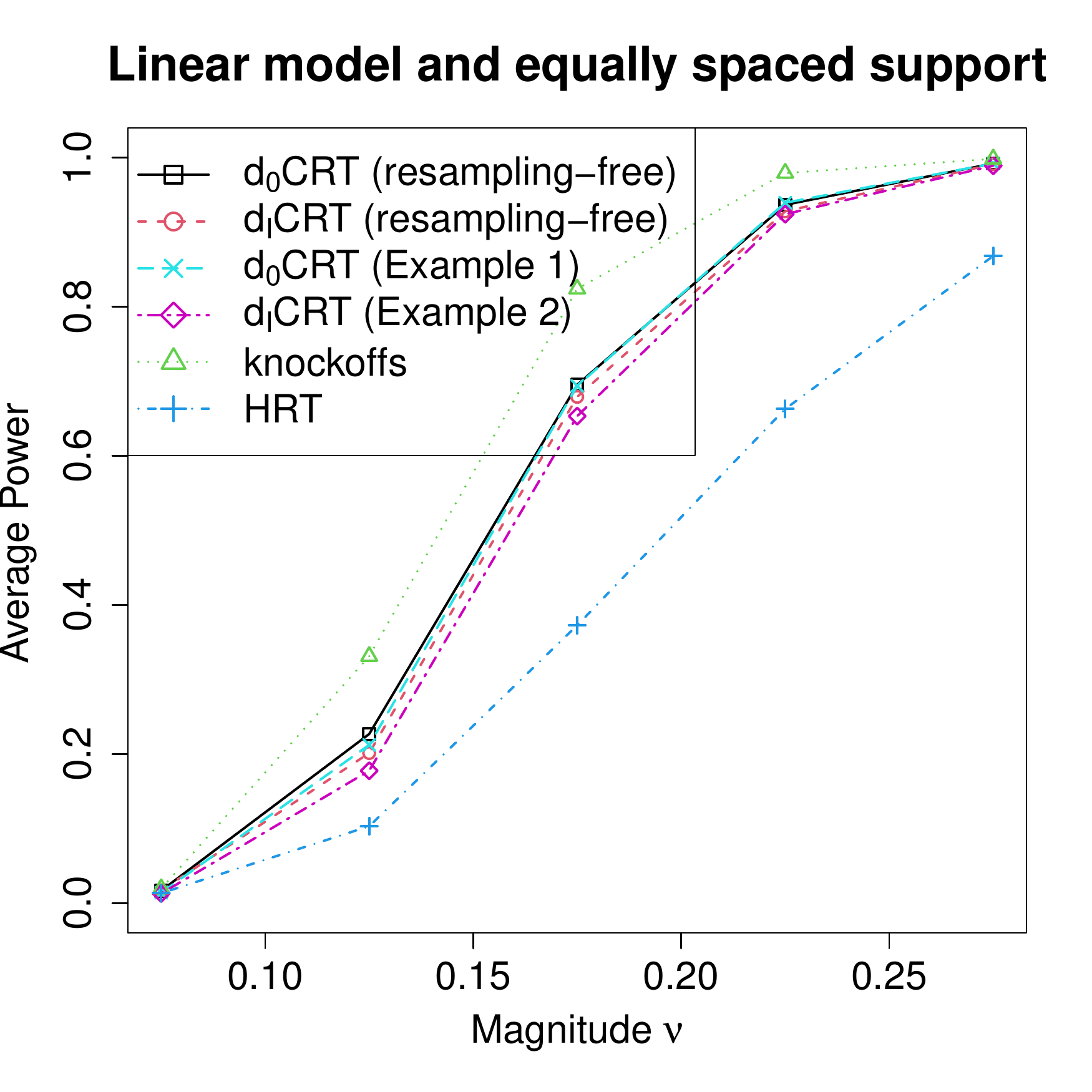}
 \includegraphics[width=0.4\textwidth]{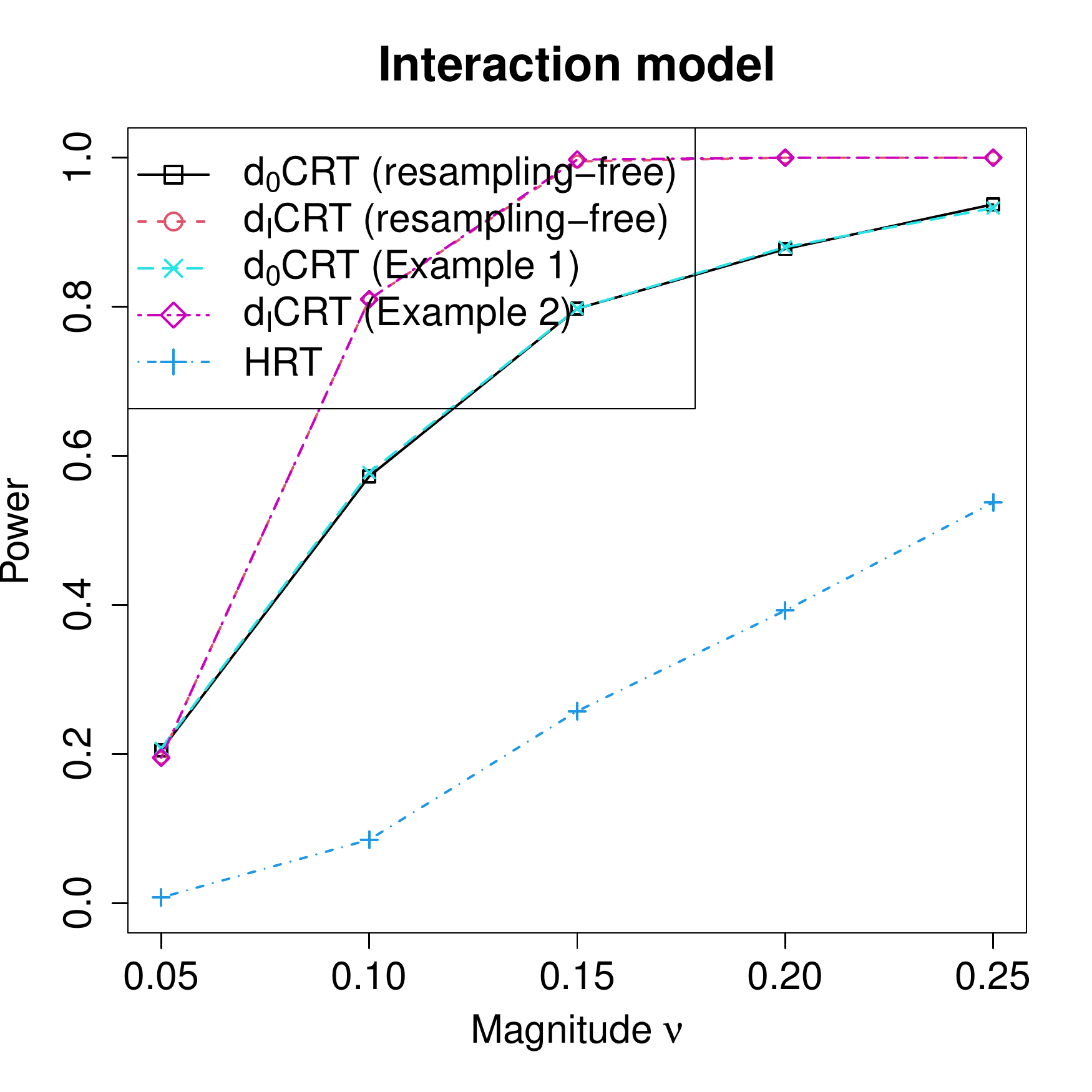}
\caption{\label{fig:resample:free:gauss} Powers of the simulation in Appendix~\ref{sim:rfgauss} measuring the effect of the resampling-free modification to the d$_0$CRT and d$_\mathrm{I}$CRT test statistics; all standard errors are below $0.03$. The resampling-free modifications of the dCRT have essentially the same power as the resampled versions.}
\end{figure}

We also include computation times for the baseline setting of Section~\ref{sec:sim:hrt} in Table~\ref{tab:com:resample}, showing that the resampling-free versions of the dCRTs confer a substantial computational savings.
\begin{table}[htb!]
\centering
\begin{tabular}{p{2.98cm}|p{2.1cm}|p{2.98cm}|p{2.1cm}|p{1.4cm}|p{1cm}}
\multicolumn{6}{c}{{\bf Average computation times (minutes)}}\\
\hline
      d$_0$CRT (resampling-free) &  d$_0$CRT\newline (Example 1) & d$_{\mathrm{I}}$CRT (resampling-free) &  d$_{\mathrm{I}}$CRT\newline (Example 2) & knockoff  & HRT  \\ \hline
     $9.8$  & $26.1$ & $10.2$ & $120.7$ & $1.8$ & $6.2$ \\ \hline
\end{tabular}
\caption{\label{tab:com:resample} Average computation times of the linear model simulations of Appendix~\ref{sim:rfgauss}.
The resampling-free modifications of the dCRT lead to considerable runtime savings.
}
\end{table}

\subsection{Measuring the effect of the resampling-free modification: Non-Gaussian covariates}\label{sim:rfnongauss}
As we introduced in Section~\ref{sec:main:mcf} and detailed in Appendix~\ref{sec:app:mcf}, when $X\mid Z$ is non-Gaussian, it must be transformed to Gaussian in order to apply the resampling-free speedup; we examine here the effect this transformation has on power. We generate covariates i.i.d. from two different distributions: (i) Gamma with shape $3$ and rate $0.5$ and (ii) Bernoulli with mean $0.5$. We took $n=p=800$, $s=50$ and $Y$ generated from linear (in the untransformed covariates) model and performed multiple testing for variable selection at false discovery rate level 0.1. Our main goal is to compare the d$_0$CRT of Example~\ref{ex:1} and the d$_\mathrm{I}$CRT of Example~\ref{ex:2} with their respective resampling-free counterparts, though we also run the HRT and knockoffs. The resulting average powers versus signal strength $\nu$ are shown in Figure~\ref{fig:nmc}. For Gamma $X$, the Gaussian transformation comes with almost no loss in power while for Bernoulli $X$, the resampling-free dCRTs lose substantial power but still outperform the HRT. This is due to the highly non-Gaussian nature of a Bernoulli(0.5) distribution and the need for substantial exogenous randomness to be added to $X$ to make it Gaussian. Knockoffs performs competitively with the dCRT methods in both simulations, and we attribute this to the covariate independence which allows very high-quality knockoffs to be used.

\begin{figure}[htpb]
\centering
  \includegraphics[width=0.4\textwidth]{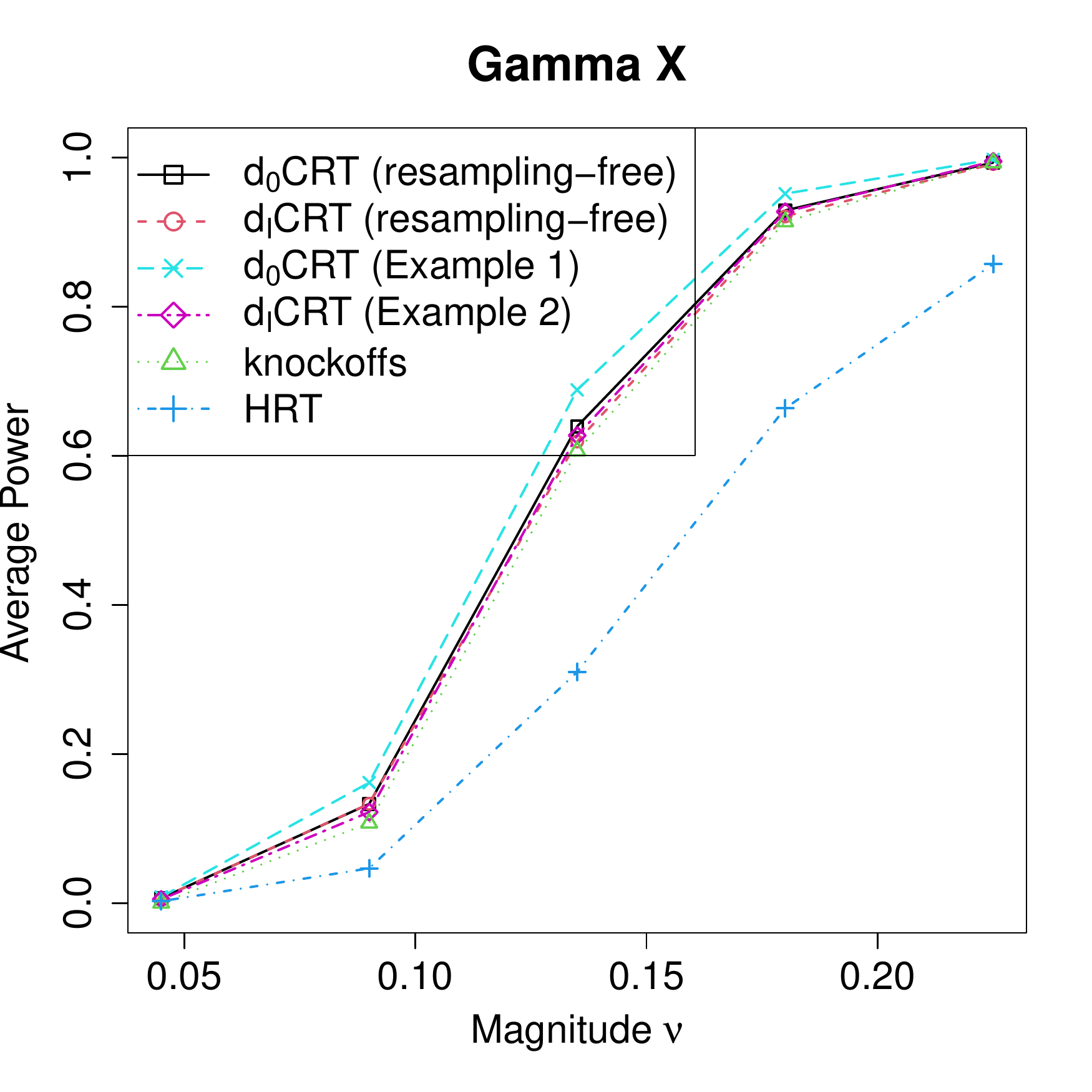}
  \includegraphics[width=0.4\textwidth]{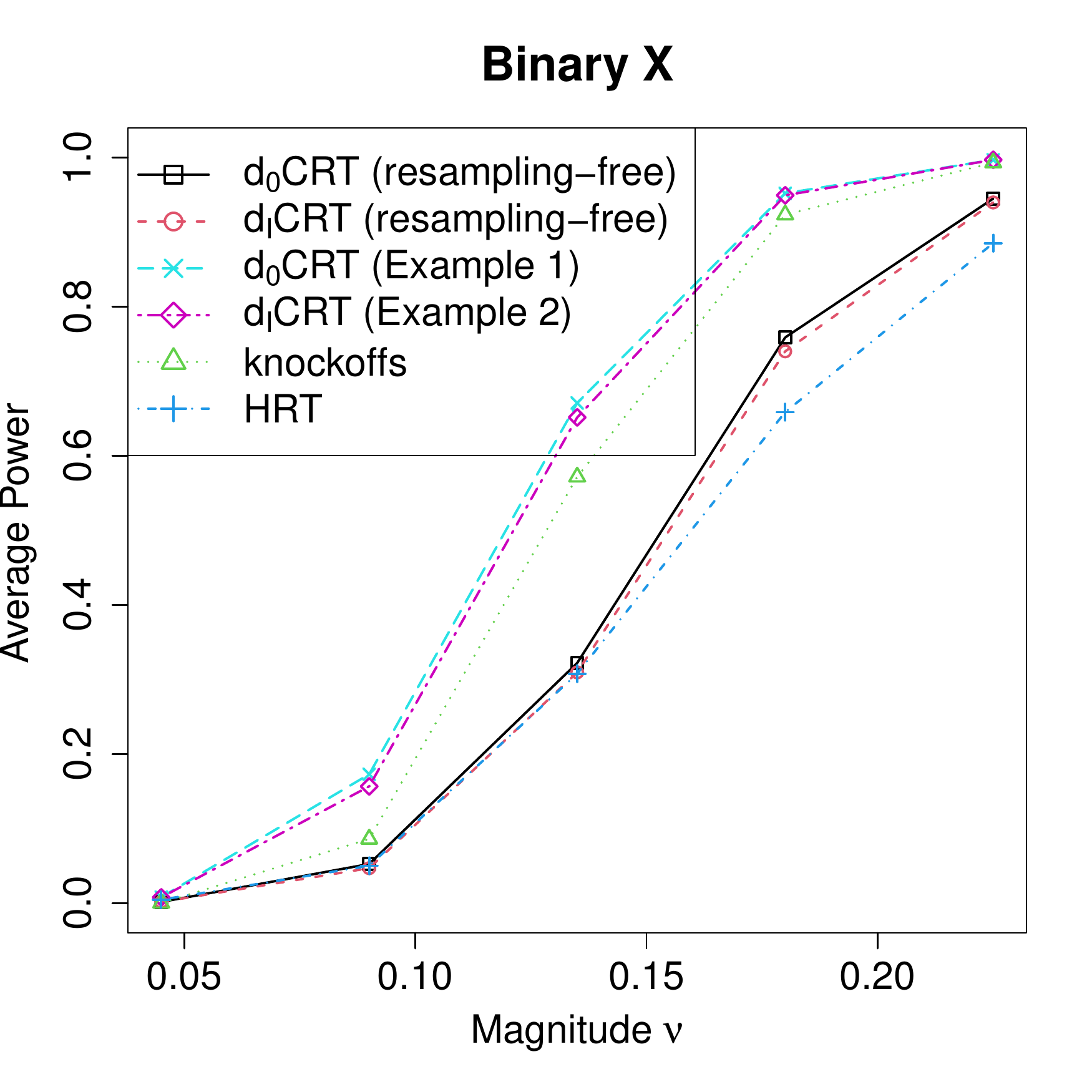}
\caption{\label{fig:nmc} Powers of the simulation in Appendix~\ref{sim:rfnongauss} measuring the effect of the Gaussian transformation in the resampling-free dCRTs. All standard errors are below $0.01$. The dCRT resampling-based approaches have the most power in non-Gaussian settings as well, but the resampling-free modifications have lower power in the binary case.}
\end{figure}

\subsection{Impact of screening on computation efficiency and power}\label{sec:sim:screening}
Here we demonstrate the effect of the screening modification introduced in Section~\ref{sec:screening} on computation time and power. We again simulate the baseline setting in Section~\ref{sec:sim:hrt} and compare the power and computation time of the dCRT methods with screening with the dCRT procedures without using screening. In Table~\ref{tab:com:screening} we present computation times demonstrating that screening can substantial improve the computational efficiency of dCRT. And the corresponding average powers are shown in Figure~\ref{fig:power:screening}, demonstrating that screening has nearly no impact on the power of the d$_0$CRT or d$_\mathrm{I}$CRT.

\begin{table}[htb!]
\centering
\begin{tabular}{c|c|c|c|c|c}
\multicolumn{6}{c}{{\bf Average computation times (minutes)}}\\
\hline
      \mbox{d$_0$CRT (screening)} &  d$_0$CRT (full) & d$_{\mathrm{I}}$CRT (screening) &  d$_{\mathrm{I}}$CRT (full) & knockoff  & HRT  \\ \hline
     $9.8$  & $46.1$ & $10.2$ & $47.1$ & $1.8$ & $6.2$ \\ \hline
\end{tabular}
\caption{\label{tab:com:screening} Average computation times (in minutes) of the simulations in Appendix~\ref{sec:sim:screening}. Screening leads to large computational savings at nearly no cost to the statistical power.}
\end{table}

\begin{figure}[htpb]
\centering
 \includegraphics[width=0.4\textwidth]{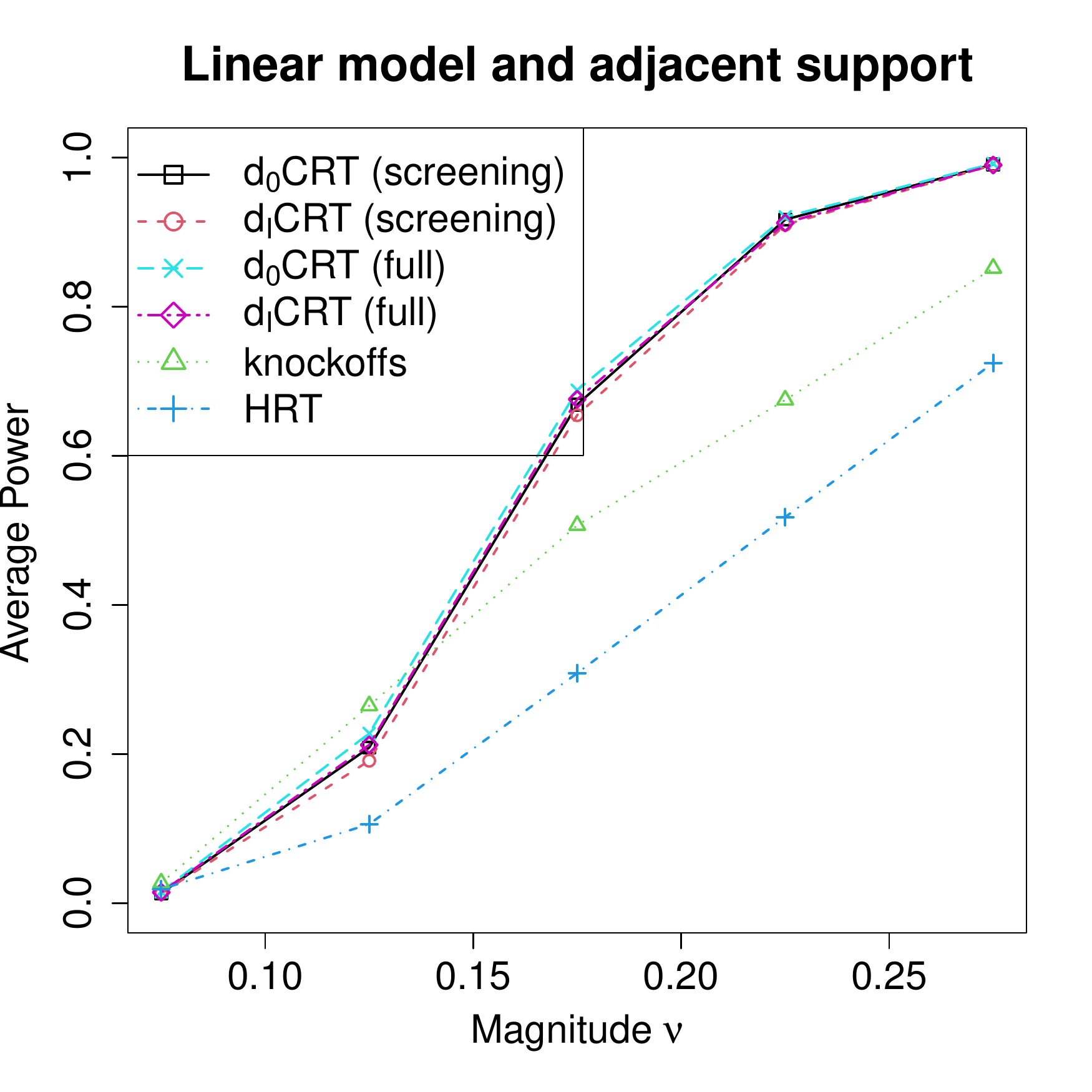}
\includegraphics[width=0.4\textwidth]{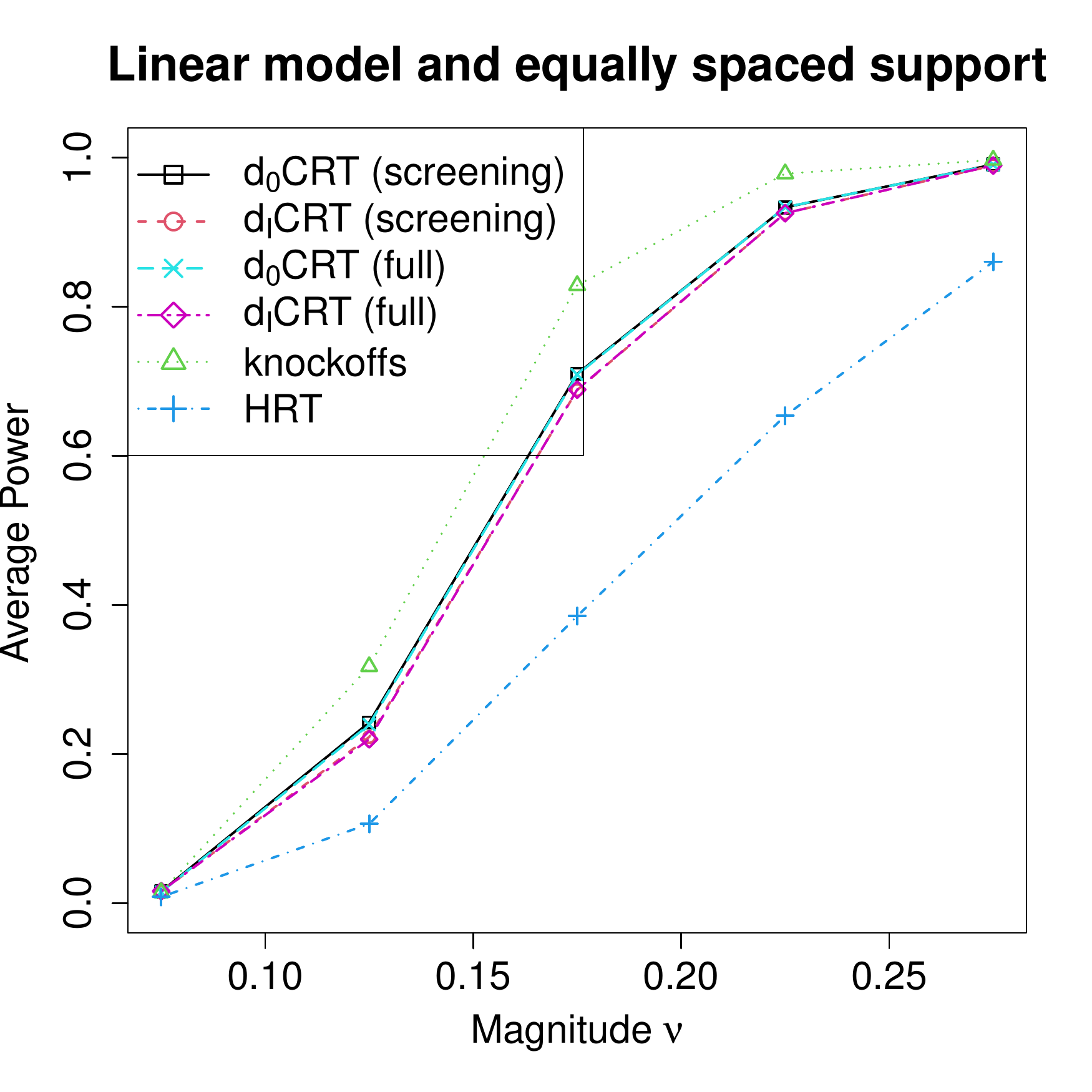}
\caption{\label{fig:power:screening} Average powers of the simulation in Appendix~\ref{sec:sim:screening} measuring the effect of the screening modification. All standard errors are below $0.01$. The screened dCRTs have identical power to the dCRTs without screening, as expected.}
\end{figure}

\subsection{Additional false discovery rate results}\label{sec:sim:fdr}
We compile false discovery rate results of our simulations with well-specified covariate distributions here. The false discovery rate is guaranteed to be controlled by knockoffs and the $p$-values of the CRT procedures including the dCRTs are guaranteed to be valid, but they do not satisfy the conditions for the Benjamini--Hochberg procedure to control the false discovery rate. In practice, they do control FDR, as Figures~\ref{fig:fdr:smc}--\ref{fig:diff:fdr:design} show. This adds further support to a widely acknowledged empirical observation that the BH procedure rarely (if ever) violates FDR control in practice, outside of truly adversarial simulation setups with specially designed dependence structures.

\begin{figure}[htpb!]
\centering
  \includegraphics[width=0.4\textwidth]{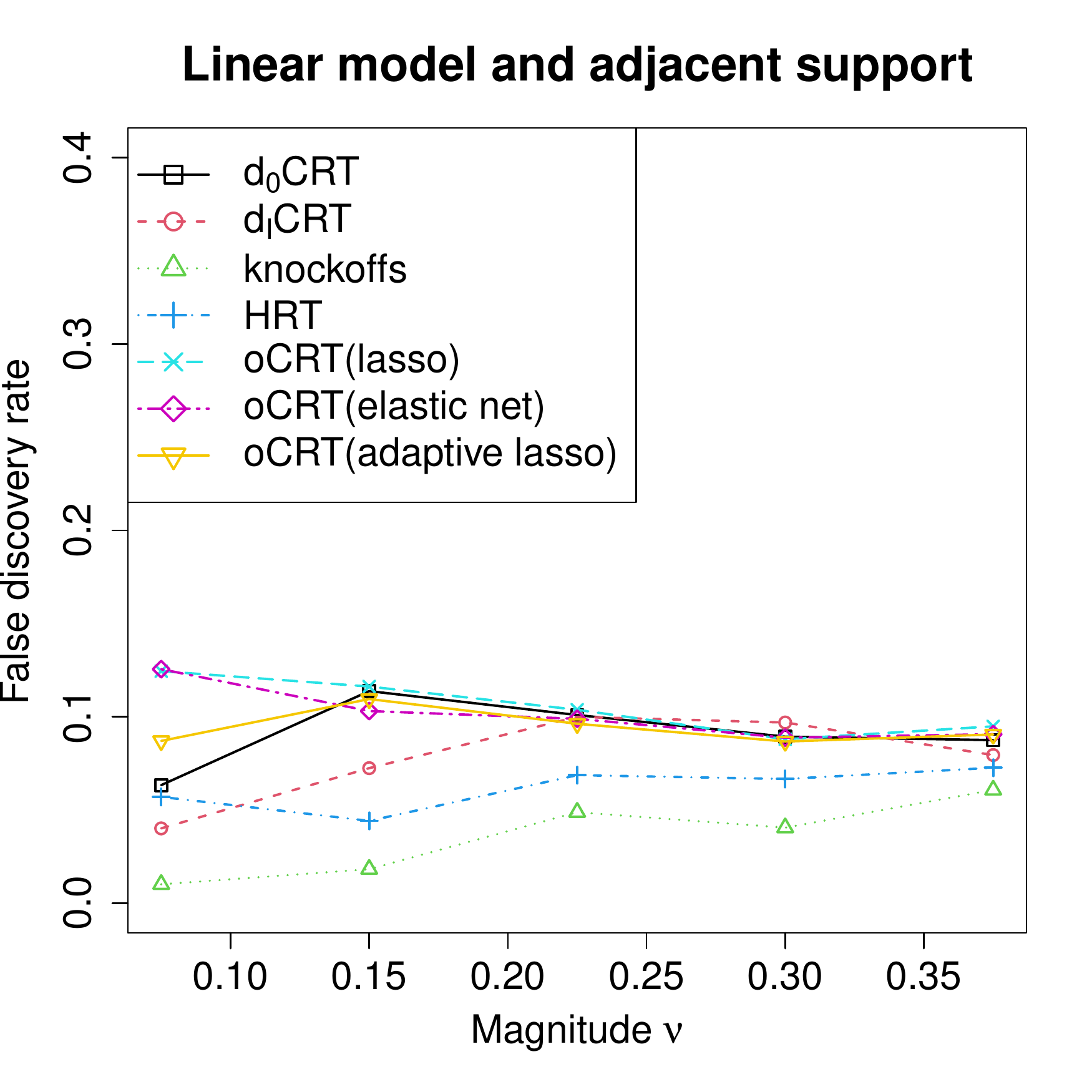}
  \includegraphics[width=0.4\textwidth]{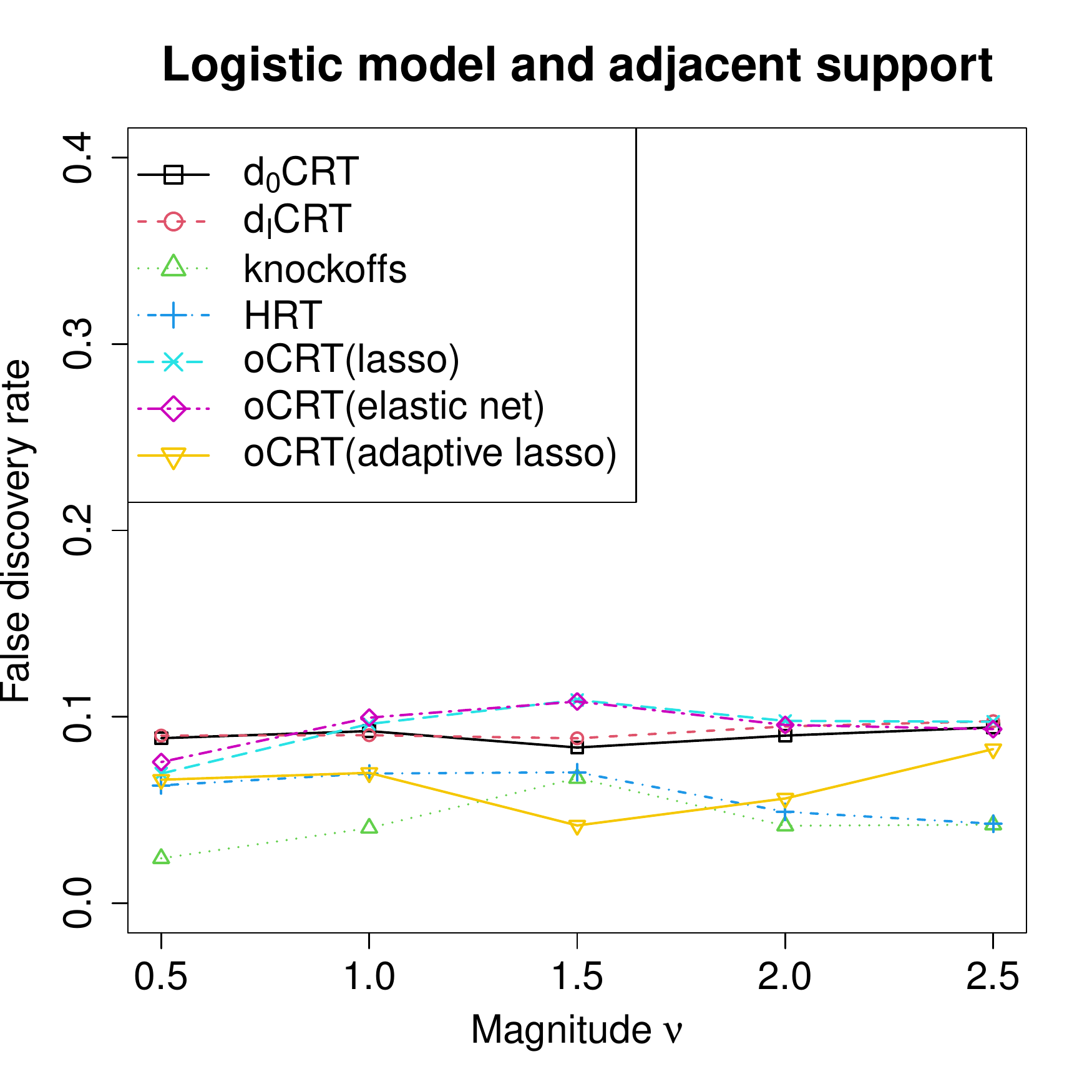}
  \includegraphics[width=0.4\textwidth]{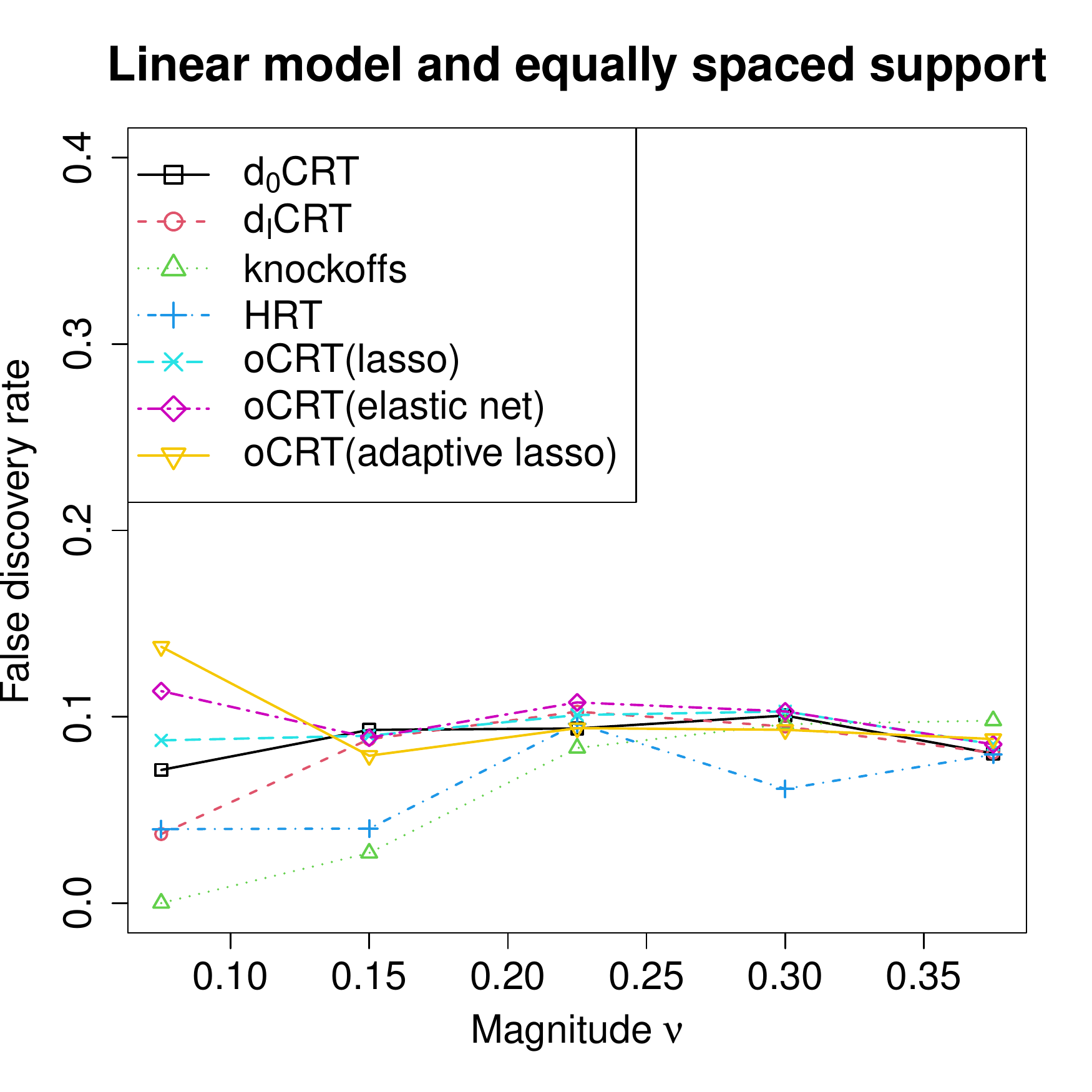}
  \includegraphics[width=0.4\textwidth]{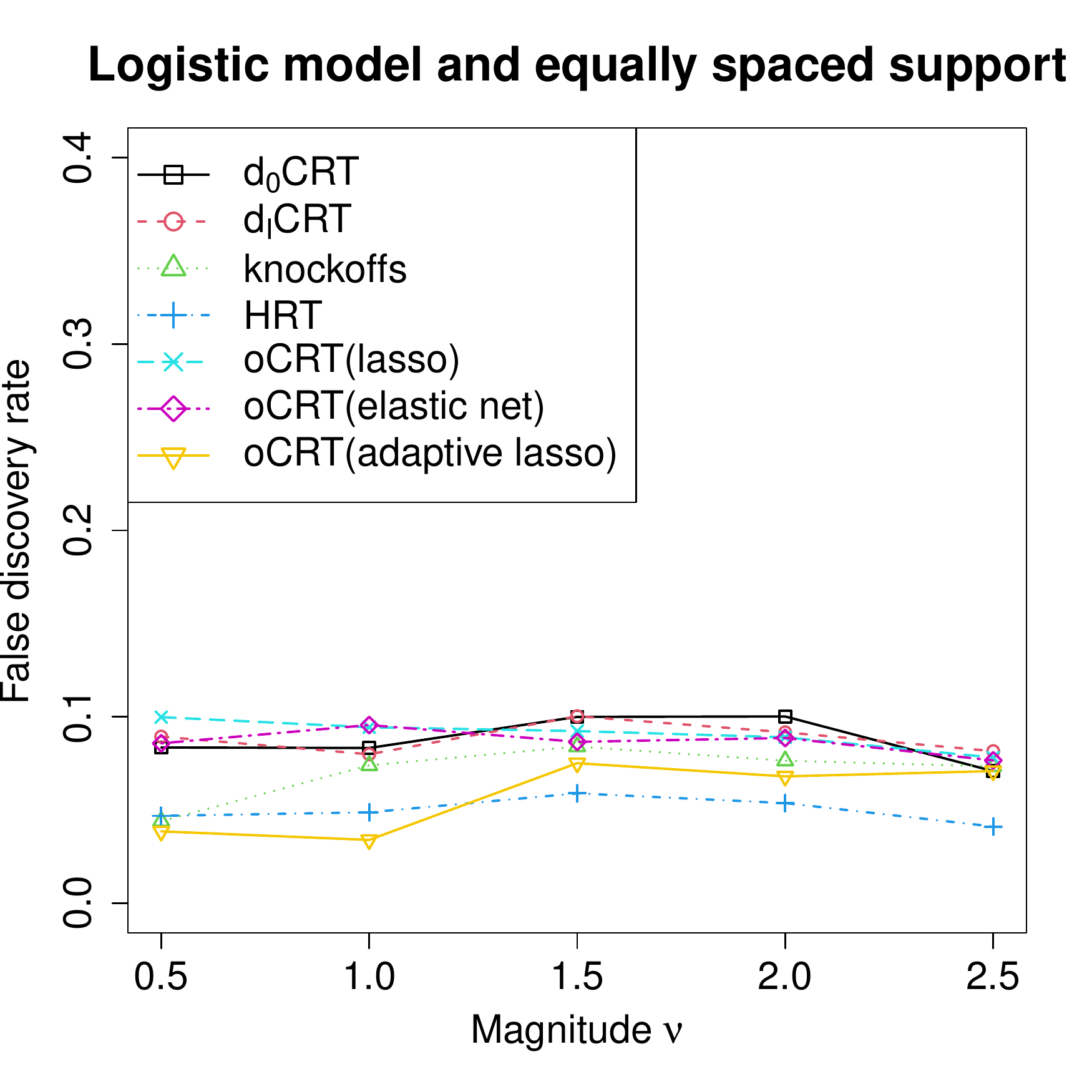}
\caption{\label{fig:fdr:smc} False discovery rates of the $n=p=300$ simulation of Appendix~\ref{sec:sim:smc}; standard errors are below 0.01. All methods control the false discovery rate at the target level 0.1, as desired.}
\end{figure}

\begin{figure}[htpb!]
\centering
  \includegraphics[width=0.4\textwidth]{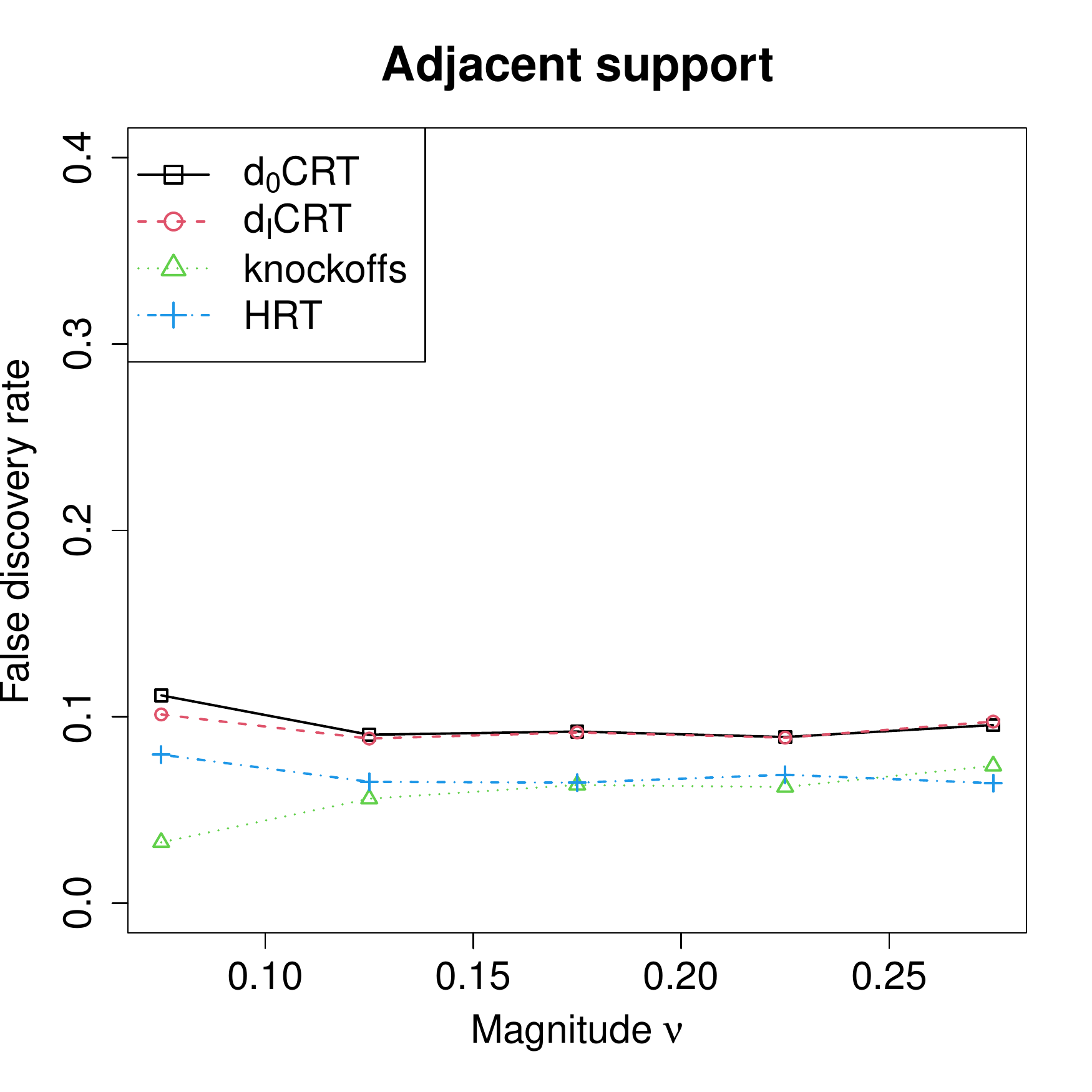}
  \includegraphics[width=0.4\textwidth]{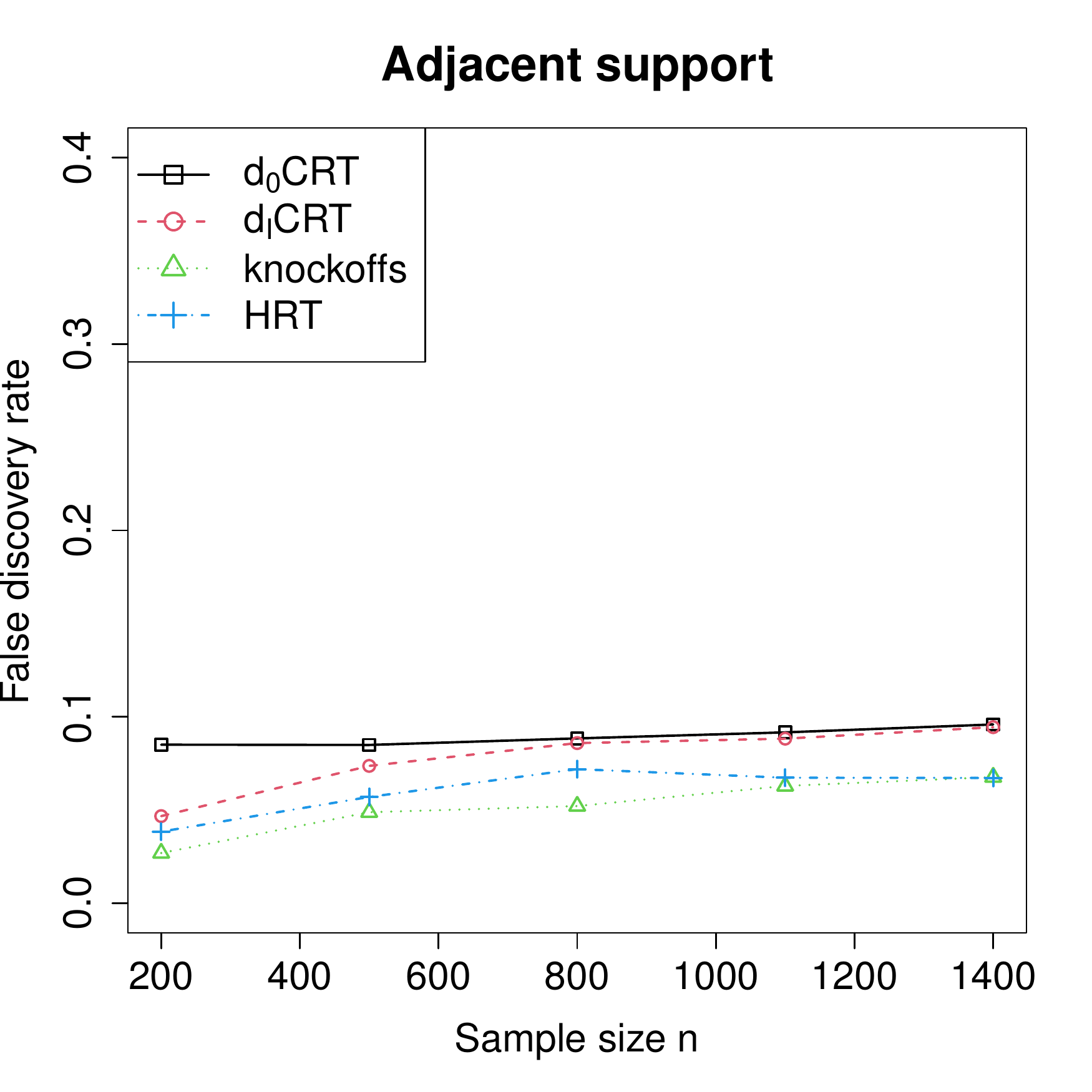}
  \includegraphics[width=0.4\textwidth]{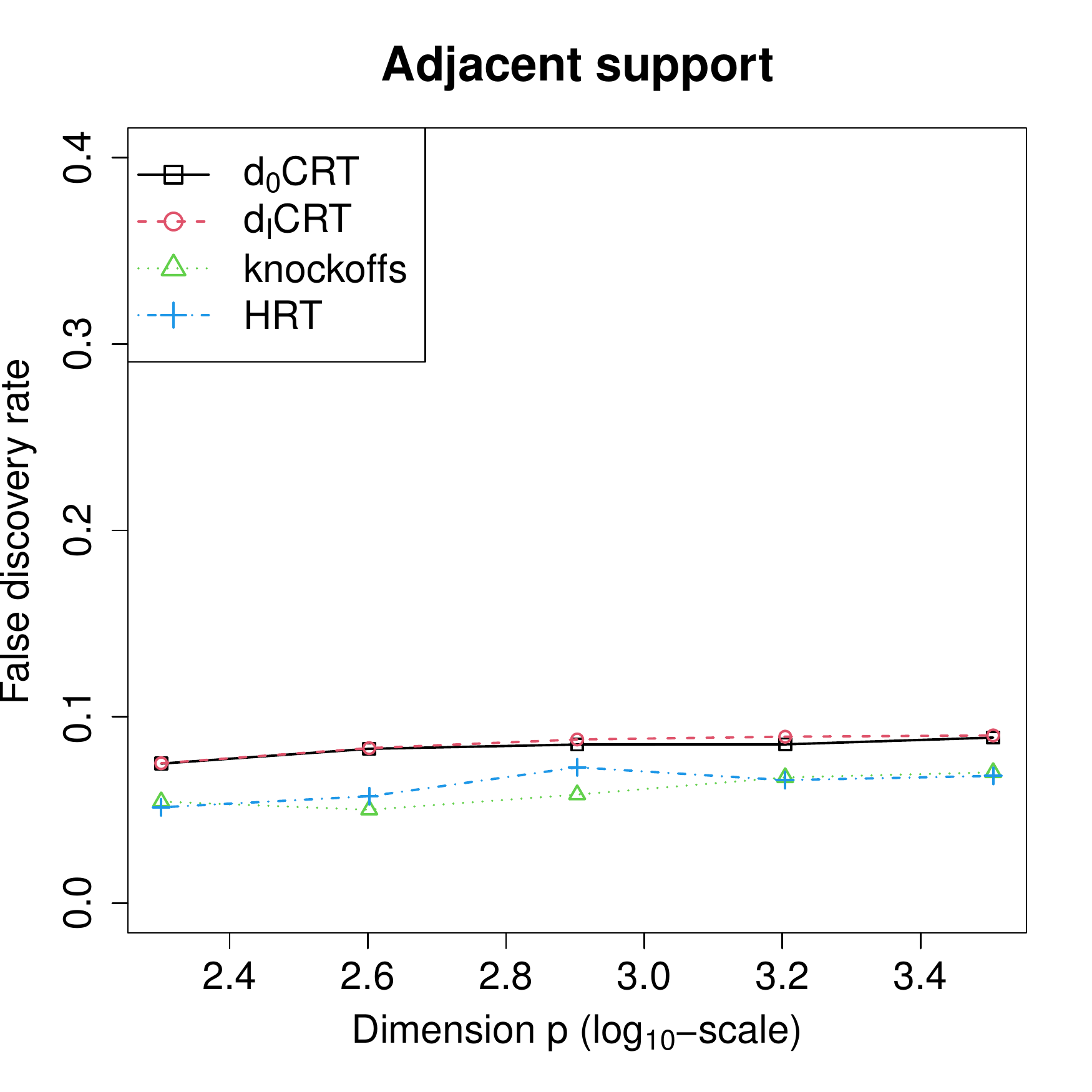}
  \includegraphics[width=0.4\textwidth]{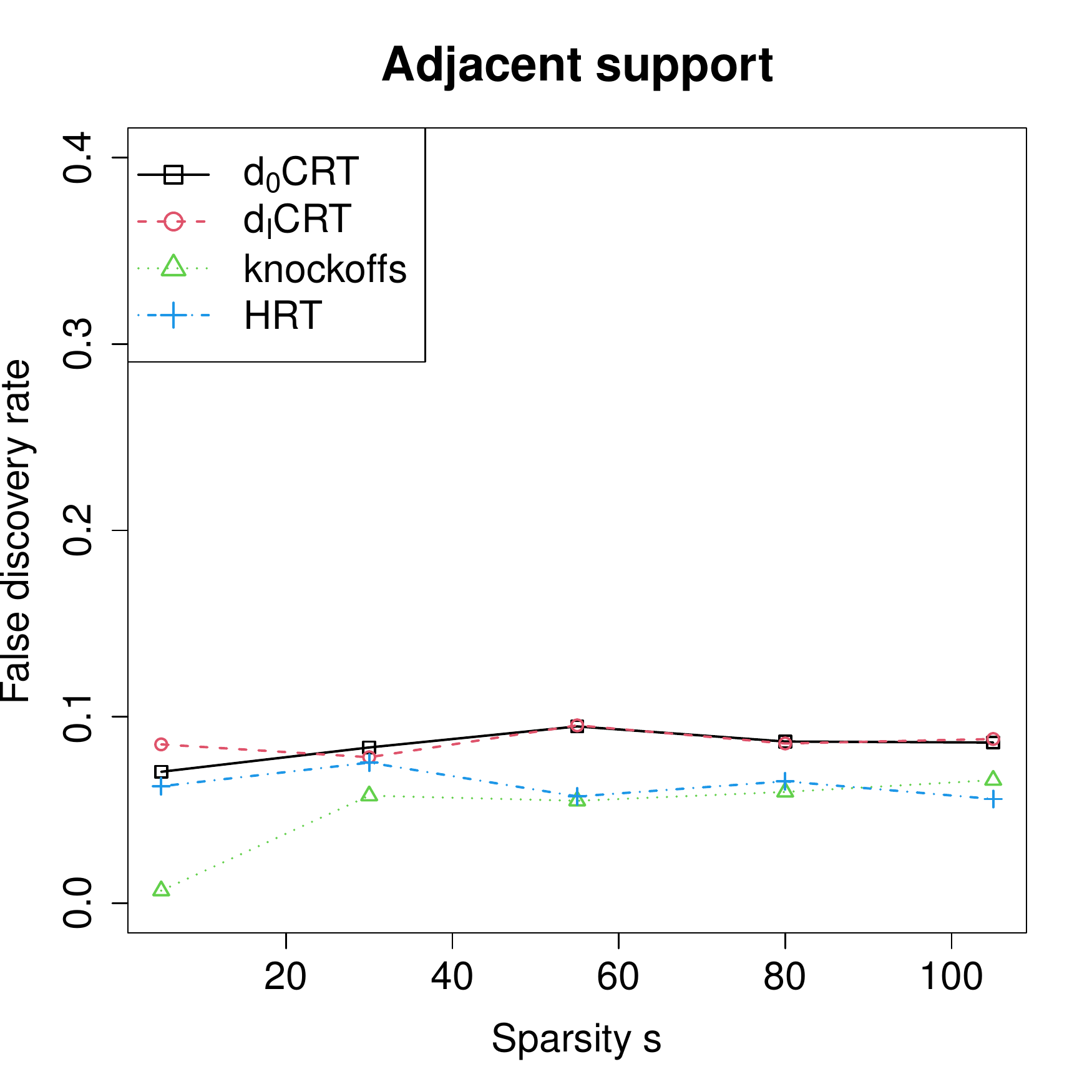}
\caption{\label{fig:fdr:diffnp} False discovery rates of the large scale simulations of Appendix~\ref{sec:sim:hrt} that vary the coefficient magnitude, sample size, dimension, and coefficient sparsity with adjacent support. All standard errors are below 0.01; all methods control false discovery rate across all settings.}
\end{figure}

\begin{figure}[htpb]
\centering
  \includegraphics[width=0.4\textwidth]{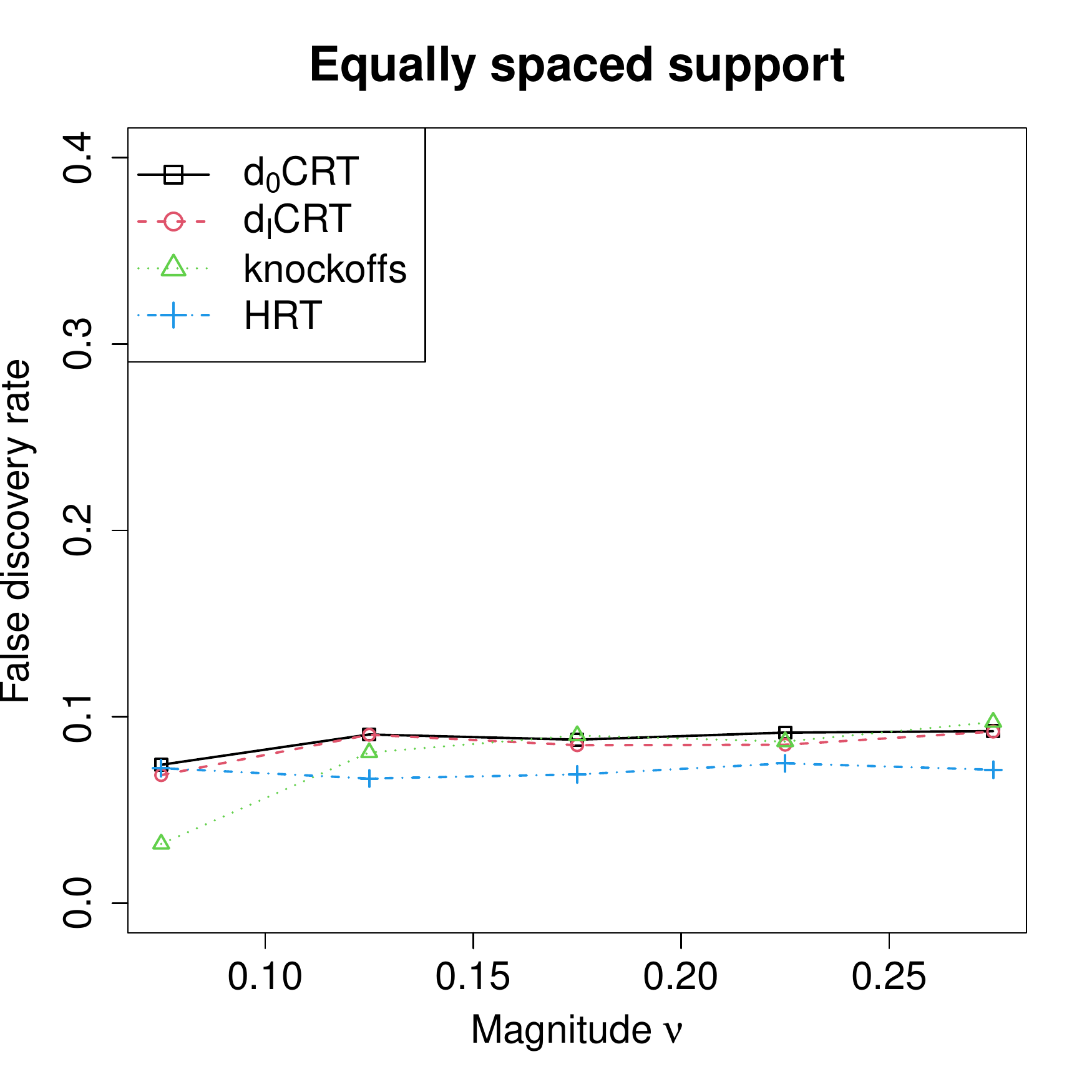}
  \includegraphics[width=0.4\textwidth]{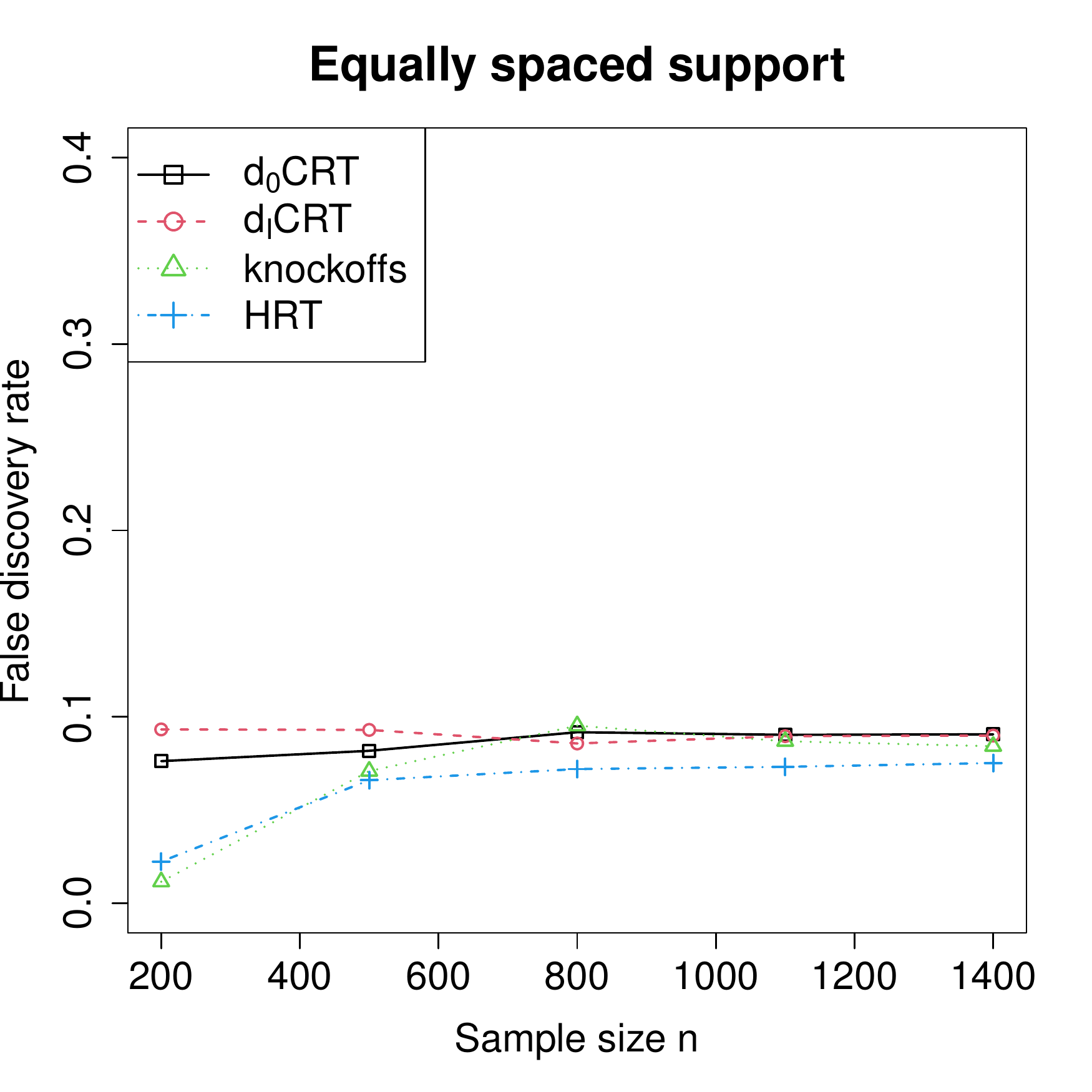}
  \includegraphics[width=0.4\textwidth]{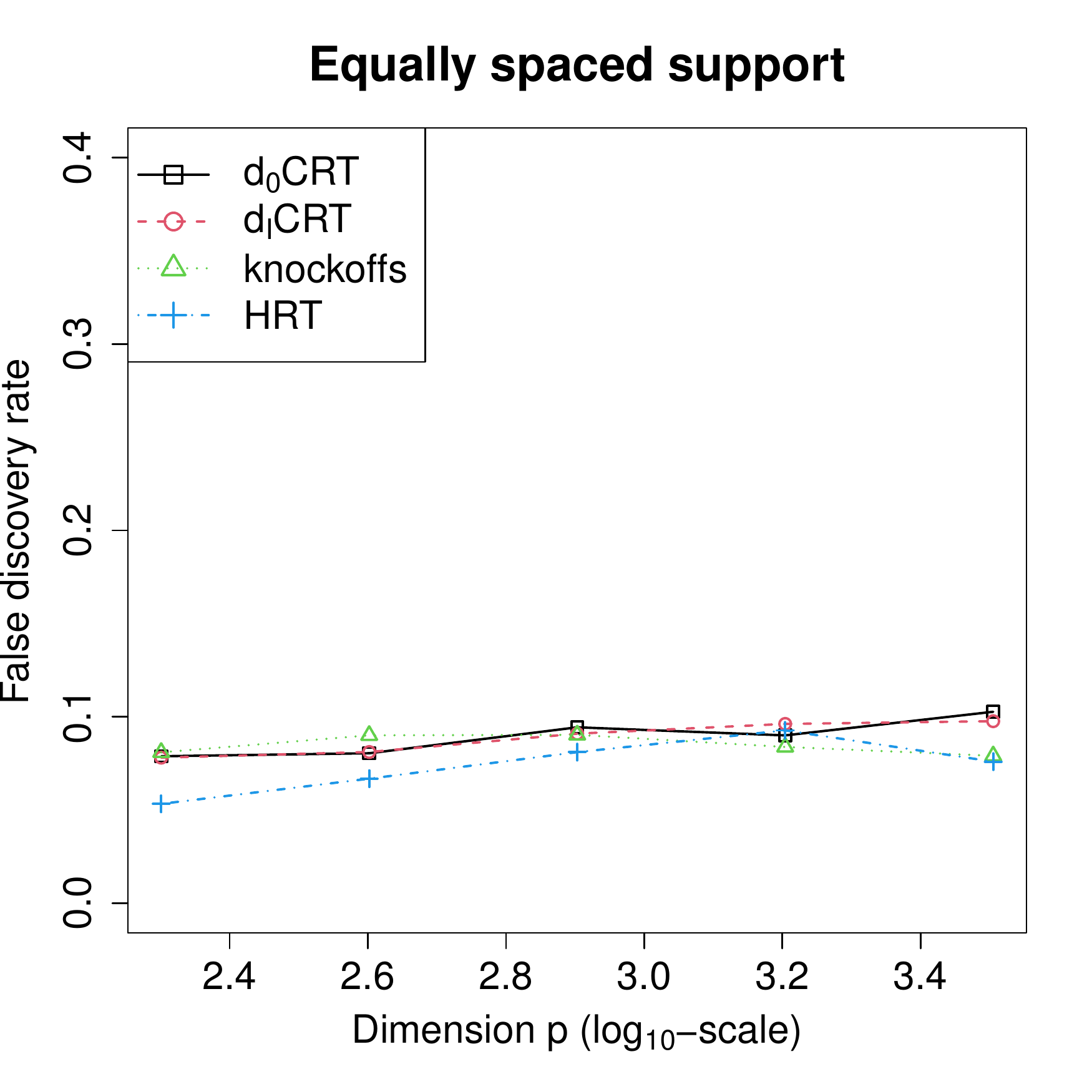}
  \includegraphics[width=0.4\textwidth]{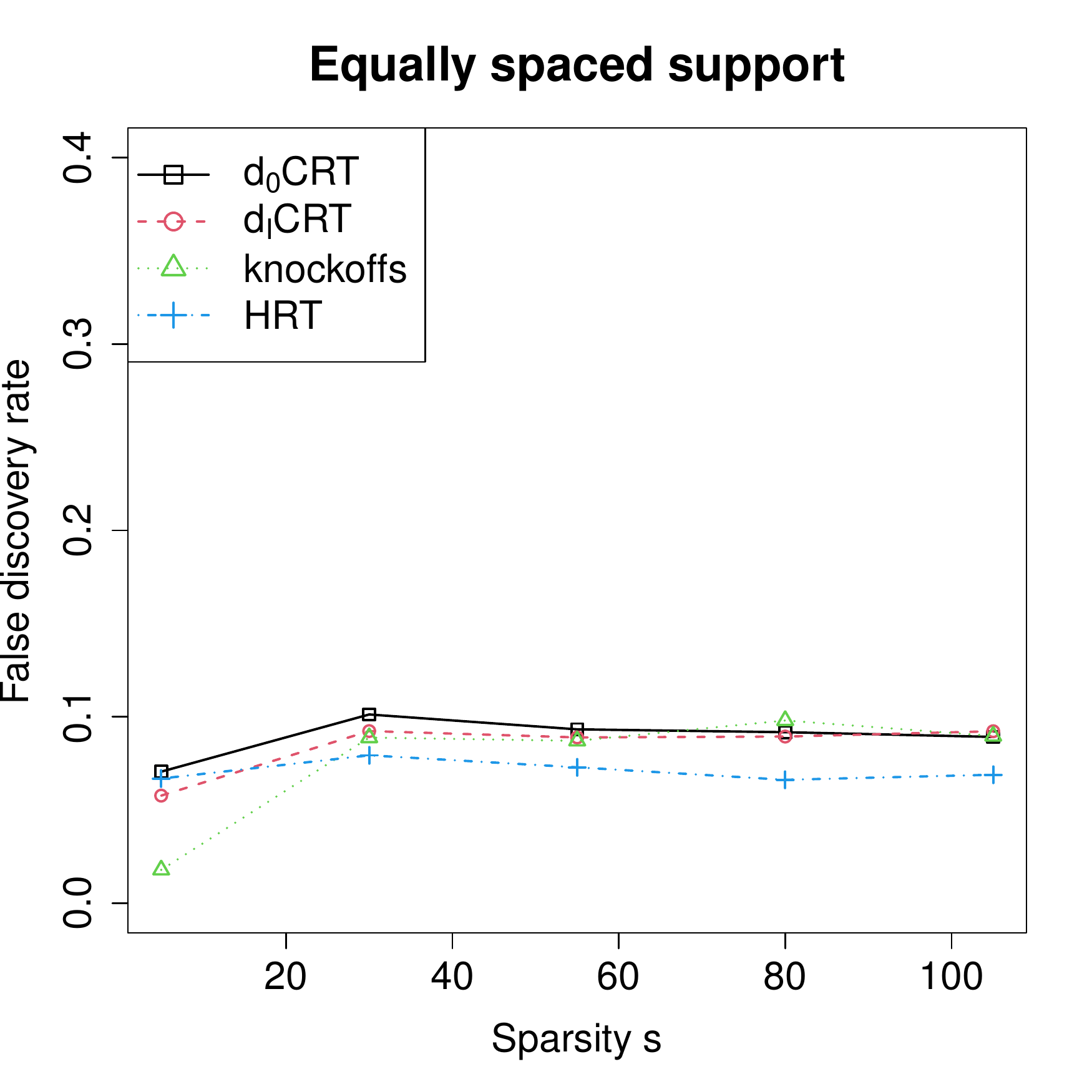}
\caption{\label{fig:fdr:diffnp:equal} False discovery rates of the large scale simulations of Appendix~\ref{sec:sim:hrt} that vary the coefficient magnitude, sample size, dimension, and coefficient sparsity with equally spaced support. All standard errors are below 0.01; all methods control the false discovery rate in all settings.}
\end{figure}

\begin{figure}[htbp]
\centering
  \includegraphics[width=0.4\textwidth]{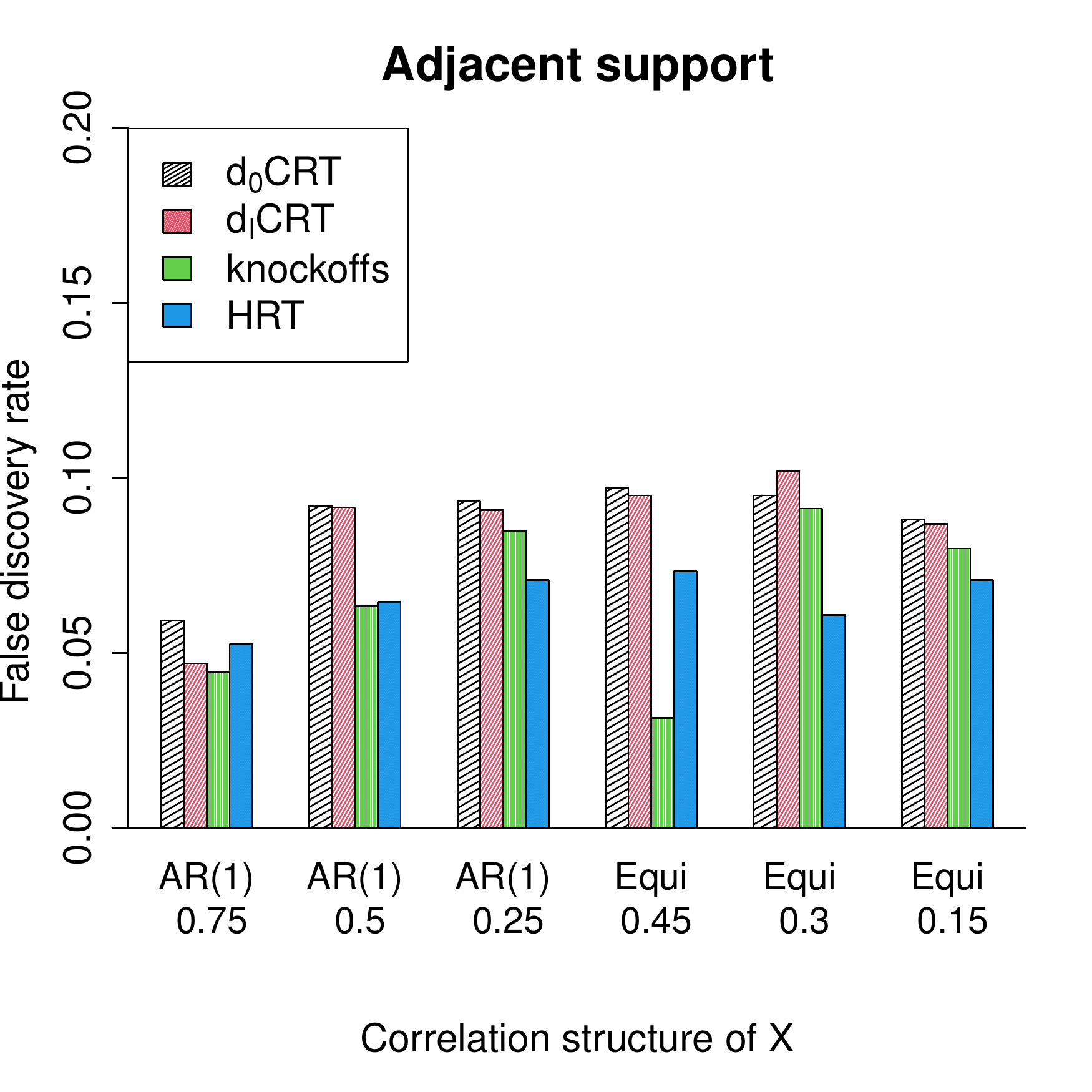}
  \includegraphics[width=0.4\textwidth]{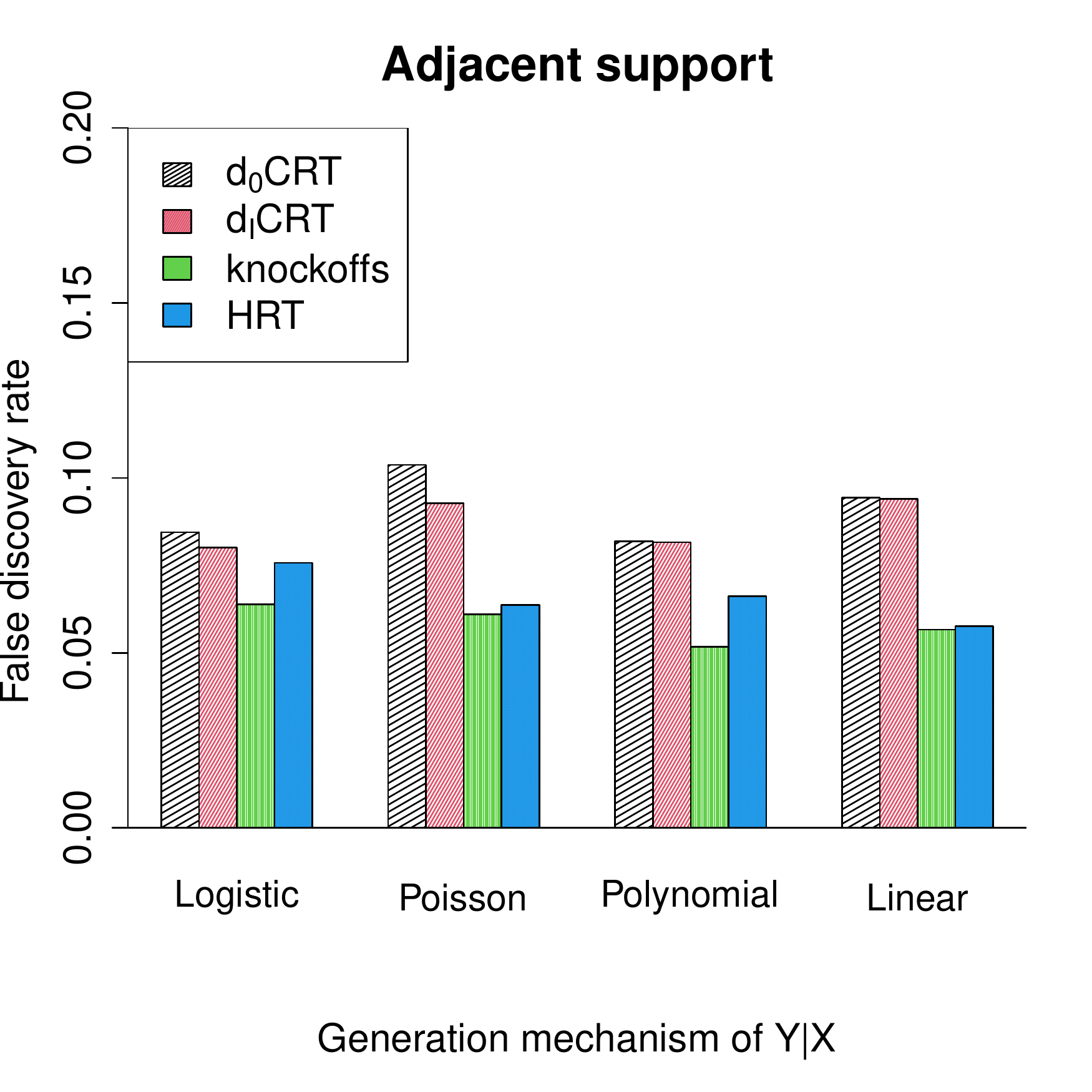}
  \includegraphics[width=0.4\textwidth]{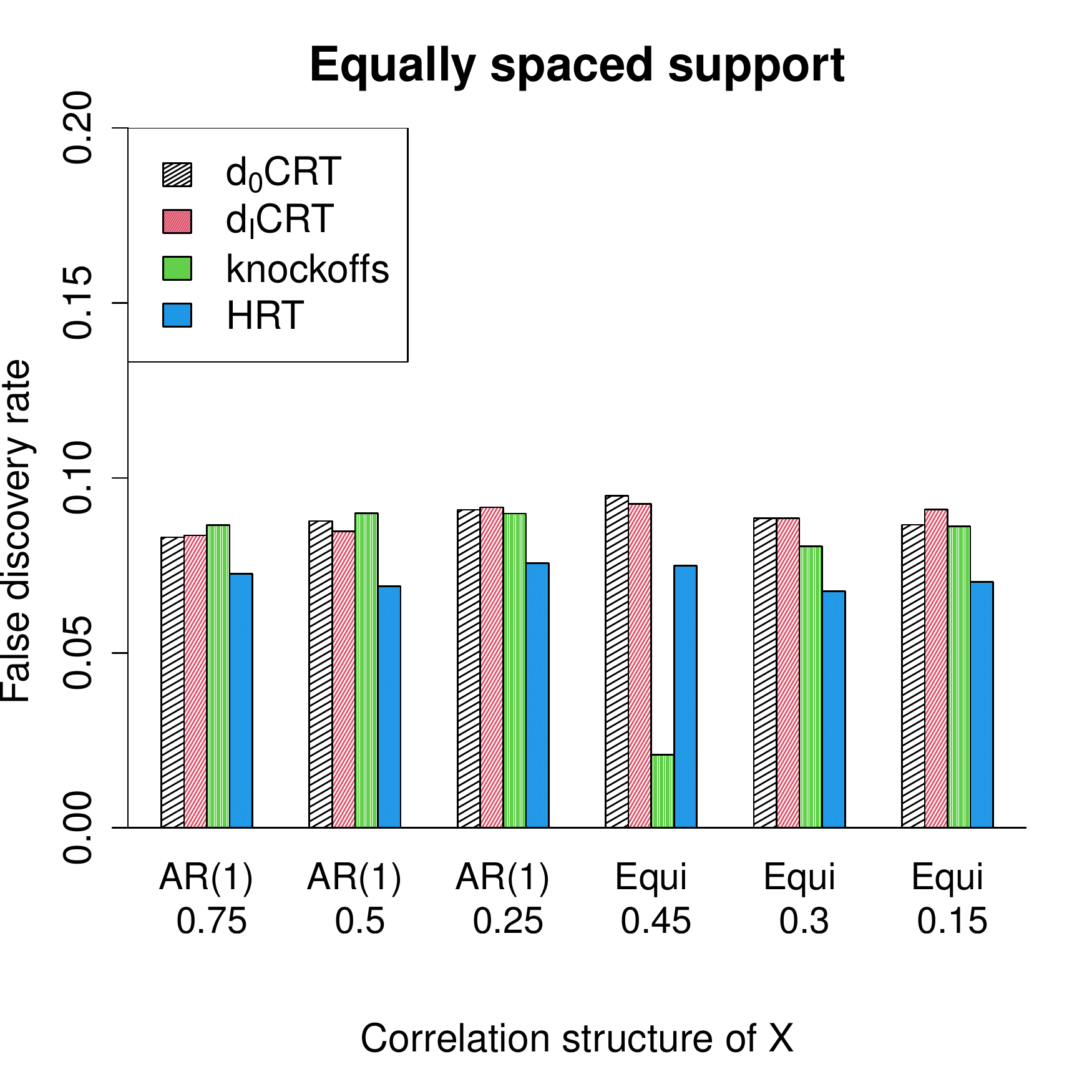}
  \includegraphics[width=0.4\textwidth]{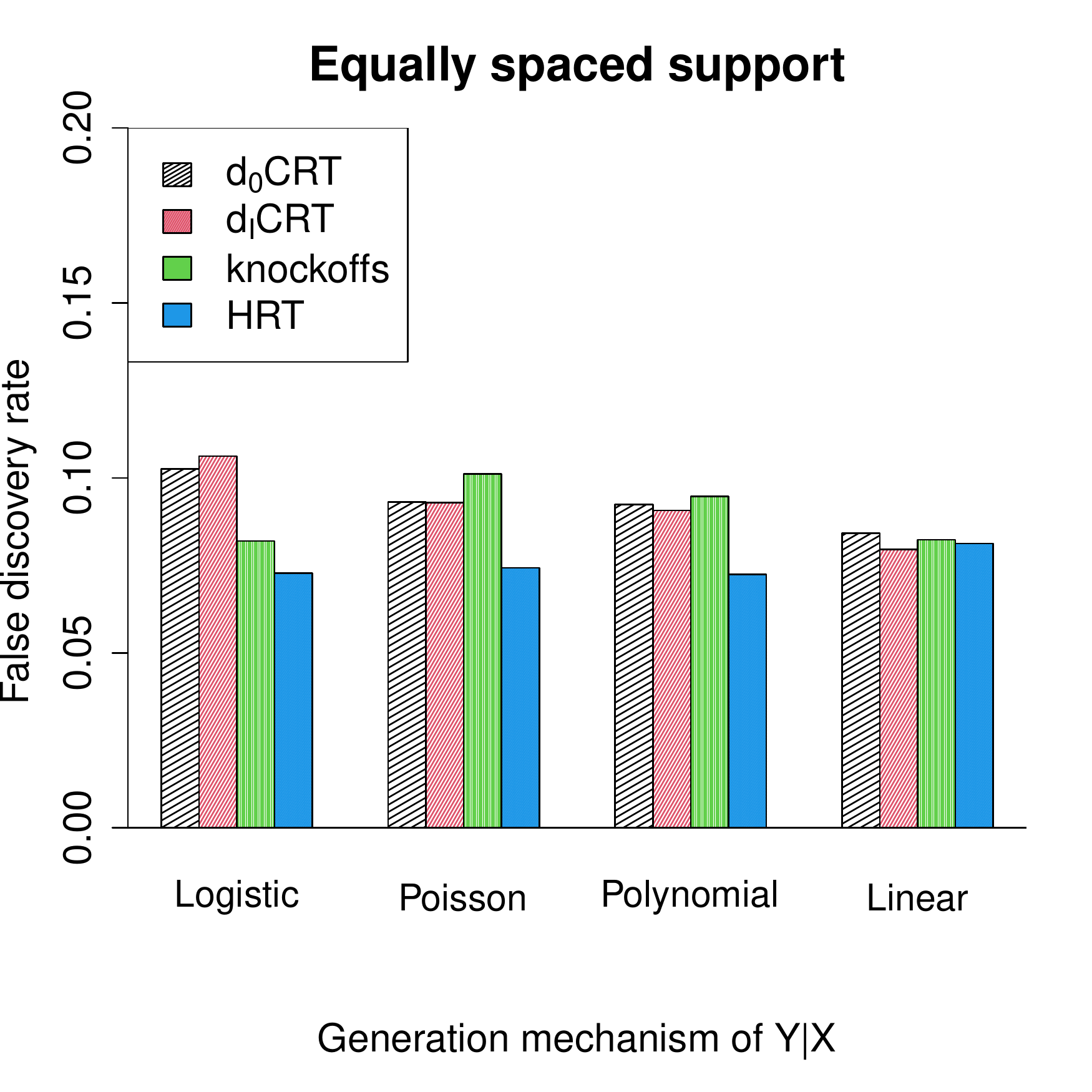}
\caption{\label{fig:diff:fdr:design} False discovery rates of the large scale simulations of Appendix~\ref{sec:sim:hrt} that vary the covariate and response models; all standard errors are below 0.01. All methods control false discovery rate in all settings.}
\end{figure}

\section{Breast cancer data analysis}\label{app:data}
In this section, we present the details of our analysis of the breast cancer data set in Section~\ref{sec:realdata}; our code for pre-processing and analyzing the data is available at \url{https://github.com/moleibobliu/Distillation-CRT}. The list of $p=164$ candidate genes is obtained as the set of genes among those measured by \cite{curtis2012genomic} that are the most frequently mutated according to \cite{pereira2016somatic} (see Supplementary Data 1 at \url{https://www.nature.com/articles/ncomms11479#Sec32}). The CNA, gene expression and clinical data itself is from {\sf cBioPortal} and can be downloaded from \url{https://www.cbioportal.org/study/summary?id=brca_metabric}. The raw cancer stage used in our response variable is from the column labeled $TUMOR\_STAGE$ in their table for clinical data. It consists of three categories, $1$, $2$ and $3$ that represent the progression stage of breast cancer. Since there were relatively few observations in category $1$, we merge categories $1$ and $2$ together, resulting in a binary response. And the samples for analysis were chosen as all the patients with ER+ given in the clinical table.

Now we introduce the procedures for modeling the covariates. To model the expression levels of each gene $G_j$ conditional on its corresponding CNA level $C_j$, we follow the methods proposed and discussed in \cite{solvang2011linear,lahti2012cancer,leday2013modeling} to fit a piecewise linear regression of each $G_j$ on each $C_j$. Denoting the fitted residuals as $\widetilde{G}_j$ and $\widetilde{\mathbf{G}}=(\widetilde{G}_1,\ldots,\widetilde{G}_p)$ we then model $\widetilde{\mathbf{G}}$ as mean-zero multivariate Gaussian (similar to \cite{shen2019false}) and estimate its precision matrix via a similar procedure as in Section~\ref{app:sim:robustestmom}. That is, we remove the mean of each $\widetilde{G}_j$, fit graphical lasso tuned with cross-validation to estimate the precision matrix, and finally take the inverted precision matrix estimate and multiply it by a diagonal matrix to match the conditional variance of each $\widetilde{G}_j$ with the mean square of its residuals. 

As in the simulations, the d$_0$CRT and d$_{\mathrm{I}}$CRT we use are the resampling-free logistic regression versions of Examples~\ref{ex:1} and~\ref{ex:2} along with screening with the logistic lasso. Again, we do not use a logistic regression test statistic to allow for the computational gains of the resampling-free modification. We also implement knockoffs, the HRT, and the \rev{o}CRT as in the simulations section with analogous logistic lasso statistics. The number of resamples for the HRT and the \rev{o}CRT is set as $M=25,000$, again satisfying $M/5>p/\alpha$ as the false discovery rate or family-wise error rate level $\alpha$ is set as $0.1$. 

We summarize the discovered genes and their average $p$-values (over the 300 repetitions) estimated by each method in Table~\ref{tab:1}, {their frequencies of being detected in terms of false discovery rate control in Table~\ref{tab:2}, and their frequencies of being detected in terms of family-wise error rate control in Table~\ref{tab:3}.} 

\begin{table}[htbp]

\centering
\begin{tabular}{l|lllllll}
\hline
Gene   & d$_0$CRT    & d$_{\mathrm{I}}$CRT & \rev{o}CRT(lasso)     & HRT     \\ \hline
{\em FBXW7}  & ${1.1\times 10^{-3}}$ & ${5.8\times 10^{-4}}$   & ${2.0\times 10^{-3}}$ & $2.1\times 10^{-2}$ \\ 
{\em GPS2}   & ${2.5\times 10^{-4}}$ & ${2.2\times 10^{-4}}$   & ${3.5\times 10^{-4}}$ & $1.0\times 10^{-2}$ \\ 
{\em HRAS}   & ${1.6\times 10^{-3}}$ & ${1.7\times 10^{-3}}$   & $3.0\times 10^{-3}$ & $6.1\times 10^{-4}$ \\ 
{\em MAP3K13} & $1.1\times 10^{-4}$ & ${8.0\times 10^{-5}}$   &  $3.3\times 10^{-3}$ & $1$        \\ 
{\em NRAS} & $6.1\times 10^{-3}$ & $9.5\times 10^{-3}$   &  $6.5\times 10^{-3}$ & $1.4\times 10^{-3}$        \\ 
{\em RUNX1}  & $2.4\times 10^{-4}$ & $2.0\times 10^{-4}$   & $5.8\times 10^{-4}$ & ${2.9\times 10^{-4}}$ \\ \hline
\end{tabular}

\caption{\label{tab:1} Significant genes and their corresponding average $p$-values over 300 repetitions.}

\end{table}

\begin{table}[htbp]

\centering
\begin{tabular}{l|cccccc}
\hline
Gene   &  d$_0$CRT    & d$_{\mathrm{I}}$CRT & \rev{o}CRT(lasso)     & HRT   & knockoffs \\ \hline
{\em FBXW7}  & 1 & 1   & 0.70 & 0 & 0.54 \\ 
{\em GPS2}   & 1 & 1   & 1 & 0.24 & 0.53 \\ 
{\em HRAS}   & 1 & 1   & 0.14 & 1 & 0.54\\ 
{\em MAP3K13} & 1 & 1   &  0.06 & 0 & 0.13 \\ 
{\em NRAS} & 0 & 0   &  0 & 1  & 0.54 \\ 
{\em RUNX1}  & 1 & 1   & 1 & 1 & 0.54 \\ \hline
\end{tabular}

\caption{\label{tab:2} Frequency of being detected over the 300 repetitions, with false discovery rate control at the level $0.1$.}

\end{table}

\begin{table}[htbp]

\centering
\begin{tabular}{l|ccccc}
\hline
Gene   &  d$_0$CRT    & d$_{\mathrm{I}}$CRT & \rev{o}CRT(lasso)     & HRT   \\ \hline
{\em FBXW7}  & 0 & 0.88   & 0 & 0 \\ 
{\em GPS2}   & 1 & 1   & 0.97 & 0\\ 
{\em HRAS}   & 0 & 0   & 0 &  1\\ 
{\em MAP3K13} & 1 & 1   &  0 & 0\\ 
{\em NRAS} & 0 & 0   &  0 & 1  \\ 
{\em RUNX1}  & 1 & 1  & 0.64 & 1\\ \hline
\end{tabular}

\caption{\label{tab:3} Frequency of being detected over the 300 repetitions, with family-wise error rate control at the level $0.1$.}

\end{table}

\end{document}